\theoremstyle{plain}
\newtheorem{theorem}{Theorem}
\newtheorem{lemma}{Lemma}
\newtheorem{proposition}{Proposition}
\theoremstyle{definition}
\theoremstyle{remark}
\theoremstyle{condition}
\newtheorem{condition}{Condition}
\newcommand{\Ind}{\mathds{1}}
\newcommand{\ind}{\Ind}
\newcommand{\size}{0.26}
\newcommand{\cm}{\mathbf{c}}
\newcommand{\F}{\mathbf{F}}
\newcommand{\B}{\mathbf{B}}
\newcommand{\I}{\mathbf{I}}
\newcommand{\M}{\mathbf{M}}
\newcommand{\N}{\mathbf{N}}
\newcommand{\W}{\boldsymbol{\mathcal{W}}}
\newcommand{\D}{\mathbf{D}}
\newcommand{\E}{\mathds{E}}
\newcommand{\bGamma}{\mathbf{\Gamma}}
\newcommand{\A}{\mathbf{A}}
\newcommand{\V}{\mathbf{V}}
\newcommand{\X}{\widehat{\mathbf{X}}}
\newcommand{\Xlim}{{\mathbf{X}}}
\newcommand{\bLambda}{\mathbf{\Lambda}}
\newcommand{\Y}{\mathbf{Y}}
\newcommand{\K}{\mathbf{K}}
\newcommand{\CMa}{\mathbf{C}}
\newcommand{\indv}{\mathbf{e}}
\newcommand{\Hb}{\mathbf{H}}
\newcommand{\IndRank}{\widehat{\operatorname{I_{F}}}}
\newcommand{\IndRanklim}{\operatorname{I_{F}}}
\newcommand{\Q}{\widehat{\mathbf{Q}}}
\newcommand{\Qlim}{\mathbf{Q}}
\newcommand{\var}{\mathds{V}ar}
\newcommand{\Var}{\var}
\newcommand{\Prob}{\mathds{P}}
\newcommand{\R}{\mathds{R}}
\newcommand{\bone}{\mathbf{1}}
\newcommand{\bzero}{\mathbf{0}}
\newcommand\Diag[2]{\operatorname{Diag}(#1(#2))}
\newcommand{\bUpsilon}{\mathbf{\Upsilon}}
\newcommand{\qwei}[1]{\widetilde{q}^M_{n,i}(1-#1)}
\newcommand{\qwi}[1]{\widetilde{q}_{n,i}(1-#1)}
\renewcommand{\S}{\boldsymbol\eta}
\newcommand{\given}{\middle|}
\newcommand{\transpose}{\intercal}
\begin{document}

\title{ A Multiple kernel testing procedure for non-proportional hazards in factorial designs.}

\author{Marc Ditzhaus
\thanks{Department of Mathematics, Otto-von-Guericke University Magdeburg, Germany. \texttt{marc.ditzhaus@ovgu.de}}  \and Tamara Fern\'andez \thanks{Faculty of Engineering and Science, Universidad Adolfo Ib\'a\~nez, Chile. \texttt{t.a.fernandez.aguilar@gmail.com}} \and Nicol\'as Rivera\thanks{Instituto de Ingenier\'ia Matem\'atica, Universidad de Valpara\'iso, Chile. \texttt{n.a.rivera.aburto@gmail.com}}}
\maketitle

\abstract{
In this paper we propose a Multiple kernel testing procedure to infer survival data when several factors (e.g. different treatment groups, gender, medical history) and their interaction are of interest simultaneously. Our method is able to deal with complex data and can be seen as an alternative to the omnipresent Cox model when assumptions such as proportionality cannot be justified.  Our methodology combines well-known concepts from Survival Analysis, Machine Learning and Multiple Testing: differently weighted log-rank tests, kernel methods and multiple contrast tests. By that, complex hazard alternatives beyond the classical proportional hazard set-up can be detected. Moreover, multiple comparisons are performed by fully exploiting the dependence structure of the single testing procedures to avoid a loss of power. In all, this leads to a flexible and powerful procedure for factorial survival designs whose theoretical validity is proven by martingale arguments and the theory for $V$-statistics. We evaluate the performance of our method in an extensive simulation study and illustrate it by a real data analysis.
}

\section{Introduction}
Statistical methods for hypothesis testing are essential tools for the practice of Statistics, and since the conception of the discipline they have been one of the main sources of research questions and a topic of interesting debate between statisticians.  Due to the advances in data-collection technologies that the world has experience during the last 20 years we are now able to collect more data and of better quality, and practitioners expect to solve more complex problems using this data. These advances have created new challenges for researchers in Statistical Methodology, which now have to develop robust methods that are able to answer complex testing problems by analysing large volumes of  complex data. Acknowledging this challenge, many areas of Statistics, specially in applied settings have raised the goal of adapting and creating new tools which can be used in these new times, and Survival Analysis -one of the most applied areas of Statistics in practical problems - is no stranger to that.  Indeed, while the log-rank test and Cox regression used to be the gold standard in classic Survival Analysis for a long time due to their ability to deal with the simple proportional hazard assumption,  there is a current counter-trend and an actual need in biomedical application of strategies being more against the proportional hazard  assumption. For example,  \citet{kuitunen2021testing} has pointed out the problem in the context of total joint arthroplasty (TJA) research: \textit{``Reporting and testing of the PH assumption and dealing with non-proportionality in hip and knee TJA studies was limited. More awareness and education regarding the assumptions behind the used statistical models among researchers, reviewers and editors are needed to improve the quality of TJA research.''} Another prominent example is modern immunotherapy, where delayed treatment effects cause non-proportional hazard situations \citep[e.g.][]{mick2015statistical}.

From a methodological point of view, the problem of dealing with complex and non-proportional hazards has been mainly tackled in the two-sample setting \citep[e.g.][]{ditzhaus2018more,ditzhaus2019wild,liu2020resampling, fernandez2020reproducing, lin2020alternative}, we especially refer to \cite{li2015statistical} and \cite{dormuth2022test} for comparison based on simulations and reconstructed real data, respectively. However, the two-sample setting is too simple, and \textit{``evaluating more than 1 new intervention concurrently increases the chances of finding an effective intervention''} \citep{juszczak2019reporting}. In this line some extension to the $k$-sample problem had been proposed \citep{bathke2009combined,chen2016comparison,gorfine2020k,liu2017partitioned}, nevertheless this is not enough for practitioners and the goal should be the very general setting of factorial design where different discrete covariates (or factors) are provided and we are interested on understanding the relations between them (that has as particular case the $k$-sample problem). In this regard the work of \citet{akritas1997nonparametric} was the only strategy targeting general factorial designs for a long time. Unfortunately, their strategy requires a rather strong assumption on the censoring mechanism which typically cannot be justified in  practice. Just recently, their procedure was, finally, complemented by a flexible toolbox for factorial survival designs without restrictive assumptions in the form of the R package \textit{GFDsurv} \citep{GFDsurv}. This package currently covers two strategies on effect sizes, namely the concordance probability \citep{dobler2020factorial} and medians \citep{ditzhaus2021inferring}, and a nonparametric procedure \citep{ditzhaus2020permutation} combining differently weighted Nelson--Aalen type integrals. However, two important aspects are still pending: (1) \textbf{a strategy for complex hazard alternatives}, (2) respective \textbf{post-hoc tests} for a more in-depth analysis of the data and simultaneous comparisons for the different factor combinations. 

Unfortunately, tackling the problem of testing factorial designs requires the incorporation of new ideas outside of the standard toolbox of Survival Analysis. For the issue of dealing with \textbf{complex hazard alternatives} (1), in this paper we follow a kernel-based strategy to solve hypothesis testing problems. Kernel-based tests started their development around 15 years ago by the Machine Learning community \cite{gretton2007kernel}, but were quickly adopted by researchers in Statistical Methodology. The main idea behind kernel-based strategies is to embed the observed data points into a reproducing kernel Hilbert space of functions, and then, by using the structural properties of the space, we develop a test-statistic and a strategy to find rejection regions, but in many contexts the idea reduces to optimise over an infinite family of functions. This kernel principle is the foundation for various testing problems for complex data, including graphs, time-series, functional-data, words, images, etc. \citep{berlinet2011reproducing,chwialkowski2014kernel,chwialkowski2016kernel,gretton2007kernel,gretton2012kernel}, and it has proven to be very effective at providing new tests that are very robust, and computationally cheap.

While progress in the relatively new field of kernel-based tests has been rather quick, its incorporation to the setting of survival data has been much slower than in other settings which might be due to the intrinsic difficulties (censoring, truncation) and the specific theoretical tools (martingale and counting process theory, etc) that feature the study of Survival Analysis. Up to the best of our knowledge, current work applying kernel methods with Survival Analysis have only focused in simple settings such as Goodness-of-fit \citep{fernandez2019maximum,fernandez2020kernelized}, two-sample \citep{fernandez2020reproducing}, and independence problems \citep{fernandez2020kernel} for which a plethora of good methods has been developed since the 70's-80's (we refer to chapter 7 of the book of  \citet{klein2003survival} for a summary of classic methods in Survival Analysis). Moreover, even though the problems above are important, they are still far away from the setting of factorial designs, and much of the ideas used cannot be directly extended to such environment. In this paper, we fill the aforementioned gap by deriving the first kernel-based method for general factorial survival designs to infer main and interaction effects of different factors, e.g. treatments, genders, ethnic origins etc, that deals with complex hazard structures and allows a post-hoc analysis. 

The aim of this work is to incorporate kernel-based strategies to existing ideas in Survival Analysis such as log-rank tests. For that our first contribution is the derivation of a weighted log-rank type statistic for the problem of factorial designs, which up to the best of our knowledge has not been studied before. Our second contribution is to show how to enhance the previous log-rank statistic by `kernelising' it, which means that we choose the weight function of the log-rank statistic in the unit ball of a Reproducing kernel Hilbert space, and then we optimise to find the best weight function based on our data, resulting in what we call the kernel log-rank statistic. This statistic is computationally cheap, rather robust, and able to deal with complex hazard functions as we show in our experimental section. Based on this test-statistic a testing procedure can be derived by using a Wild Bootstrap resampling scheme which fit quite naturally due to the structure of the kernel log-rank.

For \textbf{post-hoc testing} (2), multiple contrast tests are well-established procedures used in uncensored data-settings \citep[e.g.][]{bretz2001numerical,hasler2008multiple,konietschke2012rank,gunawardana2019nonparametric} and do not suffer from a significant power loss such as Bonferroni correction. However, respective extensions to infer complex non-proportional hazards in factorial survival designs are, to the best of your knowledge, still pending. In our third contribution, we remedy this by combining the principle of multiple contrast testing with the kernel-based tests described in the previous paragraph. Contrary to classical multiple contrast tests, we have to deal with a vector of kernel log-rank statistics which are not asymptotically multivariate normal but each follow a more complex distribution, namely a (infinite) sum of weighted $\chi^2_1$-distributions. Consequently, critical values cannot be formulated, as typically, in terms of multivariate normal or $t$-quantiles \citep{mvtnorm}. To account for this, we  develop a Wild Bootstrap resampling scheme to estimate the unknown null distribution of the final multiple testing procedure. Up to the best of our knowledge the problem of multiple testing has not been accounted in the  literature of kernel-based testing, however, related ideas have featured in very recent works on adaptive tests \citep{
schrab2021mmd,albert2022adaptive}

\subsection{Structure of the Paper}

The paper is organised as following: 
\Cref{sec:framework} introduces the data-setting where we develop our new methodology, and introduces the factorial design testing problem. In \Cref{sec:teststat}, we introduce a  novel weighted log-rank test procedure for the factorial design setting. We present the kernel test-statistic in \Cref{section:KernelTestStatistic}. It combines an infinite collection of weight functions into one robust test.  We continue with \Cref{sec:wild}, where we show how to build a proper testing procedure by using a Wild Bootstrap resampling scheme. Later in \Cref{sec:multiple}, we combine the kernel method with the idea of multiple contrast tests and derive a post-hoc test for an in-depth analysis. The analysis of asymptotic properties of our test is performed in \Cref{sec:asymptotic}, while in \Cref{sec:experiments}, we empirically evaluate our method in several simulated data-settings, as well as real data-scenarios.

\section{Framework and Notation}\label{sec:framework}
In this work we consider the standard right-censoring data setting where we observe $n$ independent and identically distributed data points $D_i= (T_i,\Delta_i, X_i)$, $i \in [n]:=\{1,\ldots, n\}$. Here $T_i = \min\{Z_i,C_i\}$ is the observed time, defined as the minimum between the time of interest $Z_i$, and a censoring time $C_i$.  For simplicity, we assume both $Z_i$ and $C_i$ are continuous random variables. The variable $\Delta_i \in \{0,1\}$, known as the censoring indicator, takes the value $1$ {if the actual time of interest can be observed, i.e. if $Z_i\leq C_i$}, whereas it takes the value $0$ when {the observation is censored, i.e} $Z_i>C_i$. Finally, $X_i$ is a covariate taking values on the set $[k]$ and it encodes the membership of the $i$-th observation to one out of $k$ (sub-)groups. As we explain later in more detail, having covariates taking values in $[k]$ is enough to incorporate general factorial designs. We assume that  $Z_i$ and $C_i$ depend on the variable $X_i$, however, $Z_i$ and $C_i$ are independent given $X_i$, which is a standard assumption in Survival Analysis. {While it is beneficial for the proofs to consider randomised covariates, the theoretical derivations can be similarly performed for deterministic covariates with a little more technical effort \citep[c.f.][]{fernandez2020reproducing}.} For our analysis we set probability $p_j:= \Prob(X_i=j)$ for each $j\in [k]$. 

We denote by $F_j$ and $S_j=1-F_j$ the distribution and survival function, respectively, of the survival time $Z$ of an individual belonging to group $X=j$. Moreover, let $\Lambda_j$ be the respective  cumulative hazard function defined by $\Lambda_j(t) = \int_0^tS_j(s)^{-1}dF_j(s)$ for all $t\geq 0$. All cumulative hazard functions are collected together into the vector $\bLambda =(\Lambda_1,\ldots, \Lambda_k)$. Moreover, let $H_j$ be the distribution function of the observed time $T$ of an individual from group $X=j$, and set the vector $\Hb(t)=(H_1(t),\ldots,H_k(t))$.

To define the estimators and testing procedures, we adopt the standard counting process notation \citep{andersen2012statistical}. For each data point $D_i = (T_i,\Delta_i,X_i)$, we define the processes $N_i(t) = \Delta_i \ind_{\{T_i\leq t\}}$, $Y_{i}(t) = \ind_{\{T_i\geq t\}}$, and $M_{i}(t) = N_{i}(t)-\int_0^t Y_{i}(s)d\Lambda_{X_i}(s)$.  
We recall the reader that by choosing the appropriate filtration $\mathcal F_t$, the processes $(N_i(t))_{t\geq 0}$, $(Y_i(t))_{t\geq 0}$ and $(M_i(t))_{t\geq 0}$ are adapted, predictable, and a martingale, respectively, and so are their respective group-version.

Define the $k$-dimensional vector $\indv_{i}$ by $\indv_{i}=(\ind_{\{X_i=1\}},\ind_{\{X_i=2\}},\ldots,\ind_{\{X_i=k\}})$. Observe that all the entries of $\indv_{i}$ are equal to zero except the entry at the $X_i$th position, which has value one.  By using the previous definition, we introduce vector-valued processes which are analogue of the processes $(N_i(t))_{t\geq 0}$, $(Y_i(t))_{t\geq 0}$ and $(M_i(t))_{t\geq 0}$. We define the $k$-dimensional vectors-valued processes
\begin{align*}
\N^i(t)=N_i(t)\indv_{i}, \quad \Y^i(t)=Y_i(t)\indv_{i},\quad\text{and}\quad \M^i(t)=M_i(t)\indv_{i},
\end{align*}
that take the value of $N_i(t)$, $Y_i(t)$ and $M_i(t)$, respectively, in the coordinate $X_i$. We also consider the population version of those processes, namely, $(\N(t))_{t\geq0}$, $(\Y(t))_{t\geq0}$ and $(\M(t))_{t\geq0}$ defined by
\begin{align*}
\N(t) = \sum_{i=1}^n \N^i(t), \quad \Y(t)=\sum_{i=1}^n \Y^i(t), \quad \text{ and } \quad \M(t)=\sum_{i=1}^n \M^i(t).
\end{align*}
For readability purposes, we denote vectors and matrices (including random ones and processes) with bold letters, whereas scalar with light letters. For vector and matrices we write the entries in subindices, e.g. for a vector $\mathbf{v}$ we write $\mathbf{v}_i$ to represent the $i$-th component of it, and similarly, for a matrix $\mathbf{A}$, we denote by $\mathbf{A}_{ij}$ its entry $(i,j)$. If we have a time-dependent vector/matrix, we write the dependence on time after the subindices, for example $\N_j^i(t)$ represents the $j$-th entry of the vector-valued process $\N^i$ at time $t$.
\subsection{Factorial Designs and Additive Models}\label{sec:FactorialDesigns}
 
A factorial design considers one or more independent variables, known as factors, which take discrete possible values or levels. Common examples of factors in Survival Analysis are, e.g., gender, blood type, {ethnic origin, treatment group} etc.  In this setting, each survival time belongs to one of the subgroups that arise by considering all the combinations of levels across all factors. Factorial designs are important since  they allow us to study the individual effect of each factor, as well as the combined effect, known as interaction, of one or more factors on the survival times of interest. 

One of the simplest factorial designs we may study is the $2\times 2$ factorial design in which we consider two factors, $\mathcal{I}$ and $\mathcal{J}$, each of them having 2 levels, i.e, $\mathcal{I}=\{1,2\}$ and  $\mathcal{J}=\{1,2\}$.  For this example, the total number of groups is $k=4$, and thus there are 4 cumulative hazard functions $\Lambda_{ij}$ with $i\in\mathcal{I}$ and $j\in\mathcal{J}$. There are several questions which can be asked about this system, for instance, we may be interested on testing if there is no effect of the factor $\mathcal{I}$ on the survival times. Mathematically, this is equivalent to assessing if the cumulative hazard functions do not change for different levels of the factor $\mathcal{I}$, that is $\Lambda_{1j}=\Lambda_{2j}$ for all $j\in\mathcal{J}$. By writing the four hazards as the vector-valued function $\bLambda = (\Lambda_{11},\Lambda_{12},\Lambda_{21},\Lambda_{22})$, we can test the hypothesis that there is no effect on $\mathcal I$ as
\begin{align}\label{eqn:mainhypo}
H_0:\{\CMa\bLambda(t)=\bzero,\text{ for all }t\geq 0\}\quad\text{vs}\quad H_1:\{\CMa\bLambda(t)\neq \bzero,\text{for some }t\geq 0\}
\end{align}
where $\CMa =\begin{pmatrix}
1 & 0 & -1&0\\
0 & 1 & 0 & -1
\end{pmatrix}$. In general, $\CMa$ can be any contrast matrix, i.e. a matrix fulfilling 
$\CMa \bone=\bzero$ for the vectors $\bone$ and $\bzero$ consisting of 1' and 0's only.

{Let us know switch to a slightly more general case allowing more than two levels per group by considering two factors $\mathcal I$ and $\mathcal J$, with levels $\{1,\ldots, |\mathcal I|\}$ and $\{1,\ldots, |\mathcal J|\}$, respectively. Then, further interesting null hypotheses and related contrast matrices naturally arise by modelling the factorial design as an additive regression model.  Here, there are} $k=|\mathcal I||\mathcal J|$ combinations between the levels of the two factors, and the survival time of each group is associated to a cumulative hazard function $\Lambda_{ij}$.
Since there is a finite number of groups, the set of cumulative hazard functions $(\Lambda_{ij}: i \in \mathcal I, j \in \mathcal J)$ can be uniquely decomposed as
\begin{align}\label{eqn:AalenDecomposition0}
\Lambda_{ij}(t) = \Lambda_0(t)+ \Phi_i(t) + \Psi_j(t)+ \Sigma_{ij}(t), \quad\text{with } i \in \mathcal I \text{ and } j \in \mathcal J
\end{align}
satisfying $\sum_{i\in \mathcal I}\Phi_i(t) = \sum_{j\in \mathcal J}\Psi_j(t) = \sum_{i\in \mathcal I}\Sigma_{ij}(t) = \sum_{j\in \mathcal J}\Sigma_{ij}(t) = 0$. 
Indeed, by considering the following terms $\Lambda_{i\cdot}= \frac{1}{|\mathcal J|} \sum_{j\in \mathcal J}\Lambda_{ij}$, and $\Lambda_{\cdot j}= \frac{1}{|\mathcal I|} \sum_{i\in \mathcal I}\Lambda_{ij}$ and $\Lambda_{\cdot\cdot} = \frac{1}{|\mathcal J||\mathcal I|}\sum_{(i,j)\in \mathcal I\times \mathcal J}\Lambda_{ij}$, we can easily deduce that
\begin{align}\label{eqn:AalenDecomposition}
\Lambda_0 = \Lambda_{\cdot\cdot}, \quad \Phi_i= \Lambda_{i\cdot}-\Lambda_{\cdot \cdot}, \quad \Psi_j = \Lambda_{\cdot j}-\Lambda_{\cdot \cdot}, \quad {\text{and }} \quad\Sigma_{ij} = \Lambda_{ij}-\Lambda_{i\cdot}-\Lambda_{\cdot j} + \Lambda_{\cdot\cdot}.
\end{align}

Here $\Lambda_0$ is a common cumulative hazard for all the groups, whereas $\Phi_i$ and $\Psi_j$ are the cumulative hazards related only to factors $\mathcal I$ and $\mathcal J$, respectively, and $\Sigma_{ij}$ is the cumulative hazard associated to the interaction between the factors. Several question can be asked here and solved by posing the appropriate testing problem, the most common examples include:
\begin{enumerate}
\item No main effect of the factor $\mathcal I$. We shall test if $H_0:\{\Phi_i=0 \text{ for all }i\in \mathcal I\}$ holds. From \cref{eqn:AalenDecomposition}, such hypothesis is equivalent to $H_0:\{\Lambda_{i\cdot}=\Lambda_{j\cdot} \text{ for all }i,j\in \mathcal I\}$. 
\item No effect of factor $\mathcal I$. We shall test if $H_0:\{\Phi_i+\Sigma_{ij}=0\text{ for all }i\in \mathcal I \text{ and } j \in \mathcal J \}$ which is equivalent to $H_0:\{\Lambda_{ij} = \Lambda_{i'j}\text{ for all }i,i'\in \mathcal I \text{ and } j \in \mathcal J \}$.
\item No interaction effect. We shall test $H_0:\{\Sigma_{ij}=0 \text{ for all }(i,j)\in \mathcal I\times \mathcal J\}$ which is equivalent to test $H_0:\{\Lambda_{ij}-\Lambda_{i\cdot}-\Lambda_{\cdot j}+\Lambda_{\cdot \cdot} = 0\text{ for all }(i,j)\in \mathcal I\times \mathcal J\}$.
\end{enumerate}

We remark that the previous additive model can be generalised to more than two factors, and also covers hierarchical designs, leading to several natural and statistically meaningful testing problems that can be rephrased as \cref{eqn:mainhypo}. {For more details, we refer the readers to \cite{pauly2015asymptotic}.}

Since all the hypotheses above are written in terms of a homogeneous systems of equations $\CMa\bLambda = \bzero$, our main focus is the the general problem where $\CMa$ is any contrast matrix  of $k$ columns (not necessarily arising in a factorial design problem). Observe that due to the way the problem is displayed, without lost of generality, we can consider covariates taking value in the set $[k]$.

\section{A Log-rank Test-statistic}\label{sec:teststat}
The first contribution of this paper is the introduction of a log-rank test-statistic for the factorial design testing problem stated in \cref{eqn:mainhypo} where $\CMa$ is any contrast matrix of interest. The main idea is to find a vector of $k$ cumulative hazards $\bLambda$ that maximise the likelihood of our observations $(T_i,\Delta_i,X_i)_{i=1}^n$ subject to $\CMa \bLambda(t)=\bzero$ for all $t\geq 0$. Under the null hypothesis that  $\CMa \bLambda=\bzero$, maximising the likelihood function with or without the constrain should lead to the same maximum, however, under the alternative, we should observe different behaviours for the maximum likelihood problems,  differences we expect to capture with our testing procedure.

To put this idea in practice, let say we believe that our data is generated by the vector of hazards $\bLambda:[0,\infty)\to \R^k$ satisfying the null. In such case, we can model departures from $\bLambda$ by considering the family of hazard  functions $\bLambda(t;\theta)$ given by
\begin{align}
d\bLambda_i(t;\theta)&=(1+\theta\omega_i(t))d\bLambda_i(t),\quad i\in[k],\label{model1}
\end{align}
where $\theta$ is a scalar, and $\omega_1,\ldots,\omega_k$ are some fixed functions $\omega_i:[0,\infty)\to\R$.

Under the assumption that our data is actually generated by $\bLambda(t;\theta)$ for some $\theta$, testing $H_0:\CMa\bLambda(t)=\bzero$ is equivalent to test $H_0:\theta=0$. Notice that the latter can be done by using the score function obtained from \cref{model1} evaluated at $\theta=0$, i.e.,
\begin{align}
U_0&=\frac{1}{\sqrt{n}}\sum_{i=1}^k \int \omega_i(t)d\N_{i}(t)-\int \omega_i(t) \Y_{i}(t)d\bLambda_i(t)\nonumber\\
&=\frac{1}{\sqrt{n}}\bone^\transpose\int \W(t)\left(d\N(t)-\Diag{\Y}{t}d\bLambda(t)\right)\label{eqn:scoreTheo} ,
\end{align}
where $\W(t)=diag(\omega_1(t),\ldots,\omega_k(t))$ and $\Diag{\Y}{t}=diag(\Y(t))$ are $k\times k$ diagonal matrices. The factor $\frac{1}{\sqrt{n}}$ in $U_0$ is just for normalisation purposes. The term $U_0$ can be used as a test-statistic to assess the validity of the restriction  $H_0:\theta=0$.  Note, however, that $U_0$ cannot be directly evaluated from the data as it requires the unknown quantity $\bLambda$. To solve this issue, we propose to replace $\bLambda$ by a non-parametric estimate. Our non-parametric estimate of $\bLambda$ is based on the additive hazards model. For such, denote by $N(\CMa)$ the null space of $\CMa$, and let $d:= dim(N(\CMa))$. Let $\boldsymbol{v}_1,\ldots,\boldsymbol{v}_d\in\R^k$ be a basis of $N(\CMa)$, and define $\V=(\boldsymbol v_1,\ldots,\boldsymbol v_d)\in\R^{k\times d}$ as the matrix containing the vectors $\boldsymbol v_i$ as columns. Under the  null hypothesis, $\bLambda(t)$ can be written as
\begin{align*}
\bLambda(t)&= \boldsymbol v_1B_1(t)+\ldots+\boldsymbol v_d B_d(t)=\V\B(t),
\end{align*}
where $\B(t)=(\B_1(t),\ldots,\B_d(t))^\transpose$ are a collection of functions that are completely unspecified except by the fact that $\int_{0}^{\infty} |d\B_i|(t)<\infty$.
 
An estimator for $\B(t)$ can be obtained using ordinary least squares \cite[Section 4.2.1.]{aalen2008survival}. The main idea is to use the following equality
\begin{align}\label{eqn:martiMulti}
d\N(t)= \Diag{\Y}{t}d\bLambda(t)+d\M(t)=\Diag{\Y}{t}\V d\B(t)+d\M(t),
\end{align}
which resembles  a linear regression model, and where we can interpret $d\M(t)$ as a zero mean error. Let $\X(t)=\frac{1}{n}\Diag{\Y}{t}\V$ and denote by $\IndRank(t)$ the indicator variable that  $\X(t)$ has full rank. Then, we can estimate $\B(t)$ by multiplying both sides of \cref{eqn:martiMulti} by the pseudo inverse of $\X(t)$ to obtain the estimate $d\widehat{\B}(t)=\frac{1}{n}\IndRank(t)( \X(t)^\transpose\X(t))^{-1}\X(t)^\transpose{d\N}(t),$ and thus $\bLambda(t)$ can be estimated by
\begin{align}
d\widehat{ \bLambda}(t)=\V d\widehat{\B}(t)=\frac{1}n\IndRank(t)\V(\X(t)^\transpose\X(t))^{-1}\X(t)^\transpose{d\N}(t).\label{eqn:EstimatorLambda}
\end{align}
We remark that this estimator is independent of the basis $\boldsymbol v_1,\ldots, \boldsymbol v_d$ we chose for $N(\CMa)$. By plugging in the estimator of $\bLambda$ into \cref{eqn:scoreTheo} we get an estimator of $U_0$, i.e.:
\begin{align}
\widehat{U}_0
&=\frac{1}{\sqrt{n}}\bone^\transpose\int \W(t)\Q(t)d\N(t),\label{eqn:ourLogrank}
\end{align}
where $\Q(t)$ is a $k\times k$ matrix given by 
\begin{align}
\Q(t)&=\IndRank(t)\left(\I_k-\X(t)(\X(t)^\transpose\X(t))^{-1}\X(t)^\transpose\right).\label{eqn:defhatQ}
\end{align}
As $\widehat U_0$ is deduced similarly as the weighted log-rank statistic for the two-sample problem, we call the test-statistic $\widehat{U}_0$ the weighted log-rank statistic for the global null hypothesis $H_0:\CMa\bLambda(t) =\bzero$.

\section{Kernel log-rank test-statistic.}\label{section:KernelTestStatistic}

Observe that the log-rank test-statistic $\widehat U_0$ introduced in \cref{eqn:ourLogrank} is implicitly defined in terms of a function $w:[0,\infty)\times[k]\to \R$ that receives as inputs a time $t\geq 0$ and a group-label $i\in[k]$, that is, $\omega(t,i)=\omega_i(t)$. Thus, it will be convenient to write $\widehat U_0(w)$ to make this dependence explicit. 

The log-rank test is constructed based on the assumption that our data is generated by the model of \cref{{model1}}. Thus choosing the appropriate weight function $w$ in \cref{{model1}} is extremely relevant as a wrong choice could lead us to inferior results. This selection problem is well-known for the two-sample weighted log-rank test \citep{fernandez2020reproducing,ditzhaus2018more,dormuth2022test}. The problem here is that choosing the 'correct' weight function $\omega$ is even more difficult as there are many types of interactions between time and covariate that can be encoded by $w$. Typically, they are hard to visualise or to obtain via an exploratory analysis as we are dealing with several groups at the same time. {Moreover, choosing the weight after a first inspection leads to data adaptive weights, which are not covered by the standard theory.} Rather than dealing with the problem of choosing a specific weight function, we prefer to consider a large variety of weight functions at the same time. In particular, we wish to consider a test-statistic of the form
\begin{align}
\sup_{w\in \mathcal F} \left|\widehat U_0(w)\right|,\label{eqn:maxlogrank}
\end{align}
where $\mathcal F$ is a collection of weight functions. In principle, there is nothing wrong with choosing any space $\mathcal F$, however, it is very likely we will not be able to evaluate $\sup_{w\in \mathcal F} \left|\widehat U_0(w)\right|$, and even in that case, finding rejection regions for a test would be an intractable problem, either exactly, asymptotically or by a resampling scheme. We will see that  this is not the case when $\mathcal F$ is the unit ball of a reproducing kernel Hilbert space, and indeed, we will obtain a test-statistic that i) is easy to evaluate, ii) has desirable asymptotic properties, and iii) has a simple resampling scheme for finding rejection regions.

\subsection{Reproducing kernel Hilbert spaces}
We introduce some basic notions of reproducing kernel Hilbert spaces (RKHS). An RKHS is a space of functions $\omega:\mathcal X\to \R$ satisfying that the evaluation functional $E_x$, $E_x\omega \to \omega(x)$ is continuous for every fixed $x \in \mathcal X$. Since $E_x$ is continuous for any $x \in \mathcal X$, the Riesz representation theorem yields the reproducing property, which states that for any $x\in \mathcal X$ it exists a unique  $K_x \in \mathcal H$ such that $\omega(x) = \langle K_x, \omega\rangle_{\mathcal H}$ for all $\omega \in \mathcal H$. Since $K_x \in \mathcal H$ for all $x\in\mathcal X$, $K_x(y) = K_y(x) = \langle K_x, K_y \rangle_{\mathcal H}$ holds for any $x,y \in \mathcal X$. This allows us to define the so-called \emph{reproducing kernel} $K:\mathcal X^2\to  \R$ as
\begin{align}\label{eqn:repKernel}
K(x,y) =  \langle K_x, K_y \rangle_{\mathcal H}.
\end{align}
From now on, in order to ease the notation, we write $K(x,\cdot)$ instead of $K_x(\cdot)$, even though the former induces a slight abuse of notation.
 
For every RKHS $\mathcal H$ with inner product $\langle \cdot, \cdot \rangle_{\mathcal H}$ there exists a unique symmetric positive-definite reproducing kernel $K:\mathcal X^2\to\R$ satisfying ~\cref{eqn:repKernel}. Conversely, by the Moore-Aronszajn Theorem \citep{Aronszajn1950}, for any  symmetric positive-definite kernel function $K:\mathcal X^2\to \R$, there exists a unique RKHS $\mathcal H$ for which $K$ is its reproducing kernel. The Moore-Aronszajn Theorem is quite convenient for us as we do not need to describe the RKHS $\mathcal H$ but rather its kernel function $K$. Some common kernel functions defined on $\mathcal X = \R^d$ are the Gaussian kernel, $K(x,y) = \exp\{-\|x-y\|^2/\sigma^2\}$, and the Ornstein-Uhlenbeck kernel, $K(x,y) = \exp\{-\|x-y\|/|\sigma|\}$, where in each case $\sigma>0$. 

We introduce the important notion of $c_0$-universality \cite{sriperumbudur2011universality}. Suppose that $\mathcal X$ is a locally compact Hausdorff space (in particular $\mathcal X$ can be subset of $\R$, or a finite product of them), then a kernel $K:\mathcal X^2\to \R$ is said to be a $c_0$-kernel if it is bounded and $K(x,\cdot):\mathcal X\to \R$ is continuous and vanishing at infinity. Moreover, a $c_0$-kernel $K$ is $c_0$-universal if the RKHS $\mathcal H$ associated with $K$ is dense in $C_0(\mathcal X)$. This is equivalent to say that the embedding of finite signed measures $\mu$ into $\mathcal H$, defined as $\int K(x,\cdot)\mu(dx)$, is injective. While this definition is very technical, most of the usual kernels such as the squared exponential kernel, the Laplacian kernel, and the rational quadratic kernel are $c_0$-universal.

In this work we will consider kernels $K:([0,\infty)\times[k])^2 \to \R$, i.e. a kernel that is defined in the space of time and group labels. A rather simple way to construct kernels in this particular domain is to let $K$ be the product of two kernels, one defined for the times and another one for the group-labels. That is, let $L:[0,\infty)^2 \to \R$ and $J:[k]^2\to\R$, then we define the kernel function $K$ by 
\begin{align*}
K((t,i),(s,j))= L(t,s)J(i,j).
\end{align*}
Given a kernel $K$ as described above, we denote by $\K(t,s)$ the $k\times k$ matrix-valued process defined as $\K:[0,\infty)^2\to \R^{k\times k}$, where $\K_{ij}(t,s) = K((t,i),(s,j))$ denotes element in position $(i,j)$ of the matrix $\K(t,s)$. It is worth mentioning that the product of $c_0$-universal kernels is $c_0$-universal.

\subsection{Kernel log-rank statistic}
As we mentioned before, we want to avoid choosing a weight function $w$ in $\widehat U_0(w)$ by using several of them as in \cref{eqn:maxlogrank}. For that, we define
\begin{align}
\Upsilon_n(\CMa) =\left(\sup_{\omega\in B_1(\mathcal{H})}\widehat{U}_0(\omega)\right)^2,\label{eqn:statSup}
\end{align}
where $B_1(\mathcal{H})$ denotes the unit ball of a reproducing kernel Hilbert space of functions $\mathcal{H}$ associated to a kernel function $K:([0,\infty)\times [k])^2\to\R$. We call $\Upsilon_n(\CMa)$ the kernel log-rank test-statistic. Note we explicitly write the subindex $n$ to indicate the number of data points.

By using the reproducing property of RKHS, we can obtain a closed-form expression for $\Upsilon_n(\CMa)$ as following using the matrix-valued analogue of $K$ (which is denoted by $\K$).

\begin{proposition}\label{prop:closedform1}
{We can rewrite the statistic as follows}
\begin{align*}
\Upsilon_n(\CMa)  = \frac{1}{n}\int \int (\Q(t)d\N(t))^{\transpose} \K(t,s) (\Q(s)d\N(s))).
\end{align*}
\end{proposition}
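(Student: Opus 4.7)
The plan is to recognise that the map $\omega \mapsto \widehat U_0(\omega)$ is a continuous linear functional on $\mathcal H$, so by Riesz representation it equals an inner product $\langle \omega, g\rangle_{\mathcal H}$ for some witness element $g\in\mathcal H$. The supremum over the unit ball is then $\|g\|_{\mathcal H}$, and expanding $\|g\|_{\mathcal H}^2$ via the reproducing property yields the advertised double integral.

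Concretely, first I would rewrite
\[
\widehat U_0(\omega) \;=\; \frac{1}{\sqrt n}\sum_{i=1}^k \int \omega(t,i)\,\bigl(\Q(t)d\N(t)\bigr)_i,
\]
using that $\mathbf{1}^\transpose \W(t)\Q(t)d\N(t) = \sum_i \omega_i(t)(\Q(t)d\N(t))_i$ and $\omega_i(t)=\omega(t,i)$. Applying the reproducing property $\omega(t,i)=\langle \omega, K((t,i),\cdot)\rangle_{\mathcal H}$ and pulling $\omega$ outside the integral (a Bochner-type interchange, justified because $\N$ is a pure jump process with finitely many jumps, so both integrals are in fact finite sums over observed event times), I obtain
\[
\widehat U_0(\omega) \;=\; \Bigl\langle \omega,\; g\Bigr\rangle_{\mathcal H}, \qquad g \;:=\; \frac{1}{\sqrt n}\sum_{i=1}^k \int K\bigl((t,i),\cdot\bigr)\,\bigl(\Q(t)d\N(t)\bigr)_i \;\in\;\mathcal H.
\]

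Next, by Cauchy--Schwarz, $\sup_{\omega\in B_1(\mathcal H)}\widehat U_0(\omega) = \|g\|_{\mathcal H}$ (the supremum is attained at $g/\|g\|_{\mathcal H}$, assuming $g\neq 0$; the case $g=0$ is trivial). Squaring and expanding the inner product bilinearly,
\[
\|g\|_{\mathcal H}^2 \;=\; \frac{1}{n}\sum_{i,j=1}^k \int\!\!\int \bigl\langle K((t,i),\cdot), K((s,j),\cdot)\bigr\rangle_{\mathcal H}\,\bigl(\Q(t)d\N(t)\bigr)_i\,\bigl(\Q(s)d\N(s)\bigr)_j.
\]
Using $\langle K((t,i),\cdot), K((s,j),\cdot)\rangle_{\mathcal H} = K((t,i),(s,j)) = \K_{ij}(t,s)$, the double sum collapses into the matrix-vector form $\frac{1}{n}\int\!\int (\Q(t)d\N(t))^\transpose \K(t,s)(\Q(s)d\N(s))$, which is the claim.

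The only delicate point is the legitimacy of pushing the inner product through the stochastic integrals, i.e.\ that $g$ is actually an element of $\mathcal H$ and that $\int \langle \omega, K((t,i),\cdot)\rangle_{\mathcal H}(\Q(t)d\N(t))_i = \langle \omega, \int K((t,i),\cdot)(\Q(t)d\N(t))_i\rangle_{\mathcal H}$. This is the main (mild) obstacle, and it is handled by observing that $d\N$ is supported on the finite set of observed event times $\{T_\ell:\Delta_\ell=1\}$, so each integral reduces to a finite linear combination of elements $K((T_\ell,X_\ell),\cdot)\in\mathcal H$ with deterministic (given the data) coefficients coming from entries of $\Q(T_\ell)$; both the membership in $\mathcal H$ and the interchange then follow from finite linearity of the inner product.
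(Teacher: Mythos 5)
Your proof is correct and matches the paper's argument step for step: you recognise $\widehat U_0(\omega)$ as the inner product $\langle\omega,g\rangle_{\mathcal H}$ of $\omega$ with the RKHS embedding $g$ of the signed discrete measure $\Q(t)\,d\N(t)$ (the paper calls this element $\widehat\xi$), observe that the supremum over the unit ball equals $\|g\|_{\mathcal H}$, and expand the squared norm via the reproducing property to obtain the double integral against $\K$. Your remark that the interchange of $\langle\cdot,\cdot\rangle_{\mathcal H}$ with the stochastic integral is licit because $d\N$ has finitely many atoms is precisely the justification the paper gives for $\widehat\xi$ being well defined.
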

To understand the expression above, recall that for each $t\geq 0$, $\Q(t)$ is a $k\times k$ matrix, whereas $\N(t)$ is a vector in $\R^k$, and similarly, for all $t,s\geq 0$, $\K(t,s)$ is a $k\times k$ matrix, so the above integral is indeed a scalar.

The next step is to study the distribution of the test-statistic $\Upsilon_n(\CMa)$ under the null hypothesis. By understanding how $\Upsilon_n(\CMa)$ behaves under the null hypothesis we will be able to determine rejection regions which will be fundamental to implement a testing procedure. Thus, our next result is important as it  characterises the asymptotic null distribution of our test statistic $\Upsilon_n(\CMa)$. 

From now on, we will assume that our kernel functions are bounded unless mentioned otherwise. This is mostly to avoid tedious computations in our proofs.

\begin{condition}\label{Cond:bounded} The kernel function $K$ is $c_0$-universal.
\end{condition}

\begin{theorem}\label{thm:asymptoticUpsilon}
Under the null hypothesis and \Cref{Cond:bounded}, it exists a random variable $\Upsilon(\CMa)$ with cumulative distribution function $P_{\Upsilon}$ such that $\Prob(\Upsilon_n(\CMa)\leq t) \to P_{\Upsilon}(t)$ for any $t\geq 0$, when the number of data points tend to infinity, that is, $\Upsilon_n(\CMa) \overset{\mathcal D}{\to} \Upsilon(\CMa)$.
\end{theorem}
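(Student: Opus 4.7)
The plan is to reformulate $\Upsilon_n(\CMa)$ as the squared $\mathcal{H}$-norm of an $\mathcal{H}$-valued random element, apply a Hilbert-space central limit theorem, and then invoke the continuous mapping theorem. First, I would show that under $H_0$ the drift inside $d\N$ is annihilated by $\Q$: since $\bLambda(t)$ lies in the column space of $\V$, we have $d\bLambda(t) = \V d\B(t)$ for some $\B$, so $d\N(t) = \Diag{\Y}{t} d\bLambda(t) + d\M(t) = n\X(t)d\B(t) + d\M(t)$. Because $\Q(t)$ is, by construction, the orthogonal projection onto $\operatorname{col}(\X(t))^{\perp}$ on the event $\{\IndRank(t)=1\}$ (whose probability tends to $1$), this yields $\Q(t) d\N(t) = \Q(t) d\M(t)$, so $\widehat U_0(\omega)$ becomes a linear functional of a $k$-dimensional martingale integral.

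Next, using the reproducing property one writes $\widehat{U}_0(\omega) = \langle \phi_n, \omega\rangle_\mathcal{H}$ with
\[
\phi_n := \frac{1}{\sqrt{n}}\int \sum_{l=1}^k K((t,l),\cdot)\bigl(\Q(t) d\M(t)\bigr)_l \in \mathcal{H},
\]
so that $\Upsilon_n(\CMa) = \|\phi_n\|_\mathcal{H}^2$ by Cauchy--Schwarz, in agreement with \cref{prop:closedform1}. Let $\Qlim(t)$ denote the deterministic pointwise limit of $\Q(t)$, which exists on the support of the observation times because $\Y(t)/n$ converges uniformly by standard empirical process arguments. I would then introduce the linearised counterpart
\[
\tilde\phi_n := \frac{1}{\sqrt{n}}\sum_{i=1}^n f(D_i), \qquad f(D_i) := \int \sum_{l=1}^k K((t,l),\cdot)\,\Qlim_{l,X_i}(t)\, dM_i(t),
\]
which is a normalised sum of i.i.d.\ centred $\mathcal{H}$-valued random elements (the zero mean following from the martingale property of $M_i$). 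By It\^o isometry together with the boundedness of both $K$ (\Cref{Cond:bounded}) and $\Qlim$, one verifies $E\|f(D_1)\|_\mathcal{H}^2 < \infty$, so the Hilbert-space CLT gives $\tilde\phi_n \overset{\mathcal D}{\to} \phi$ in $\mathcal{H}$ with $\phi$ centred Gaussian of covariance operator $\Sigma := E[f(D_1) \otimes f(D_1)]$. The continuous mapping theorem applied to $\|\cdot\|_\mathcal{H}^2$ then yields $\|\tilde\phi_n\|_\mathcal{H}^2 \overset{\mathcal D}{\to} \|\phi\|_\mathcal{H}^2 =: \Upsilon(\CMa)$, and a spectral decomposition of $\Sigma$ identifies this limit with $\sum_{j\geq 1}\lambda_j Z_j^2$ for i.i.d.\ $N(0,1)$ variables $Z_j$ and eigenvalues $\lambda_j\geq 0$, consistent with the weighted $\chi^2_1$ structure anticipated in the introduction.

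The principal obstacle will be the linearisation: to transfer the limit to $\phi_n$ I must establish $\|\phi_n - \tilde\phi_n\|_\mathcal{H} \to 0$ in probability, after which Slutsky closes the argument. Expressing $\phi_n - \tilde\phi_n$ as an $\mathcal{H}$-valued martingale integral with integrand proportional to $\Q(t) - \Qlim(t)$ and applying It\^o isometry, I would bound the second moment by a multiple of $n^{-1}E\!\int \|\Q(t)-\Qlim(t)\|^2 d\langle\M\rangle(t)$, and control this via (i) boundedness of $K$, which yields $\|K((t,\cdot),\cdot)\|_\mathcal{H}^2$ uniformly bounded, (ii) a uniform law of large numbers for $\X(t)^\transpose\X(t)$ on intervals where observations still accrue, and (iii) the $O(n)$ bound on $d\langle \M \rangle(t)$ in mean. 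The delicate point is (ii): near the right endpoint of the observation support $\Y(t)/n$ may be small, so the pseudo-inverse $(\X^\transpose\X)^{-1}$ requires care to stay well controlled uniformly in $t$. Once this approximation is in place, the $\chi^2$-type weak limit $\Upsilon(\CMa)$ follows.
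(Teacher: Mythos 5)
Your proposal is correct in substance, but it takes a genuinely different route to the weak limit in its final step. Both arguments share the same reduction: under $H_0$ one has $\Q(t)\,d\N(t)=\Q(t)\,d\M(t)$ because $\Q\X=0$, and the statistic is linearised by replacing $\Q$ with its deterministic limit $\Qlim$, after which the object is a quadratic form in a sum of i.i.d.\ contributions. Where you diverge from the paper is in passing to the limit: you treat $\tilde\phi_n=n^{-1/2}\sum_i f(D_i)$ as an $\mathcal H$-valued i.i.d.\ sum, invoke the Hilbert-space CLT to get a centred Gaussian limit $\phi$, and conclude by the continuous mapping theorem applied to $\|\cdot\|^2_{\mathcal H}$, with the spectral decomposition of the covariance operator delivering the weighted $\chi^2_1$ representation. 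The paper instead writes the linearised statistic as a degenerate $V$-statistic $n^{-1}\sum_{i,j}J(D_i,D_j)$ with $J(u,v)=\langle f(u),f(v)\rangle_{\mathcal H}$, verifies $\E[J(D_i,D_j)\mid D_j]=0$ via the optional stopping theorem, and applies Serfling's Theorem B to land on the same $\sum_i\lambda_i\xi_i^2$ limit (see \Cref{thm:asymptoticUpsilonRefi} together with \Cref{prop:limitApproxTestStat,prop:approxTestStat}). These two routes are morally equivalent — the eigenvalues of the integral operator $T_J$ coincide with those of the covariance operator of $f(D_1)$ — but the Hilbert-space formulation makes the $\chi^2$-type limit transparent and puts the integrability condition $\E\|f(D_1)\|^2_{\mathcal H}<\infty$ in the centre, rather than checking $\E|J(D_i,D_i)|$ and $\E[J(D_i,D_j)^2]$ separately. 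Both require a bounded kernel (your \Cref{Cond:bounded}).

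One remark on the linearisation: your concern that $(\X^\transpose\X)^{-1}$ may be hard to control uniformly near the right endpoint is unnecessary, and trying to control the pseudo-inverse directly is the wrong tack. The paper sidesteps the issue entirely by noting that $\Q(t)$ is either zero or an orthogonal projection, so (by \Cref{prop:projection}) $|\Q_{ij}(t)|\leq 1$ and hence $|\D_{ij}(t)|=|\Q_{ij}(t)-\Qlim_{ij}(t)|\leq 2$ uniformly in $t$; dominated convergence (with the constant envelope, bounded kernel, and pointwise $\D\to 0$ on compacts) then finishes the argument without ever bounding $(\X^\transpose\X)^{-1}$. You should replace the uniform-LLN-on-the-pseudo-inverse plan by this projection bound, which is what actually makes the error term tractable.
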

The proof of \Cref{thm:asymptoticUpsilon} and other theoretical results{, such as the explicit form of the limiting distribution,} are postponed to \Cref{app:proofs}.

We continue by analysing the behaviour of our test-statistic under the alternative hypothesis. The natural alternative hypothesis is $H_1:\{\CMa \bLambda(t) \neq \bzero \text{ for some $t \in (0,\infty)$}\}$. Nevertheless, we will consider a slightly simpler alternative $H_1'$; define $\tau_H:=  \sup\{t\geq 0: \Xlim(t)=\Diag{\S}{t}\V\text{ has full rank}\}$, where $\S(t) = (p_1(1-H_1(t)),\ldots, p_k (1-H_k(t)))$ (which clearly depends on the distributions that generate the times of interest and censoring), then $H_1'$ is given by
\begin{align}
    H_1':\{\CMa\bLambda(t)\neq \bzero,\text{for some }t \in [0,\tau_H)\}.\label{eqn:defiHalternativeGood}
\end{align}
We call $H_1'$ the visible version of $H_1$. The idea is that the violation of the null hypothesis is not hidden by the censoring distribution (whereas in $H_1$ this might be possible). 

\begin{theorem}\label{thm:alternative}
Assume \Cref{Cond:bounded}, then if a visible alternative $H_1'$ holds, then $\Upsilon_n(\CMa)\overset{\Prob}{\to} \infty$.
\end{theorem}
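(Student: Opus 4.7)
The plan is to recast $\Upsilon_n(\CMa)$ as a squared RKHS norm, split that norm via the Doob--Meyer identity into a ``signal'' and a ``noise'' piece, and show that under $H_1'$ the signal grows at rate $\sqrt n$ while the noise stays $O_P(1)$. For the representation, the reproducing property gives $\widehat U_0(\omega)=\langle \omega,\mu_n\rangle_{\mathcal H}$ with $\mu_n := \frac{1}{\sqrt n}\int \K(\cdot,t)\,\Q(t)\,d\N(t)\in \mathcal H$, so $\Upsilon_n(\CMa)=\|\mu_n\|_{\mathcal H}^2$ (this also recovers \Cref{prop:closedform1}). Substituting $d\N(t)=\Diag{\Y}{t}d\bLambda(t)+d\M(t)$ decomposes $\mu_n=\mu_n^{\mathrm{sig}}+\mu_n^{\mathrm{noise}}$, where $\mu_n^{\mathrm{sig}}=\sqrt n \int \K(\cdot,t)\bigl[\tfrac1n\Q(t)\Diag{\Y}{t}\bigr]d\bLambda(t)$ and $\mu_n^{\mathrm{noise}}=\tfrac{1}{\sqrt n}\int \K(\cdot,t)\,\Q(t)\,d\M(t)$.

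The noise term is precisely the stochastic integral against $\M$ handled in the proof of \Cref{thm:asymptoticUpsilon}; that analysis uses only boundedness of $\K$ and the martingale structure of $\M$, not $\CMa\bLambda=0$, so it also yields $\|\mu_n^{\mathrm{noise}}\|_{\mathcal H}^2=O_P(1)$ under $H_1'$. For the signal term, fix any $T<\tau_H$: a uniform law of large numbers gives $n^{-1}\Diag{\Y}{t}\to\Diag{\S}{t}$ uniformly on $[0,T]$, so $\X(t)\to\Xlim(t)$ uniformly, and since $\Xlim(t)$ has full rank on $[0,\tau_H)$ so does $\X(t)$ with probability tending to $1$; consequently $\tfrac1n\Q(t)\Diag{\Y}{t}\to \Qlim(t)\Diag{\S}{t}$ uniformly on $[0,T]$. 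Boundedness of $\K$ and dominated convergence in $\mathcal H$ then give $\nu_n := n^{-1/2}\mu_n^{\mathrm{sig}}\longrightarrow \nu := \int_{[0,\tau_H)}\K(\cdot,t)\,\Qlim(t)\,\Diag{\S}{t}\,d\bLambda(t)$ in $\mathcal H$-norm, in probability.

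The crux of the proof is showing $\|\nu\|_{\mathcal H}>0$. Interpret $\nu$ as the kernel mean embedding of the $\R^k$-valued signed measure $\boldsymbol m(dt) := \Qlim(t)\,\Diag{\S}{t}\,d\bLambda(t)$ on $[0,\tau_H)$, equivalently a finite signed measure on $[0,\infty)\times[k]$. Because $\S(t)$ has strictly positive entries on $[0,\tau_H)$, $\Diag{\S}{t}$ is invertible, and combined with $\ker\Qlim(t)=\operatorname{col}(\Xlim(t))=\Diag{\S}{t}\operatorname{col}(\V)$ this gives $\ker\bigl(\Qlim(t)\Diag{\S}{t}\bigr)=\operatorname{col}(\V)=N(\CMa)$. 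Hence $\boldsymbol m\equiv 0$ iff $d\bLambda(t)\in N(\CMa)$ for a.e.\ $t<\tau_H$, iff $\CMa\bLambda(t)=0$ throughout $[0,\tau_H)$; by the visibility of $H_1'$ the latter fails, so $\boldsymbol m\neq 0$. The $c_0$-universality of $\K$ (\Cref{Cond:bounded}) then makes the embedding injective on finite signed measures on $[0,\infty)\times[k]$, forcing $\nu\neq 0$.

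Putting things together, the triangle inequality gives $\|\mu_n\|_{\mathcal H}\ge\sqrt n\,\|\nu_n\|_{\mathcal H}-\|\mu_n^{\mathrm{noise}}\|_{\mathcal H}$, and since $\|\nu_n\|_{\mathcal H}\xrightarrow{\Prob}\|\nu\|_{\mathcal H}>0$ while $\|\mu_n^{\mathrm{noise}}\|_{\mathcal H}=O_P(1)$, $\|\mu_n\|_{\mathcal H}\xrightarrow{\Prob}\infty$ and thus $\Upsilon_n(\CMa)\xrightarrow{\Prob}\infty$. The main obstacle is the implication $\boldsymbol m\neq 0\Rightarrow \nu\neq 0$, i.e.\ turning the coarse, finite-dimensional violation $\CMa\bLambda\not\equiv 0$ into non-vanishing of an infinite-dimensional RKHS embedding; the visibility restriction $t<\tau_H$ is essential because it is what makes $\Diag{\S}{t}$ invertible, so that $\boldsymbol m$ faithfully records the departure, and $c_0$-universality then closes the remaining gap.
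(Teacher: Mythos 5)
Your proposal is correct and reaches the paper's conclusion through essentially the same sequence of ideas, but with a cleaner bookkeeping. The paper (Lemmas \ref{Lemma1} and \ref{Lemma3}) proves $n^{-1}\Upsilon_n(\CMa)\to c$ directly: after replacing $\Q$ by $\Qlim$ it regards $\tfrac1n\bone^\transpose\!\int\W\Qlim\,d\N$ as an average of i.i.d.\ terms and identifies $c$ as the squared RKHS norm of the embedding of the measure $\mu(dt) = \Qlim(t)\Diag{\S}{t}\,d\bLambda(t)$. You instead make the Doob--Meyer split $d\N=\Diag{\Y}{\cdot}\,d\bLambda + d\M$ explicit, treat the martingale term as $O_P(1)$ by reusing the null analysis (which indeed relies only on the martingale property of $\M$ and boundedness of $\K$, not on $\CMa\bLambda=\bzero$), and bound the signal from below via the triangle inequality. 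Since $c=\|\nu\|_{\mathcal H}^2$ in your notation, the two are the same calculation organized differently; what your version buys is a sharper separation of which step uses which hypothesis. Your non-degeneracy argument, $\ker\bigl(\Qlim(t)\Diag{\S}{t}\bigr)=N(\CMa)$ via invertibility of $\Diag{\S}{t}$ on $[0,\tau_H)$, is a more direct form of the paper's Lemma \ref{Lemma3} (which reaches the same conclusion by decomposing $d\bLambda$ along $\V$ and $\V^\perp$); both rely on $\S(t)$ having strictly positive entries on $[0,\tau_H)$, an assumption the paper leaves implicit at the step ``$\Diag{\S}{t}\V u = \Diag{\S}{t}\V^\perp d\B^\perp(t)$ implies $\V u = \V^\perp d\B^\perp(t)$''. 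Two small technical points worth tightening in a full write-up: (i) the convergence $\nu_n\to\nu$ must handle the tail of the integral beyond any fixed $T<\tau_H$ (the paper does this with a two-piece truncation and an LLN tail bound), not just local uniform convergence of $\tfrac1n\Q\Diag{\Y}{\cdot}$; (ii) the noise term contains $\Q$, not $\Qlim$, so ``reusing'' \Cref{prop:limitApproxTestStat} requires also the $\Q\to\Qlim$ replacement step of \Cref{prop:approxTestStat}, which likewise does not use the null.
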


{Even though \Cref{thm:asymptoticUpsilon,thm:alternative} provide the basis to develop an asymptotically valid and consistent testing procedure, the limiting distribution $P_\Upsilon$ is rather complex and depends on unknown quantities. Consequently, there is practically no possibility to determine the respective $(1-\alpha)$-quantile to serve as a critical value. However, the latter can be approximated by a Wild Bootstrap resampling scheme, which we introduce in the following section.}

\section{A Wild Bootstrap resampling scheme}\label{sec:wild}
In order to obtain a proper testing procedure we need to find rejection regions to decide if we shall reject the null hypothesis or not. To find a region we will use a Wild bootstrap resampling scheme. 

By \Cref{prop:closedform1}  $\Upsilon_n(\CMa)$ can be written in terms of the kernel and $\Q(t) d\N(t)$. We recall that under the null $\Q(t) d\N(t)$ is a vector of martingales (and hence it has 0 mean for each time $t$). The idea is now to resample from $\Q(t) d\N(t)$ as it contains all the relevant information about the  data $(T_{\ell},\Delta_{\ell},X_{\ell})$, indeed all the randomness involved in $\Upsilon_n(\CMa)$ is there. 

Since we have $n$ data points, consider weights $W_1,\ldots, W_n$, where the $W_i$'s are independent and identically distributed with $\E(W_i) = 0$ and $\Var(W_i)=1$. Then, define a vector-valued process $\widetilde \N$ by
\begin{align*}
\widetilde{\N}(t) = \sum_{i=1}^n W_i \N^i(t).
\end{align*}
Thus, we can define the Wild Bootstrap version of $\Upsilon_n(\CMa)$ as
\begin{align}
\widetilde \Upsilon_n(\CMa)& = \frac{1}{n}\left(\sup_{\omega \in B_1(\mathcal H)}\bone^\transpose\int \W(t)\Q(t)d\widetilde{\N}(t) \right)^2=\frac{1}{n}\int \int (\Q(t)d\widetilde{\N}(t))^{\transpose} \K(t,s) (\Q(s)d\widetilde{\N}(s))),\label{eqn:closeformwild1}
\end{align}
where recall that $\W(t)=diag(\omega_1(t),\ldots,\omega_k(t))$, and the second equality follows from an analogous result to \Cref{prop:closedform1}.

Our Wild Bootstrap resampling scheme is asymptotically correct, in the sense that it approximates the limit distribution of $\Upsilon_n(\CMa)$, as we show in the following theorem.
\begin{theorem}\label{thm:wildAymp}
Assume \Cref{Cond:bounded}, then under the null hypothesis $H_0$ it holds that
\begin{align*}
 \Bigl | \Prob(\widetilde{\Upsilon}_n(\CMa) < t| D_1,\ldots,D_n) - \Prob(\Upsilon(\CMa)< t) \Bigr | \to 0,
\end{align*}
for almost all data points $D_1,D_2,\ldots$ as $n$ grows to infinity, where $\Upsilon(\CMa)$ is the limiting random variable of \Cref{thm:asymptoticUpsilon}.
\end{theorem}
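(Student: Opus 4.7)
The plan is to reduce the statement to a conditional central limit theorem in the RKHS $\mathcal{H}$ and then invoke the continuous mapping theorem. By \Cref{prop:closedform1} and the analogous identity in \cref{eqn:closeformwild1}, both $\Upsilon_n(\CMa)$ and $\widetilde{\Upsilon}_n(\CMa)$ can be written as $\|\xi_n\|_{\mathcal{H}}^2$ and $\|\widetilde{\xi}_n\|_{\mathcal{H}}^2$, respectively, where
\[
\widetilde{\xi}_n = \frac{1}{\sqrt{n}} \sum_{i=1}^n W_i\, \xi_n^{(i)},\qquad \xi_n^{(i)} := \int \K(t,\cdot)\,\Q(t)^{\transpose}\, d\N^i(t),
\]
viewed as elements of the $\R^k$-valued RKHS induced by the matrix-kernel $\K$. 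Since the squared-norm map is continuous, it suffices to establish that the conditional distribution of $\widetilde{\xi}_n$ given the data converges, almost surely, to the Gaussian limit $\xi$ identified in the proof of \Cref{thm:asymptoticUpsilon}.

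Conditionally on $D_1,\dots,D_n$, $\widetilde{\xi}_n$ is a normalised sum of independent mean-zero random elements of $\mathcal{H}$, so a Hilbert-space multiplier CLT applies once one verifies (i) almost-sure convergence in Hilbert--Schmidt norm of the conditional covariance operator $\widetilde\Sigma_n := n^{-1}\sum_i \xi_n^{(i)}\otimes\xi_n^{(i)}$ to the covariance $\Sigma$ of $\xi$, and (ii) a Lindeberg-type negligibility condition. Under \Cref{Cond:bounded} the kernel is bounded, $\Q(t)$ is a projection and each $\N^i$ has at most one jump, so each $\xi_n^{(i)}$ is deterministically $\mathcal{H}$-bounded and (ii) is immediate.

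For (i), I would decompose $\xi_n^{(i)} = \eta^{(i)} + r_n^{(i)}$ with $\eta^{(i)} := \int \K(t,\cdot)\,\Qlim(t)^{\transpose}\, d\N^i(t)$, where $\Qlim$ is the almost-sure deterministic limit of $\Q(t)$ obtained from the uniform law of large numbers for $\Y(t)/n$ (the same tool underlying the proof of \Cref{thm:asymptoticUpsilon}). The elements $\{\eta^{(i)}\}$ are then i.i.d.\ and uniformly bounded in $\mathcal{H}$, so the strong law in the Hilbert--Schmidt class gives $n^{-1}\sum_i \eta^{(i)}\otimes \eta^{(i)} \to \E[\eta^{(1)}\otimes \eta^{(1)}]$ almost surely; a Cauchy--Schwarz bound combined with the uniform-in-$t$ convergence $\Q\to \Qlim$ transfers the limit to $\widetilde\Sigma_n$.

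The hardest step is checking that $\E[\eta^{(1)}\otimes\eta^{(1)}]$ coincides with the covariance operator $\Sigma$ of the Gaussian limit of the unbootstrapped $\xi_n = n^{-1/2}\sum_i \xi_n^{(i)}$. Here one exploits that under $H_0$ the identity $\Q(t)\X(t)=\bzero$ gives $\Q(t)\,d\N(t) = \Q(t)\,d\M(t)$, so $\xi_n$ is a Hilbert-valued stochastic integral against a sum of orthogonal counting-process martingales whose predictable variation, after scaling by $1/n$, converges almost surely to the same operator computed from $\E[\eta^{(1)}\otimes\eta^{(1)}]$ (via the multiplicative intensity structure and Fubini). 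Combining (i) and (ii) yields the conditional CLT, from which continuous mapping gives $\widetilde{\Upsilon}_n(\CMa)\mid D_1,\dots,D_n \overset{\mathcal D}{\to} \Upsilon(\CMa)$ a.s.; continuity of $P_\Upsilon$ (it is the law of a weighted sum of independent $\chi_1^2$'s) upgrades pointwise convergence of the conditional CDFs to the uniform-in-$t$ statement of the theorem.
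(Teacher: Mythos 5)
Your proposal is correct and takes a genuinely different route from the paper's. The paper reduces both $\Upsilon_n(\CMa)$ and $\widetilde{\Upsilon}_n(\CMa)$ to degenerate $V$-statistics with kernels $J$ and $J'$ (the former driven by $d\M^i$, the latter by $d\N^i$), invokes the Dehling--Mikosch theorem for randomly weighted quadratic forms to get the conditional limit of the bootstrap as a $\chi^2$-mixture governed by the spectrum of the integral operator $T_{J'}$, and then proves that $T_J$ and $T_{J'}$ have the same non-zero spectrum via the conjugacy $T_J = B\,T_{J'}A$ with the advanced/backward operators $A,B$ (\Cref{prop:fowardbackward}). You instead lift everything into the RKHS, treat the normalised sums as Hilbert-valued random elements, and apply a multiplier CLT, reducing the whole matter to showing that the covariance operator $\Sigma' = \E[\eta^{(1)}\otimes\eta^{(1)}]$ of the bootstrap building block (driven by $d\N^1$) equals the covariance operator $\Sigma = \E[g_1\otimes g_1]$ of the original building block (driven by $d\M^1$). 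This is a stronger statement than equality of spectra, and it is in fact true under $H_0$: for any $h,h'\in\mathcal H$, writing $\psi_h(t)=\sum_i h(t,i)\Qlim_{iX_1}(t)$, the counting-process martingale isometry gives $\E[\langle g_1,h\rangle\langle g_1,h'\rangle \mid X_1=x] = \int \psi_h\psi_{h'}\,\E[Y_1\mid X_1=x]\,d\Lambda_x = \int \psi_h\psi_{h'}(1-H_x(t-))\,d\Lambda_x(t)$, while $\E[\langle\eta^{(1)},h\rangle\langle\eta^{(1)},h'\rangle\mid X_1=x] = \E[\Delta_1\psi_h(T_1)\psi_{h'}(T_1)\mid X_1=x]$ evaluates to exactly the same integral; the means of both vanish under $H_0$. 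So the key identity you flag as the hardest step is real and gives a more elementary proof that sidesteps the $A/B$ conjugacy entirely. Your approach buys simplicity and transparency; the paper's buys a clean $V$-statistics framework that it reuses for the multiple-contrast results. Two minor points if you were to write this out: the claimed ``uniform-in-$t$'' convergence $\Q\to\Qlim$ holds only on compact time intervals, so the passage from $\Sigma_n$ to $\Sigma$ needs the same compact-interval splitting the paper performs in its \Cref{prop:approxTestStat} and \Cref{prop:WBapprox}; and the Lindeberg condition, while indeed immediate from the uniform $\mathcal H$-boundedness of the summands, should be stated as such rather than asserted.
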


Note that in the previous theorem the result is conditional on the data we observe, and hence, we can resample as many times as we want (using independent copies of $(W_1,\ldots, W_n)$) to obtain a good empirical representation of the distribution $\Prob(\widetilde{\Upsilon}_n(\CMa) < t|D_1,\ldots,D_n)$, which is asymptotically the same as $\Prob(\Upsilon_n(\CMa) < t)$ {under the null hypothesis}. 

With our resampling scheme defined, we are ready to describe our testing procedure. Our procedure relies on approximating the quantiles of the distribution of ${\Upsilon}(\CMa)$ by resampling from $\widetilde{\Upsilon}_n(\CMa)$. The algorithm for our testing procedure is as follows.

\noindent\rule{\linewidth}{0.4pt}\\
\textbf{Algorithm 1: Testing Procedure for a Single Contrast Matrix}\\
\noindent\rule{\linewidth}{0.4pt}
\begin{enumerate}
\item Set the desired level of the test:  $\alpha\in (0,1)$.
\item Generate $M$ independent copies of $(W_1,\ldots, W_n)$, and  use them to compute $M$ copies of $\widetilde{\Upsilon}_n(\CMa)$, to then compute the $(1-\alpha)$-quantile of the sample. Call such a quantile $\widetilde{q}^M_n({1-\alpha})$.
\item Compute $\Upsilon_n(\CMa)$.
\item Reject the null hypothesis if $\Upsilon_n(\CMa)>\widetilde{q}^M_n({1-\alpha})$, otherwise do not reject.
\end{enumerate}
\noindent\rule{\linewidth}{0.4pt}

In practice, kernel functions are evaluated in data points, so the implementation of our testing procedure involves mostly manipulation of matrices.  Implementing our procedure is rather fast as many operations can be recycled (in particular matrix multiplications) to be used by all the Wild bootstrap samples.

The next theorem shows that our algorithm is asymptotically correct when testing against a visible alternative. Note that we reject the null hypothesis if the event $\{\Upsilon_n(\CMa)> \widetilde{q}^M_n({1-\alpha})\}$ holds true, then 
\begin{align*}
    \Prob(H_0 \text{ is rejected}) = \Prob\left(\Upsilon_n(\CMa)> \widetilde{q}^M_n({1-\alpha})\right).
\end{align*}
The previous expression is the power of the test when a visible alternative $H_1'$ holds.

\begin{theorem}\label{thm:Algo1Works}
 Assume \Cref{Cond:bounded} and suppose we are testing $H_0$ against a visible alternative $H_1'$. Let $\alpha \in (0,1)$, then Algorithm 1 is asymptotically correct, that is, under the null hypothesis
$$\limsup_{n\to \infty}\limsup_{M\to \infty}  \Prob(\Upsilon_n(\CMa)> \widetilde{q}^M_n({1-\alpha})) \leq \alpha,$$
and under the alternative hypothesis $H_1'$ we have that for any fixed $M\in\mathbb{N}$, it holds
$$\lim_{n\to \infty}  \Prob(\Upsilon_n(\CMa)> \widetilde{q}^M_n({1-\alpha})) = 1.$$
\end{theorem}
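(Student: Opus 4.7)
The plan is to handle the two assertions separately. \textbf{Null case.} For fixed $n$, let $\widetilde{q}_n(1-\alpha)$ denote the conditional $(1-\alpha)$-quantile of $\widetilde{\Upsilon}_n(\CMa)$ given the data, and let $q_\Upsilon(1-\alpha)$ be the $(1-\alpha)$-quantile of the limit distribution $P_\Upsilon$. As $M \to \infty$ with $n$ fixed, a Glivenko--Cantelli argument applied to the i.i.d.\ (conditional on the data) bootstrap draws gives $\widetilde{q}^M_n(1-\alpha) \to \widetilde{q}_n(1-\alpha)$ almost surely, for almost every data sequence (provided the conditional distribution is continuous at the quantile, which can be transferred from the continuity of $P_\Upsilon$). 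Next, \Cref{thm:wildAymp} together with the continuity of $P_\Upsilon$ at $q_\Upsilon(1-\alpha)$ yields $\widetilde{q}_n(1-\alpha) \to q_\Upsilon(1-\alpha)$ almost surely. Combining these with the weak convergence $\Upsilon_n(\CMa) \overset{\mathcal D}{\to} \Upsilon(\CMa)$ of \Cref{thm:asymptoticUpsilon} and a Slutsky-type argument applied in the iterated limit (first $M\to\infty$, then $n\to\infty$) gives
\[
\limsup_{n\to\infty}\limsup_{M\to\infty} \Prob\bigl(\Upsilon_n(\CMa) > \widetilde{q}^M_n(1-\alpha)\bigr) \leq \Prob(\Upsilon(\CMa) > q_\Upsilon(1-\alpha)) \leq \alpha.
\]

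\textbf{Alternative case.} Here $M$ is fixed and \Cref{thm:alternative} gives $\Upsilon_n(\CMa) \overset{\Prob}{\to} \infty$. It therefore suffices to verify that $\widetilde{q}^M_n(1-\alpha) = O_{\Prob}(1)$ as $n\to\infty$. The crucial observation is that the wild bootstrap $\widetilde{\N}(t) = \sum_{i=1}^n W_i \N^i(t)$ uses mean-zero weights $W_i$, which reintroduce a centred (conditional) structure regardless of whether the null or the alternative holds; inspection of the proof of \Cref{thm:wildAymp} shows that it carries over verbatim to the alternative and delivers the same conditional weak limit $P_\Upsilon$ for almost every data sequence. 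Hence, for fixed $M$, the $M$ bootstrap samples $\widetilde{\Upsilon}_n^{(1)},\ldots,\widetilde{\Upsilon}_n^{(M)}$ are jointly tight in $n$, which forces the empirical $(1-\alpha)$-quantile $\widetilde{q}^M_n(1-\alpha)$ to be bounded in probability. Combined with $\Upsilon_n(\CMa) \to \infty$ in probability, this yields $\Prob(\Upsilon_n(\CMa) > \widetilde{q}^M_n(1-\alpha)) \to 1$.

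\textbf{Main obstacle.} The most delicate point is the iterated-limit argument in the null case: one must show that the $M$-limit quantile $\widetilde{q}_n(1-\alpha)$ is almost surely well defined and converges to $q_\Upsilon(1-\alpha)$ in a strong enough sense to be combined with the convergence in distribution of $\Upsilon_n(\CMa)$. A convenient workaround is to invoke Skorokhod's representation so that all relevant convergences hold almost surely on a common probability space, and then exploit continuity of $P_\Upsilon$ at $q_\Upsilon(1-\alpha)$ (which the explicit description of the limit worked out in the proof of \Cref{thm:asymptoticUpsilon}, as an infinite weighted sum of $\chi^2_1$ variables, makes routine). The extension of \Cref{thm:wildAymp} to the alternative is conceptually easy but needs to be made explicit, since the theorem was stated only under $H_0$.
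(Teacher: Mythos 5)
The paper does not give a direct proof of \Cref{thm:Algo1Works}: it simply observes that Algorithm~2 reduces to Algorithm~1 when $b=1$, and derives the statement as the $b=1$ special case of \Cref{thm:multiMatrixAlgorithmWorks}. Your direct, self-contained argument is a genuinely different presentation, and the skeleton is sound, but two of its claims are imprecise and should be repaired before it would stand.

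First, in the null case you justify $\widetilde{q}^M_n(1-\alpha)\to\widetilde q_n(1-\alpha)$ by asserting that continuity of the finite-$n$ conditional distribution "can be transferred from the continuity of $P_\Upsilon$." That transfer does not hold: for fixed $n$ the conditional law of $\widetilde\Upsilon_n(\CMa)$ given the data is a finite-sample object (with Rademacher weights it is even discrete) and has no reason to be continuous merely because the limit $P_\Upsilon$ is. What one actually needs is weaker and cheaper: by Glivenko--Cantelli, $\sup_t|\widetilde P^M_n(t)-\widetilde P_n(t)|\to 0$ as $M\to\infty$, which gives $\widetilde q^M_n(1-\alpha)\ge\widetilde q_n(1-\alpha-\varepsilon)$ eventually for any $\varepsilon>0$; inserting this buffer and sending $\varepsilon\downarrow0$ at the end of the Slutsky-type argument makes the $\limsup$ bound rigorous without any continuity assumption on $\widetilde P_n$. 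Equivalently, one can avoid quantiles altogether and work with the probability-integral transform $\widetilde P_n(\Upsilon_n(\CMa))$, which is what the paper's proof of \Cref{thm:multiMatrixAlgorithmWorks} does via \Cref{lemma:PUtoPU}.

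Second, in the alternative case the claim that the proof of \Cref{thm:wildAymp} "carries over verbatim" and delivers the same conditional weak limit $P_\Upsilon$ is an over-claim. That proof identifies the eigenvalues of the operator $T_{J'}$ with those of $T_J$ via the backward operator $B$, and that identification exploits the null restriction $d\bLambda=\V d\B$; under $H_1'$ the conditional limit need not coincide with $P_\Upsilon$. Fortunately the conclusion you need — tightness of $\widetilde\Upsilon_n(\CMa)$ given the data, hence $\widetilde q^M_n(1-\alpha)=O_\Prob(1)$ — requires no weak-limit identification at all: conditionally on the data, $\E_D[\widetilde\Upsilon_n(\CMa)]=\frac{1}{n}\sum_i\int\int(\Q(t)d\N^i(t))^\transpose\K(t,s)(\Q(s)d\N^i(s))$ is uniformly bounded because $\K$ is bounded, $\Q$ is a projection, and each $\N^i$ has total variation at most one; Markov's inequality then gives $O_\Prob(1)$. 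This is precisely the argument the paper uses for the first limit in \eqref{thm7:eq2}, and it is both shorter and correct under $H_1'$. With these two repairs your proof is valid and arguably more transparent than relying on the multiple-contrast machinery of \Cref{thm:multiMatrixAlgorithmWorks}.
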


\section{Multiple contrast tests}\label{sec:multiple}
In the previous sections, we derived a testing procedure for the global null hypothesis $H_0:\{ \CMa \bLambda = \bzero \}$ for a given contrast matrix $\CMa$ of $k$ columns. However, upon rejection, the procedure is not able to tell us which component of $\CMa \bLambda$ is different from zero, that is, which equations of the linear system $\CMa \bLambda=\bzero$ are not satisfied. This is, in particular, of interest in multiple comparisons. For the latter, various contrast matrices can be used, e.g. Tukey-type \citep{tukey1953problem} or of many-to-one Dunnett-type \citep{dunnett1955multiple} matrices. For example, if we consider the $k$-sample problem, we can consider the following contrast matrix, which is equivalent to $k-1$ equations written in terms of row-vectors $\cm_i$:
 \begin{eqnarray} \label{Dunnett}
 \CMa = \left(\begin{array}{c}
\cm_1\\ \vdots \\ \cm_{k-1}
\end{array} \right) = \left(\begin{array}{ccccc}
-1 &1 &0 & \ldots & 0\\
-1 & 0 & 1 & \vdots&\vdots\\
\vdots & \vdots & \vdots& \vdots & \vdots\\
-1 & 0 & \vdots & \vdots & 1\\
\end{array}
 \right) \Longrightarrow
H_0:  \CMa\bLambda = \bzero
= \left\{ \begin{array}{c}
H_{01}: \Lambda_2-\Lambda_1 =0\\
H_{01}: \Lambda_3-\Lambda_1 =0\\
\vdots\\
H_{0k-1}: \Lambda_{k}-\Lambda_1 =0.\\
\end{array}\right.
\end{eqnarray}

A first naive idea for this multiple comparison problem is to infer every single null hypothesis $H_{0i}$ by using $\Upsilon_n(\cm_i)$ and to adjusted them by a Bonferroni correction. However, it is well known that this leads to a significant loss in power. Contrary, multiple contrast tests fully exploiting the dependence structure between the single testing procedures leading to more satisfactory results in various completely observable data scenarios \citep[e.g.][]{bretz2001numerical,hasler2008multiple,konietschke2012rank,gunawardana2019nonparametric}. In the spirit of these positive results, we combine the idea of multiple contrast testing with the novel kernel log-rank testing strategy to derive a powerful multiple contrast procedure for complex survival data. Since the limiting distribution of the kernel log-rank test is not a simple normal distribution, we cannot follow the classical way in terms of a (studentised) maximum-type statistic but follow a different strategy which is explained more detailed below.

To be more concrete, we like to infer the multiple testing problem
\begin{align*}
H_{0i}: \CMa_i \bLambda = \bzero\quad \text{holds for all $i \in [b]$.}
\end{align*} 
for contrast matrices $\CMa_1,\ldots,\CMa_b$ with $k$ columns, where we explicitly allow matrices and not just contrast vectors. The global null hypotheses then becomes $H_0:\CMa \bLambda = \bzero$ where $\CMa$ is the matrix resulting by concatenating 
the matrices $\CMa_1,\ldots,\CMa_b$ by row (i.e $\CMa$ still has $k$ columns), but is of minor practical interest. As already mentioned, the individual null hypotheses $H_{0i}$, also denoted as local null hypotheses, can be tested by $\Upsilon_n(\CMa_i)$. These individual statistics are collected together into a joint (random) vector $\bUpsilon_n = (\Upsilon_n(\CMa_1),\ldots, \Upsilon_n(\CMa_b))$.

In view of \Cref{thm:alternative}, we expect that under the global null, the vector $\bUpsilon_n$ convergences in distribution (as \Cref{thm:asymptoticUpsilon} ensures convergence of each coordinate). Unfortunately, this is not a direct consequence of our previous developments as  ${\Upsilon}_n(\CMa_1),\ldots, \Upsilon_n(\CMa_b)$ are dependent test-statistics being all computed for the same data. We prove in \Cref{thm:multipleV-statsConvergence} in \Cref{app:proofMultiple} that, indeed, $\bUpsilon_n$ converges in distribution under the null hypothesis. For now, let's assume that the limit distribution exists, then the high-level theoretical idea to implement the testing procedure is very simple: we should find a vector $\mathbf{q}=(q_1,\ldots,q_b)\in (0,\infty)^b$ such that under the global null hypothesis
\begin{align*}
    \lim_{n\to\infty}P\left(\bigcup_{i=1}^b \{\Upsilon_n(\CMa_i)\geq q_i\}\right)=\alpha,
\end{align*}
where $\alpha\in[0,1]$ is the desired level of the global test. That is, we want to identify the vector $\mathbf{q}$ satisfying that the probability that at least one component $\Upsilon_n(\CMa_i)$ exceeds the value $q_i$ under the null is exactly $\alpha$ for large $n$. 
Notice that the main advantage of using this approach is that the multiple contrast test is able to identify which individual hypotheses are not true, while maintaining a correct Type-I error of $\alpha$ for the global null hypothesis. Indeed, by using the vector $\mathbf{q}=(q_1,\ldots,q_b)$, we can partition the rejection region in $2^b-1$ disjoint subsets as illustrated in \Cref{Fig:Multiple contrast} (for $b=2$). Each of these regions represents a different way of rejecting the global null hypothesis. For example, in \Cref{Fig:Multiple contrast}, we can reject because: (1) only $H_{01}$  is false, (2) only $H_{02}$ is false and (1,2) {both $H_{01}$ and $H_{02}$} are false at the same time. Under the alternative hypothesis, we can identify the source of the rejection by identifying to which region our test statistic $\bUpsilon_n$ belongs. In our experimental section we will see that this method is less powerful than testing all the hypothesis combined in one matrix, i.e. to test directly the global null hypothesis, however, in simulated data -where we know which hypotheses are failing- the algorithm is able to correctly identify all the unsatisfied hypotheses  when enough data points are provided.

\begin{figure}[h]
\centering
\includegraphics[scale=0.8]{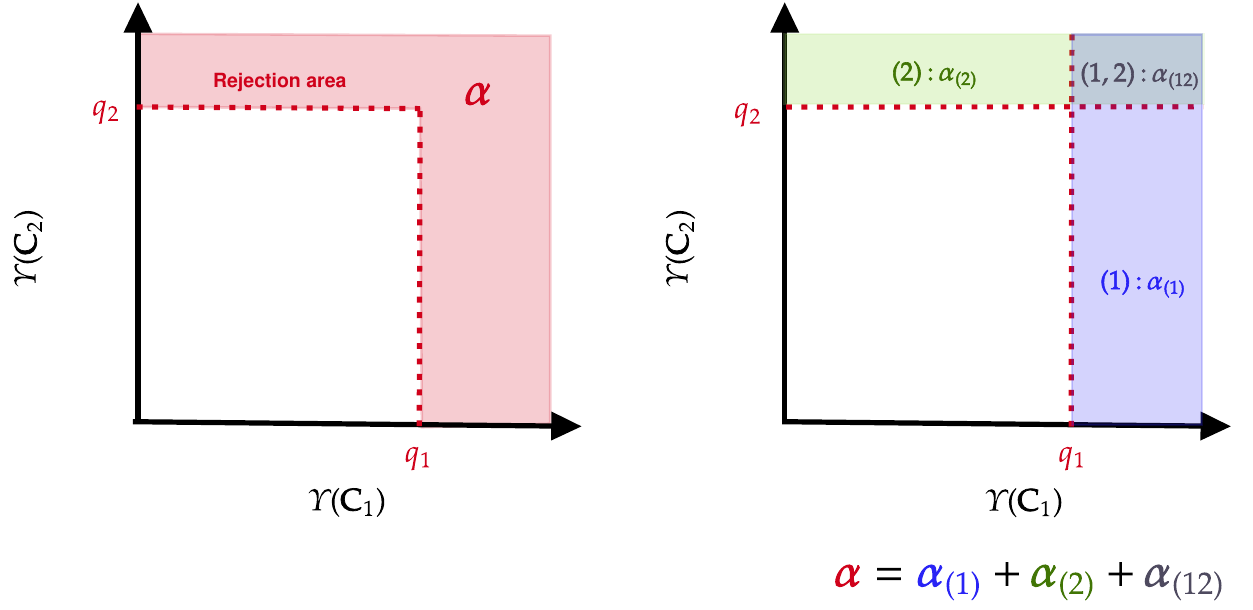}
\caption{Illustration of the rejection region of the multiple contrast test procedure for $b=2$.}\label{Fig:Multiple contrast}
\end{figure}

In principle, the vector $\mathbf{q}=(q_1,\ldots,q_b)\in (0,\infty)^b$ can be chosen in many ways. Here we present a simple way of doing that: define $\mathbf{q}=(q_{n,1}({1-\beta_\alpha}),\ldots,q_{n,b}(1-\beta_\alpha))$ where $q_{n,i}(1-\beta_\alpha)
$ denotes the $(1-\beta_\alpha)$-quantile of the limit distribution of ${\Upsilon}_n(\CMa_i)$ under the null, and $\beta_{\alpha}$ is defined as
\begin{align}\label{eqn:betaalphadefi}
    \beta_{\alpha} = \sup\left\{\beta\in [0,1]: \Prob\left(\bigcup_{i=1}^b \{\Upsilon(\CMa_i)\geq q_{n,i}(1-\beta)\}\right)\leq\alpha\right\},
\end{align}
where $\bUpsilon=(\Upsilon(\CMa_1),\ldots,\Upsilon(\CMa_b))$ is the limit (in distribution) of the random vector  $\bUpsilon_n$.

To implement the previous testing procedure we will rely on a Wild Bootstrap approximation of the rejection region under the null. Let $(W_1,\ldots, W_n)$ be $n$ independent Rademacher random variables. Then, using these weights, we obtain the Wild Boostrap test-statistics associated to each single test $\Upsilon_n(\CMa_i)$ for each $i\in[b]$ by using the procedure explained in \Cref{sec:wild} (in particular \cref{eqn:closeformwild1}) , and denote each of them by $\widetilde{\Upsilon}_n(\CMa_i)$.   We remark that all the components of $\widetilde \bUpsilon_n = (\widetilde{\Upsilon}_n(\CMa_1),\ldots,\widetilde{\Upsilon}_n(\CMa_b) )$ are obtained by using the same set of weights $(W_1,\ldots, W_n)$, so, the components of $\widetilde\bUpsilon_n$ are dependent given the observed data. We refer to the vector $\widetilde \bUpsilon_n$ as a Wild bootstrap sample of $\bUpsilon_n$.  Our testing procedure is then based on the following algorithm:\\

\noindent\rule{\linewidth}{0.4pt}
\textbf{Algorithm 2: The Multiple Contrast testing procedure. }\\
\noindent\rule{\linewidth}{0.4pt}
\begin{enumerate}
\item Set the desired level of the test (for the global hypotesis):  $\alpha\in (0,1)$.
\item Consider $M$ independent copies of $(W_1,\ldots, W_n)$, and use them to compute the $M$ corresponding Wild Bootstrap samples of $\bUpsilon_n$, say ${\widetilde{\bUpsilon}}_n^{1},\ldots, {\widetilde{\bUpsilon}}_n^{M}$.
\item Define by $\qwei{\beta}$ the empirical  $(1-\beta)$-quantile of $\widetilde{\Upsilon}_n(\CMa_i)$ obtained with the previous $M$ independent Wild Bootstrap samples ${\widetilde{\bUpsilon}}_n^{1},\ldots, {\widetilde{\bUpsilon}}_n^{M}$.
\item  Find $\widehat \beta_{\alpha}$ defined as the supremum over all $\beta$ such that
\begin{align}
\widehat \beta_{\alpha} = \sup\left\{\beta\in [0,1]: \frac{1}{M}\sum_{\ell=1}^M \ind\{{\widetilde{\Upsilon}}^{\ell}_n(\CMa_i)>\qwei{\beta} \text{ for at least one $i \in [b]$}\}\leq \alpha\right\} \label{eqn:conditionBeta1}
\end{align}
\item Reject the individual hypotheses $H_{0i}:\CMa_i\bLambda=\bzero$ whenever $\Upsilon_n(\CMa_i) \geq \qwei{\widehat \beta_{\alpha}}$. 
\item Reject the global hypothesis $H_0: \CMa\bLambda = \bzero$ if at least one hypothesis $H_{0i}$ is rejected.
\end{enumerate}
\noindent\rule{\linewidth}{0.4pt}

The only seemingly difficult step in the previous algorithm is to find $\widehat \beta_\alpha$. However, this can be done quite efficiently by noting that $\widehat \beta_\alpha$ belongs to the set $\{0,1/M,2/M,\ldots, 1\}$, so we can perform a binary search since each time we try some candidate value for $\widehat \beta_\alpha$ we already know if we shall try a larger or smaller value for it depending on the value of the left-hand-side of  \cref{eqn:conditionBeta1}.

The next theorem states the asymptotic correctness of our algorithm. Note that our algorithm rejects the hypothesis $H_{0i}$ if and only if the event $\left\{\Upsilon_n(\CMa_i) \geq \qwei{\widehat \beta_{\alpha}} \right\}$ holds true. In detail, we have shown that under the global null hypothesis, the Type-I error of the test is at most $\alpha$ (fixed by the user) when both the number of Wild Bootstrap samples $M$, and the number of data points $n$ tend to infinity. Additionally, if the global null is not true, then the test detects all the local hypothesis that are not true, and moreover it rejects a true local hypothesis with probability at most $\alpha$.
\begin{theorem}\label{thm:multiMatrixAlgorithmWorks}
Assume \Cref{Cond:bounded}, and suppose that if a hypothesis $H_{0i}$ is false then a visible alternative $H_{1i}'$ holds true. Then, given a fixed level $\alpha\in (0,1)$, we have that:
\begin{enumerate}
    \item If the global null holds true then the asymptotic type-1 error is smaller than or equal to $\alpha$, i.e.:
    \begin{align}
    \limsup_{n\to \infty}\limsup_{M\to \infty} \Prob\left(\bigcup_{i=1}^b \left\{H_{0i} \text{ is rejected}\right\}\right)\leq \alpha\label{thm7:eq1}
\end{align}
    \item Suppose the first $b'\leq b$ hypothesis are true, and the rest are false (and so their corresponding visible alternatives hold true), then 
    \begin{align}
       \lim_{n\to \infty} \Prob\left(\bigcap_{i=b'+1}^{b} \left\{H_{0i} \text{ is rejected}\right\}\right)=1, \quad \text{and } \quad        \limsup_{n\to \infty}\limsup_{M\to \infty} \Prob\left(\bigcup_{i=1}^{b'} \left\{H_{0i} \text{ is rejected}\right\}\right)\leq \alpha\label{thm7:eq2}
    \end{align}
   that is, the algorithm asymptotically is able to identify all false hypothesis, and rejects a true local hypothesis with probability at most $\alpha$.
\end{enumerate}
\end{theorem}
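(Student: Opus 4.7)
The strategy is to upgrade the scalar argument of \Cref{thm:Algo1Works} to the joint multiple-testing setting, relying on two ingredients: (i) the joint weak convergence $\bUpsilon_n \overset{\mathcal{D}}{\to} \bUpsilon$ under the global null, which is \Cref{thm:multipleV-statsConvergence} proved in \Cref{app:proofMultiple}; and (ii) a joint conditional version of \Cref{thm:wildAymp} stating that, given the data, the bootstrap vector $\widetilde{\bUpsilon}_n = (\widetilde{\Upsilon}_n(\CMa_1),\ldots,\widetilde{\Upsilon}_n(\CMa_b))$ converges in distribution to a limit with the null joint structure for almost every realisation of $D_1, D_2,\ldots$, irrespective of whether each individual $H_{0i}$ actually holds. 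Ingredient (ii) follows via the Cram\'er--Wold device applied to the same V-statistic/martingale decomposition used to prove \Cref{thm:wildAymp}: since $\E[W_\ell]=0$, the bootstrap is insensitive to the true $\bLambda$, and the shared weights $(W_1,\ldots,W_n)$ yield the correct covariance structure jointly across the $b$ coordinates.

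\textbf{Part 1 (global null).} Let $\qilim{\beta}$ denote the $(1-\beta)$-quantile of the marginal null limit of $\Upsilon_n(\CMa_i)$, assumed continuous. Ingredients (i)--(ii) together with Glivenko--Cantelli-type arguments yield, in probability as $M,n\to\infty$ and uniformly over $\beta$ on compact subsets of $(0,1)$, that $\qwei{\beta}\to\qilim{\beta}$ and that the empirical level in \cref{eqn:conditionBeta1} converges to $\Prob\bigl(\bigcup_{i=1}^b\{\Upsilon(\CMa_i)>\qilim{\beta}\}\bigr)$. Monotone quantile inversion then gives $\widehat{\beta}_\alpha \overset{\Prob}{\to} \beta_\alpha$, and the Portmanteau theorem yields
\[
\limsup_{n,M}\Prob\!\Bigl(\bigcup_{i=1}^b \{\Upsilon_n(\CMa_i)>\qwei{\widehat\beta_\alpha}\}\Bigr) \le \Prob\!\Bigl(\bigcup_{i=1}^b \{\Upsilon(\CMa_i)\ge \qilim{\beta_\alpha}\}\Bigr) \le \alpha
\]
by the very definition of $\beta_\alpha$ in \cref{eqn:betaalphadefi}, which is \eqref{thm7:eq1}.

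\textbf{Part 2 (mixed alternative).} For $i>b'$, \Cref{thm:alternative} gives $\Upsilon_n(\CMa_i)\overset{\Prob}{\to}\infty$, while the bootstrap coordinates $\widetilde{\Upsilon}_n(\CMa_i)$ remain tight because of the mean-zero re-weighting, so $\qwei{\widehat\beta_\alpha}=O_\Prob(1)$ and each false hypothesis is rejected with probability tending to one, yielding the first half of \eqref{thm7:eq2}. For the true hypotheses ($i\le b'$), the decisive observation is that, conditional on the data, the joint bootstrap vector $(\widetilde{\Upsilon}_n(\CMa_i))_{i\le b'}$ has the same asymptotic law as the observed $(\Upsilon_n(\CMa_i))_{i\le b'}$ under the mixed data-generating process, because for each true $H_{0i}$ the signal term associated to $\CMa_i$ in $\Q(t)\Diag{\Y}{t}d\bLambda(t)$ vanishes and only the martingale contribution survives. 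Since restricting the union in \cref{eqn:conditionBeta1} to true coordinates only decreases its empirical level, $\widehat\beta_\alpha$ is asymptotically at most the analogous quantity computed from the true sub-family alone; applying the Part~1 argument to this sub-family gives
\[
\limsup_{n,M}\Prob\!\Bigl(\bigcup_{i=1}^{b'}\{\Upsilon_n(\CMa_i)>\qwei{\widehat\beta_\alpha}\}\Bigr) \le \alpha,
\]
which is the second half of \eqref{thm7:eq2}.

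The main technical obstacle is the joint bootstrap consistency in the mixed-alternative regime. One must show that, although $\widehat\beta_\alpha$ is calibrated using all $b$ coordinates --- several of which produce diverging observed values --- the associated bootstrap quantiles remain simultaneously asymptotically valid on the true-hypothesis sub-vector. This decoupling rests on the fact that the bootstrap always targets a null-type distribution (via $\E[W_\ell]=0$), combined with the matching of the joint observed and joint bootstrap laws on the true coordinates, and it requires uniform bootstrap-quantile consistency on a neighbourhood of $\beta_\alpha$.
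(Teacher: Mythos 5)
Your proposal captures the correct high-level strategy and mirrors the paper's proof architecture: joint V-statistic convergence of $\bUpsilon_n$ under the global null, joint conditional Wild Bootstrap consistency (the paper's \Cref{thm:multipleV-statsConvergence}, proved via \Cref{thm:joinconvergenceDegeVStats} and a Cram\'er--Wold argument exactly as you propose), divergence of $\Upsilon_n(\CMa_i)$ plus tightness of the bootstrap for false hypotheses, and a monotonicity/sub-family argument for the true hypotheses in Part~2. The overall route is the same.

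Two places deserve a flag. First, in Part~1 you assert two-sided convergence $\widehat\beta_\alpha \overset{\Prob}{\to} \beta_\alpha$ via ``Glivenko--Cantelli-type arguments'' and ``monotone quantile inversion''; the paper proves (in \Cref{lemma:hatbetaconvergesup}) only the one-sided statement $\limsup_n\limsup_M \Prob(\widehat\beta_\alpha > \beta_\alpha + \varepsilon) = 0$, which is all that is needed, and obtains it through a careful dominated-convergence / Dvoretzky--Kiefer--Wolfowitz-style argument combined with the Portmanteau theorem applied to the distribution-function transform $\widetilde P_n(\widetilde \bUpsilon_n)$ (\Cref{lemma:PUtoPU}) rather than to bootstrap quantiles directly. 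Working with distribution functions instead of quantiles sidesteps having to establish uniform quantile consistency and eliminates any worry about flat pieces of the quantile function; you would need to justify that the probability curve $\beta \mapsto \Prob\bigl(\bigcup_i\{\Upsilon(\CMa_i)\ge q_i(1-\beta)\}\bigr)$ is strictly increasing near $\beta_\alpha$ before you can invert it, a point the paper handles implicitly by showing $\Prob(\mathcal S_{2\varepsilon}\setminus \mathcal S_0)\geq 2\varepsilon$ since each $P_i(\bUpsilon)$ is Uniform$(0,1)$. Second, and more important: the ``main technical obstacle'' you flag at the end — joint bootstrap consistency in the mixed-alternative regime — is not left open in the paper; it is resolved elementarily. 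The paper introduces $\widehat\beta_\alpha'$ computed only from the first $b'$ (true) coordinates of \cref{eqn:conditionBeta}, notes that dropping coordinates from the union can only loosen the constraint so $\widehat\beta_\alpha' \ge \widehat\beta_\alpha$, deduces $\qwei{\widehat\beta_\alpha'}\le\qwei{\widehat\beta_\alpha}$ by monotonicity of the empirical quantile, hence $B\subseteq B'$ in the paper's notation, and then simply invokes Part~1 applied to the true sub-family. No new joint-bootstrap result in the contaminated regime is required; the reduction is purely combinatorial. Your instinct to identify this as the crux is sound, but you stop short of noticing that the monotonicity trick discharges the obstacle entirely.
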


Note that Algorithm 2 recovers Algorithm 1, and \Cref{thm:Algo1Works} reduces to \Cref{thm:multiMatrixAlgorithmWorks} when $b = 1$.

\section{Experiments on  Simulated Data}\label{sec:experiments}

\subsection{Experimental Setup}\label{sec:experimentalSetup}

We describe three simulated data-settings that are used to evaluate the presented methodology, as well as the choice of hyperparameters and description of other methods used for comparison purposes.

\subsubsection{Data-settings $A$ and $B$}\label{subsec:datasettings}
We describe the data set $A$, which was proposed in \cite{ditzhaus2020permutation} {and represents a proportional hazards setting}, and data set $B$, which is a variation of data set $A$ with {non-proportional} hazard functions. In both cases the data is generated considering two factors: $\mathcal{I}\in\{1,2\}$ and $\mathcal{J}\in\{1,2,3\}$, leading to a total of $k=6$ groups. For each scenario we use the hazard functions described in \Cref{Table:data} to generate the data for each group (combination of factors). In \Cref{Table:data}, and in general, the group corresponding to the hazard $\lambda_{ij}$ with $i\in\mathcal{I}$ and $j\in\mathcal{J}$ is represented by the tuple $(i,j)$.

\begin{table}[H]
\centering
\begin{tabular}{|cc|cc|}
\hline
\multicolumn{2}{|c|}{Data $A$}&\multicolumn{2}{c|}{\footnotesize Factor $\mathcal{I}$}\\
&&1&2\\
\hline
\multirow{3}{*}{\rotatebox{90}{\footnotesize Factor $\mathcal{J}$}}&1&$\lambda_{11}(x)=1$&$\lambda_{21}(x)=2$\\
&2&$\lambda_{12}(x)=2$&$\lambda_{22}(x)=1$\\
&3&$\lambda_{13}(x)=1$&$\lambda_{23}(x)=1$\\
\hline
\end{tabular}
\quad
\begin{tabular}{|cc|ll|}
\hline
\multicolumn{2}{|c|}{Data $B$}&\multicolumn{2}{c|}{\footnotesize Factor $\mathcal{I}$}\\
&&1&2\\
\hline
\multirow{3}{*}{\rotatebox{90}{\footnotesize Factor $\mathcal{J}$}}&1&$\lambda_{11}(x)=\cos(2x)^2$&$\lambda_{21}(x)=\sin(2x)^2$\\
&2&$\lambda_{12}(x)=\sin(2x)^2$&$\lambda_{22}(x)=\cos(2x)^2$\\
&3&$\lambda_{13}(x)=1$&$\lambda_{23}(x)=1$\\
\hline
\end{tabular}\caption{Hazard functions used to generate Data $A$ and Data $B$. 
}
\label{Table:data}
\end{table}
For both data sets, and for each experiment, we consider balanced and unbalanced data-settings. In the \textbf{balanced} setting, all groups have the same number of data points, whereas in the \textbf{unbalanced} setting the groups have different sample size: for the groups $(1,1),(2,1),(1,2),(2,2),(1,3),(2,3)$ (enumerated from 1 to 6) we consider the sample size to be proportional to $15,9,5,9,7$ and $6$, respectively. In this setting we vary the sample sizes by multiplying the previous sizes by a factor greater or equal than 1, and by taking the floor function when needed. 

In both data-settings we consider censoring, which we assume is generated by an exponential distribution with rate parameter $\gamma$, independent of the groups. The parameter $\gamma$ is chosen in such a way it generates low ($5-10\%$ of censored observations ), medium ($25-35\%$) and high ($40-60\%$) percentage of censored observations. See \Cref{App:censoringExplanation} for more details about the censoring mechanism used in the experiments. 

We have two goals regarding these experiments. First, we will test the global null hypothesis that there is no main effect on the factor $\mathcal I$ i.e, our null hypothesis is $H_0:\Lambda_{1\cdot}(t)=\Lambda_{2\cdot}(t)$ for all $t\geq 0$, which is encoded by a contrast matrix with only one row. For both data-settings, $A$ and $B$, this hypothesis should not be rejected which can be verified by observing that in \Cref{Table:data} the columns of each table sum up to the same value (4 and 2 respectively). The second goal is to analyse the global hypothesis that there is no effect of factor $\mathcal I$, that is, $H_0: \Lambda_{1j}(t) = \Lambda_{2j}(t)$ for all $j\in \{1,2,3\}$. From \Cref{Table:data} we observe that the hypothesis should be rejected (e.g. by observing that the elements in the first row are different).

\subsubsection{Data setting $C$}\label{subsec:datasettingC}
For this data set, we consider two factors $\mathcal{I}$ and $\mathcal{J}$ with three levels each, i.e. $\mathcal I = \mathcal J = \{1,2,3\}$, leading to $k=9$ groups.  The set of hazard functions $\{\lambda_{ij},i\in \mathcal I, j\in \mathcal J\}$ is given in terms of the additive model described in \cref{eqn:AalenDecomposition0}, where $\lambda_0(t) = (29/8) + \theta$, 
\begin{align*}
  \phi_i(t) = \begin{cases}
 -\frac{5}{24}+ \frac{3x}{2(1+x^2)} & i = 1\\
  \frac{13}{24}-\frac{3x}{2(1+x^2)}& i=2\\
   -\frac{8}{24}& i=3 \end{cases},
    \quad \psi_j(t) = \begin{cases} 
      -1/2& j=1 \\
      0 & j=2 \\
      1/2    & j=3 
   \end{cases}, \quad\text{and}\quad \sigma_{ij}(t) = \begin{cases} \frac{5}{6}\theta & i=1,j=2\\
   -\theta/6 & \text{otherwise}
   \end{cases}.
\end{align*}
Note that \cref{eqn:AalenDecomposition0} is described in terms of cumulative hazards, whereas the equations above describe the hazard function, however both descriptions are equivalent. The constant $\theta$ is chosen such that $\theta \geq -1$ which ensures that $\lambda_{ij}(t)\geq 0$ for all $t\geq 0$, and for all $i,j\in[3]$. Note that the model has interaction terms if and only if $\theta \neq 0$, so testing for no interaction is equivalent to test for $\theta  = 0$. 

Similarly to what we did for data A and B, we consider three censoring regimes low (5-20\% of censored observations), medium (20-50\%), and high (40-60\%). This time however, we make censoring depend on the factor $\mathcal I$. A detailed explanation of the censoring mechanism and the specific percentages of censored observation per group are given in \Cref{App:censoringExplanation}.

We perform experiments under balanced and unbalanced data-settings. In the \textbf{balanced} setting all groups have the same sample size, whereas in the \textbf{unbalanced} setting, groups have different size. In particular, for the unbalanced data setting, the size of each group  $(1,1)$, $(2,1)$, $(3,1)$, $(1,2)$, $(2,2)$, $(3,2)$,$ (1,3)$, $(2,3)$ and $(3,3)$ is chosen proportional to  $15,9,5,9,7,6,8,5$ and $11$, and in order to modify the size of the groups we apply a multiplicative factor (and take floor if the new size is not integer) to obtain new sample sizes.

The main objective of this simulated data set is to evaluate our methods in the task of detecting interaction terms. Then, the null hypothesis is that there is no interaction terms, that is, $H_0:\Sigma_{ij}=0$ for all $i,j\in \{1,2,3\}$ {which is equivalent to $\theta =0$.} Note moreover that, intuitively, the larger is $|\theta|$ the easier it should be to reject the null hypothesis.

\subsubsection{Hyperparameter}\label{sec:hyperparameters} For the implementation of our method we use a kernel $K$ that factorises into the product of two kernels. In particular, we choose $K((t,x),(s,y))=K_{SE}(t,s)K_{RQ}(x,y)$ where $K_{SE}(t,s)=\exp\left(-\frac{\|t-s\|^2}{\ell^2}\right)$ and $K_{RQ}(x,y)=\left(1+\frac{\|x-y\|^2}{2ab^2}\right)^{-a}$
are the squared-exponential kernel and the rational quadratic kernel, respectively. The squared-exponential kernel is used to model survival times, and the rational quadratic kernel is used to model the group labels. We implement 5 different kernel test-statistics which we denote by $K_1$ to $K_5$.  Each test-statistic uses the same parameters $a=2$ and $b=1$, and the length-scale parameter $\ell^2$ of $K_1$ to $K_5$ are chosen as 10, 1, 0.1, 0.05, and 0.02 respectively.

For the Multiple Contrast test (denoted by $M$ in the experiments), the length-scale parameter used is $\ell^2=0.1$. In all cases we consider  one matrix per equation that defines the global null hypothesis.

Finally, to estimate the power of our test we perform 1000 independent repetitions of each experiments (fpr each combination of censoring and sample sizes). Also, for our Wild Bootstrap sample scheme we use $M=1000$ independent samples to approximate the rejection region.

\subsubsection{Comparison}\label{sec:casa}
We compare our methods with the very recent CASANOVA permutation test proposed by \cite{ditzhaus2020permutation}, which is a permutation-based test for {the global null hypothesis only}. Such is based on combining several weighted statistics into a single test-statistic. We denote by \textbf{Per2} the CASANOVA test based on the weight functions $\{1,x\}$ and by \textbf{Per4} the test that uses $\{1,x,x^2,x^3\}$, which are the ones used by the authors in their original experiments. We use the implementation provided by the authors, {which is also available in the R-package GFDsurv \citep{GFDsurv}}, and to approximate rejection region we use 1000 permutations as suggested by the authors.

\subsection{Results under Null Distributions}\label{sec:exp_results_H0}
In our first experiment, we aim to show that all the methods evaluated attain a correct Type-I error -which in our experiments is fixed to the value $\alpha=0.05$- for all the null hypotheses studied.  Recall that, as previously discussed, both data sets $A$ and $B$ satisfy the null hypothesis corresponding to `\textit{there is no main effect of the factor $\mathcal{I}$}', and note that the data set $C$, generated according the description in \cref{subsec:datasettingC} with $\theta=0$, satisfies the global null hypothesis corresponding to `\textit{there is no interaction effect between factors $\mathcal{I}$ and $\mathcal{J}$}'. \Cref{table:nullBalanced} and \Cref{table:nullUnbalanced} of \Cref{appendix:ExperimentsNull} show the rejection rates obtained for each of these null hypotheses, for each of the censoring regimes considered, and in the balanced and unbalanced data-settings. Overall, the kernel-based tests attain the correct Type-I error  for most combinations of sample sizes and censoring, for both the balanced and unbalanced data-settings, from which we deduce that the tests are well-calibrated. The  CASANOVA tests also attain a rejection rate of approximately $\alpha$, however they tend to fail slightly more than the kernel approach, i.e., we tend to observe more rejections rate above the level.

\subsection{Results under Alternative Distributions}\label{sec:experiments_alternative}
We analyse the performance of our methods when data is generated under a visible alternative. In such scenario, we expect  rejection rates- which coincide with the power of the test- to be as large as possible. Within this context, we study the null hypothesis that `\textit{there is no an effect of the factor $\mathcal{I}$}', which we already discussed it does not hold for data sets A and B. \Cref{Fig:DataAEI} and \Cref{Fig:DataBEI} show the power obtained for the data sets $A$ and $B$, respectively, in the unbalanced data-setting. For data A, where hazard functions are more simple (indeed, they are constant over time), the best results are obtained by  kernel-based tests that use larger length-scale parameters, whereas for data B, where hazards are generated using more complex functions, the overall best results are associated with smaller length-scale parameters. As expected, all  test increase their power as the sample size increases, and the power decreases as the amount of censoring increases for a fixed sample size. Results for balanced groups are given in the \Cref{Appendix:ExperimentsAlternativeAB}.

\begin{figure}
\centering
\includegraphics[scale=\size]{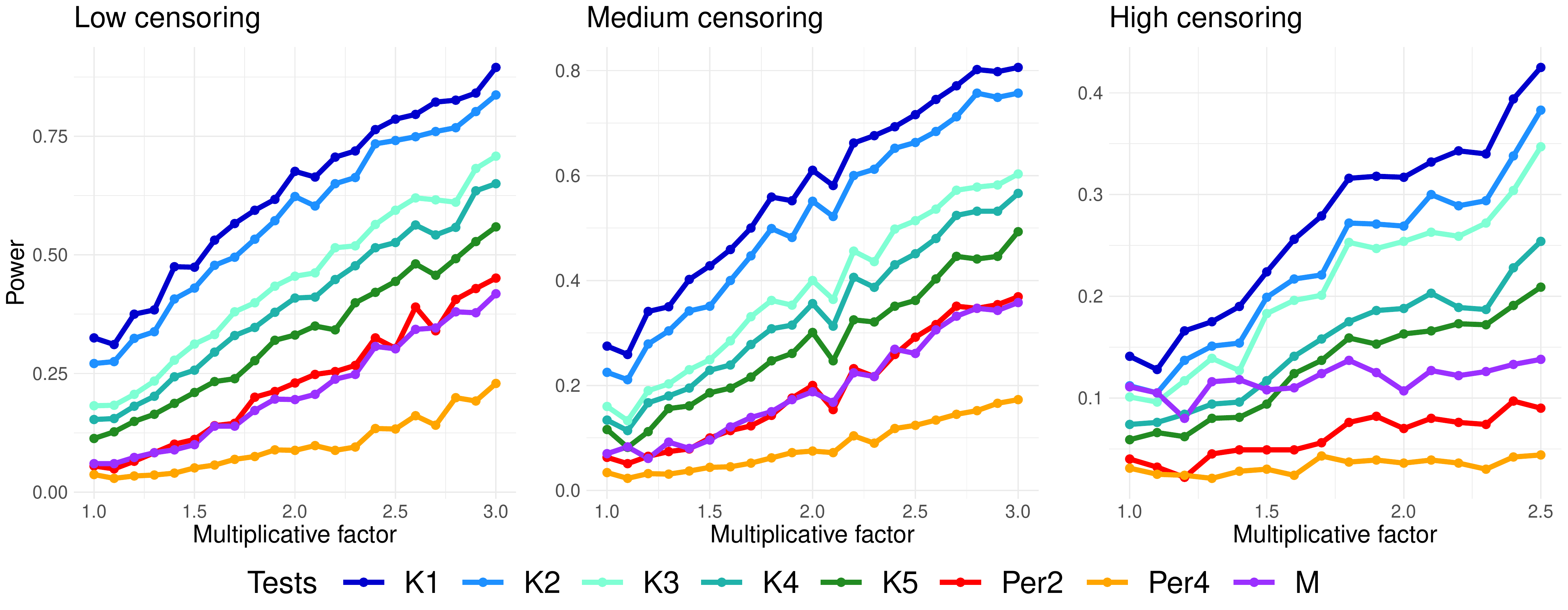}\caption{Test power versus sample size for the hypothesis there is no  effect of the factor $\mathcal{I}$ in Data $A$ in the unbalanced case.}\label{Fig:DataAEI}
\end{figure}

\begin{figure}
\centering
\includegraphics[scale=\size]{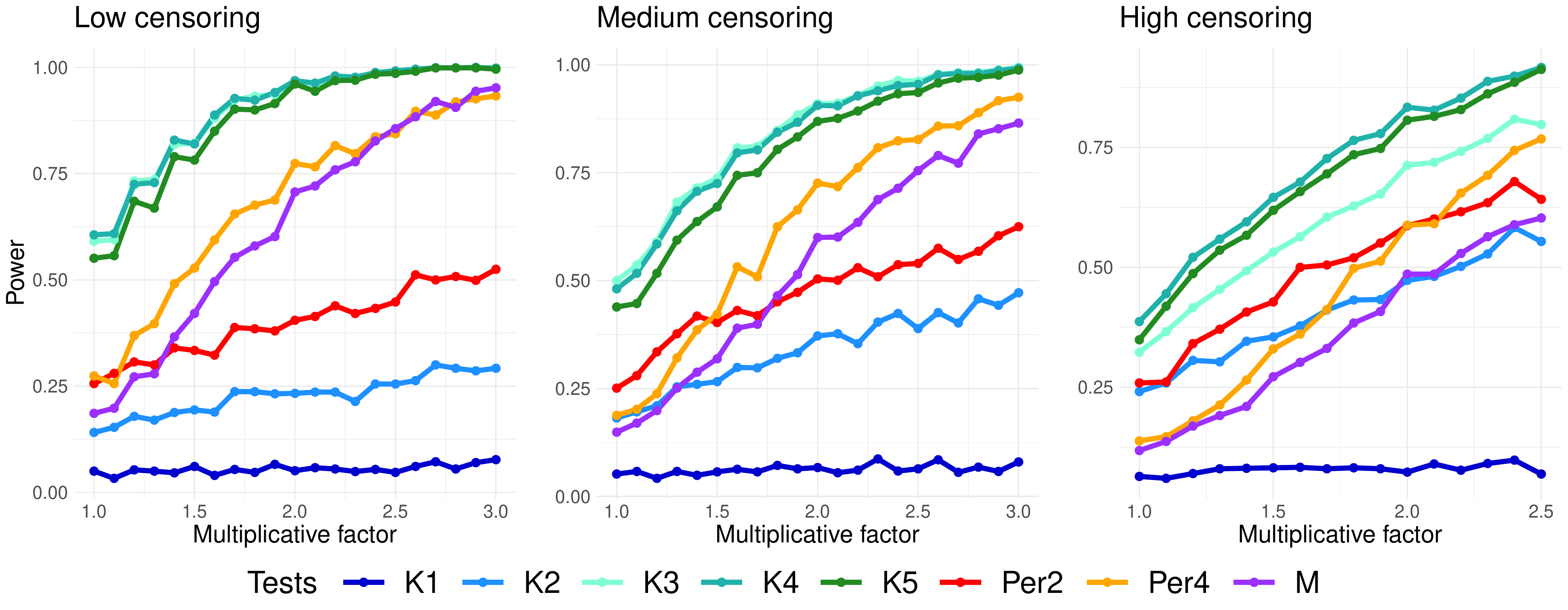}\caption{Test power versus sample size for the hypothesis there is no  effect of the factor $\mathcal{I}$ in Data $B$ in the unbalanced case.}\label{Fig:DataBEI}
\end{figure}

For data C, the we still consider the null hypothesis  `\textit{that there is no interaction effect between factors $\mathcal{I}$ and $\mathcal{J}$}'. For this experiments we want to assess how the rejection rate grows as the sample size increases under the alternative hypothesis. For this, we generate under the alternative hypothesis using $\theta=2$ and $\theta=1$. Recall that $\theta\neq 0$ controls the strength of the interaction, and indeed, the larger $|\theta|$ it should be easier to reject the null hypothesis.  \Cref{Fig:DataCt2Unb} shows the rejection rates obtained for $\theta=2$ in the unbalanced data-setting. As expected the rejection rates increase with the sample size, and decrease for a fixed sample size when censoring increases. Overall the best performance is attained by the kernel tests and, in particular, the performance improves from smaller to larger length-scale parameters.  Results for the balanced data setting when $\theta=2$, and $\theta=1$ are given in \Cref{Appendix:ExperimentsAlternativeC}

\begin{figure}
\centering
\includegraphics[scale=\size]{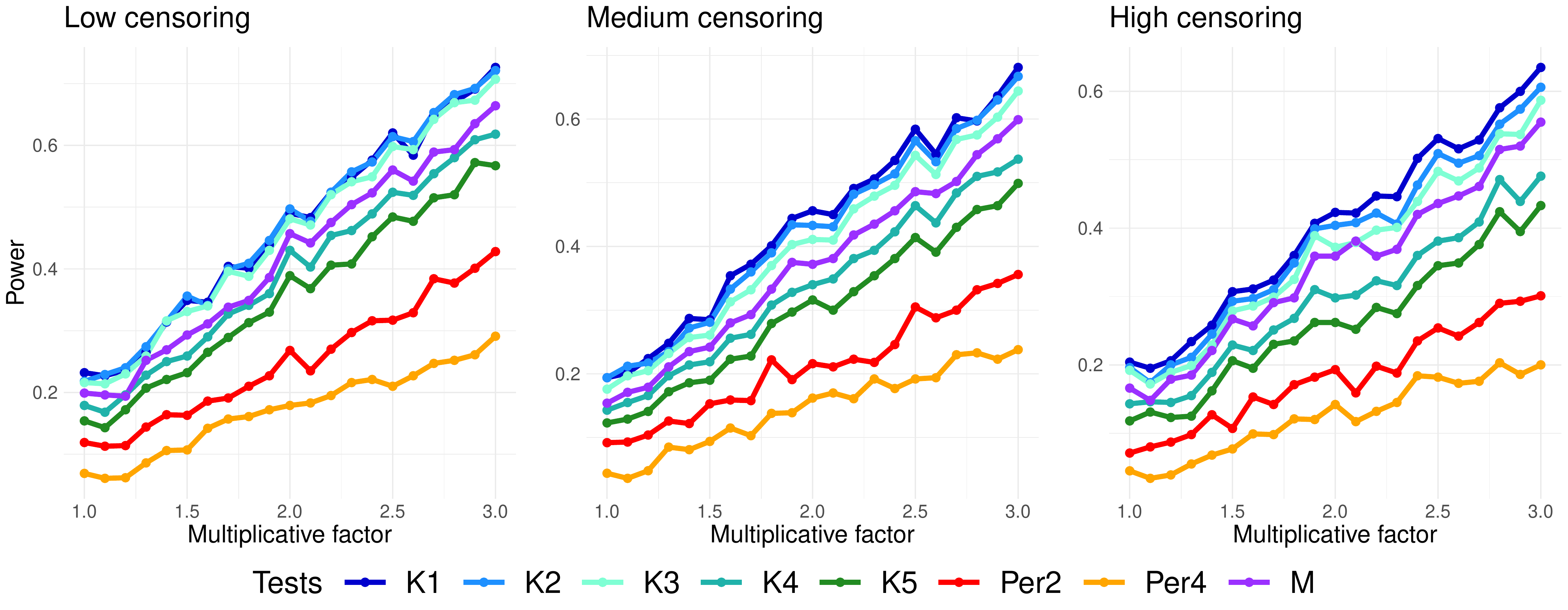}\caption{Test power versus sample size for the hypothesis there is no interaction term, equivalent to test $\theta = 0$,  when the alternative $\theta = 2$ holds in Data $C$ in the unbalanced setting. }\label{Fig:DataCt2Unb}
\end{figure}

Additionally,  since the size $|\theta|$ controls the strength of the interaction, we analyse how the power of the tests change for small deviations from the null (i.e. when $\theta$ moves away from $0$). For that, we fix the sample size of the groups, and analyse the power of our test for values of $\theta \in [-1,2]$. \Cref{Fig:DataCtvarUnb} shows the power of the tests on the three censoring regimes in the unbalanced data-setting. We can see that when $\theta = 0$ all the tests show a rejection rate very close to the Type-I error $\alpha=0.05$ (which agrees with our previous statement that all tests are well-calibrated), and when $\theta$ moves away from $0$ - meaning the strength of the interactions becomes more visible- the rejection rate (which can be interpreted as the power since the alternative holds) increases. Note that in this experiment kernel tests gain power much faster than the CASANOVA approach.

\begin{figure}
\centering
\includegraphics[scale=\size]{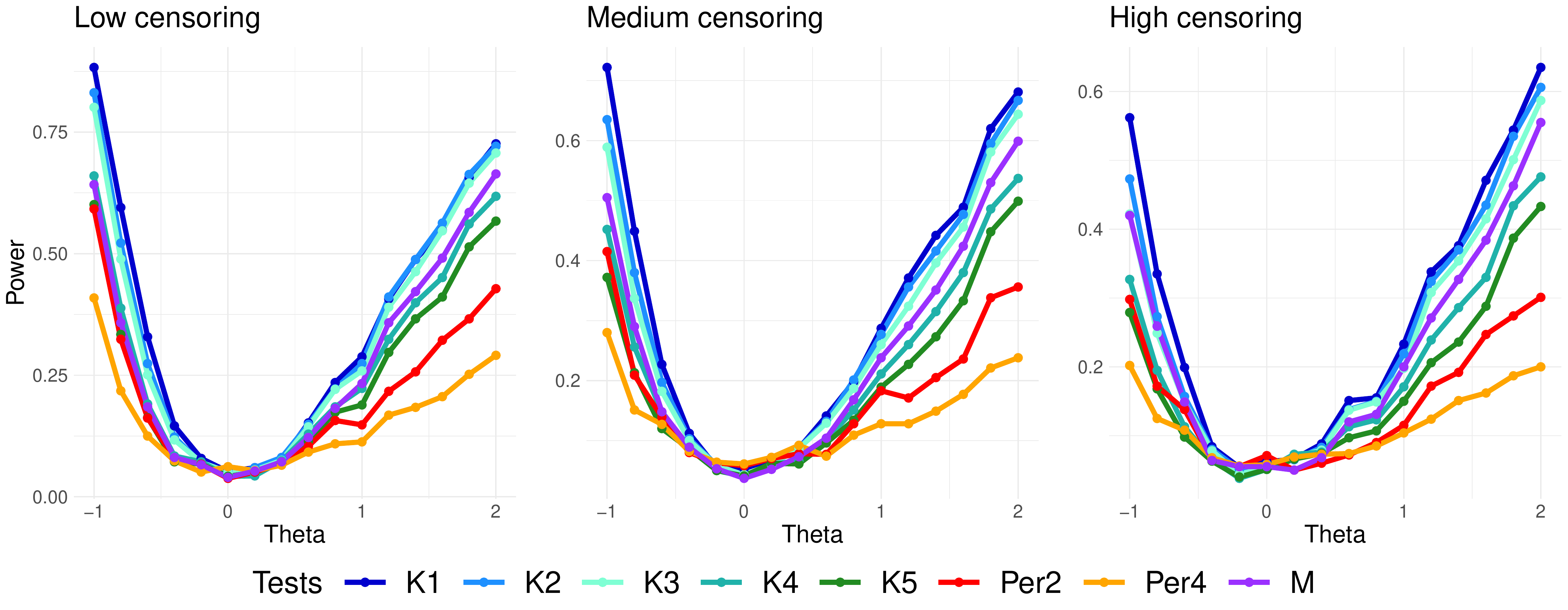}\caption{Test power versus $\theta$ for the hypothesis there is no interaction term in Data $C$. The sample size of the nine groups are 45, 27, 15, 27, 21, 18, 24, 15, and 33.}\label{Fig:DataCtvarUnb}
\end{figure}

From the observed results we conclude the following:
\begin{enumerate}
    \item For problems with simple hazard structure (smooth functions with small fluctuations), such as data set $A$ and $C$, simple methods tend to perform better. In these cases kernels with large length-scale parameter, which tend to be very flat perform very well. This is consistent with the fact that indeed, those kernels are associated with space of very flat functions, whereas kernels with small length-scale are associated with spaces of functions with a lot of fluctuations.  Something similar can be observed in the CASANOVA procedure, where considering  2 functions (Per2) works better than considering 4 functions (Per4)
    \item On the other hand, for problems with complex hazard structures (functions with a lot of fluctuations), such as Data set $B$, we need to use more complex structures. In these cases, kernels with small length-scale parameter have a very good performance, whereas kernels with large length-scale perform very poorly, in particular $K_1$, which length parameter is $10$, has almost no power at all.
    \item In general, we observe that kernel methods has much better performance than the CASANOVA in the settings considered. This difference seems to be larger for unbalanced group sizes.
   
\end{enumerate}

\subsection{Results for the Multiple Contrast Test}\label{sec:experiment_marginal}

Recall that the Multiple Contrast test not only is able to reject the (global) null hypothesis, but also it is able to distinguish which equations of the system $\CMa \bLambda(t) = \bzero$ do not hold. 

For the data sets $A$ and $B$, for the  null hypothesis that there is no interaction effect of $\mathcal I$, the contrast matrix $\CMa$ has three rows, and so three equations compose the null hypothesis. Thus, for the Multiple Contrast test we consider three contrast vectors $\cm_1,\cm_2$ and $\cm_3$, corresponding to the local hypotheses:
\begin{align*}
H_{01}:\cm_1\bLambda(t)=0,\qquad H_{02}:\cm_2\bLambda(t)=0\qquad\text{and}\qquad H_{03}:\cm_3\bLambda(t)=0,
\end{align*}
which together test the global null hypothesis that there is no effect of the factor $\mathcal{I}$. In particular, we choose $\cm_i$ such that the hypothesis $H_{0i}$ corresponds to $\Lambda_{1i}(t)=\Lambda_{2i}(t)$, for $i \in \{1,2,3\}$. Notice there are seven ways of rejecting the global null hypothesis: (1) only $H_{01}$ is false; (2) only $H_{02}$ is false; (1,2) only $H_{01}$ and $H_{02}$ are false; (3) only $H_{03}$ is false; (1,3) only $H_{01}$ and $H_{03}$ are false; (2,3) only $H_{02}$ and $H_{03}$ are false; and (1,2,3) all hypotheses are false.

In \Cref{BarplotDataA_Unb,BarplotDataB_Unb} we show  how the power of the Multiple Contrast test splits among the different ways of rejecting the global null
in the unbalanced sample size setting (Figures for the balanced case can be found in~\Cref{Appendix:ExperimentsAlternativeAB}). In particular, we observe that as the sample size grows, the Multiple Contrast test correctly identifies that only $H_{01}$ and $H_{02}$ are false, which is represented by (1,2). Indeed, most of the power is allocated to: i) hypothesis $H_{01}$ is false, ii) hypothesis $H_{02}$ is false, and iii) both  $H_{01}$ and  $H_{02}$ are false, and the more data we have the more the test is able to identify that both  $H_{01}$ and  $H_{02}$ are false at the same time. Observe as well that the problem becomes harder the more censoring we consider.

\begin{figure}
\centering
\includegraphics[scale=\size]{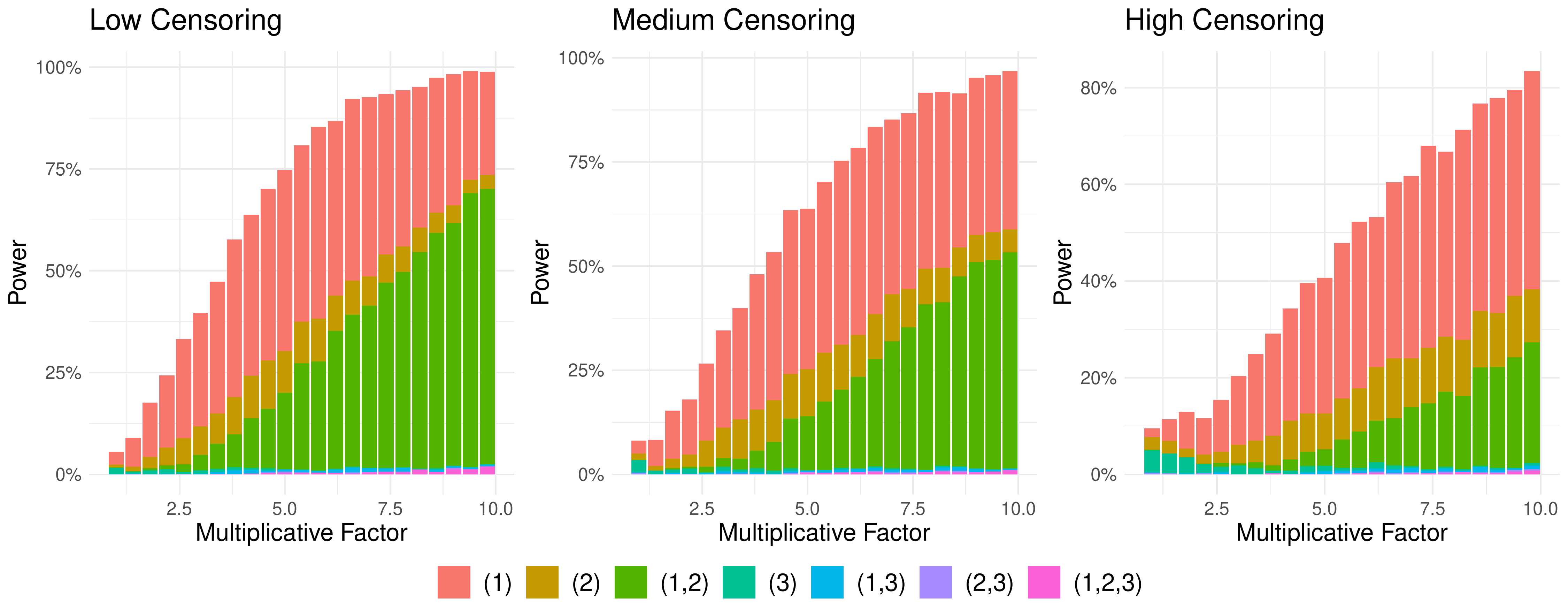}\caption{Distribution of the power attained by the Multiple Contrast test in data setting $A$. The sample size of the 6 groups are multiples of $15,9,5,9,7$ and $6$ (taking floor if needed).}\label{BarplotDataA_Unb}
\end{figure}

\begin{figure}
\centering
\includegraphics[scale=\size]{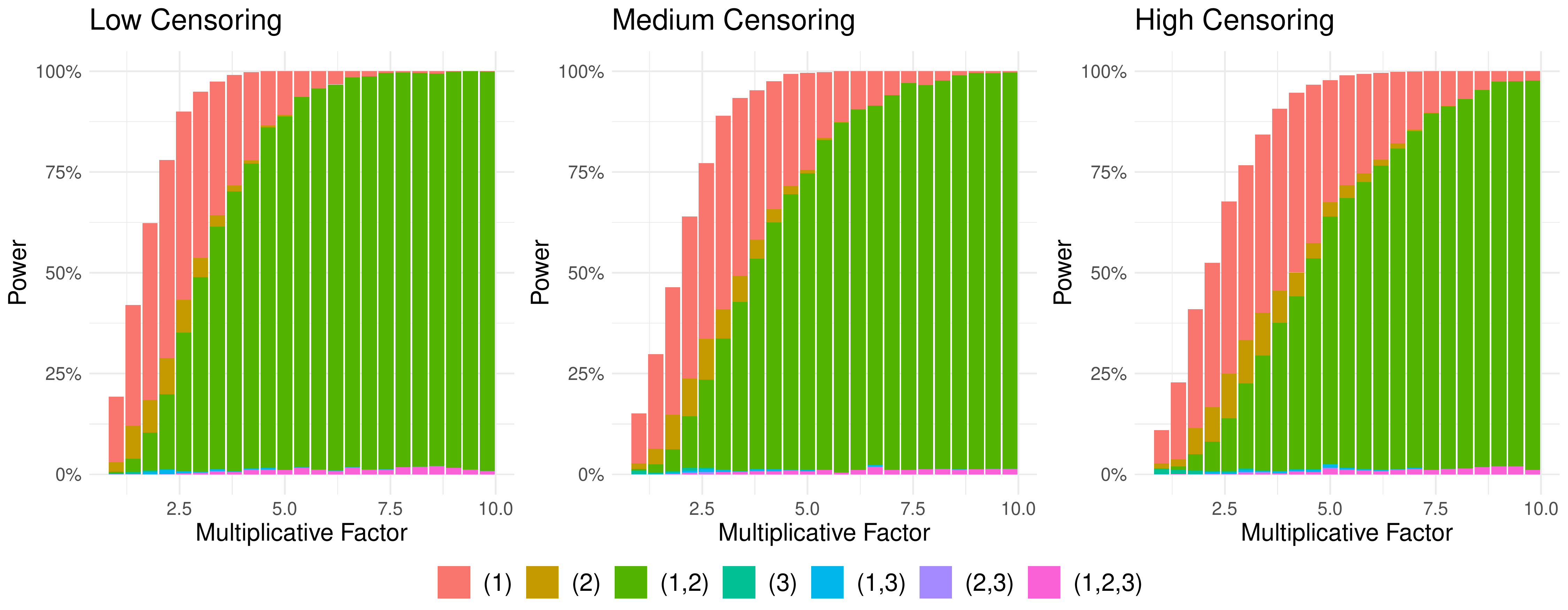}\caption{Distribution of the power attained by the Multiple Contrast test in data setting $B$. The sample size of the 6 groups are multiples of $15,9,5,9,7$ and $6$ (taking floor if needed).}\label{BarplotDataB_Unb}
\end{figure}

In the case of data set $C$ we test that there is no interaction term. The global null hypothesis $H_0$ consists of 9 equations, one per each combination of the factors, i.e. $9$ local hypotheses. In the case the null is false, i.e. $\theta \neq 0$, all 9 local hypotheses are false. \Cref{BarplotDataC_t2_Unb} shows how the power of the proposed Multiple Contrast test is spread among the different ways of rejecting the global null hypothesis in the unbalanced case when $\theta = 2$ (for the balanced case and other values of $\theta$ we refer the reader to \Cref{Appendix:ExperimentsAlternativeC}). Since there are $2^9-1$ ways to reject the null (choosing at least 1 hypothesis out of 9), we just report the number of local hypotheses that were rejected by the Multiple Contrast test. We can see that as the sample size increases, more hypothesis are being rejected.

\begin{figure}[h]
\centering
\includegraphics[scale=\size]{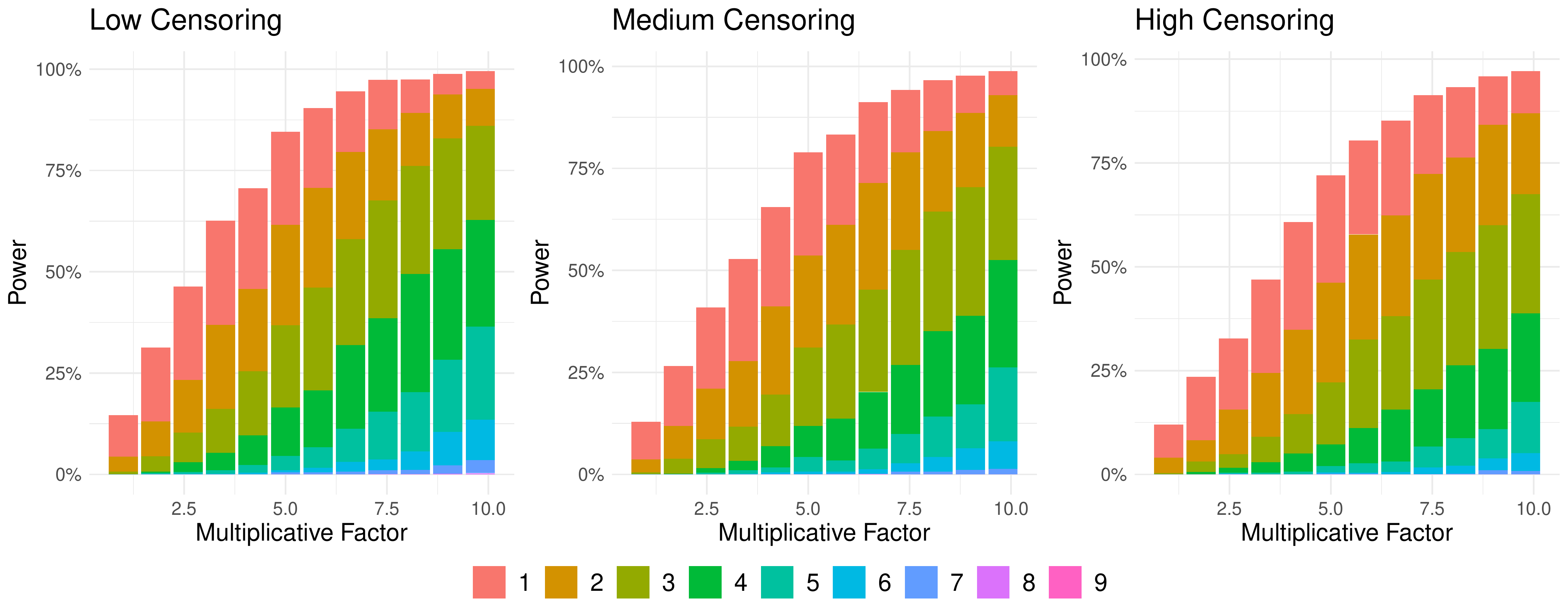}\caption{Distribution of the power attained by the Multiple Contrast test in data setting $C$ over the number of rejected local hypotheses for $\theta = 2$. The size of the 9 groups are multiples of  $15,9,5,9,7,6,8,5$ and $11$, respectively. 
\label{BarplotDataC_t2_Unb}
}
\end{figure}

From our experiments we conclude that
\begin{enumerate}
    \item As predicted by our theoretical results, the test is actually identifying the local hypotheses that are actually false.
    \item The method is rather data expensive, and it should be used in settings where a lot of data points are available.
\end{enumerate}

\section{Real experiments}
We consider the data corresponding to the lung cancer study from Pretince (1978), which includes information about the survival times of male patients with advanced inoperable lung cancer, the treatment given to those patients: standard therapy or a test chemotherapy, and the histological type of their tumors: smallcell, adeno, large and squamous. 
\begin{table}[h]
\centering
\begin{tabular}{|l|cccc|}
\hline
             &\multicolumn{4}{c|}{Factor $\mathcal{J}$: Celltype}\\
    Factor $\mathcal{I}$: Treatment &1: smallcell &2: adeno & 3: large&4: squamous\\
    \hline
  1(standard) &          30 &    9 &   15&15\\
  2(test) &           18 &   18 &   12&20\\
  \hline
\end{tabular}
\caption{Sample sizes for each group $(i,j)$ with $i\in\{1,2\}$ and $j\in\{1,2,3,4\}$.}\label{Table:real}
\end{table}

For this data we consider a $2\times 4$ factorial design where the first factor $\mathcal{I}$ (with levels $1,2$) corresponds to the treatment and the second factor $\mathcal{J}$ (with levels $1,2,3,4$) is the celltype.  Information about the factors and a summary of the sample size for each group is given in \Cref{Table:real}. We apply the kernel methods as well as CASANOVA to test five global null hypotheses: i) Trt (ME): there is no main effect of the treatment, ii) Trt (E): there is no effect of the treatment, iii) Celltype (ME): there is no main effect of the celltype, iv) Celltype (E): there is no effect of the celltype, and iv) Interaction: there is no an interaction effect of the treatment and celltype. We refer the reader to \Cref{sec:FactorialDesigns} to recall the different null hypotheses in the setting of factorial designs.

The specification of the tests is as follows. For the kernel methods we implement the kernels $K_1$-$K_5$ used in our simulated data experiments with the hyperparameters as described in \cref{sec:hyperparameters}. For the CASANOVA test, we implement Per2 and Per4 described in \Cref{sec:casa}, but we also include two extra tests: one including only the weight function $\omega(x) = 1$ (denoted by LR in our experiments), and other only with the weight function $\omega(x) = 1-2x$ (denoted by CROSS) which is helpful in data generated by a hazard functions that cross around the median of the data.

In \Cref{Table:realresults} we show the $p$-values in percentages. We notice that at a level $\alpha=0.05$ (or $5\%$), all the tests reject the null hypotheses Celltype (ME) and Celltype (E), from which we can infer that there is a  main effect and a effect of the celltype on the survival times. In both cases the smallest $p$-values are attained by the kernel tests. For the hypotheses Trt (ME), Trt (E) and Interaction, all tests agree on not rejecting the null hypothesis at a level $\alpha=0.05$ (or $5\%$).

\begin{table}[]
    \centering
    \begin{tabular}{|l|lllll|}
    \hline
        &\multicolumn{5}{|c|}{Global null hypothesis}\\
         &Trt (ME)&Trt (E)&Celltype (ME)&Celltype (E)&Interaction\\  
    \hline
    LR&93.68&14.35&0.04&0.14&48.372\\
    CROSS&69.2&68.1&0.254&4.104&85.423\\
    \hline
    \multicolumn{6}{|c|}{CASANOVA (permutation tests)}\\
    \hline
    Per2&52.01&16.3&1.298&0.185&24.459\\
    Per4&64.6&11.36&1.584&0.729&24.294\\
    \hline
    \multicolumn{6}{|c|}{Kernel tests}\\
    \hline
    K1&25.303&11.076&0.003&0.118&21.087\\
    K2&15.309&6.981&0.011&0.056&20.156\\
    K3&19.391&12.839&0.089&0.080&18.180\\
    K4&15.961&15.981&0.090&0.133&23.883\\
    K5&14.078&15.125&0.134&0.246&26.476\\
    \hline
    \end{tabular}
    \caption{$P$-values (in $\%$) attained by each test, for each of the global null hypotheses.
    }\label{Table:realresults}
    \label{tab:my_label}
\end{table}
Upon rejection of the global null hypothesis Celltype (E), we proceed to apply the Multiple Contrast test to obtain more information about which component in the contrast matrix $\CMa$ associated  to this problem is being rejected. To implement the Multiple Contrast test we choose a squared exponential kernel for the times with length-scale parameter $\ell^2=0.1$, and a rational quadratic kernel for the group labels with parameters $a=2$ and $b=1$.

Let $\Lambda_{ij}$ be the risk function associated the treatment $i\in\{1,2\}$ and the celltype $j\in\{1,2,3,4\}$ (see \Cref{Table:real}). Then the global null hypothesis Celltype (E) is equivalent to test each single hypothesis $\Lambda_{11}=\Lambda_{1j}$, and $\Lambda_{21}=\Lambda_{2j}$, for all $j\in\{2,3,4\}$.
\begin{table}
\centering
\begin{tabular}{|l|cccccc|}
\hline
 &\multicolumn{6}{c|}{Global null: Celltype (E)}\\
Local   $H_{0i}$&$\Lambda_{11}=\Lambda_{12}$&$\Lambda_{11}=\Lambda_{13}$&$\Lambda_{11}=\Lambda_{14}$&$\Lambda_{21}=\Lambda_{22}$&$\Lambda_{21}=\Lambda_{23}$&$\Lambda_{21}=\Lambda_{24}$\\
     \hline
     $\Upsilon_n(\CMa_i)$&0.007&0.823&0.282&0.017&0.337&0.816\\
       $\qwei{\widehat\beta_{\alpha}}$  &0.266&0.430&0.581 &0.309&0.486&0.723\\
     Reject$_i$&No&Yes&No&No&No&Yes\\
     $p$-value$_i$&85.239&0.025&8.054&64.101&3.362&0.559\\
\hline
\end{tabular}
    \caption{$P$-values (in $\%$) attained by the Multiple Contrast test for each local null hypothesis. Reported $\widehat\beta_{\alpha}=0.949\%$ for $\alpha=5\%$}
    \label{tab:my_label}
\end{table}

\begin{figure}[H]
    \centering
\includegraphics[scale=0.27]{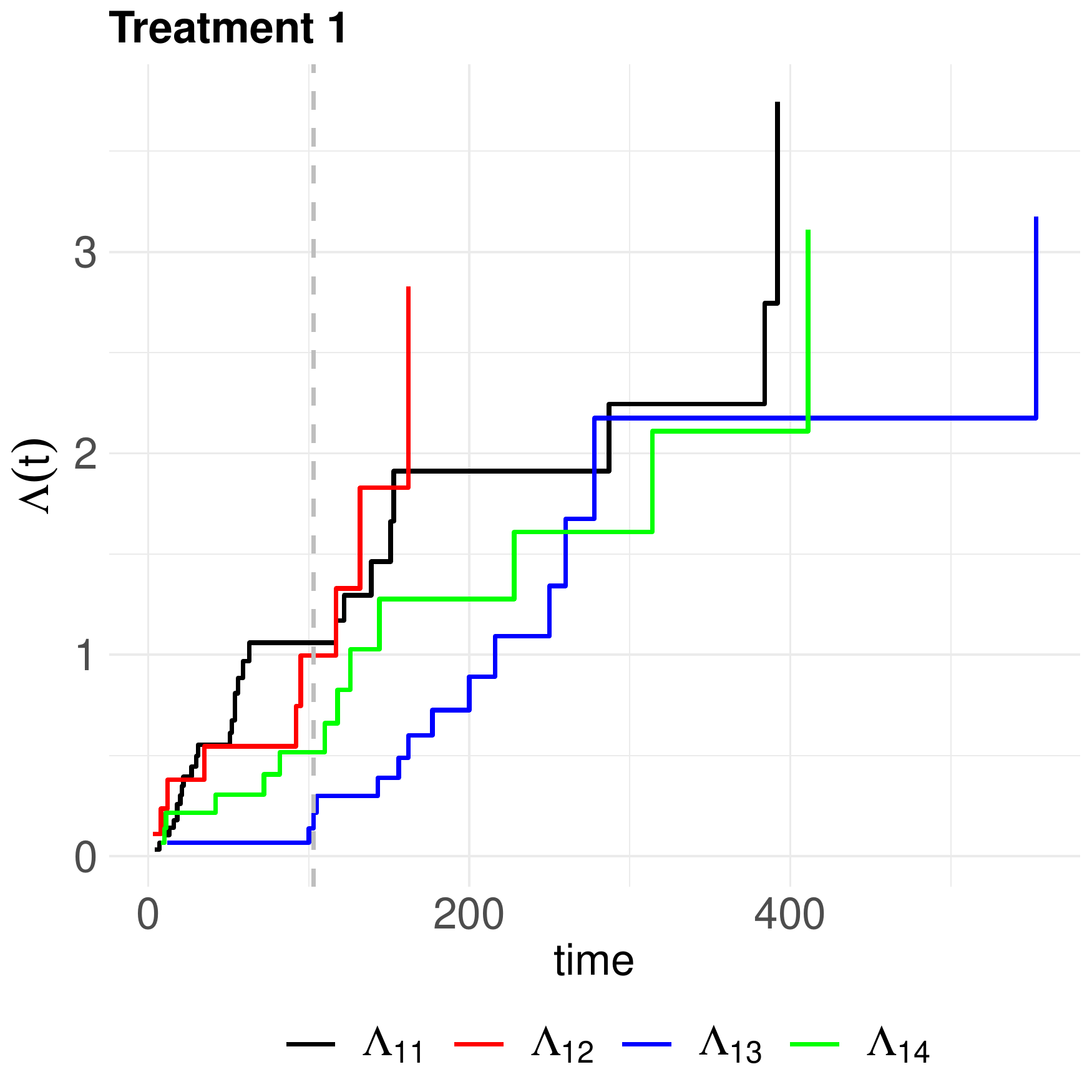}
\includegraphics[scale=0.27]{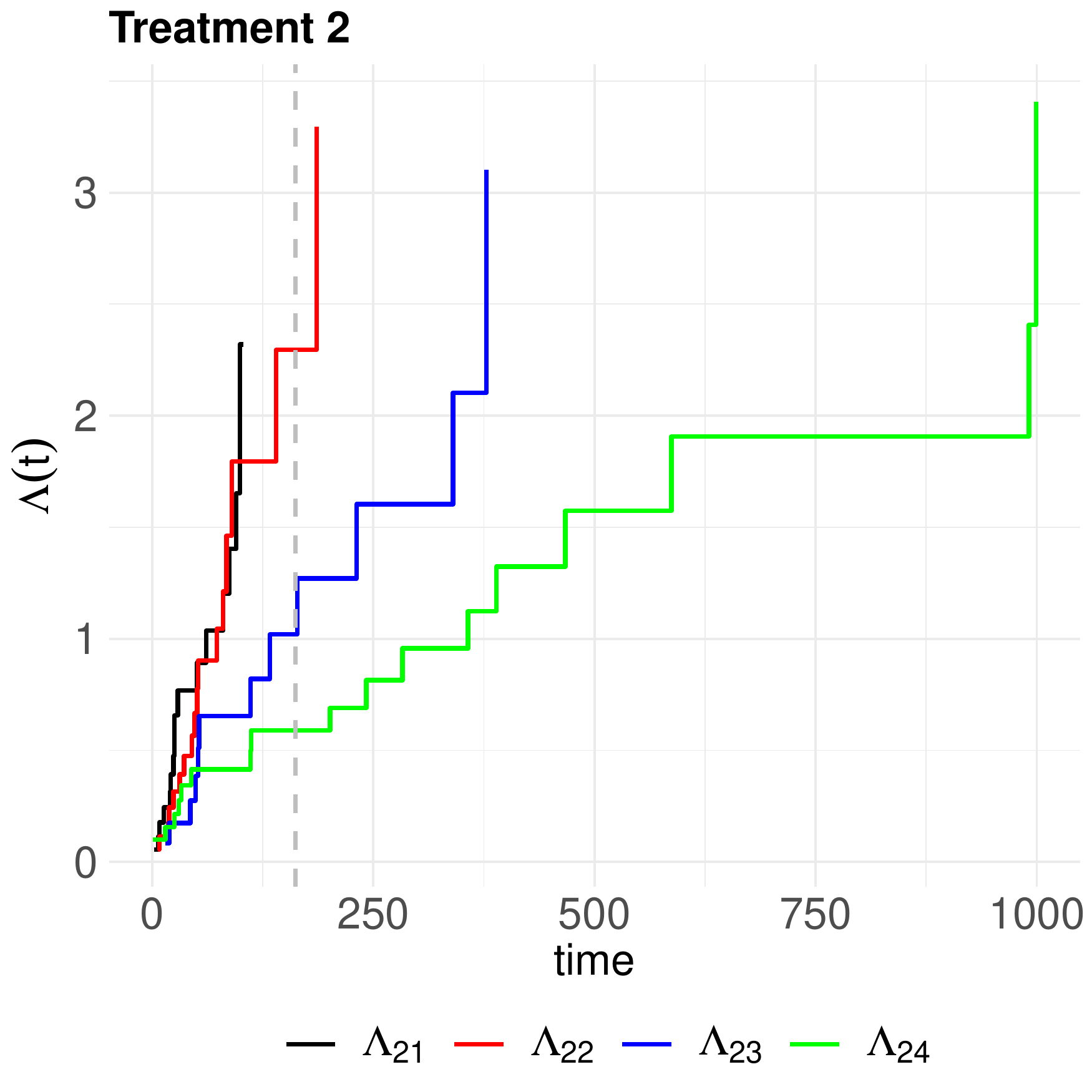}
    \caption{Cumulative hazard functions estimated for each Treatmeant and Celltype (ranging from 1:smallcell, 2:adeno, 3:large and 4:squamous). In both plots the dashed line represents $\tau_H$.}
    \label{fig:my_label}
\end{figure}
 In \Cref{tab:my_label} we show the results obtained by the Multiple Contrast Test. We report the test-statistics $\Upsilon_n(\CMa_i)$, the quantiles $\qwei{\widehat{\beta}_\alpha}$, the individual $p$-values (in \%), and the decision made for each single hypothesis $H_{0i}$. The quantiles $\qwei{\widehat{\beta}_\alpha}$ where computed using $M=100000$ Wild Bootstrap samples, $\alpha=0.05$ and the estimate $\widehat\beta_\alpha=0.949$. Notice that the decision of rejecting each individual null hypothesis $H_{0i}$ is made whenever $\Upsilon_{n}(\CMa_i)>\qwei{\widehat{\beta}_\alpha}$ or, equivalently, when $\text{$p$-value}$ {(in $\%$)} associated to $\CMa_i$ is smaller than $\widehat\beta_\alpha=0.949{\%}$.

From  \Cref{tab:my_label}, note that we reject the hypotheses: $\Lambda_{11}=\Lambda_{13}$ and $\Lambda_{21}=\Lambda_{24}$. These results align with what is being shown in   \Cref{fig:my_label}, as we can observe that for treatment 1, $\Lambda_{11}$ (in black) and $\Lambda_{13}$ (in blue) appear to be the most dissimilar risk functions. The same conclusion can be drawn for the risk functions $\Lambda_{21}$ (in black) and $\Lambda_{24}$ (in green) for treatment 2. The highest $p$-values are obtained for the hypotheses: $\Lambda_{11}=\Lambda_{12}$ and $\Lambda_{21}=\Lambda_{12}$, which in \Cref{tab:my_label} appear to be the most similar pairs of risk functions (black vs red for both treatments). Finally, the hypotheses $\Lambda_{11}=\Lambda_{14}$ and $\Lambda_{21}=\Lambda_{23}$ are not rejected at level $\alpha=0.05$ (or 5\%) for the global test Celltype (E) even though visually we might think they should be rejected. We believe this is a possible defect of our estimation method as it ignores all the data point after the time represented by the dashed grey line in \Cref{fig:my_label} due to the technical condition that the matrix $\X(t)=\frac{1}{n}\Diag{\Y}{t}\V$ has to have full rank (and we do not have such property after such times), and so several points are being ignored, especially the ones involving $\Lambda_{13}$, $\Lambda_{14}$, $\Lambda_{23}$, and $\Lambda_{24}$. Fixing this issue will be part of future work on this topic.

\section{Conclusion}
In this paper we introduced a novel nonparametric method for testing in the factorial design setting for survival data with the aim of dealing with complex hazard alternatives and perform a post-hoc analysis in order to not only reject a hypothesis but to also give reasons why such hypothesis fails to hold. Our methodology combined a novel weighted log-rank statistic for factorial designs with the state-of-the-art kernel-based testing approach, leading to a test that is powerful, robust and able to deal with complex hazard structures. In order to perform an in-depth analysis and finding reasons why the a (global) hypothesis is rejected, we extended our previous construction to a multiple contrast test that is able to test a hypothesis by analysing several local hypothesis,  distinguishing between the ones that are rejected and not. We experimentally showed that our method is rather powerful and very robust, being able to deal with complex hazard functions, including hazards with multiple crossings, and intricate dependence on the factors. We also provided asymptotic guarantees for our methods.

Future research ideas include finding other log-rank-type statistic that can be `kernelised' instead of the one we proposed. That may give better results in practice, as well as extending the kernel approach to other practical problems in Survival Analysis such as testing proportionality, which cannot be posed as a factorial design testing problem, among others. Extending our approach to continuous covariates is of interest as well. From the kernel testing approach, exploring our multiple testing approach in other setting may lead to interesting developments.

\newpage

\appendix
\section{Deferred Proofs}\label{app:proofs} 
We give proof to the results stated in the main document. We begin by introduce some preliminary results that feature our proofs.

\subsection{Preliminary Results}

\begin{proposition}\label{prop:projection}
Let $\mathbf{P}$ be a $k\times k$ orthogonal projection matrix. Then, for any $i,j\in[k]$, it holds $|P_{ij}|\leq 1$.
\end{proposition}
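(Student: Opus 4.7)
The plan is to exploit the two defining algebraic properties of an orthogonal projection, namely idempotence $\mathbf{P}^2 = \mathbf{P}$ and symmetry $\mathbf{P}^{\transpose} = \mathbf{P}$. Combining them gives $\mathbf{P} = \mathbf{P}^{\transpose}\mathbf{P}$, and reading this identity entrywise yields
\begin{equation*}
P_{ij} = \sum_{\ell=1}^{k} P_{\ell i} P_{\ell j}.
\end{equation*}
This one formula is really all that is needed.

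First I would handle the diagonal entries. Setting $i = j$ gives $P_{ii} = \sum_{\ell} P_{\ell i}^2 \geq 0$, and since the right-hand side includes the term $P_{ii}^2$ we get $P_{ii} \geq P_{ii}^2$, i.e.\ $P_{ii}(1 - P_{ii}) \geq 0$. Together with non-negativity this forces $0 \leq P_{ii} \leq 1$.

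Next, for the off-diagonal entries I would apply the Cauchy--Schwarz inequality to the sum $\sum_{\ell} P_{\ell i} P_{\ell j}$:
\begin{equation*}
P_{ij}^2 = \left(\sum_{\ell=1}^{k} P_{\ell i} P_{\ell j}\right)^{2} \leq \left(\sum_{\ell=1}^{k} P_{\ell i}^{2}\right)\!\left(\sum_{\ell=1}^{k} P_{\ell j}^{2}\right) = P_{ii}\, P_{jj} \leq 1,
\end{equation*}
where the last two equalities use the identity $P_{ii} = \sum_{\ell} P_{\ell i}^{2}$ and the diagonal bound established above. Taking square roots gives $|P_{ij}| \leq 1$, as required.

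There is no real obstacle here; the only thing to be careful about is to avoid circular reasoning by using the entrywise expansion of $\mathbf{P} = \mathbf{P}^{\transpose}\mathbf{P}$ both to control the diagonal and to apply Cauchy--Schwarz to the off-diagonal, rather than appealing to operator-norm facts that would themselves need justification.
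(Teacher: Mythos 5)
Your proof is correct, and it takes a genuinely different route from the paper's. The paper argues via the operator norm: writing $|\mathbf{P}_{ij}| = |\langle \boldsymbol{\delta}_i, \mathbf{P}\boldsymbol{\delta}_j\rangle|$ for canonical basis vectors $\boldsymbol{\delta}_i, \boldsymbol{\delta}_j$ and invoking $\|\mathbf{P}\|_2 = 1$ (i.e., the spectral fact that an orthogonal projection has operator $2$-norm at most $1$) to conclude with one application of Cauchy--Schwarz on the inner product. You instead expand the identity $\mathbf{P} = \mathbf{P}^{\transpose}\mathbf{P}$ entrywise, first derive $0 \leq P_{ii} \leq 1$ from the quadratic inequality $P_{ii} \geq P_{ii}^2$, and then apply Cauchy--Schwarz to the finite sum to get $P_{ij}^2 \leq P_{ii}P_{jj} \leq 1$. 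Your version is more self-contained, since it never appeals to the operator-norm characterization of projections (itself usually proved via the eigenvalue or singular-value structure), at the cost of being a line or two longer; the paper's version is shorter but front-loads a spectral fact. Both are valid and both bottom out in Cauchy--Schwarz.
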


\begin{proof}
Let $(\boldsymbol{\delta}_i)_{i=1^k}$ be the canonical orthonormal base of $\R^k$. Using that orthonomal projection matrices have $\|P\|_2=1$ we get
$$
|\mathbf{P}_{ij}|=|\langle\boldsymbol{\delta}_i,\mathbf{P}\boldsymbol{\delta}_j\rangle_{2}|\leq \|\boldsymbol{\delta}_i\|_2\|\mathbf{P}\boldsymbol{\delta}_j\|_2\leq\|\mathbf{P}\|_2\|\boldsymbol{\delta}_j\|_2=\|\mathbf{P}\|_2=1.$$\end{proof}

\begin{proposition}\label{prop:ProbBounds}
Let $U_i$ be random variables taking values on $[0,\infty)$ with cumulative distribution function $H$, and let $Y(t) = \sum_{i=1}^n \ind_{\{U_i\geq t\}}$. Let $\beta \in (0,1)$, then
\begin{enumerate}
\item[i)] $\Prob\left(Y(t)/n \leq \beta^{-1}(1-H(t-)),\quad\forall t\leq\tau_n\right)\geq1-\beta,$ and \label{prop:ProbBound2}
\item[ii)] $\Prob\left(Y(t)/n \geq \beta(1-H(t-)),\quad\forall t\leq\tau_n\right)\geq1-e(1/\beta) e^{-1/\beta}.$ \label{prop:ProbBound3} 
\end{enumerate}
That is, i) and  ii) deduce $\sup_{t\leq \tau_n}Y(t)/(n(1-H(t-)) = \Theta_p(1)$.
\end{proposition}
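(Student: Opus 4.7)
The plan is to recognize $Y(t)/n$ as the empirical survival function $\bar H_n(t) := \tfrac{1}{n}\sum_{i=1}^n \ind_{\{U_i \geq t\}}$, which is unbiased for $\bar H(t) := 1-H(t-)$, and then reduce both claims to two classical maximal inequalities of Daniels and Wellner for the uniform empirical process.

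The reduction step: apply the quantile transform $V_i := H(U_i)$, so that $V_1,\ldots,V_n$ are i.i.d.\ Uniform$(0,1)$ (valid because $H$ is continuous). Since $t \mapsto H(t-)$ is monotone, writing $\bar G_n(u) := \tfrac{1}{n}|\{i : V_i \geq u\}|$ gives
\[
\frac{Y(t)/n}{1-H(t-)} \;=\; \frac{\bar G_n(H(t-))}{1-H(t-)},
\]
so both claims reduce to controlling the ratio $\bar G_n(u)/(1-u)$ over a subset of $[0,1)$. The restriction $t \leq \tau_n$ corresponds to $u \leq H(\tau_n-)$, and since we are upper-bounding tail probabilities, enlarging the domain to the full supremum (respectively, the infimum over $u \leq V_{(n)}$) is harmless.

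For part (i), I would invoke Daniels' inequality: for every $\lambda \geq 1$,
\[
\Prob\!\left(\sup_{0 \leq u < 1} \frac{\bar G_n(u)}{1-u} \geq \lambda \right) \leq \frac{1}{\lambda}.
\]
Setting $\lambda = 1/\beta$ and taking complements yields (i) directly. For part (ii), I would invoke the companion Wellner-type exponential inequality for the lower tail: for every $\lambda \geq 1$,
\[
\Prob\!\left(\sup_{0 \leq u \leq V_{(n)}} \frac{1-u}{\bar G_n(u)} \geq \lambda \right) \leq \lambda\, e^{1-\lambda}.
\]
Inverting the ratio turns the event on the left into $\{\inf \bar G_n(u)/(1-u) \leq 1/\lambda\}$; setting $\lambda = 1/\beta$ the bound becomes $(1/\beta) e^{1-1/\beta} = (e/\beta)\, e^{-1/\beta}$, and taking complements is exactly (ii). The final assertion $\sup_{t \leq \tau_n} Y(t)/(n\bar H(t)) = \Theta_p(1)$ follows at once: (i) supplies an $O_p(1)$ upper bound and nonnegativity supplies the trivial lower bound.

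The main obstacle is part (ii): Wellner's inequality is the nontrivial ingredient, proved via an exponential martingale combined with optional stopping in reverse time (and dates back essentially to Chang 1955 / Wellner 1978). Part (i) is comparatively elementary, following either from Doob's $L^1$ maximal inequality applied to a reverse-time martingale of the empirical process, or from the classical ballot-problem argument due to Daniels. A secondary, routine technicality is handling the endpoints (the event $\bar G_n(u) = 0$ for $u > V_{(n)}$), which is why the infimum in Wellner's inequality is restricted to $u \leq V_{(n)}$; this is automatic once $\tau_n$ stays below the last observed time.
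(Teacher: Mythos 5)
Your proof is correct. The paper supplies no argument of its own for this proposition---item i) is attributed to Gill (1980, Theorem 3.2.1) and item ii) to Gill (1983)---and those references establish precisely the Daniels-type and Wellner-type maximal inequalities for the uniform empirical process that you invoke after the quantile reduction, so your reconstruction is faithful to the intended line of argument.
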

The proofs of item ii) is due \citet{gill1983large}, and  item i) follows from \citet[Theorem 3.2.1]{gill1980censoring}.

\subsubsection{Advanced and Backward operators}
Recall that in our setting we observe censored data $D_i = (T_i,\Delta_i, X_i)$. The censored mechanism is due to the existence of a censored time $C_i$ that might not allow us to observe the time of interest $Z_i$ (as explained in \cref{sec:framework}). While in practice we do not observe the triple $(Z_i,C_i,X_i)$ (otherwise our setting is pointless), for our theoretical analysis is quite useful to work with this triple. We denote such triple with the letter $U_i$ and we call $\nu$ the probability measure in $\R\times\R\times[k]$ associated with it.

We define the operators $A,B:\mathcal{L}_2(\nu)\to\mathcal{L}_2(\nu)$ by
\begin{align*}
(Af_1)(z,c,x)&=f_1(z,c,x)-\frac{1}{S_x(z)}\int_z^\infty f_1(s,c,x)dF_x(s),\text{ and}\\
(Bf_1)(z,c,x)&=f_1(z,c,x)-\int_0^z f_1(s,c,x)d\Lambda_x(s),
\end{align*}
for any $f_1\in\mathcal{L}_2(\nu)$. The previous operators were introduced in \cite{ritov1988censoring, efron1990fisher}, and are known the advanced and backward operator, respectively.
\begin{proposition}\label{prop:fowardbackward} The operators defined above satisfy the following properties:
\begin{enumerate}
\item Let $f_1,f_2\in\mathcal{L}_2(\nu)$, then
\begin{align*}
    \int (Af_1)(u)f_2(u)\nu(du)=\int f_1(u)(Bf_2)(u)\nu(du)
\end{align*}
i.e. $A$ and $B$ are adjoint operators on $\mathcal L_2(\nu)$.
\item Let $f_1\in\mathcal{L}_2(\nu)$, then $ABf_1=f_1$. Moreover, if $\E(f_1(Z,C,X)|C,X)=0$ for almost all $(C,X)$, then
\begin{align*}
ABf_1=BAf_1=f_1.
\end{align*}
\item If $f_1\in\mathcal{L}_2(\nu)$ is such that $f_1(z,c,x)=\ind_{\{z\leq c\}}f(z,x)$ for some function $f$. Then
\begin{align*}
(Bf_1)(z,c,x)=\ind_{\{z\leq c\}}f(z,x)-\int_{0}^{\min\{z,c\}}f(s,x)d\Lambda_x(s).
\end{align*}
In particular, $(Bf_1)(Z_i,C_i,x_i)=\int_0^\infty f(s,x_i)dM_i(s)$. 
\end{enumerate}
\end{proposition}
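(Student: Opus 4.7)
Proof proposal for Proposition \ref{prop:fowardbackward}.

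My plan is to write $\nu(dz,dc,dx) = p_x\, dF_x(z)\, dG_x(c)$ (using $Z \perp C \mid X$, where $G_x$ is the conditional law of $C$ given $X=x$) and handle each part by direct expansion followed by Fubini, using the basic identity $d\Lambda_x(z) = dF_x(z)/S_x(z)$.

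\textbf{Part 1 (adjointness).} I would expand $\int (Af_1)(u) f_2(u)\, \nu(du)$ as the sum of $\int f_1 f_2\, d\nu$ and the double integral
\[
-\sum_x p_x \int\!\!\int \frac{f_2(z,c,x)}{S_x(z)}\int_z^\infty f_1(s,c,x)\, dF_x(s)\, dF_x(z)\, dG_x(c).
\]
Swapping the order of the $z$ and $s$ integrals (the region $\{z<s\}$ becomes $s$ outer, $z\in(0,s)$ inner) and using $dF_x(z)/S_x(z) = d\Lambda_x(z)$, the inner $z$-integral becomes $\int_0^s f_2(z,c,x)\, d\Lambda_x(z)$. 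Renaming $s\to z$, this collapses to $-\int f_1(u)\int_0^z f_2(s,c,x)\, d\Lambda_x(s)\, \nu(du)$, so the total is $\int f_1(u)(Bf_2)(u)\,\nu(du)$.

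\textbf{Part 2 ($ABf_1 = f_1$ and, conditionally, $BAf_1 = f_1$).} For $ABf_1$, plug $g = Bf_1$ into $A$; the only nontrivial term is
\[
\frac{1}{S_x(z)}\int_z^\infty \int_0^s f_1(u,c,x)\, d\Lambda_x(u)\, dF_x(s).
\]
Fubini (with region $s>z$, $0<u<s$, so $u\in(0,\infty)$ and $s\in(\max(z,u),\infty)$) gives $\frac{1}{S_x(z)}\int_0^\infty f_1(u,c,x) S_x(\max(z,u))\, d\Lambda_x(u)$. Splitting on $u<z$ vs.\ $u\geq z$ and using $S_x(u)\, d\Lambda_x(u) = dF_x(u)$ yields exactly the two terms needed to cancel the rest, leaving $f_1$.

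For $BAf_1$, plug $h = Af_1$ into $B$; the key Fubini step is
\[
\int_0^z \frac{1}{S_x(s)}\int_s^\infty f_1(u,c,x)\, dF_x(u)\, d\Lambda_x(s) = \int_0^\infty f_1(u,c,x)\int_0^{\min(u,z)} \frac{d\Lambda_x(s)}{S_x(s)}\, dF_x(u).
\]
Here I use the identity $\int_0^t d\Lambda_x(s)/S_x(s) = 1/S_x(t) - 1$, which follows from $\frac{d}{ds}(1/S_x(s)) = f_x(s)/S_x(s)^2$. The $1/S_x(\min(u,z))$ piece cancels the remaining terms exactly as before, and the residual $-\int_0^\infty f_1(u,c,x)\, dF_x(u) = -\E(f_1(Z,c,x)\mid C=c, X=x)$ vanishes precisely under the conditional-mean-zero assumption. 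This is the step where the extra hypothesis is used.

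\textbf{Part 3 (indicator form and the martingale identity).} The formula $(Bf_1)(z,c,x) = \ind_{\{z\leq c\}} f(z,x) - \int_0^{\min(z,c)} f(s,x)\, d\Lambda_x(s)$ is immediate from the definition of $B$, since $\ind_{\{s\leq c\}}$ restricts the range of integration on $(0,z]$ to $(0,\min(z,c)]$. Evaluating at $(Z_i,C_i,x_i)$ gives $\Delta_i f(T_i,x_i) - \int_0^{T_i} f(s,x_i)\, d\Lambda_{x_i}(s)$. I would then identify the first term as $\int_0^\infty f(s,x_i)\, dN_i(s)$ (since $N_i$ has a single jump of size $\Delta_i$ at $T_i$) and the second as $\int_0^\infty f(s,x_i) Y_i(s)\, d\Lambda_{x_i}(s)$ (since $Y_i(s)=\ind_{\{T_i\geq s\}}$), which by definition of $M_i$ equals $\int_0^\infty f(s,x_i)\, dM_i(s)$.

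The only genuinely delicate step is the Fubini computation leading to $BAf_1$, because one must recognise the antiderivative $\int_0^t d\Lambda_x/S_x = 1/S_x(t)-1$ and track the constant term that forces the conditional-mean-zero hypothesis; everything else is bookkeeping.
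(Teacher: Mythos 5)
Your proposal is correct and follows essentially the same route as the paper's proof: Part 1 is the same Fubini swap over the region $\{z<s\}$ (the paper merely carries this out by first conditioning on $(C,X)$, which is equivalent to your factorisation $\nu(dz,dc,dx)=p_x\,dF_x(z)\,dG_x(c)$); Part 2 uses the same two Fubini interchanges, the same split at $\max(z,u)$ respectively $\min(z,u)$, and the same antiderivative identity $\int_0^t dF_x/S_x^2 = 1/S_x(t)-1$, with the residual term $-\E(f_1\mid C,X)$ appearing in exactly the place you say it does; and Part 3 matches the paper's reading-off of $\Delta_i f(T_i,x_i)=\int f\,dN_i$ and $\int_0^{T_i}f\,d\Lambda_{x_i}=\int f\,Y_i\,d\Lambda_{x_i}$. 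No gap.
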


\begin{proof}[\textbf{Proof of \Cref{prop:fowardbackward}}]
We start with property 1. Notice that
\begin{align*}
&\E((Af_1)(Z,C,X)f_2(Z,C,X)|C,X)\\
&\qquad=\int_0^\infty (Af_1)(z,C,X)f_2(z,C,X)dF_X(z)\\
&\qquad=\int_0^\infty \left(f_1(z,C,X)-\frac{1}{S_X(z)}\int_z^\infty f_1(s,C,X)dF_X(s)\right)f_2(z,C,X)dF_X(z)\\
&\qquad=\int_0^\infty f_1(z,C,X)f_2(z,C,X)dF_X(z)-\int_0^\infty\left(\frac{1}{S_X(z)}\int_z^\infty f_1(s,C,X)dF_X(s)\right)f_2(z,C,X)dF_X(z)\\
&\qquad=\int_0^\infty f_1(s,C,X)f_2(s,C,X)dF_X(s)-\int_0^\infty\left(\int_0^sf_2(z,C,X)\frac{dF_X(z)}{S_X(z)}\right)f_1(s,C,X)dF_X(s)\\
&\qquad=\int_0^\infty f_1(s,C,X)\left(f_2(s,C,X)-\int_0^sf_2(z,C,X)\frac{dF_X(z)}{S_X(z)}\right)dF_X(s)\\
&\qquad=\E(f_1(Z,C,X)(Bf_2)(Z,C,X)|C,X).
\end{align*}
The previous set of equations show that $\E((Af_1)(Z,C,X)f_2(Z,C,X)|C,X)=\E(f_1(Z,C,X)(Bf_2)(Z,C,X)|C,X)$, then by taking expectation in both sides we get the result stated in i).

For property 2., note
\begin{align*}
ABf_1(z,c,x)&=(Bf_1)(z,c,x)-\frac{1}{S_x(z)}\int_z^\infty (Bf_1)(s,c,x)dF_x(s).
\end{align*}
The first term of the right-hand-side of the previous equation satisfies
\begin{align}
(Bf_1)(z,c,x)&=f_1(z,c,x)-\int_0^zf_1(s,c,x)d\Lambda_x(s)\label{eqn:p21}, 
\end{align}
and the second term satisfies
\begin{align}
\int_0^z&f_1(s,c,x)d\Lambda_x(s)=\frac{1}{S_x(z)}\int_z^\infty (Bf_1)(s,c,x)dF_x(s)\nonumber\\
&\qquad=\frac{1}{S_x(z)}\int_z^\infty \left(f_1(s,c,x)-\int_0^sf_1(s',c,x)d\Lambda_x(s')\right)dF_x(s)\nonumber\\
&\qquad=\frac{1}{S_x(z)}\int_z^\infty f_1(s,c,x)dF_x(s)-\frac{1}{S_x(z)}\int_0^\infty \ind_{\{z\leq s\}}\left(\int_0^sf_1(s',c,x)d\Lambda_x(s')\right)dF_x(s)\nonumber\\
&\qquad=\frac{1}{S_x(z)}\int_z^\infty f_1(s,c,x)dF_x(s)-\frac{1}{S_x(z)}\int_0^\infty S_x(\max\{z,s'\})f_1(s',c,x)d\Lambda_x(s')\nonumber\\
&\qquad=-\int_0^z f_1(s',c,x)d\Lambda_x(s')\label{eqn:p22},
\end{align}
Thus, by substracting \cref{eqn:p21} and \cref{eqn:p22} leads to $ABf_1(z,c,x)=f_1(z,c,x)$.

We continue by computing 
\begin{align*}
(BAf_1)(z,c,x)&=(Af_1)(z,c,x)-\int_0^z (Af_1)(s,c,x)d\Lambda_x(s).
\end{align*}
Observe that the first term in the right-hand-side of the previous equation satisfies
\begin{align*}
(Af_1)(z,c,x)&=f_1(z,c,x)-\frac{1}{S_x(z)}\int_z^\infty f_1(s,c,x)dF_x(s).
\end{align*}
The second term satisfies
\begin{align*}
&\int_0^z (Af_1)(s,c,x)d\Lambda_x(s)\\
&\quad=\int_0^z \left(f_1(s,c,x)-\frac{1}{S_x(s)}\int_s^\infty f_1(s',c,x)dF_x(s')\right)d\Lambda_x(s)\\
&\quad=\int_0^zf_1(s,c,x)d\Lambda_x(s)-\int_0^\infty\ind_{\{s\leq z\}}\frac{1}{S_x(s)}\int_s^\infty f_1(s',c,x)dF_x(s')d\Lambda_x(s)\\
&\quad=\int_0^zf_1(s,c,x)d\Lambda_x(s)-\int_0^\infty f_1(s',c,x)\left(\int_0^{s'}\ind_{\{s\leq z\}}\frac{1}{S_x(s)^2}dF_x(s)\right)dF_x(s')\\
&\quad=\int_0^zf_1(s,c,x)d\Lambda_x(s)-\int_0^\infty f_1(s',c,x)\left(\frac{1}{S_x(\min\{z,s'\})}-1\right)dF_x(s')\\
&\quad=\int_0^zf_1(s,c,x)d\Lambda_x(s)-\int_0^z f_1(s',c,x)d\Lambda_x(s')-\frac{1}{S_x(z)}\int_z^\infty f_1(s',c,x)dF_x(s')-\E(f_1(Z,C,X)|C=c,X=x)\\
&\quad=-\frac{1}{S_x(z)}\int_z^\infty f_1(s',c,x)dF_x(s')+\E(f_1(Z,C,X)|C=c,X=x).
\end{align*}
By subtracting the second term to the first term we deduce $(BAf_1)(z,c,x)=f_1(z,c,x)-\E(f_1(Z,C,X)|C=c,X=x)$. Furthermore, if $\E(f_1(Z,C,X)|C=c,X=x)=0$ for almost all $(c,x)$, we conclude
\begin{align*}
(ABf_1)(z,c,x)=(BAf_1)(z,c,x)=f_1(z,c,x).
\end{align*}
Finally, we check property 3. Let $f_1(z,x,c)=\ind_{\{z\leq c\}}f(z,x)$, then
\begin{align*}
(Bf_1)(z,c,x)&=f_1(z,c,x)-\int_0^z f_1(s,c,x)d\Lambda_x(s)\\
&=\ind_{\{z\leq c\}}f(z,x)-\int_0^z \ind_{\{s\leq c\}}f(s,x)d\Lambda_x(s)\\
&=\ind_{\{z\leq c\}}f(z,x)-\int_0^{\min\{z,c\}}f(s,x)d\Lambda_x(s).
\end{align*}
Finally, consider the above equation using $(z,c,x) = (Z_i,C_i,X_i)$ and recall that $T_i = \min\{Z_i,C_i\}$ and $\Delta_i= \ind_{\{T_i\leq C_i\}}$, then 
\begin{align*}
(Bf_1)(Z_i,C_i,X_i)&= \Delta_i f(T_i,X_i)-\int_0^{T_i}f(s,X_i)d\Lambda_{X_i}(s)\\
&=\Delta_i f(T_i,X_i)-\int_0^{\infty}\ind_{\{s\leq T_i\}}f(s,X_i)d\Lambda_{X_i}(s)\\
&=\Delta_if(T_i,X_i)-\int_0^{\infty}Y_i(s)f(s,X_i)d\Lambda_{X_i}(s)\\
&=\int_0^\infty f(s,X_i)\left(dN_i(s)-Y_i(s)d\Lambda_{X_i}(s)\right)=\int_0^\infty f(s,X_i)dM_i(s).
\end{align*}
\end{proof}

\subsection{Proofs of \Cref{section:KernelTestStatistic}}\label{sec:asymptotic}

We recall some of the definitions of \Cref{sec:framework} and \Cref{sec:teststat} that recurrently appear in our proofs.

Observe that the matrix $\X(t)=\frac{1}{n}\Diag{\Y}{(t)}\V$ that has a.s. limit when the number of data points tends to infinity, and such limit is given by $\Xlim(t) = \Diag{\S}{(t)} \V$ where $\S(t) = (p_1(1-H_1(t)),\ldots, p_k (1-H_k(t)))$. The previous result holds since $\E(Y_i(t)|X_i=\ell) = p_{\ell}(1-H_{\ell}(t))$, and by the law of large numbers. Additionally, recall that $\Q(t)$ is the $k\times k$ matrix defined as
\begin{align*}
\Q(t)&=\IndRank(t)\left(\I_k-\X(t)(\X(t)^\transpose\X(t))^{-1}\X(t)^\transpose\right),
\end{align*}
where $\IndRank(t)$ is the indicator that $\X(t)$ has full rank. Note then that $\Q(t)$ converges a.s. to $\Qlim(t)$ given by
\begin{align*}
\Qlim(t) = \IndRanklim(t)\left(\I_k-\Xlim(t)(\Xlim(t)^\transpose\Xlim(t))^{-1}\Xlim(t)^\transpose\right)
\end{align*}
where $\IndRanklim(t)$ is the indicator that $\Xlim(t)$ has full rank. Observe that when the matrices are not trivially zero, $\Q$ and $ \Qlim$ are orthogonal projection matrices on the null space of $\X$ and $\Xlim$ respectively.

\begin{proof}[\textbf{Proof of \Cref{prop:closedform1}}]
By definition 
\begin{align*}
   \Upsilon_n(\CMa)=\sup_{\omega \in B_1(\mathcal H) }\widehat U_0(\omega)^2&=\sup_{\omega \in B_1(\mathcal H) }\left( \frac{1}{\sqrt{n}}\bone^\transpose\int \W(t)\Q(t)d\N(t)\right)^2\\
   &=\sup_{\omega \in B_1(\mathcal H) }\left(\frac{1}{\sqrt{n}}\int \sum_{i=1}^k \sum_{j=1}^k \omega(t,i) \Q_{ij}(t)d\N_j(t)\right)^2
\end{align*}
where the second equality is obtained by replacing $\widehat{U}_0(\omega)$ by its definition, given in \cref{eqn:ourLogrank}, and the third equality follows from expanding the matrix products. 

Since $\omega(t,i):[0,\infty)\times [k]\to \R$ is a element of the RKHS $\mathcal H$ with kernel $K$, we know that $\omega(t,i)=\langle\omega,K((t,i),\cdot)\rangle_{\mathcal H}$. Then, define the function $\xi \in \mathcal H$ as
\begin{align*}
    \widehat \xi(\cdot)  = \frac{1}{\sqrt n}\int \sum_{i=1}^k \sum_{j=1}^k K_{(t,i)}(\cdot) \Q_{ij}(t)d\N_j(t)
\end{align*}
which is well-defined since the integral  with respect to $d\N_j$ is no more than  a finite sum.
Then, clearly
\begin{align*}
     \Upsilon_n(\CMa)&=\sup_{\omega \in B_1(\mathcal H) }\left(\frac{1}{\sqrt{n}}\int \sum_{i=1}^k \sum_{j=1}^k \omega(t,i) \Q_{ij}(t)d\N_j(t)\right)^2=\sup_{\omega \in B_1(\mathcal H)}\langle \widehat \xi, \omega\rangle_{\mathcal H}^2= \|\widehat \xi\|^2,
\end{align*}
where the last equality follows from the fact that we are taking supremum over the unit ball of an RKHS. By carefully expanding $ \|\widehat \xi\|^2$ we obtain that
\begin{align}
    \|\widehat \xi\|^2 &= \frac{1}{n} \int\int \sum_{i,i'=1}^k\sum_{j,j'=1}^k K((t,i),(s,i'))\Q_{ij}(t)\Q_{i'j'}(s)dN_i(t)dN_{i'}(s) 
\end{align}
since $\langle K_{(t,i)}, K_{(s,i')}\rangle_{\mathcal H} = K((t,i),(s,i'))$. Finally, recall that $\K(t,s)$ represents the matrix $(\K_{i,i'}(t,s))_{i,i'=1}^k$ where $\K_{i,i'}(t,s) = K((t,i),(s,i'))$, hence, rearranging the previous equation yields
\begin{align}
     \Upsilon_n(\CMa)=\|\widehat \xi\|^2 &= \frac{1}{n}\int \int (\Q(t)d\N(t))^{\transpose} \K(t,s) (\Q(s)d\N(s))).
\end{align}

\end{proof}
\subsubsection{Proof of \Cref{thm:asymptoticUpsilon}}
The next theorem is a refinement of \Cref{thm:asymptoticUpsilon}.
\begin{theorem}\label{thm:asymptoticUpsilonRefi}
Under the null hypothesis and  \Cref{Cond:bounded} it holds that
\begin{align}
\Upsilon_n(\CMa) \overset{\mathcal D}{\to} \Upsilon(\CMa):= \sum_{i=1}^{\infty} \lambda_i \xi_i^2\label{eqn:asytrue}
\end{align}
as $n$ tends to infinity, where $\xi_1,\xi_2,\ldots$ are independent and identically distributed $\mathcal N(0,1)$ random variables, and $\lambda_1,\lambda_2,\ldots$ are the eigenvalues of an integral operator $T_J:\mathcal L^2(\mu) \to \mathcal L^2(\mu)$, where $\mu$ denotes the probability measure on $[0,\infty)\times \{0,1\}\times [k]$ associated with the observed triple $(T_{i},\Delta_{i},x_{i})$.
\end{theorem}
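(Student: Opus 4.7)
The strategy is to reduce $\Upsilon_n(\CMa)$ to a degenerate V-statistic in the observables $U_i=(T_i,\Delta_i,X_i)$ and then invoke the classical spectral limit theorem to identify the weighted chi-squared distribution. First, under the null hypothesis we have $\bLambda(t)=\V\B(t)$ for some vector-valued $\B$, so $\Diag{\Y}{t}d\bLambda(t)=n\X(t)d\B(t)$. Since $\Q(t)$ projects onto the orthogonal complement of the column space of $\X(t)$ whenever $\IndRank(t)=1$, we have $\Q(t)\X(t)=\bzero$ and consequently
\begin{align*}
\Q(t)d\N(t)=\Q(t)\Diag{\Y}{t}d\bLambda(t)+\Q(t)d\M(t)=\Q(t)d\M(t).
\end{align*}
Combined with \Cref{prop:closedform1}, this yields $\Upsilon_n(\CMa)=\frac{1}{n}\int\int d\M(t)^\transpose\Q(t)\K(t,s)\Q(s)d\M(s)$.

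Second, I would replace $\Q(t)$ by its deterministic limit $\Qlim(t)$. Writing $\Q=\Qlim+(\Q-\Qlim)$ and expanding the quadratic form gives a main term
\begin{align*}
\Upsilon_n^{\ast}(\CMa):=\frac{1}{n}\int\int d\M(t)^\transpose\Qlim(t)\K(t,s)\Qlim(s)d\M(s)
\end{align*}
plus cross and remainder terms. The key estimate is that $\sup_{t:\IndRanklim(t)=1}\|\Q(t)-\Qlim(t)\|\to 0$ in probability, which follows from the uniform law of large numbers for $\X(t)\to\Xlim(t)$ and continuity of the pseudo-inverse on the open set where $\Xlim$ has full rank; combined with \Cref{prop:ProbBounds} to keep $\Y(t)/n$ bounded away from zero on the effective integration range and the boundedness of $\K$ implied by \Cref{Cond:bounded}, the remainders are $o_p(1)$. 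This is the step I expect to be the main technical obstacle, because $\IndRank$ and $\IndRanklim$ do not always coincide, and the argument must carefully stratify by whether both indicators equal one, using the bounded-kernel assumption to make the missing mass near $\tau_H$ asymptotically negligible.

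Third, expand $\M=\sum_{i=1}^n\M^i$ to rewrite $\Upsilon_n^{\ast}(\CMa)=\frac{1}{n}\sum_{i,j=1}^n J(U_i,U_j)$ for a symmetric kernel $J:([0,\infty)\times\{0,1\}\times[k])^2\to\R$. To identify $J$, apply Property 3 of \Cref{prop:fowardbackward}: for each index $\ell\in[k]$, the scalar integral $\int (\Qlim(t)\K(t,\cdot))_{\ell,X_i}\,dM_i(t)$ equals $(B\tilde{f}_\ell)(Z_i,C_i,X_i)$ for the function $\tilde{f}_\ell(z,c,x)=\ind_{\{z\leq c\}}(\Qlim(z)\K(z,\cdot))_{\ell,x}$, which depends on $(Z_i,C_i,X_i)$ only through $U_i$. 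Composing these expressions in the bilinear form built from $\K$ produces the symmetric kernel $J$. Since $\M^i$ is a mean-zero martingale and $\Qlim$ is deterministic, the kernel is degenerate: $\E[J(U_1,u)]=0$ for $\mu$-almost every $u$. Boundedness of $J$ follows from $c_0$-universality of $K$ (which ensures $K$ is bounded), the entrywise bound $\|\Qlim(t)\|\leq 1$ from \Cref{prop:projection}, and integrability of $\int dM_i$, giving $J\in\mathcal L^2(\mu\otimes\mu)$.

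Fourth, apply the Hoeffding spectral limit theorem for degenerate V-statistics (e.g., Serfling 1980, Ch.~6): for a bounded symmetric degenerate kernel $J\in\mathcal L^2(\mu\otimes\mu)$,
\begin{align*}
\frac{1}{n}\sum_{i,j=1}^n J(U_i,U_j)\overset{\mathcal D}{\to}\sum_{\ell=1}^\infty\lambda_\ell\xi_\ell^2,
\end{align*}
where $(\xi_\ell)$ are iid $\mathcal N(0,1)$ and $(\lambda_\ell)$ are the eigenvalues of the integral operator $T_Jf(u)=\int J(u,v)f(v)\,d\mu(v)$. A final Slutsky step combining this with the $o_p(1)$ remainder from Step 2 yields \eqref{eqn:asytrue}, completing the proof.
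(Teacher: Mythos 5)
Your four-step outline matches the structure of the paper's proof: substitute $d\N \mapsto d\M$ under the null via $\Q(t)\X(t)=\bzero$; swap $\Q$ for $\Qlim$ with an $o_p(1)$ remainder; rewrite the leading term as a degenerate V-statistic in the observables $D_i=(T_i,\Delta_i,X_i)$; and apply the Serfling/Hoeffding spectral limit theorem. The identification of the kernel $J$, the verification of degeneracy via the optional stopping theorem, and the integrability checks using boundedness of $K$ and $|\Qlim_{ij}|\le 1$ are all as in the paper. The detour through the backward operator in your third step is unnecessary here (the paper simply notes that $\int f(s,X_i)\,dM_i(s)$ depends only on $D_i$ since $\bLambda$ is deterministic under the null); the backward operator is instead needed in the paper's Wild Bootstrap consistency proof to relate $T_J$ and $T_{J'}$.

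The genuine gap is the "key estimate" asserted in your second step, namely $\sup_{t\colon \IndRanklim(t)=1}\|\Q(t)-\Qlim(t)\|\to 0$ in probability. This is false: for $t$ between $\tau_n=\max_i T_i$ and $\tau_H$ we have $\IndRank(t)=0$ but $\IndRanklim(t)=1$, so $\Q(t)=\bzero$ while $\Qlim(t)$ is a nontrivial orthogonal projection, giving $\|\Q(t)-\Qlim(t)\|\ge 1$ on a set of positive probability. Continuity of the pseudo-inverse and the Glivenko--Cantelli law only give uniform convergence of $\D=\Q-\Qlim$ on compacts $[0,T]\subset[0,\tau_H)$, and \Cref{prop:ProbBounds} does not bound $\Y(t)/n$ away from zero (it only controls the ratio $\Y(t)/(n(1-H(t-)))$, which is compatible with $\Y(t)/n\to 0$ near the support endpoint). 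The paper closes the argument differently: it controls the double stochastic integral $\int\int \D(t)\K(t,s)\D(s)\,d\M(t)\,d\M(s)$ through a quadratic-variation bound (Theorem 17 of \citet{fernandez2020reproducing}), reduces $\frac1n\int \kappa(t,t)\,d\langle M_j\rangle(t)$ to $\int\kappa(t,t)\,dF_j(t)$ via \Cref{prop:ProbBounds}, and finishes by dominated convergence using the global bound $|\D_{ij}(t)|\le 2$ from \Cref{prop:projection} as the dominant, with pointwise convergence $\D_{ij}(t)\to 0$ for each fixed $t<\tau_H$. Your gesture toward "stratifying near $\tau_H$" points in the right direction but would need to be replaced by this martingale-isometry-plus-DCT argument to be a proof.
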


To prove \Cref{thm:asymptoticUpsilonRefi} we require the following intermediate results

\begin{proposition}\label{prop:limitApproxTestStat} Under the null hypothesis and \cref{Cond:bounded}, we have that
$$\sup_{\omega\in B_1(\mathcal{H})}\left(  \frac{1}{\sqrt{n}}\mathbf{1}^\transpose\int_0^\infty \W(t){\Qlim}(t) d\M(t)\right)^2\overset{\mathcal D}{\to} \Upsilon(\CMa)$$
where $\Upsilon(\CMa)$ is as defined in \cref{thm:asymptoticUpsilonRefi}.
\end{proposition}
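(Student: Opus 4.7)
\textbf{Proof proposal for Proposition~\ref{prop:limitApproxTestStat}.}

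The plan is to follow the same RKHS rewriting that proved \Cref{prop:closedform1}, but with the \emph{deterministic} projection $\Qlim$ in place of $\Q$. This turns the supremum into the squared norm of a single random element of $\mathcal{H}$ that is a normalised sum of iid $\mathcal{H}$-valued random variables, after which a Hilbert-space CLT and the continuous mapping theorem finish the job.

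\textbf{Step 1 (RKHS rewriting as $\|\xi_n\|^2$).} Writing $d\M(t) = \sum_{\ell=1}^n \indv_\ell\, dM_\ell(t)$ and expanding,
\begin{equation*}
\tfrac{1}{\sqrt{n}}\bone^\transpose \int \W(t)\Qlim(t)d\M(t) \;=\; \tfrac{1}{\sqrt{n}}\sum_{\ell=1}^n \sum_{i=1}^k \int \omega(t,i)\,\Qlim_{i,X_\ell}(t)\,dM_\ell(t).
\end{equation*}
Define
\begin{equation*}
\Phi_\ell(\cdot) \;:=\; \sum_{i=1}^k \int K\bigl((t,i),\cdot\bigr)\,\Qlim_{i,X_\ell}(t)\,dM_\ell(t) \in \mathcal{H}, \qquad \xi_n \;:=\; \tfrac{1}{\sqrt{n}}\sum_{\ell=1}^n \Phi_\ell.
\end{equation*}
By the reproducing property, the quantity above equals $\langle \omega,\xi_n\rangle_{\mathcal{H}}$, and hence the supremum over $\omega \in B_1(\mathcal{H})$ squared equals $\|\xi_n\|_{\mathcal{H}}^2$, exactly as in the derivation of \Cref{prop:closedform1}.

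\textbf{Step 2 (iid decomposition and CLT in $\mathcal{H}$).} Since $\Qlim$ is deterministic, $\Phi_\ell$ is a measurable function of $(Z_\ell,C_\ell,X_\ell)$ only, so $\Phi_1,\Phi_2,\ldots$ are iid. By item 3 of \Cref{prop:fowardbackward}, each coordinate integral $\int f(\cdot,X_\ell)\,dM_\ell$ equals $(Bf_1)(Z_\ell,C_\ell,X_\ell)$ for an appropriate $f_1$, and by item 1 (adjointness of $A,B$) this has mean zero; consequently $\E[\Phi_\ell]=0$ in $\mathcal{H}$. To apply the Hilbert-space CLT I would verify $\E\|\Phi_\ell\|_{\mathcal{H}}^2<\infty$: by the reproducing kernel bound $\|K((t,i),\cdot)\|_{\mathcal{H}}^2 = K((t,i),(t,i))$ is bounded under \Cref{Cond:bounded}, $|\Qlim_{ij}(t)|\leq 1$ by \Cref{prop:projection}, and $\int dN_\ell \leq 1$ almost surely, while the compensator part is controlled because $\Qlim(t)=\bzero$ outside $[0,\tau_H)$ where the denominators defining $\Lambda_{X_\ell}$ stay bounded. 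The classical CLT for iid square-integrable random elements of a separable Hilbert space then gives $\xi_n \overset{\mathcal D}{\to} \xi$, a centred Gaussian element in $\mathcal{H}$ with trace-class covariance operator $T = \E[\Phi_1\otimes\Phi_1]$.

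\textbf{Step 3 (continuous mapping and spectral decomposition).} Since $\omega\mapsto\|\omega\|_{\mathcal{H}}^2$ is continuous, the continuous mapping theorem yields $\|\xi_n\|_{\mathcal{H}}^2 \overset{\mathcal D}{\to} \|\xi\|_{\mathcal{H}}^2$. Diagonalising the self-adjoint, positive, trace-class $T$ in an orthonormal eigenbasis of $\mathcal{H}$ gives the Karhunen–Lo\`eve-type identity $\|\xi\|_{\mathcal{H}}^2 = \sum_{i\geq 1}\lambda_i \xi_i^2$ with $\xi_i$ iid $\mathcal N(0,1)$ and $\lambda_i\geq 0$ the eigenvalues of $T$. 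Finally, using the map $\omega\mapsto \omega(Z_\ell,\cdot)$ together with the backward operator identity, $T$ has the same nonzero spectrum as an integral operator $T_J$ on $\mathcal L^2(\mu)$ whose kernel is obtained by specialising $\Phi_1$ as in the two-sample construction of \citet{fernandez2020reproducing}, giving the statement of \Cref{thm:asymptoticUpsilonRefi}.

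\textbf{Main obstacle.} The non-routine piece is verifying the second moment bound $\E\|\Phi_\ell\|_{\mathcal{H}}^2<\infty$ uniformly in $t$, because $\Qlim(t)$ is an orthogonal projection only where $\Xlim(t)$ has full rank; near $\tau_H$ the behaviour of the compensator part of $M_\ell$ must be controlled using the envelopes of \Cref{prop:ProbBounds} together with the $c_0$-bound on $K$. Once this integrability is secured, the rest is a clean application of the Hilbert-space CLT plus continuous mapping, and the identification of $T$ with an integral operator on $\mathcal L^2(\mu)$ follows standard duality between $\mathcal{H}$-valued and $\mathcal L^2(\mu)$-valued formulations.
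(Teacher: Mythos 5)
Your argument is correct, but it follows a genuinely different route from the paper. The paper writes the quantity as a $V$-statistic $\frac{1}{n}\sum_{i,j} J(D_i,D_j)$ with scalar kernel $J(D_i,D_j)=\int\int(\Qlim(t)d\M^j(t))^\transpose\K(t,s)(\Qlim(s)d\M^i(s))$, shows degeneracy via the optional stopping theorem applied to the martingale $M_j$ (rather than your detour through the $A,B$-operator adjointness, which also works once one notes $A\mathbf{1}=0$), verifies the integrability conditions, and then invokes the classical degenerate-$V$-statistic limit theorem of Serfling to obtain $\sum_i\lambda_i\xi_i^2$ with $\lambda_i$ the eigenvalues of the integral operator $T_J$ on $\mathcal L^2(\mu)$. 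You instead lift the problem to $\mathcal{H}$: express the supremum as $\|\xi_n\|_{\mathcal H}^2$ with $\xi_n=n^{-1/2}\sum_\ell\Phi_\ell$ an iid mean-zero sum of $\mathcal H$-valued elements, apply the Hilbert-space CLT and the continuous mapping theorem, then diagonalise the Gaussian covariance operator $T=\E[\Phi_1\otimes\Phi_1]$. The two routes are formally equivalent — $J(D_i,D_j)=\langle\Phi_i,\Phi_j\rangle_{\mathcal H}$, and $T$ and $T_J$ are $SS^*$ and $S^*S$ for the embedding $S:\mathcal L^2(\mu)\to\mathcal H$, hence share nonzero spectrum — but your version needs this extra spectral-identification step (which you only gesture at) to land on the \emph{same} description of $\Upsilon(\CMa)$ given in \Cref{thm:asymptoticUpsilonRefi}, whereas the $V$-statistic theorem hands you the $T_J$ form directly. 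Conversely, your route makes the ``why it's a weighted chi-square'' transparent and would generalise cleanly to non-bounded-kernel settings under a trace-class moment condition, whereas the paper's approach stays closer to the classical toolbox. Your flagged ``main obstacle'' (the second-moment bound near $\tau_H$) is exactly the same integrability verification the paper compresses into its one-line justification of conditions ii) and iii), so neither account is more complete there.
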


\begin{proposition}\label{prop:approxTestStat}
Under the null hypothesis and \Cref{Cond:bounded}, 
\begin{align*}
\Upsilon_n(\CMa) = \sup_{\omega\in B_1(\mathcal{H})}\left(  \frac{1}{\sqrt{n}}\mathbf{1}^\transpose\int_0^\infty \W(t){\Qlim}(t) d\M(t)\right)^2 + o_p(1).
\end{align*}
\end{proposition}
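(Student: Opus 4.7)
The plan is to replace $d\N$ by $d\M$ under the null, which is free of cost by the construction of $\Q$, and then to replace $\Q$ by $\Qlim$ inside the supremum, which is the actual content of the statement. First, under the null one has $\bLambda = \V\B$, so $\Diag{\Y}{t}d\bLambda(t) = n\X(t)d\B(t)$. Since $\Q(t) = \IndRank(t)(\I_k - \X(t)(\X(t)^\transpose\X(t))^{-1}\X(t)^\transpose)$ is the orthogonal projector onto the orthogonal complement of the column space of $\X(t)$ (times the full-rank indicator), we get $\Q(t)\X(t)=\bzero$, and therefore
$$\Q(t)\,d\N(t) = \Q(t)\,d\M(t).$$
Thus under the null $\Upsilon_n(\CMa) = (\sup_{\omega\in B_1(\mathcal H)} \frac{1}{\sqrt n}\bone^\transpose \int \W(t)\Q(t)d\M(t))^2$.

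Next, using the reproducing property exactly as in the proof of \Cref{prop:closedform1}, I would encode the two functionals as Hilbert-space elements
$$\widehat\xi_n = \tfrac{1}{\sqrt n}\textstyle\sum_{i,j}\int K_{(t,i)}\Q_{ij}(t)\,dM_j(t),\qquad \widehat\xi_n^{\lim} = \tfrac{1}{\sqrt n}\textstyle\sum_{i,j}\int K_{(t,i)}\Qlim_{ij}(t)\,dM_j(t),$$
so that $\Upsilon_n(\CMa)=\|\widehat\xi_n\|_{\mathcal H}^2$ and the right-hand side of the proposition is $\|\widehat\xi_n^{\lim}\|_{\mathcal H}^2$. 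By the reverse triangle inequality plus $|a^2-b^2|\leq |a-b|(|a-b|+2|b|)$, it suffices to prove $\|\widehat\xi_n-\widehat\xi_n^{\lim}\|_{\mathcal H} \to 0$ in probability, since $\|\widehat\xi_n^{\lim}\|_{\mathcal H}$ is tight by \Cref{prop:limitApproxTestStat}.

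For that I would work with the explicit identity
$$\|\widehat\xi_n-\widehat\xi_n^{\lim}\|_{\mathcal H}^2 = \tfrac{1}{n}\int\int [R_n(t)\,d\M(t)]^\transpose \K(t,s)\,[R_n(s)\,d\M(s)],\qquad R_n(t):=\Q(t)-\Qlim(t).$$
Two ingredients feed into controlling this. First, a deterministic matrix-perturbation bound applied to the projector $\Q(t)$: on the event where $\Xlim(t)$ has full rank, one has $\|R_n(t)\|_{op} \leq C\|\X(t)-\Xlim(t)\|_{op}(\|(\Xlim^\transpose\Xlim)^{-1}\|+o(1))$. Second, uniform convergence of $\X(t)$ to $\Xlim(t)$: since $\X$ is an empirical average of indicators $\ind_{\{T_i\geq t\}}$ weighted by the $\indv_i$, Glivenko--Cantelli together with Donsker-type tightness give $\sup_{t\leq \tau}\|\X(t)-\Xlim(t)\|_{op} = O_p(n^{-1/2})$ for every $\tau<\tau_H$. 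Combined, $\sup_{t\leq \tau}\|R_n(t)\|_{op}=O_p(n^{-1/2})$.

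The main obstacle will then be to turn this uniform rate into a bound on the martingale double integral despite $R_n$ being \emph{non-predictable} (it depends on the full sample via $\X$). I would handle this by decomposing $R_n = R_n^{pred} + R_n^{rem}$, where $R_n^{pred}$ is obtained from a linearisation of the projector around $\Xlim$ that depends only on finitely many empirical averages of predictable quantities, and the remainder $R_n^{rem}$ has order $O_p(n^{-1})$ uniformly. For the linear part I would apply the Burkholder--Davis--Gundy / Lenglart inequalities component-wise, together with bounded kernel assumption from \Cref{Cond:bounded}, to get $\frac{1}{n}\int\int [R_n^{pred}\,d\M]^\transpose\K[R_n^{pred}\,d\M] = O_p(n^{-1})$. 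The remainder is controlled crudely using $|{K}|\le C$, the uniform bound on $\|R_n^{rem}\|$, and $\int |dM_j| = O_p(n)$. Finally, the tail $(\tau,\tau_H)$ is handled by a splitting argument using \Cref{prop:ProbBounds} that shows the contribution of $t$ near $\tau_H$ to both $\|\widehat\xi_n\|$ and $\|\widehat\xi_n^{\lim}\|$ is uniformly small in probability as $\tau\uparrow\tau_H$, which allows one to restrict to a fixed $\tau<\tau_H$ on which the above linearisation is valid. Combining all pieces yields $\|\widehat\xi_n-\widehat\xi_n^{\lim}\|_{\mathcal H}=o_p(1)$, completing the proof.
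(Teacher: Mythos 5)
Your opening step, replacing $d\N$ by $d\M$ via $\Q(t)\X(t)=\bzero$, and your overall reduction to $\|\widehat\xi_n-\widehat\xi_n^{\lim}\|_{\mathcal H}=o_p(1)$, match the paper exactly. After that your route diverges sharply from the paper's, and it rests on a misconception that drives most of the extra machinery you invoke.

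You flag, as the ``main obstacle,'' that $R_n=\Q-\Qlim$ is non-predictable ``because it depends on the full sample via $\X$.'' That is not correct. Relative to the joint filtration $\mathcal F_t$ generated by $(N_i(s),Y_i(s))_{i\le n, s\le t}$ --- which is the filtration the paper uses and with respect to which each $M_i$ is still a martingale --- the process $\Y(t)=\sum_i Y_i(t)\indv_i$ is left-continuous and adapted, hence predictable. Consequently $\X(t)=\tfrac1n\Diag{\Y}{t}\V$, $\Q(t)$ (a Borel function of $\X(t)$) and $\D(t)=\Q(t)-\Qlim(t)$ are all predictable. There is nothing to linearise, and the decomposition $R_n=R_n^{pred}+R_n^{rem}$ you propose is unnecessary. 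As a result, the BDG/Lenglart step, the Donsker-rate claim $\sup_t\|\X(t)-\Xlim(t)\|_{op}=O_p(n^{-1/2})$, and the matrix-perturbation bound are all avoidable; the last one is in fact delicate, since the perturbation constant for the projector degenerates as $t$ approaches the boundary of the set where $\Xlim(t)$ has full rank, and the tail-splitting argument you sketch to contain this is left unproven.

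The paper instead treats, for each fixed $(i,j)$, the scalar kernel $\kappa(t,s)=\D_{ij}(t)K((t,i),(s,i))\D_{ij}(s)$, observes it is positive definite, and invokes a double-martingale-integral criterion (Theorem 17 of Fern\'andez and Rivera) reducing $\tfrac1n\int\!\int \kappa\, dM_j\,dM_j=o_p(1)$ to the trace condition $\tfrac1n\int\kappa(t,t)\,d\langle M_j\rangle(t)=o_p(1)$. That trace is controlled by Proposition~\ref{prop:ProbBounds} (to replace $Y_j/n$ by $p_j(1-H_j)$) and then by dominated convergence, using only that $\D_{ij}(t)\to 0$ pointwise and $|\D_{ij}(t)|\le 2$ by Proposition~\ref{prop:projection}. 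No uniform rates, no perturbation theory, no linearisation. You should look up that criterion for degenerate double martingale integrals; it is the lever that collapses the whole argument.
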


\begin{proof}[\textbf{\textbf{Proof of \Cref{thm:asymptoticUpsilonRefi}}}]

Combine the \Cref{prop:approxTestStat} and  \Cref{prop:limitApproxTestStat} with Slutsky's theorem.
\end{proof}

We proceed to prove the intermediate results.

\begin{proof}[\textbf{Proof of \Cref{prop:limitApproxTestStat}} ]

Denote by $Z_n$ the random variable 
$$Z_n=\sup_{\omega\in B_1(\mathcal{H})}\left(  \frac{1}{\sqrt{n}}\mathbf{1}^\transpose\int_0^\infty \W(t){\Qlim}(t) d\M(t)\right)^2$$
then, following the same steps of \Cref{prop:closedform1}, we have
\begin{align}
Z_n  &= \frac{1}{n}\int_0^{\infty}\int_0^{\infty} ({\Qlim}(t)d\M(t))^{\transpose} \K(t,s) ({\Qlim}(s)d\M(s)))\nonumber\\
&= \frac{1}{n} \sum_{i=1}^n \sum_{j=1}^n \int_0^{\infty}\int_0^{\infty} ({\Qlim}(t)d\M^j(t))^{\transpose} \K(t,s) ({\Qlim}(s)d\M^i(s)))\nonumber\\
&= \frac{1}{n} \sum_{i=1}^n \sum_{j=1}^n {J}((T_i,\Delta_i,x_i),(T_j,\Delta_j,x_j)),\label{eqn:UpsilonApproxVstats1}
\end{align}
where $J:(\R_+\times\{0,1\}\times[k])^2\to\R$ is defined by
\begin{align*}
  {J}((T_i,\Delta_i,x_i),(T_j,\Delta_j,x_j))=\int_0^{\infty}\int_0^{\infty} ({\Qlim}(t)d\M^j(t))^{\transpose} \K(t,s) ({\Qlim}(s)d\M^i(s))).  
\end{align*}
Notice that $J$ only depends on the observed data points $D_i=(T_i,\Delta_i,X_i)$ and $D_j=(T_j,\Delta_j,X_j)$, and thus it can be deduced that $Z_n$ is a V-statistic with V-statistic kernel $J$. Since $Z_n$ is a V-statistic, we can use the standard theory of V-statistics, in particular Theorem B of \cite[Section 6.4.1]{serfling80}, to obtain the limit distribution of $Z_n$. Notice however that in order to apply the previous result, we need to check the following properties: 
\begin{itemize}
\item[i)]$\E(J(D_i,D_j)|D_j)=0$ for any $i\neq j$,
\item[ii)] $\E(|J(D_i,D_i)|)<\infty$ and,
\item[iii)]$\E(J(D_i,D_j)^2)<\infty$ for any $i\neq j$.
\end{itemize}
We defer the proof of the properties above to the end of this proof. From property i) we deduce that $Z_n$ is a degenerate V-statistic. Thus, by \cite[Section 6.4.1, Theorem B]{serfling80} together with properties ii) and iii), we deduce that
\begin{align*}
Z_n \overset{\mathcal D}{\to}\sum_{i=1}^{\infty} \lambda_i \xi_i^2,
\end{align*}
where $\xi_i$ are independent and identically distributed $\mathcal N(0,1)$ random variables, and $\lambda_i$ are the eigenvalues of the integral operator $T_J:\mathcal L_2(\mu) \to \mathcal L_2(\mu)$,
\begin{align}\label{eqn:defiTJOperator}
T_Jf(\cdot) = \int J(\cdot, u)f(u)\mu(du),
\end{align}
where $\mu$ is the probability measure on $[0,\infty)\times \{0,1\}\times [k]$ associated with the triple $D_i=(T_{i},\Delta_{i},X_{i})$.

We finish our proof by showing properties i), ii) and iii) hold true. We start by checking i). Let $i\neq j$, then
\begin{align*}
\E\left(J(D_i,D_j)|D_i\right)&=\E\left(\left.\int \int ({\Qlim}(t)d\M^j(t))^{\transpose} \K(t,s) ({\Qlim}(s)d\M^i(s)))\right|D_i\right)\\
&=\E\left(\left.\int \int \sum_{\ell,\ell'=1}^k\sum_{m,m'=1}^k{\Qlim}_{\ell m}(t)d\M^j_m(t) \K_{\ell\ell'}(t,s){\Qlim}_{\ell'm'}(s)d\M^i_{m'}(s)\right|D_i\right)\\
&=\E\left(\left.\int \left(\int \sum_{\ell,\ell'=1}^k{\Qlim}_{\ell X_j}(t) \K_{\ell\ell'}(t,s){\Qlim}_{\ell'X_i}(s)dM_i(s))\right)dM_j(t)\right|D_i\right)\\
&=\E\left(\left.\int_0^{T_j} \underbrace{\left(\int_0^{T_i} (\Qlim(t)\K(t,s)\Qlim(s))_{X_jX_i} dM_i(s))\right)}_{g_{ij}(t)}dM_j(t)\right|D_i\right)\\
&=\E\left(\left.\int_0^{T_j} g_{ij}(t)dM_j(t)\right|D_i\right),
\end{align*}
where the second equality follows from matrix multiplication, and the third equality holds since $d\M^i(t)=dM_i(t)\mathbf{e}_i$ where $\indv_{i}=(\ind_{\{X_i=1\}},\ind_{\{X_i=2\}},\ldots,\ind_{\{X_i=k\}})$. Finally, notice that given $D_i$, $g_{i,j}(t)$ is a predictable process, and that under the null hypothesis $M_j(t)$ is a  zero-mean martingale. Thus, by the optional stopping time theorem, it holds $\E\left(\left.\int_0^{T_j} g_{ij}(t)d\M_j(t)\right|D_i\right)=\E\left(\left.\int_0^{0} g_{ij}(t)d\M_j(t)\right|D_i\right)=0$, from which we deduce the desired result.

To check properties ii) and iii), recall that first, $\Qlim$ is a projection matrix, and thus \Cref{prop:projection} tells us that $|\Qlim_{ij}(t)|\leq 1$ for any $t\geq 0$ and $i,j\in[k]$; second, by \Cref{Cond:bounded}, the kernel $K$ is bounded, and third, $M_i(t)$ is a squared-integrable martingale for each $i\in [n]$. By using these observations claims ii) and iii) can be easily checked from straightforward computations.
\end{proof}

\begin{proof}[\textbf{Proof of \Cref{prop:approxTestStat}}]
By \cref{eqn:statSup} and \cref{eqn:ourLogrank}, it holds
\begin{align*}
\Upsilon_n(\CMa) = \sup_{\omega\in B_1(\mathcal{H})}\left( \frac{1}{\sqrt{n}}\mathbf{1}^\transpose\int_0^\infty \W(t)\Q(t) d\N(t)\right)^2.
\end{align*}
We start by claiming that we can replace $d\N(t)$ by $d\M(t)$ in the previous equation. Indeed, from \cref{{eqn:martiMulti}} we have $d\N(t)=d\M(t) +n\X(t)d\B(t)$, and thus
\begin{align*}
\int_0^\infty \W(t)\Q(t) d\N(t)&=\int_0^\infty \W(t)\Q(t)(d\M(t) +n\X(t)d\B(t)).
\end{align*}
Now, note that
\begin{align*}
\Q(t)\X(t)&=\IndRank(t)\left(\I_k-\X(t)(\X(t)^\transpose\X(t))^{-1}\X(t)^\transpose\right)\X(t)=\IndRank(t)\left(\X(t)-\X(t)\right)=0,
\end{align*}
which proves our claim.

We continue by  proving that $\Q(t)$ can be replaced by $\Qlim(t)$ in $\int \W(t)\Q(t) d\M(t)$. Define $\D = \Q-\Qlim$, and write
\begin{align*}
\Upsilon_n(\CMa) = \sup_{\omega\in B_1(\mathcal{H})}\left(  \frac{1}{\sqrt{n}}\mathbf{1}^\transpose\int_0^\infty \W(t)(\D(t)+\Qlim(t)) d\M(t)\right)^2.
\end{align*}
Notice then that, by the triangle inequality, it is enough to show that
\begin{align}\label{eqn:asymcon1}
 \sup_{\omega\in B_1(\mathcal{H})}\left( \frac{1}{\sqrt{n}}\mathbf{1}^\transpose\int_0^\infty \W(t)\D(t) d\M(t)\right)^2 = o_p(1),
\end{align}
and that
\begin{align}\label{eqn:asymcon2}
\sup_{\omega\in B_1(\mathcal{H})}\left(\frac{1}{\sqrt{n}}\mathbf{1}^\transpose\int_0^\infty \W(t)\Qlim(t) d\M(t)\right)^2 = O_p(1).
\end{align}

Note that \cref{eqn:asymcon2} holds immediately as \Cref{prop:limitApproxTestStat} states the above quantity converges in distribution (and hence it is $O_p(1)$). We proceed to prove \cref{eqn:asymcon1}. Denote by $R$ the term in the left-hand-side of \cref{eqn:asymcon1}, and note that vector and matrix multiplication yields
\begin{align*}
R\leq \left(\sum_{i=1}^k \sum_{j=1}^k \sup_{\omega\in B_1(\mathcal{H})}\frac{1}{\sqrt{n}}\int_0^{\infty} w(t,i)\D_{i,j}(t)d\M_j(t)\right)^2.
\end{align*}
Hence, to prove \cref{eqn:asymcon1}, it is enough to show that $\sup_{\omega\in B_1(\mathcal{H})}\int_0^{\infty} w(t,i)\D_{i,j}(t)d\M_j(t)= o_p(1)$ for any $i,j\in [k]$. Consider some fixed $i,j\in[k]$, then observe that
\begin{align*}
\left(\sup_{\omega\in B_1(\mathcal{H})}\frac{1}{\sqrt{n}}\int_0^{\infty} w(t,i)\D_{i,j}(t)d\M_j(t)\right)^2 = \frac{1}{n}\int_0^{\infty}\int_0^{\infty} \D_{i,j}(t)K((t,i), (s,i))  \D_{i,j}(t)d\M_j(t) d\M_j(s),
\end{align*}
follows from the fact that we are taking supremum over the unit ball of a RKHS. To avoid a long expression, we write $\kappa(t,s)$ instead of $\D_{i,j}(t)K((t,i),(s,i))\D_{i,j}(s)$. Then, we proceed to prove that $\int \int \kappa(t,s)d\M_j(t)d\M_j(s) = o_p(1)$. The previous result can be verified using Theorem 17 of \cite{fernandez2020reproducing} which tells us that it is enough to verify that
\begin{align*}
\frac{1}{n}\int_0^{\infty} \kappa(t,t)d \langle \M_j \rangle (t)  = o_p(1),
\end{align*}
when $\kappa$ is positive definite, which it is the case as $K((t,j),(s,j))$ is positive definite for every fixed $j$, and, in particular, $k(t,t)\geq 0$.

Recall that $d\langle \M_j\rangle (t) = \Y_{j}(t) d\bLambda_j(t)$. Then, the following equalities hold
\begin{align}
\frac{1}{n}\int_0^{\infty}  \kappa(t,t)d \langle \M_j \rangle (t) = \frac{1}{n}\int_0^{\infty}  \kappa(t,t) \Y_{j}(t)d \bLambda_j(t)&= O_p\left(\int_0^{\infty}  \kappa(t,t) p_j(1-\Hb_j(t))d \bLambda_j(t) \right)\nonumber\\
&= O_p\left(\int_0^{\infty}  \kappa(t,t) d \F_j(t) \right)\label{eqn:random192jf9},
\end{align}
where the second equality holds since  by  \Cref{prop:ProbBounds}, it holds that $\sup_{t\leq\tau_n}Y_j(t)/(n p_j(1-H_j(t)))=O_p(1)$, where recall that $\tau_n=\max\{T_1,\ldots,T_n\}$. Note that the result holds since the $t^\star=\sup\{t\geq0:\IndRank(t)=1\}\leq \tau_n$, and recall that $\kappa(t,t)=\IndRank(t)\kappa(t,t)$.

To prove the later expression is $o_p(1)$, we use Lebesgue's dominated convergence theorem in sets of large probability. Then, we need to verify that i) $\kappa(t,t)$ tends to $0$ for each $t$ when the number of data points tends to infinity, and that ii) it exists a non-negative deterministic function $g(t)$ such that $\sup_{t\geq 0} \kappa(t,t)/g(t) = O_p(1)$, and that $ \int g(t) d\F_j(t) \leq \infty$. 

To verify the conditions above we use that under \Cref{Cond:bounded} the kernel $K$ is bounded by some constant $C\geq0$. Thus
\begin{align*}
|\kappa(t,s)| = |\D_{i,j}(t)K((t,i), (s,i))  \D_{i,j}(t)|\leq C \D_{i,j}(t)^2 \to 0,
\end{align*}
since $\D_{i,j}(t)\to 0$ for each $t$, and indeed, uniformly on every interval $[0,T]$, with $T\geq0$, proving i). Notice also that $|\D_{ij}(t)|=|\Q_{ij}(t)-\Qlim_{ij}(t)|\leq 2$  by \Cref{prop:projection} since $\Q(t)$ and $\Qlim(t)$ are orthogonal projection matrices. Then, we may choose $g(t)=4C$, and ii) follows.
\end{proof}

\subsubsection{Proof of \Cref{thm:alternative}}
The proof of \Cref{thm:alternative} is a direct consequence of \Cref{Lemma1} and \Cref{Lemma3} stated below. To introduce these lemmas, we first define the measure $\mu$ on $(0,\infty)\times [k]$, equipped with the standard product $\sigma$-algebra, by
\begin{align*}
\mu([s,t)\times A)&:= \sum_{i=1}^k\sum_{j=1}^k \ind_{\{i\in A\}}\int_s^t\Qlim_{ij}(x)p_j(1-\mathbf{H}_j(x))d\bLambda_j(x),
\end{align*}
for any $0\leq s\leq t\leq \tau_H$, and $A\subseteq [k]$. In matrix notation, we can re-write the measure $\mu$ as follows
\begin{align*}
\mu([s,t)\times A) = \bone_A^{\transpose}\int_{s}^t\Qlim(t) \Diag{\S}{t}d\bLambda(x),
\end{align*}
where $\S(t) = (p_1(1-H_1(t)),\ldots, p_k (1-H_k(t)))$, and where $\bone_A$ is a $k$-dimensional vector such that $(\bone_A)_i=\Ind_{\{i\in A\}}$ for any $i\in[k]$.

\begin{lemma}\label{Lemma1} Under \Cref{Cond:bounded}, we have that $n^{-1}\Upsilon_n(\CMa)\overset{\Prob}{\to} c$, where $c\geq0$ is the constant given by
\begin{align}\label{eqn:limitcValue}
  c=\sup_{\omega\in B_1(\mathcal H)}\left(\int_{(0,\tau_H)\times [k]} \omega(t,i)d\mu(t,i)\right)^2.
\end{align}

\end{lemma}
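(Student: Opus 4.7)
The plan is to reformulate $n^{-1}\Upsilon_n(\CMa)$ via the reproducing property as a squared norm in $\mathcal H$, split the integrand with the martingale decomposition $d\N = \Diag{\Y}{t}d\bLambda + d\M$, and show the drift term converges in $\mathcal H$ while the martingale term vanishes. Concretely, setting $\omega(t,i)=\langle \omega, K((t,i),\cdot)\rangle_{\mathcal H}$ as in the proof of \Cref{prop:closedform1} gives
\begin{align*}
n^{-1}\Upsilon_n(\CMa) = \|\widehat \eta_n\|_{\mathcal H}^2, \qquad \widehat \eta_n(\cdot) = \frac{1}{n}\int \sum_{i,j=1}^k K((t,i),\cdot)\Q_{ij}(t)d\N_j(t),
\end{align*}
and from the definition of $\mu$,
\begin{align*}
c = \|\eta\|_{\mathcal H}^2, \qquad \eta(\cdot) = \int K((t,i),\cdot)d\mu(t,i).
\end{align*}
Hence it suffices to prove $\|\widehat \eta_n - \eta\|_{\mathcal H} \overset{\Prob}{\to} 0$, since $|\|\widehat \eta_n\|_{\mathcal H}^2 - \|\eta\|_{\mathcal H}^2| \leq \|\widehat \eta_n - \eta\|_{\mathcal H}(\|\widehat \eta_n\|_{\mathcal H} + \|\eta\|_{\mathcal H})$ and the latter factor is automatically $O_p(1)$ by the same bound.

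Using $d\N = \Diag{\Y}{t}d\bLambda + d\M$, decompose $\widehat \eta_n = \widehat \eta_n^{\mathrm{dr}} + \widehat \eta_n^{\mathrm{mg}}$ with
\begin{align*}
\widehat \eta_n^{\mathrm{dr}}(\cdot) = \int \sum_{i,j} K((t,i),\cdot)\Q_{ij}(t)\frac{Y_j(t)}{n}d\Lambda_j(t), \qquad \widehat \eta_n^{\mathrm{mg}}(\cdot) = \frac{1}{n}\int \sum_{i,j} K((t,i),\cdot)\Q_{ij}(t)d M_j(t).
\end{align*}
For the martingale piece, expand $\|\widehat \eta_n^{\mathrm{mg}}\|_{\mathcal H}^2$ as a double integral against $\K$, use orthogonality of the individual martingales $M_\ell$, the bound $|\Q_{ij}(t)|\leq 1$ from \Cref{prop:projection}, and the boundedness of $K$ under \Cref{Cond:bounded} together with the integrability of $\langle M_\ell\rangle$, to obtain $\E\|\widehat \eta_n^{\mathrm{mg}}\|_{\mathcal H}^2 = O(1/n)$; this is essentially the same computation that already underlies the proofs of \Cref{prop:limitApproxTestStat,prop:approxTestStat}. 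Markov's inequality then yields $\widehat \eta_n^{\mathrm{mg}}\overset{\Prob}{\to} 0$ in $\mathcal H$.

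For the drift piece, $\widehat \eta_n^{\mathrm{dr}}$ is of the form $\int K((t,i),\cdot)d\nu_n(t,i)$ where $\nu_n$ is the random measure whose density with respect to $d\Lambda_j$ is $\Q_{ij}(t)Y_j(t)/n$. The difference $\|\widehat \eta_n^{\mathrm{dr}} - \eta\|_{\mathcal H}^2$ is a double integral of $\K$ against $(d\nu_n - d\mu)^{\otimes 2}$. Since $Y_j(t)/n \to p_j(1-H_j(t))$ uniformly on any compact subinterval of $[0,\tau_H)$ by a Glivenko--Cantelli argument, and since $\Q(t)\to \Qlim(t)$ on the set where the latter projector is well defined (by continuous mapping applied to $\X(t)\to \Xlim(t)$), the integrand converges almost surely. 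Combined with $|\Q_{ij}|\leq 1$, boundedness of $K$, and the tail control $\sup_{t\leq \tau_n}Y_j(t)/(n p_j(1-H_j(t))) = O_p(1)$ from \Cref{prop:ProbBounds}, dominated convergence (applied on events of probability close to one) delivers $\|\widehat \eta_n^{\mathrm{dr}} - \eta\|_{\mathcal H}\overset{\Prob}{\to} 0$.

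The main obstacle is the behaviour near the time $\tau_H$: the limit measure $\mu$ is only guaranteed to live on $(0,\tau_H)$, whereas $\nu_n$ may place mass arbitrarily close to $\tau_H$ where $Y_j(t)/n$ and its limit are both small and cannot be compared uniformly. The standard way around this is to truncate at $\tau_H - \varepsilon$, control the contribution of $(\tau_H-\varepsilon,\tau_H)$ using \Cref{prop:ProbBounds} and the boundedness of $K$, and then let $\varepsilon \downarrow 0$ after sending $n\to\infty$; this is precisely where \Cref{Cond:bounded} and the definition of $\tau_H$ are used.
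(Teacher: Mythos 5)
Your argument is correct, but it follows a genuinely different route from the paper's. The paper first replaces $\Q$ by the deterministic limit $\Qlim$ by writing $\D=\Q-\Qlim$, bounding $\sup_{\omega}\bigl(n^{-1}\bone^\transpose\int\W\D\,d\N\bigr)^2$ via a compact-plus-tail splitting (uniform convergence of $\D$ on $[0,T]$, law of large numbers for $n^{-1}\int_T^\infty d\N_j$), and then recognizing $\sup_{\omega}\bigl(n^{-1}\bone^\transpose\int\W\Qlim\,d\N\bigr)^2$ as a \emph{bounded (non-degenerate) $V$-statistic} whose strong law of large numbers gives the a.s.\ limit $c$ directly. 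You instead apply the Doob--Meyer decomposition $d\N=\Diag{\Y}{t}d\bLambda+d\M$ to $\widehat\eta_n$, kill the martingale part via a second-moment bound of order $1/n$ (using predictability of $\Q$, pairwise orthogonality of the $M_\ell$, $|\Q_{ij}|\le 1$ from \Cref{prop:projection}, and boundedness of $K$), and show the drift converges to $\eta$ through Glivenko--Cantelli for $Y_j/n$, continuity of the projector map so that $\Q\to\Qlim$ on compacts, and dominated convergence with truncation at $\tau_H-\varepsilon$. Both approaches hit the same boundary subtlety near $\tau_H$ and both resolve it the same way: truncate, control the tail via \Cref{prop:ProbBounds}, then send $\varepsilon\downarrow 0$ after $n\to\infty$. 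Your martingale route is somewhat more self-contained and makes the mechanism (drift dominates, noise is $O_p(n^{-1/2})$) transparent; the paper's $V$-statistic route reuses the machinery already set up for \Cref{thm:asymptoticUpsilon,thm:wildAymp} and thus keeps the proofs uniform across results. One small thing worth making explicit in your write-up: the vanishing of the cross terms in $\E\|\widehat\eta_n^{\mathrm{mg}}\|_{\mathcal H}^2$ relies on $\Q(t)$ being predictable (it is, being a measurable function of the left-continuous adapted process $\Y(t)$ and the fixed covariates), since $\Q$ depends on all $n$ observations and the $A_\ell=\int\sum_i K((t,i),\cdot)\Q_{i,X_\ell}(t)\,dM_\ell(t)$ are therefore not independent across $\ell$.
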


\begin{lemma}\label{Lemma3}
 If a visible alternative $H_1'$ holds, then $\mu$ is not be the zero measure.
\end{lemma}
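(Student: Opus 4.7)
My plan is to argue the contrapositive: assuming $\mu$ is the zero measure on $(0,\tau_H)\times[k]$, I will derive $\CMa\bLambda(t)=\bzero$ for every $t\in[0,\tau_H)$, which directly contradicts the visible alternative $H_1'$.

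First, I would rewrite the hypothesis $\mu\equiv 0$ in vector form. Taking $A=\{i\}$, $i\in[k]$, in the matrix representation $\mu([s,t)\times A)=\bone_A^{\transpose}\int_s^t\Qlim(x)\Diag{\S}{x}d\bLambda(x)$ shows that $\mu\equiv 0$ on $(0,\tau_H)\times[k]$ is equivalent to the $\R^k$-valued signed measure $\boldsymbol{\alpha}(dx):=\Qlim(x)\Diag{\S}{x}d\bLambda(x)$ vanishing on every Borel subset of $(0,\tau_H)$.

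Next I would exploit the projection structure of $\Qlim$. For $x\in(0,\tau_H)$ the matrix $\Xlim(x)=\Diag{\S}{x}\V$ has full column rank $d$ by the definition of $\tau_H$, so $\Qlim(x)$ is the orthogonal projector onto $\mathrm{range}(\Xlim(x))^{\perp}$; equivalently, $\Qlim(x)\mathbf{v}=\bzero$ iff $\mathbf{v}\in \Diag{\S}{x}\cdot\mathrm{range}(\V)$. Fix $t^*\in(0,\tau_H)$. Choosing a $\sigma$-finite measure $\nu$ on $[0,t^*]$ dominating $\Lambda_1,\ldots,\Lambda_k$ (e.g.\ $\nu=\sum_j\Lambda_j$) and writing $\boldsymbol{\phi}=d\bLambda/d\nu$, the vanishing of $\boldsymbol{\alpha}$ produces a measurable map $\mathbf{a}:[0,t^*]\to\R^d$ with $\Diag{\S}{x}\bigl(\boldsymbol{\phi}(x)-\V\mathbf{a}(x)\bigr)=\bzero$ for $\nu$-a.e.\ $x\in[0,t^*]$.

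In the generic case $\S_j(x)>0$ for every $j$ and $x\in(0,\tau_H)$, $\Diag{\S}{x}$ is invertible, so $\boldsymbol{\phi}(x)=\V\mathbf{a}(x)$ $\nu$-a.e. Because the columns of $\V$ span $N(\CMa)$, we have $\CMa\V=0$, so $\CMa\boldsymbol{\phi}(x)=\bzero$ $\nu$-a.e., and integrating (using $\bLambda(0)=\bzero$) yields $\CMa\bLambda(t^*)=\int_0^{t^*}\CMa\boldsymbol{\phi}(x)d\nu(x)=\bzero$, contradicting $\CMa\bLambda(t^*)\neq\bzero$ from $H_1'$.

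The main obstacle I anticipate is the degenerate subcase where some $\S_{j_0}(x)$ vanishes on a sub-interval of $(0,\tau_H)$ while $\Xlim(x)$ still retains rank $d$ (which can occur if the surviving rows of $\V$ already span a $d$-dimensional subspace). There the cancellation above leaves $\phi_{j_0}(x)$ unconstrained on that sub-interval. Closing this gap will likely rely on the identity $\S_j(x)\,d\Lambda_j(x)=p_jG_j(x)\,dF_j(x)$, where $G_j$ is the censoring survival of group $j$, showing that contributions to $\mu$ from $d\Lambda_{j_0}$ on $\{\S_{j_0}=0\}$ are already null and therefore any hidden drift of $\Lambda_{j_0}$ on that region cannot be invoked by $H_1'$ as a ``visible'' departure from the null.
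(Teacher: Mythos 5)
Your contrapositive argument is essentially the same as the paper's proof: the paper writes $d\bLambda = \V\,d\B + \V^\perp d\B^\perp$, notes $\mu\equiv 0$ forces $\Qlim(t)\Diag{\S}{t}\V^\perp d\B^\perp(t)=0$, hence $\Diag{\S}{t}\V^\perp d\B^\perp(t)\in\operatorname{range}(\Xlim(t))$, and then cancels $\Diag{\S}{t}$ to conclude $\V u=\V^\perp d\B^\perp(t)$, a contradiction with the orthogonality of $\V$ and $\V^\perp$. Your density-with-dominating-measure bookkeeping replaces the explicit $(\V,\V^\perp)$ decomposition but gives the identical chain of reasoning.

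The ``main obstacle'' you raise (some $\S_{j_0}(x)$ vanishing on a subinterval of $(0,\tau_H)$ while $\Xlim(x)$ retains rank $d$) is not a gap peculiar to your version: the paper's own cancellation step from $\Diag{\S}{t}\V u=\Diag{\S}{t}\V^\perp d\B^\perp(t)$ to $\V u=\V^\perp d\B^\perp(t)$ uses injectivity of $\Diag{\S}{t}$ in exactly the same way, so the paper implicitly assumes $\S_j(t)>0$ for all $j$ and all $t<\tau_H$. Your proposed repair via $\S_j(x)\,d\Lambda_j(x)=p_jG_j(x)\,dF_j(x)$, however, would not close the gap as stated: it only shows that a drift of $\Lambda_{j_0}$ on $\{\S_{j_0}=0\}$ contributes nothing to $\mu$, which is precisely the problem, since $H_1'$ as written requires only $\CMa\bLambda(t^*)\neq\bzero$ for some $t^*<\tau_H$ and does not formally exclude such invisible drift. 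The correct resolution is to strengthen either the definition of $\tau_H$ (or of visibility) to guarantee $\Diag{\S}{t}$ is invertible on $[0,\tau_H)$ — which is what the paper is tacitly doing — rather than to patch the degenerate subcase inside this lemma. With that understanding, your proof of the generic case is complete and matches the paper's.
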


\begin{proof}[\textbf{Proof of \Cref{thm:alternative}}]
By \Cref{Lemma1},  under \Cref{Cond:bounded},  we have that
\begin{align*}
n^{-1}\Upsilon(\CMa)&\overset{\Prob}{\to}
\sup_{\omega\in B_1(\mathcal H)}\left(\int_{(0,\tau_H)\times [k]} \omega(t,i)d\mu(t,i)\right)^2,
\end{align*}
where note that
\begin{align*}
   \int_{(0,\tau_H)\times [k]} \omega(t,i)d\mu(t,i)=\left\langle\omega, \int_{(0,\tau_H)\times [k]} K((t,i),\cdot)d\mu(t,i)\right\rangle_{\mathcal{H}},
\end{align*}
where $\int_{(0,\tau_H)\times [k]} K((t,i),\cdot)d\mu(t,i)$ is the embedding of the signed measure $\mu$ into the RKHS $\mathcal{H}$. Since we are taking supremum over the unit ball, we have
\begin{align*}
n^{-1}\Upsilon(\CMa)&\overset{\Prob}{\to} \left\|\int_{(0,\tau_H)\times [k]} K((t,i),\cdot)d\mu(t,i)\right\|^2_{\mathcal{H}}.
\end{align*}

By \Cref{Lemma3}, since under the assumptions of \Cref{thm:alternative} an alternative hypothesis $H_1'$ holds, we deduce that $\mu\neq0$.  Then, since $K$ is $c_0$-universal kernel (\Cref{Cond:bounded}), the embedding of signed measures is injective and thus $\int_{(0,\tau_H)\times [k]} K((t,i),\cdot)d\mu(t,i)\neq0$, from which we deduce that $n^{-1}\Upsilon(\CMa)\overset{\Prob}{\to} c>0$, which implies the main result, i.e., $\Upsilon(\CMa)\overset{\Prob}{\to} \infty$.
\end{proof}

We proceed to prove \Cref{Lemma1} and \Cref{Lemma3}.
\begin{proof}[\textbf{Proof of \Cref{Lemma1}}]
From \cref{eqn:ourLogrank} and \eqref{eqn:statSup} we have
\begin{align*}
n^{-1}\Upsilon_n(\CMa)&=\sup_{\omega\in B_1(\mathcal H)}\left(\frac{1}{n}\bone^{\transpose}\int_0^{\infty} \W(t)\Q(t)d\N(t)\right)^2.
\end{align*}
We first prove that we can replace $\Q(t)$ by its limit $\Qlim(t)$. Let $\D(t)=\Q(t)-\Qlim(t)$, then the result follows directly from proving that
\begin{align}
\sup_{\omega\in B_1(\mathcal H)}\left(\frac{1}{n}\bone^{\transpose}\int_0^{\infty} \W(t)\D(t)d\N(t)\right)^2=o_p(1)\label{eqn:AlteD},
\end{align}
and that
\begin{align}
\sup_{\omega\in B_1(\mathcal H)}\left(\frac{1}{n}\bone^{\transpose}\int_0^{\infty} \W(t)\Qlim(t)d\N(t)\right)^2\to c\quad\label{eqn:OpQ} a.e.
\end{align}
for some constant $c\geq 0$.
We start by proving \cref{eqn:AlteD}. Denote by $R$ the left-hand-side of \cref{eqn:AlteD}.  Then, vector and matrix multiplication yields
\begin{align}
R&:=\left(\sum_{i=1}^k \sum_{j=1}^k \sup_{\omega \in B_1(\mathcal H)} \frac{1}{n} \int_0^{\infty} \omega(t,i)\D_{i,j}(t)d\N_j(t)\right)^2\leq \left(\sum_{i=1}^k \sum_{j=1}^k \frac{1}{n} \int_0^{\infty} C|\D_{i,j}(t)|d\N_j(t)\right)^2,\label{eqn:R}
\end{align}
where the last inequality holds for some constant $C>0$ by \Cref{Cond:bounded}, as 
\begin{align*}
\sup_{\omega \in B_1(\mathcal H)}\omega(t,i)=\sup_{\omega \in B_1(\mathcal H)}\langle\omega(\cdot),K(\cdot,(t,i))\rangle_{\mathcal{H}}\leq \sup_{\omega \in B_1(\mathcal H)}\|\omega\|\|K(\cdot,(t,i))\|\leq C.  
\end{align*}
Let $T>0$, then notice that the integral in the right-hand-side term of \cref{eqn:R} can be decomposed into two parts leading to
\begin{align*}
    R
    &\leq\left(\sum_{i=1}^k \sum_{j=1}^kC\left(\frac{1}{n}\int_0^{T} |\D_{i,j}(t)|d\N_j(t)+\frac{1}{n}\int_T^{\infty}|\D_{i,j}(t)|d\N_j(t)\right)\right)^2\\
    &\leq \left(\sum_{i=1}^k\sum_{j=1}^kC\left(\frac{1}{n}\int_0^{T} \epsilon d\N_j(t)+\frac{1}{n}\int_T^{\infty}2d\N_j(t)\right)\right)^2\leq \left(Ck^2\epsilon+2Ck\sum_{j=1}^k\frac{1}{n}\int_T^{\infty}d\N_j(t)\right)^2
\end{align*}
where the second inequality holds since for any $i,j\in[k]$ i) $\D_{i,j}(t)\to 0$ uniformly on every finite interval $[0,T]$, and ii) $|\D_{ij}(t)|=|\Q_{ij}(t)-\Qlim_{ij}(t)|\leq 2$  by \Cref{prop:projection} since $\Q(t)$ and $\Qlim(t)$ are orthogonal projection matrices.

By the law of large numbers, for any $j\in[k]$,
\begin{align*}
    \frac{1}{n}\int_T^{\infty}d\N_j(t)\overset{a.s.}{\to}p_j\int_T^\infty (1-G_j(t))dF_j(t),
\end{align*}
where the previous limit is upper bounded by $1-F_j(T)$. Then, by choosing $T>0$ and $n$ large enough,  we get that almost surely
\begin{align*}
    R&\leq \left(Ck^2\epsilon+2Ck\sum_{j=1}^k\frac{1}{n}\int_T^{\infty}d\N_j(t)\right)^2\leq \left(Ck^2\epsilon+2Ck^2\epsilon\right)^2=9C^2k^4\epsilon^2,
\end{align*}
from which we conclude that $R\overset{a.s.}{\to}0$ since $\epsilon>0$ can be chosen arbitrarily small.

We continue by proving \cref{eqn:OpQ}. Note that $\frac{1}{n}\bone^{\transpose}\int_0^{\infty} \W(t) \Qlim(t)d\N(t)$ is no more than the sum of independent and identically distributed random variables. Now, the supremum can be represented as a $V$-statistic (the analogue of \Cref{prop:closedform1}, with $\Qlim$ instead of $ \Q$), with bounded $V$-statistic kernel. Hence the left-hand side of \cref{eqn:OpQ} converges, and indeed, the limit $c$ is given by
\begin{align}
c=\sup_{\omega\in B_1(\mathcal H)}\left(\sum_{i=1}^k\sum_{j=1}^k\int_0^{\tau_H} \omega(t,i) \Qlim_{ij}(t)p_j(1-\mathbf{H}_j(t))d\bLambda_j(t)\right)^2,\label{eqn:ceq}
\end{align}
where $p_j$ is the probability that a data point has group $j$ and recall that $\tau_H=\sup\{t\geq 0:\Xlim(t)=\Diag{\boldsymbol\eta}{t}\V\text{ has full rank}\}$.

By repeating the argument of \Cref{prop:closedform1} (in reverse) we get \cref{eqn:limitcValue}.
\end{proof}

\begin{proof}[\textbf{Proof of \Cref{Lemma3}}]
By definition, if a visible alternative holds, then there exists a $t^\star\leq \tau_H$ such that 
\begin{align}
\CMa\bLambda(t^\star)\neq 0.\label{eqn:Valte}    
\end{align}
 We will prove that if $\mu$ is not the zero measure, then $d\bLambda(t)=\V d\B(t)$ for all $t\leq \tau_H$, which contradicts \cref{eqn:Valte} as 
 \begin{align*}
     \CMa\bLambda(t^\star)=\int_0^{t^\star}\CMa\V d\B(t)=0,
 \end{align*}
 since $\V$ contains in its columns a basis of the null space of $\CMa$. Thus, $\mu$ cannot be the zero measure.

Suppose that $\mu=0$, let's verify that $d\bLambda(t)=\V d\B(t)$ for all $t\leq \tau_H$. To prove this,  write $\bLambda(t)$ as
\begin{align*}
d\bLambda(t) = \V d\B(t)+\V^{\perp} d\B^{\perp}(t), \quad t\geq 0
\end{align*}
where $\V^\perp$ is a base of the orthogonal complement of the columns of $\V$, and $\B^{\perp}(t)$ is a real vector-valued function. We shall prove now that $\V^{\perp} d\B^{\perp}(t)=0$ for all $t\leq\tau_H$.  

Observe by definition of the measure $\mu$ that $\mu=0$ implies that
\begin{align*}
 \Qlim(t)\Diag{\S}{t}d\bLambda(t)=0,\quad\text{for all }t\leq\tau_H,
\end{align*}
and by replacing $d\bLambda(t)$ above, we obtain that for all $t\leq \tau_H$
\begin{align}
\Qlim(t)\Diag{\S}{t}\left(\V d\B(t)+\V^{\perp} d\B^{\perp}(t)\right)&=0\nonumber\\
\Longrightarrow\Qlim(t)\Diag{\S}{t}\V^{\perp} d\B^{\perp}(t)&=0\label{eqn:projector},
\end{align}
where the second line follows since $\Qlim(t)$ is the orthogonal projector onto the null space of $\Xlim(t)$, and thus $\Qlim(t)\Diag{\S}{t}\V=\Qlim(t)\Xlim(t)=0$. By \cref{eqn:projector} we deduce that $\Diag{\S}{t}\V^{\perp}d\B^{\perp}(t)$ belongs to the range of $\Xlim(t)$, that is, for every $t\leq \tau_H$ it exists for some $u\in\R^d$ such that
\begin{align*}
    \Xlim(t)u&=\Diag{\S}{t}\V^{\perp} d\B^{\perp}(t)
\end{align*}
and by the definition of $\Xlim(t)$, we get $$ \Diag{\S}{t}\V u=\Diag{\S}{t}\V^{\perp} d\B^{\perp}(t)$$
implying that $\V u=\V^{\perp} d\B^{\perp}(t)$, which is a contradiction since $\V^\perp$ is a base of the orthogonal complement of the columns of $\V$.
\end{proof}

\subsection{Proofs of \Cref{sec:wild}}\label{app:proofMultiple}
In this section we prove  \Cref{thm:wildAymp}, the proof of \Cref{thm:Algo1Works} is omited as it is a special case of  \Cref{thm:multiMatrixAlgorithmWorks} which will be proven later.

Our proof of \Cref{thm:wildAymp} is a direct consequence of the following two results.

\begin{lemma}\label{thm:Xi}
Consider 
$$\Xi_n = \sup_{\omega\in B_1(\mathcal{H})}\left(\frac{1}{\sqrt{n}}\mathbf{1}^\transpose\int_0^\infty \W(t){\Qlim}(t) d\widetilde{\N}(t)\right)^2.$$
Then, under the null hypothesis and \Cref{Cond:bounded}, we have that for any $t\geq 0$
\begin{align}
    \Prob(\Xi_n\leq t|D_1,\ldots, D_n) \to \Prob(\Upsilon(\CMa)\leq t)
\end{align}
for almost every sequence of data points $D_1,D_2,\ldots$, where recall $\Upsilon(\CMa)$ is the limit of $\Upsilon_n(\CMa)$ under the null hypothesis.
\end{lemma}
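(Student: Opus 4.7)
The plan is to mimic the RKHS computation of \Cref{prop:closedform1} in order to rewrite $\Xi_n$ as the squared norm of a single $\mathcal H$-valued random element, then apply a conditional Hilbert-space CLT to obtain Gaussian convergence in $\mathcal H$, and finally identify the limiting covariance operator with the one that governs $\Upsilon(\CMa)$ in \Cref{thm:asymptoticUpsilonRefi}.

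Since $d\widetilde\N(t) = \sum_\ell W_\ell\,d\N^\ell(t)$ and $\N^\ell$ places a unit jump at $(T_\ell,X_\ell)$ precisely when $\Delta_\ell=1$, the same manipulation used in \Cref{prop:closedform1} gives $\Xi_n = \|\widetilde\xi_n\|_{\mathcal H}^2$ with
\[
\widetilde\xi_n(\cdot) = \frac{1}{\sqrt{n}}\sum_{\ell=1}^{n} W_\ell\,h_\ell(\cdot),\qquad h_\ell(\cdot) := \Delta_\ell\sum_{i=1}^{k} K\!\bigl((T_\ell,i),\cdot\bigr)\,\Qlim_{i,X_\ell}(T_\ell).
\]
Conditional on $\mathcal D_n=(D_1,\ldots,D_n)$, the summands are iid centered elements of $\mathcal H$; by \Cref{Cond:bounded} and \Cref{prop:projection}, $\sup_{\ell}\|h_\ell\|_{\mathcal H}$ is uniformly bounded, and the weights are bounded (Rademacher) or have the required moments. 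I would then invoke a triangular-array Hilbert-space CLT (Lindeberg form), which requires the conditional covariance operator $\widehat{\mathcal C}_n := n^{-1}\sum_{\ell}h_\ell\otimes h_\ell$ to converge almost surely in Hilbert-Schmidt norm. A Hilbert-Schmidt LLN (the $h_\ell\otimes h_\ell$ are iid and uniformly bounded) yields $\widehat{\mathcal C}_n \to \mathcal C^* := \E[h_1\otimes h_1]$ a.s., so conditional on $\mathcal D_n$ we obtain $\widetilde\xi_n\overset{\mathcal D}{\to}\xi^*\sim\mathcal N(0,\mathcal C^*)$ for almost every data sequence.

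The hard part, which I expect to be the main obstacle, is to identify $\mathcal C^*$ with the covariance operator $\mathcal C_M$ associated to the martingale version $h_1^M(\cdot) = \int\sum_i K((t,i),\cdot)\,\Qlim_{i,X_1}(t)\,dM_1(t)$; this is the operator for which $\Upsilon(\CMa) = \|\chi\|_{\mathcal H}^2$ with $\chi\sim\mathcal N(0,\mathcal C_M)$, and whose nonzero eigenvalues coincide with those of the integral operator $T_J$ from \Cref{prop:limitApproxTestStat}. The two kernels would be shown to agree by combining, on the martingale side, the Ito isometry for counting-process martingales
\[
\E\!\left[\int f(t)\,dM_1(t)\int g(t)\,dM_1(t)\right] = \E\!\left[\int f(t)g(t)\,Y_1(t)\,d\Lambda_{X_1}(t)\right],
\]
with, on the bootstrap side, the identity $\Prob(\Delta_1=1,\,T_1\in dt\mid X_1=\ell) = (1-H_\ell(t-))\,d\Lambda_\ell(t)$, which follows from conditional independence of $Z_1$ and $C_1$ given $X_1$. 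Both computations then reduce the kernels of $\mathcal C^*$ and $\mathcal C_M$ to
\[
\sum_{\ell=1}^{k} p_\ell\int\sum_{i,j} K\!\bigl((t,i),x\bigr)K\!\bigl((t,j),y\bigr)\Qlim_{i,\ell}(t)\Qlim_{j,\ell}(t)\,(1-H_\ell(t-))\,d\Lambda_\ell(t),
\]
so $\mathcal C^*=\mathcal C_M$ and the two operators share their spectrum $\{\lambda_i\}_{i\geq 1}$.

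Finally, applying the continuous mapping theorem to $\eta\mapsto\|\eta\|_{\mathcal H}^2$ gives, conditional on $\mathcal D_n$,
\[
\Xi_n = \|\widetilde\xi_n\|_{\mathcal H}^2 \overset{\mathcal D}{\longrightarrow} \|\xi^*\|_{\mathcal H}^2 \overset{d}{=} \sum_{i\ge 1}\lambda_i\xi_i^2 = \Upsilon(\CMa),
\]
for almost every data sequence. Since $\Upsilon(\CMa)$ is a nondegenerate mixture of $\chi_1^2$ variables (at least one $\lambda_i>0$ under \Cref{Cond:bounded}) and hence has a continuous cdf, this weak convergence of conditional distributions translates into the stated pointwise convergence of conditional cdfs at every $t\ge 0$.
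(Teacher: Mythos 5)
Your proposal is essentially correct but takes a genuinely different route from the paper. The paper rewrites $\Xi_n$ as a Wild-Bootstrap weighted $V$-statistic with kernel $J'(D_i,D_j)$ and invokes Dehling's theorem on random quadratic forms to show that, conditionally on the data, it converges to $\sum_i\lambda_i'\xi_i^2$ with $\lambda_i'$ the eigenvalues of $T_{J'}$; the heavy lifting is then to prove that $T_{J'}$ and the martingale-side operator $T_J$ share the same nonzero spectrum, which the paper does via the advanced/backward (censoring unbiased transformation) operators $A,B$ from \Cref{prop:fowardbackward}, showing $T_J=BT_{J'}A$ and constructing a spectral bijection. You instead work entirely in $\mathcal H$: writing $\Xi_n = \|\widetilde\xi_n\|_{\mathcal H}^2$ with $\widetilde\xi_n = n^{-1/2}\sum_\ell W_\ell h_\ell$, applying a conditional Hilbert-space CLT, and identifying the conditional Gaussian limit's covariance operator $\mathcal C^* = \E[h_1\otimes h_1]$ with the martingale covariance $\mathcal C_M$ by a direct calculation of both kernels — on one side the It\^o isometry $\E[\int f\,dM_1\int g\,dM_1]=\E[\int fg\,Y_1\,d\Lambda_{X_1}]$, on the other the observed-data identity $\Prob(\Delta_1=1,T_1\in dt\mid X_1=\ell)=(1-H_\ell(t-))\,d\Lambda_\ell(t)$, both of which reduce to the same expression. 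That calculation is correct and is, in effect, a hands-on substitute for the $A,B$ machinery: for the purpose of matching covariances it buys a much shorter argument and avoids \Cref{prop:fowardbackward} entirely. The residual gap in your sketch is the identification $\|\xi^*\|_{\mathcal H}^2 \overset{d}{=}\sum_i\lambda_i\xi_i^2$ where the $\lambda_i$ are the eigenvalues of the integral operator $T_J$ on $L^2(\mu)$ (which is how the paper characterizes $\Upsilon(\CMa)$ in \Cref{thm:asymptoticUpsilonRefi}): this requires the standard but nontrivial duality between the covariance operator $\mathcal C_M$ on $\mathcal H$ and the integral operator $T_J$ on $L^2(\mu)$ built from the Mercer-type kernel $J(u,v)=\langle h^M(u),h^M(v)\rangle_{\mathcal H}$, a fact you assert without proof or reference; you would also need to cite a concrete conditional triangular-array CLT in Hilbert space (the boundedness of $\|h_\ell\|_{\mathcal H}\leq Ck$ under \Cref{Cond:bounded} and \Cref{prop:projection} makes the Lindeberg condition trivial, and a Hilbert--Schmidt LLN for $\widehat{\mathcal C}_n$ handles the covariance convergence, but the reference should be given). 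With those two standard gaps filled, the argument goes through and arguably yields a cleaner proof than the paper's.
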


\begin{lemma}\label{prop:WBapprox}Under the null hypothesis and \Cref{Cond:bounded}, it holds
\begin{align*}
\widetilde{\bUpsilon}_n(\CMa) = \sup_{\omega\in B_1(\mathcal{H})}\left( \frac{1}{\sqrt{n}}\mathbf{1}^\transpose\int_0^\infty \W(t){\Qlim}(t) d\widetilde{\N}(t)\right)^2+o_p(1),
\end{align*}
where $\Qlim(t)$ is the limit of ${\Q}(t)$ when $n$ tends to infinity.
\end{lemma}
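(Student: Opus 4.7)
The plan is to mirror the structure of the proof of \Cref{prop:approxTestStat}, replacing the martingale increments $d\M$ by the bootstrap increments $d\widetilde{\N}$, and replacing the martingale moment identities by conditional first/second moment identities in the Rademacher weights $W_1,\ldots,W_n$. Write $\Q(t)=\Qlim(t)+\D(t)$ with $\D(t):=\Q(t)-\Qlim(t)$. In analogy with the derivation of \Cref{prop:closedform1}, both the Wild Bootstrap statistic and the target quantity can be realised as squared RKHS norms of suitable elements of $\mathcal H$ obtained through the reproducing property, so by the triangle inequality in $\mathcal H$ and the fact that $\Xi_n=O_p(1)$ (which follows from \Cref{thm:Xi}), the goal reduces to proving
\[
R_n:=\sup_{\omega\in B_1(\mathcal H)}\left(\tfrac{1}{\sqrt{n}}\bone^\transpose\int\W(t)\D(t)\,d\widetilde{\N}(t)\right)^2=o_p(1).
\]

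Expanding $\bone^\transpose\W(t)\D(t)\,d\widetilde{\N}(t)$ over the $k^2$ matrix entries and using the triangle inequality, it then suffices to show, for each fixed $(i,j)\in[k]^2$, that
\[
R_n^{(i,j)}:=\sup_\omega\left(\tfrac{1}{\sqrt n}\int \omega(t,i)\D_{ij}(t)\,d\widetilde{\N}_j(t)\right)^2=o_p(1).
\]
By the reproducing property this supremum has the closed form $\tfrac{1}{n}\int\!\!\int \kappa(t,s)\,d\widetilde{\N}_j(t)\,d\widetilde{\N}_j(s)$ with $\kappa(t,s):=\D_{ij}(t)K((t,i),(s,i))\D_{ij}(s)$, and substituting $d\widetilde{\N}_j=\sum_{\ell:X_\ell=j}W_\ell\,dN_\ell$ yields the double sum $\tfrac{1}{n}\sum_{\ell,m}W_\ell W_m\Delta_\ell\Delta_m\kappa(T_\ell,T_m)$. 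Conditioning on the data $\mathcal D$ and using $\E(W_\ell W_m)=\delta_{\ell m}$, the conditional mean collapses to the diagonal integral $\E(R_n^{(i,j)}\mid\mathcal D)=\tfrac{1}{n}\int\kappa(t,t)\,dN_j(t)$; applying Markov's inequality conditionally and then integrating in $\mathcal D$ further reduces the task to showing this last quantity is $o_p(1)$.

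For the diagonal integral I would use the Doob--Meyer decomposition $dN_j=dM_j+Y_j\,d\Lambda_j$. The compensator piece is precisely the quantity bounded at the end of the proof of \Cref{prop:approxTestStat}: combining the uniform ratio control $\sup_{t\le\tau_n}Y_j(t)/(np_j(1-H_j(t)))=O_p(1)$ from \Cref{prop:ProbBounds}, the entrywise bound $|\D_{ij}|\le 2$ from \Cref{prop:projection} (both $\Q$ and $\Qlim$ are orthogonal projections), the boundedness of $K$ under \Cref{Cond:bounded}, and the pointwise vanishing of $\D_{ij}(t)$ for $t\in[0,\tau_H)$, Lebesgue's dominated convergence gives $o_p(1)$. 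Since the $Y_i$'s are left-continuous, so are $\X,\Q,\Qlim$ and hence $\D$, which makes the integrand predictable; consequently the martingale piece $\tfrac{1}{n}\int\kappa(t,t)\,dM_j(t)$ has mean zero and predictable variation $\tfrac{1}{n^2}\int\kappa(t,t)^2 Y_j(t)\,d\Lambda_j(t)=O_p(n^{-1})$ by the same bounds, so Chebyshev's inequality delivers $o_p(1)$ as well. The main obstacle I anticipate is the bookkeeping around the random, $n$-dependent function $\D$: one must simultaneously exploit its uniform boundedness (to dominate the integrand) and its pointwise convergence to zero on $[0,\tau_H)$ (to drive the limit), while the $\IndRank$ factor embedded in $\Q$ is what keeps the integrand controlled on the portion of the real line where $\Xlim$ has dropped rank.
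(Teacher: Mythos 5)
Your proof is correct and follows essentially the same skeleton as the paper's: split $\Q=\Qlim+\D$, use the triangle inequality in $\mathcal H$ together with $\Xi_n=O_p(1)$ (from \Cref{thm:Xi}; note the paper itself mis-cites this step as \Cref{prop:WBapprox}), expand coordinate-wise, and kill the $\D$-term by exploiting $\E(W_\ell W_m)=\delta_{\ell m}$, $|\D_{ij}|\le 2$ (\Cref{prop:projection}), boundedness of $K$, and pointwise vanishing of $\D$ via dominated convergence.

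The one genuine difference is in how the remainder $R$ is shown to be negligible. The paper takes the \emph{unconditional} expectation $\E(R)$, which directly collapses to $\tfrac{1}{n}\sum_i\E\bigl(\Delta_i\sum_{\ell,\ell'}\D_{\ell X_i}(T_i)\K_{\ell\ell'}(T_i,T_i)\D_{\ell'X_i}(T_i)\bigr)\le\tfrac{C}{n}\sum_i\E\bigl((\sum_\ell|\D_{\ell X_i}(T_i)|)^2\bigr)\to 0$; since $R\ge 0$ and $\E(R)\to 0$, Markov yields $R=o_p(1)$ in one line. You instead condition on the data, obtain $\E(R_n^{(i,j)}\mid\mathcal D)=\tfrac{1}{n}\int\kappa(t,t)\,dN_j(t)$, and then spend a Doob--Meyer decomposition plus a separate predictable-variation/Chebyshev argument on the martingale part. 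That detour is valid but unnecessary: had you taken the full expectation, the martingale piece $\tfrac{1}{n}\int\kappa(t,t)\,dM_j(t)$ would have vanished for free (mean zero), sparing you the predictability discussion for $\D$ and $\IndRank$ (which, while true by left-continuity of $\Y$, is precisely the kind of bookkeeping the paper's route avoids). So: same idea, same ingredients, but your version carries a gratuitous extra layer of machinery.
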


\begin{proof}[\textbf{Proof of \Cref{thm:wildAymp}}]
The proof of \Cref{thm:wildAymp} follows by noting that  $\widetilde\Upsilon_n=\Xi_n+o_p(1)$  by \cref{prop:WBapprox}. Then combining \Cref{thm:Xi} with Slutsky's theorem yields the result.
\end{proof}

We proceed to prove \Cref{prop:WBapprox,thm:Xi}.
\begin{proof}[\textbf{Proof of \Cref{prop:WBapprox}}]
Let $\D(t)=\Q(t)-\Qlim(t)$, and notice that by the triangle inequality, we just need to verify that
\begin{align}
\sup_{\omega\in B_1(\mathcal{H})}\left( \frac{1}{\sqrt{n}}\mathbf{1}^\transpose\int_0^\infty \W(t){\D}(t) d\widetilde{\N}(t)\right)^2 = o_p(1),\label{eqn:WE1}
\end{align}
and 
\begin{align}
\sup_{\omega\in B_1(\mathcal{H})}\left( \frac{1}{\sqrt{n}}\mathbf{1}^\transpose\int_0^\infty \W(t){\Qlim}(t) d\widetilde{\N}(t)\right)^2 = O_p(1).\label{eqn:WE2}
\end{align}

We first notice that \cref{eqn:WE2} is a consequence of \Cref{prop:WBapprox}, so we will just prove \cref{eqn:WE1}. Denote by $R$ the left-hand-side of \cref{eqn:WE1}, and note that since $R$ is positive, it suffices to prove that its expectation is $o(1)$. 

By using an analogue of \cref{eqn:closeformwild1}, we have that
\begin{align*}
     \E\left(R\right)&=\E\left(\frac{1}{n}\int_0^\infty \int_0^\infty ({\D}(t)d\widetilde{\N}(t))^{\transpose} \K(t,s) ({\D}(s)d\widetilde{\N}(s))\right).
\end{align*}
Then, by using the definition of $d\widetilde{\N}(t)=\sum_{i=1}^nW_id\N^i(t)$ we obtain
\begin{align}
 \E(R)&=\frac{1}{n}\sum_{i=1}^n \sum_{j=1}^n\E\left( W_iW_j\int_0^\infty \int_0^\infty ({\D}(t)d{\N^i}(t))^{\transpose} \K(t,s) ({\D}(s)d{\N^j}(s)) \right)\nonumber\\
 &=\frac{1}{n}\sum_{i=1}^n \E\left( \int_0^\infty \int_0^\infty ({\D}(t)d{\N^i}(t))^{\transpose} \K(t,s) ({\D}(s)d{\N^i}(s))\right)\nonumber,
\end{align}
where the second equality holds since $W_1,\ldots,W_n$ are i.i.d. Rademacher random variables. 

Finally, we use the previous equation and the definition of  $d\N^i(t)=dN_i(t)\mathbf{e}_i$ where recall that $\mathbf{e}_i$ is defined as $\indv_{i}=(\ind_{\{x_i=1\}},\ind_{\{x_i=2\}},\ldots,\ind_{\{x_i=k\}})$ to obtain
\begin{align*}
\E\left(R\right)
=\frac{1}{n}\sum_{i=1}^n \E\left(\Delta_i \sum_{\ell=1}^k\sum_{\ell'=1}^k \D_{\ell x_i}(T_i) \K_{\ell\ell'}(T_i,T_i)\D_{\ell' x_i}(T_i) \right)&\leq \frac{C}{n}\sum_{i=1}^n \E\left( \left(\sum_{\ell=1}^k |\D_{\ell x_i}(T_i)| \right)^2  \right)\\
&=o_p(1).\end{align*}
 The inequality in the previous equation  is due to \Cref{Cond:bounded} since the kernel function is bounded by some constant $C\geq0$. Finally, we conclude the last term is $o(1)$ by an application of the Dominated convergence theorem, since each coordinate of $\D$ tends to $0$ uniformly on every interval $[0,T]$ with $T>0$, and since $\D$ is bounded (recall that $\D=\Q-\Qlim$, and both $\Q$ and $\Qlim$ are projection matrices).
\end{proof}

\begin{proof}[\textbf{Proof  of \Cref{thm:Xi}}]
Following the same argument of \Cref{prop:closedform1}, we have that
\begin{align}
\Xi_n & = \sup_{\omega\in B_1(\mathcal{H})}\left( \frac{1}{\sqrt{n}}\mathbf{1}^\transpose\int_0^\infty \W(t)\Qlim(t) d\widetilde{\N}(t)\right)^2\nonumber\\
&= \frac{1}{n} \int_0^{\infty} \int_0^{\infty} (\Qlim(t)d\widetilde{\N}(t))^{\transpose}\K(t,s)(\Qlim(s)d\widetilde{\N}(s))\nonumber\\
&=\frac{1}{n}\sum_{i=1}^n \sum_{j=1}^n W_iW_j\int_0^{\infty} \int_0^{\infty} (\Qlim(t)d\N^i(t))^{\transpose}\K(t,s)(\Qlim(s)d\N^j(s))\nonumber\\
&=\frac{1}{n}\sum_{i=1}^n \sum_{j=1}^n W_iW_jJ'((Z_i,C_i,x_i),(Z_j,C_j,x_j)),\label{eqn:Wild}
\end{align}
where the third equality holds since $d\widetilde{\N}(t)=\sum_{i=1}^nW_id\N^i(t)$, and  where the function $J':([0,\infty)\times[0,\infty)\times[k])^2\to \R$ is defined by
\begin{align}\label{eqn:defJprima}
J'((Z_i,C_i,x_i),(Z_j,C_j,x_j))&=\int_0^{\infty} \int_0^{\infty} (\Qlim(t)d\N^i(t))^{\transpose}\K(t,s)(\Qlim(s)d\N^j(s)).
\end{align}
Note that $J'$ is defined in terms of the unobserved data points $U_i=(Z_i,C_i,x_i)$ and $U_j=(Z_j,C_j,x_j)$. Nevertheless, in order to evaluate $J'(U_i,U_j)$, it is enough to have access the observed data $D_i=(T_i,\Delta_i,x_i)$ and $D_j=(T_j,\Delta_j,x_j)$. The previous statement can be easily verified from the following computations: 
\begin{align}
J'((Z_i,C_i,x_i),(Z_j,C_j,x_j))&=\int_0^{\infty} \int_0^{\infty} (\Qlim(t)d\N^i(t))^{\transpose}\K(t,s)(\Qlim(s)d\N^j(s))\nonumber\\
&=\int_0^{\infty} \int_0^{\infty} \sum_{\ell,\ell'=1}^k(\Qlim(t)d\N^i(t))^{\transpose}_{\ell}\K_{\ell\ell'}(t,s)(\Qlim(s)d\N^j(s))_{\ell'}\nonumber\\
&=\int_0^{\infty} \int_0^{\infty} \sum_{\ell,\ell'=1}^k\Qlim_{\ell,x_{i}}(t)\K_{\ell\ell'}(t,s)\Qlim_{\ell'x_j}(s)dN_i(t)dN_j(s)\nonumber\\
&=\ind_{\{Z_i\leq C_i\}}\ind_{\{Z_j\leq C_j\}}\sum_{\ell,\ell'=1}^k\Qlim_{\ell,x_{i}}(Z_i)\K_{\ell\ell'}(Z_i,Z_j)\Qlim_{\ell'x_j}(Z_j).\label{eqn:kernelJ'}
\end{align}
Observe that in order to evaluate \cref{eqn:kernelJ'}, we only need to know $\Delta_i,\Delta_j,T_i$ and $T_j$, where recall that $\Delta_i=\ind_{\{Z_i\leq C_i\}}$ and $T_i=\min\{Z_i,C_i\}$. Also, note that for any function $g$ we have $\Delta_i g(Z_i)=\Delta_i g(T_i)$.

We continue by using \cite[Theorem 3.1]{Dehling1994RandomQF} to prove that, conditioned on the data $(D_i)_{i=1}^{\infty}$, \cref{eqn:Wild} converges to the same asymptotic null distribution as 
\begin{align}
\frac{1}{n}\sum_{i=1}^n \sum_{j=1}^nJ'((Z_i,C_i,x_i),(Z_j,C_j,x_j)).\label{eqn:Vstat}
\end{align}
Note, however, that in order to be able to apply \cite[Theorem 3.1]{Dehling1994RandomQF}, we need to verify that $J'$ is a degenerate $V$-statistic kernel, which follows from checking $\E(J'(U_i,U_j)|U_j)=0$. Also, we need to verify standard integrability conditions such as
\begin{align}
\E(|J'(U_i,U_i)|)<\infty\quad\text{and}\quad\E(J'(U_i,U_j)^2)<\infty. \text{ for any $i\neq j$}.\label{eqn:J'Integrability.}
\end{align}

Assuming such properties, by standard convergence results for degenerate $V$-statistics (in particular \cite[Section 6.4.1, Theorem B]{serfling80}) imply that
\begin{align}
\Xi_n \overset{\mathcal D}{\to}\sum_{i\geq 1} \lambda'_i\xi_i^2,\label{eqn:asywild}
\end{align}
where $\xi_1,\xi_2,\xi_3,\ldots$ are i.i.d. standard normal random variables, and $\lambda_i'$ are the eigenvalues associated to the integral operator $T_{J'}\in\mathcal{L}_2(\nu)$ given by 
\begin{align*}
T_{J'}f(\cdot)&=\int J'(\cdot,u)f(u)\nu(du),
\end{align*}
where $\nu$ here denotes the measure induced by $(Z,C,x)$ under the null.

We proceed to prove that $J'$ is a degenerate V-statistic kernel. For that it suffices to check $\E(J'(U_i,U_j)|U_j)=0$. Observe that
\begin{align*}
\E(J'(U_i,U_j)|U_i)&=\E\left(\left.\int_0^{\infty} \int_0^{\infty} (\Qlim(t)d\N^i(t))^{\transpose}\K(t,s)(\Qlim(s)d\N^j(s))\right|U_i\right)\\
&=\int_0^{\infty}(\Qlim(t)d\N^i(t))^{\transpose}\E\left(\left.\int_0^{\infty}\K(t,s)(\Qlim(s)d\N^j(s))\right|U_i\right)=0
\end{align*}
since we claim that
\begin{align}
\E\left(\int_0^{\infty} \K(t,s)(\Qlim(s)d\N^j)(s)\right)=0,\label{eqn:0claim}
\end{align}
for any fixed $t\geq 0$. 

To verify \cref{eqn:0claim} note that coordinate-wise $d\N^j(s) = (dN^j_{1}(s),dN^j_{2}(s),\ldots, dN^j_{k}(s))$ is compensated by $d\A^j(s) = (Y^j_{1}(s)d\bLambda_1(s),\ldots, Y^j_{k}(s)\bLambda_k(s))$. Also, note that $\E(d\A^j(s)) = \Diag{\S}{s}d\bLambda(s)$ since $\E(Y^j_{i}(s)) = p_i(1-H_i(s))$. The  previous observations lead to
\begin{align*}
\E\left(\int_0^{\infty} \K(t,s)\Qlim(s)d\N^j(s)\right)&= \E\left(\int_0^{\infty} \K(t,s)\Qlim(s) d\A^j(s)\right)=\int_0^{\infty} \K(t,s)\Qlim(s)\Diag{\S}{s}d\bLambda(s).
\end{align*}
Recall that, by definition, $d\bLambda(s) = \V d\B(s)$. Thus
\begin{align*}
\E\left(\int_0^{\infty} \K(t,s)\Qlim(s)d\N^j(s)\right)
=\int_0^{\infty} \K(t,s)\Qlim(s)\Diag{\S}{s}\V d\B(s)&=\int_0^{\infty} \K(t,s)\Qlim(s)\Xlim(s)d\B(s)&\\
&=0,
\end{align*}
where the last equality follows from the simple observation that $\Qlim(s)$ is a projection matrix onto the null space of $\Xlim(s)$.

To verify the integrability conditions stated in \cref{eqn:J'Integrability.}, notice that $|\K_{ij}(t)|\leq C$ by some constant $C\geq0$ by \Cref{Cond:bounded}, and that $|\Qlim_{ij}(t)|\leq 1$ for any $i,j\in[k]$ by  \Cref{prop:projection}, since $\Qlim(t)$ is an orthogonal projection matrix.

We proceed to show that $T_J$ the operator defined in ~\cref{eqn:defiTJOperator}  and $T_{J'}$ have the same set of non-zero eigenvalues, and so the limit distributions stated in \cref{eqn:asytrue} and \cref{eqn:asywild} are the same, which will conclude our proof. 

Let $(Z_1,C_1,x_1),(Z_2,C_2,x_2)\overset{i.i.d.}{\sim}\nu$, and observe that by \Cref{prop:fowardbackward}.3 and \cref{eqn:kernelJ'} it holds
\begin{align*}
J((Z_1,C_1,x_1),(Z_2,C_2,x_2))&=(B_1B_2J')((Z_1,C_1,x_1),(Z_2,C_2,x_2)),
\end{align*}
where we understand $B_i$ as the operator $B$ applied on the i-th coordinate. Then, it follows that
\begin{align}
(T_Jf)(\cdot)=\int J(\cdot,u)f(u)\nu(du)=\int (B_1B_2J')(\cdot,u)f(u)\nu(du)&=\int (B_1J')(\cdot,u)(Af)(u)\nu(du)\nonumber\\
&=B(T_{J'}(Af))(\cdot),\label{eqn:TjfBTJ'}
\end{align}
where the third equality is due to \Cref{prop:fowardbackward}.1, and the fourth equality is due to the linearity of the operator $B$, and from the definition of the operator $T_{J'}$. We conclude then that $T_J = BT_{J'}A$, and we also conclude that $T_{J'} = AT_JB$ since $AB = I$.
$$\langle T_J f, g\rangle_{L_2(\nu)} =\langle T_J f, AB g\rangle_{L_2(\nu)} $$

Recall that $T_J$ is compact since $\|J\|_{L_2(\nu\times \nu)}<\infty$ and self-adjoint since $J$ is symmetric, so the eigenvectors of $T_J$ form a basis of $L_2(\nu)$, and the same holds for $T_{J'}$. We shall prove that if $f$ is an eigenfunction with eigenvalue $\lambda\neq 0$, then $Af$ is an eigenfunction of $T_{J'}$ with the same eigenvalue. Indeed,
\begin{align*}
T_{J'}Af=ABT_{J'}Af=AT_Jf=\lambda Af,
\end{align*}
We just need to check that $Af\neq 0$ so $\lambda$ is an eigenvalue of $T_{J'}$. For that, note that 
$$\|Af\|_{L_2(\nu)}^2 = \langle f, BA f \rangle_{L_2(\nu)} = \langle f, f\rangle_{L_2(\nu)}=1,$$
so, moreover, the same computation shows that $Af$ are orthonormal. Similarly, if $g$ is a eigenfunction of $T_{J'}$ associated with an eigenvalue $\lambda$, then
$$T_{J}Bg = BT_{J'}ABg = BT_{J'}g = \lambda Bg,$$
and we can also verify that the functions $Bg\neq 0$, so $\lambda$ is also an eigenvalue of $T_{J}$. We just show that there is a bijection between the non-zero eigenvalues (and corresponding eigenvectors) of $T_J$ and $T_{J'}$, concluding the proof.
\end{proof}

\subsection{Proof of \Cref{sec:multiple}}
The main objective of this section is to prove \Cref{thm:multiMatrixAlgorithmWorks}. To achieve this, we introduce the following results which will be needed in the proof.

\begin{lemma}\label{thm:joinconvergenceDegeVStats}
Let $\bGamma:\mathcal X \times \mathcal X \to \R^b$ be defined by $\bGamma(x,y) = (\bGamma_1(x,y),\ldots, \bGamma_b(x,y))$, where $\bGamma_k:\mathcal X \times \mathcal X \to \R$  for each $k \in \{1,\ldots, b\}$. Let $X_1,\ldots,X_n$ be a collection of i.i.d random variables taking values in $\mathcal X$ such that for each $k\in \{1,\ldots, b\}$, $\E(\bGamma_k(x,X_i))=0$  for every $x\in \mathcal X$, and $\E(\bGamma_k(X_i,X_j)^2)<\infty$ for $i\neq j$, and $\E(|\bGamma_k(X_i,X_i)|)< \infty$.  

Then exists a $\R^b$-valued random variable $Z$ such that 
\begin{align*}
Z_n=\frac{1}{n}\sum_{i=1}^n \sum_{j=1}^n \bGamma(X_i,X_j) \to Z.
\end{align*} 
Moreover, given i.i.d. real random variables $W_i$ independent of $X_i$ with $\E(W_i) = 0$ and $\E(W_i^2) = 1$ then, conditioned on almost every sequence $X_1,X_2,\ldots$, we have
\begin{align*}
Z_n^W=\frac{1}{n}\sum_{i=1}^n \sum_{j=1}^n W_iW_j\bGamma(X_i,X_j)\to Z
\end{align*}
\end{lemma}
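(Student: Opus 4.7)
The plan is to use the Cram\'er--Wold device to reduce the $b$-dimensional joint convergence to the scalar convergence of arbitrary linear combinations $\mathbf{a}^\transpose Z_n$ and $\mathbf{a}^\transpose Z_n^W$, and then apply the univariate degenerate $V$-statistic machinery already invoked in the paper (\cite[Section~6.4.1, Theorem~B]{serfling80} for the unweighted statistic and \cite[Theorem~3.1]{Dehling1994RandomQF} for the Wild Bootstrap version) to each linear combination separately.

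Concretely, for any fixed $\mathbf{a}=(a_1,\ldots,a_b)\in\R^b$ I would define the scalar kernel $\Gamma^{\mathbf{a}}(x,y):=\sum_{k=1}^b a_k\bGamma_k(x,y)$. Linearity of expectation gives $\E(\Gamma^{\mathbf{a}}(x,X_j))=0$ for every $x\in\mathcal X$, while Cauchy--Schwarz and the triangle inequality applied coordinate-wise show $\E(\Gamma^{\mathbf{a}}(X_i,X_j)^2)<\infty$ for $i\neq j$ and $\E(|\Gamma^{\mathbf{a}}(X_i,X_i)|)<\infty$. Hence $\Gamma^{\mathbf{a}}$ is a degenerate $V$-statistic kernel satisfying the hypotheses of Serfling's theorem, yielding
\begin{align*}
\mathbf{a}^\transpose Z_n=\frac{1}{n}\sum_{i,j=1}^n \Gamma^{\mathbf{a}}(X_i,X_j)\overset{\mathcal D}{\to} Z^{\mathbf{a}}:=\sum_{\ell\geq 1}\lambda_\ell^{\mathbf{a}}\xi_\ell^2,
\end{align*}
where $\xi_\ell$ are i.i.d.\ standard normals and $\lambda_\ell^{\mathbf{a}}$ are the eigenvalues of the integral operator induced by $\Gamma^{\mathbf{a}}$. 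Taking $\mathbf{a}$ to be each canonical basis vector gives coordinate-wise weak convergence, hence joint tightness of $(Z_n)$; a standard Cram\'er--Wold argument then produces a single random vector $Z$ on $\R^b$ with $\mathbf{a}^\transpose Z\overset{\mathcal D}{=}Z^{\mathbf{a}}$ for every $\mathbf{a}$ and $Z_n\overset{\mathcal D}{\to} Z$.

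For the Wild Bootstrap assertion, the same scalar reduction together with \cite[Theorem~3.1]{Dehling1994RandomQF} gives, for each fixed $\mathbf{a}$, that conditionally on almost every sequence $X_1,X_2,\ldots$, $\mathbf{a}^\transpose Z_n^W\overset{\mathcal D}{\to} Z^{\mathbf{a}}$. The exceptional null set will depend on $\mathbf{a}$, and this is the main obstacle: I need a single null set on which the scalar conditional convergence holds for every $\mathbf{a}$ simultaneously. To resolve this I would fix a countable dense subset $\mathcal A\subseteq\R^b$ (e.g.\ $\mathbb Q^b$) and intersect the countable family of $\mathbf{a}$-dependent null sets to obtain one null set $N$ outside of which the scalar conditional convergence holds for every $\mathbf{a}\in\mathcal A$ at once. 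Along any sequence in $N^c$, conditional tightness of $(Z_n^W)$ is inherited from the coordinate-wise convergence, and equicontinuity of $\mathbf{a}\mapsto e^{i\mathbf{a}^\transpose z}$ uniformly on compacts in $z$ will extend the convergence of conditional characteristic functions from $\mathcal A$ to all of $\R^b$; L\'evy's continuity theorem combined with Cram\'er--Wold then gives conditional joint weak convergence of $Z_n^W$ to $Z$ on $N^c$.

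The hard part will be precisely this conditional Cram\'er--Wold step, i.e.\ passing from the $\mathbf{a}$-by-$\mathbf{a}$ almost sure statements (each valid outside an $\mathbf{a}$-dependent null set) to a single almost sure statement valid for every $\mathbf{a}$; the remainder reduces to the univariate degenerate $V$-statistic convergence theorems already cited in the preceding proofs.
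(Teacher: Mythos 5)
Your proposal follows essentially the same route as the paper's own proof: reduce via Cram\'er--Wold to scalar linear combinations $\mathbf{a}^\transpose Z_n$, apply Serfling's degenerate $V$-statistic theorem to each such combination, deduce tightness and identify the limit $Z$, and repeat the reduction with Dehling--Mikosch for the Wild Bootstrap statistic.

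Where you go further is precisely the point you flag as ``the hard part'': the paper applies the conditional Cram\'er--Wold device without acknowledging that the almost-sure exceptional set in the Dehling--Mikosch statement depends on $\mathbf{a}$, and simply concludes ``conditioned on the sequence $X_1,\ldots$, $Z_n^W\to Z$.'' Your fix---intersect the null sets over a countable dense $\mathcal A\subseteq\R^b$, obtain conditional tightness from the coordinate directions, and extend the convergence of conditional characteristic functions from $\mathcal A$ to all of $\R^b$ via equicontinuity and L\'evy continuity---is exactly what is needed to make the paper's last sentence rigorous, and it is correct. So the argument is not a different route but a tightened version of the one the paper takes; the paper's proof, read literally, leaves this measurability/null-set gap unaddressed.
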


Recall that in the context of \Cref{sec:multiple}, $\bUpsilon_n$ is a b-dimensional vector defined as
$$\bUpsilon_n = (\Upsilon_n(\CMa_1),\ldots, \Upsilon_n(\CMa_b)),$$
i.e. a vector containing $b$ test-statistics. We denote by $\widetilde{\bUpsilon}_n$ its corresponding Wild Bootstrap version, and we denote by $\bUpsilon$ the potential limit in distribution of $\bUpsilon_n$ which exists by the following result.

 \begin{lemma}\label{thm:multipleV-statsConvergence} Assume \Cref{Cond:bounded} holds. Then, it exists a random vector $\bUpsilon$ taking values in $\R^b$ such that $\bUpsilon_n $ converges in distribution to $\bUpsilon$ when $n$ tends to infinity.

Moreover, for any $f:\R^b\to \R$ bounded and continuous, we have that
$$\E(f(\widetilde{\bUpsilon}_n)|D_1,\ldots, D_n) \to \E(f(\bUpsilon)),$$
as $n$ grows to infinity, for almost every sequence $D_1,D_2,\ldots$. That is, conditioned on the data, $\widetilde{\bUpsilon}_n$ also converges in distribution to $\bUpsilon$.
\end{lemma}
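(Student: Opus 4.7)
The approach is to stack the single-matrix V-statistic kernels from the proofs of \Cref{prop:limitApproxTestStat} and \Cref{thm:Xi} into vector-valued kernels, apply \Cref{thm:joinconvergenceDegeVStats} to obtain joint weak convergence, and finally identify the two resulting limits (for $\bUpsilon_n$ and for $\widetilde\bUpsilon_n$) via the Cram\'er--Wold device combined with the scalar eigenvalue-matching argument from the proof of \Cref{thm:Xi}.

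\textbf{Step 1: V-statistic approximations.} For each $i\in[b]$, let $\Qlim^{(i)}$ denote the limiting projection associated with $\CMa_i$, and define $J^{(i)}$ and $J'^{(i)}$ exactly as in the proofs of \Cref{prop:limitApproxTestStat} and \Cref{thm:Xi}, but with $\Qlim$ replaced by $\Qlim^{(i)}$. Stack them into vector-valued kernels $\bGamma(d,d')=(J^{(1)}(d,d'),\ldots,J^{(b)}(d,d'))$ and $\bGamma'(u,u')=(J'^{(1)}(u,u'),\ldots,J'^{(b)}(u,u'))$. Applying \Cref{prop:approxTestStat} coordinate-wise gives
\begin{align*}
\bUpsilon_n=\frac{1}{n}\sum_{j,k=1}^{n}\bGamma(D_j,D_k)+o_p(1)\text{ (componentwise)},
\end{align*}
and combining \Cref{prop:WBapprox} with the V-statistic expansion already carried out inside the proof of \Cref{thm:Xi} yields
\begin{align*}
\widetilde\bUpsilon_n=\frac{1}{n}\sum_{j,k=1}^{n}W_jW_k\,\bGamma'(U_j,U_k)+o_p(1)\text{ (componentwise)}.
\end{align*}

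\textbf{Step 2: Joint convergence via the V-statistic lemma.} The degeneracy and integrability conditions required by \Cref{thm:joinconvergenceDegeVStats} hold coordinate-wise: they have already been verified in the proofs of \Cref{prop:limitApproxTestStat} (for each $J^{(i)}$) and \Cref{thm:Xi} (for each $J'^{(i)}$). Applied to $(\bGamma,(D_i))$, the lemma delivers a random vector $\bUpsilon$ with $\tfrac{1}{n}\sum_{j,k}\bGamma(D_j,D_k)\to\bUpsilon$ in distribution; Slutsky then gives $\bUpsilon_n\to\bUpsilon$. Applied to $(\bGamma',(U_i))$ with Rademacher weights $(W_i)$, the lemma produces a random vector $\bUpsilon^{\ast}$ such that $\tfrac{1}{n}\sum_{j,k}W_jW_k\bGamma'(U_j,U_k)\to\bUpsilon^{\ast}$ conditionally on almost every $(U_i)$-sequence. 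Since $\bGamma'(U_j,U_k)$ is $\sigma(D_j,D_k)$-measurable (see \cref{eqn:kernelJ'}) and the weights $(W_i)$ are independent of the $(U_i)$, the conditional expectation $\E(f(\widetilde\bUpsilon_n)\mid U_1,\ldots,U_n)$ equals $\E(f(\widetilde\bUpsilon_n)\mid D_1,\ldots,D_n)$ for every bounded $f$, so the conditional convergence transfers to the observed-data sequence; with Slutsky we obtain $\widetilde\bUpsilon_n\to\bUpsilon^{\ast}$ conditionally on almost every data sequence.

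\textbf{Step 3: Matching $\bUpsilon$ and $\bUpsilon^{\ast}$.} By the Cram\'er--Wold device it suffices to prove $\langle c,\bUpsilon\rangle\overset{d}{=}\langle c,\bUpsilon^{\ast}\rangle$ for every $c\in\R^b$. For fixed $c$, set $J_c=\sum_i c_iJ^{(i)}$ and $J'_c=\sum_i c_iJ'^{(i)}$, which are scalar degenerate V-statistic kernels by linearity. Then $\langle c,\bUpsilon\rangle$ is the limit in distribution of $\tfrac{1}{n}\sum_{j,k}J_c(D_j,D_k)$ and $\langle c,\bUpsilon^{\ast}\rangle$ the limit of $\tfrac{1}{n}\sum_{j,k}J'_c(U_j,U_k)$. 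The operator identity $J^{(i)}=B_1B_2J'^{(i)}$ from \Cref{prop:fowardbackward}(3) combined with \cref{eqn:kernelJ'} survives arbitrary linear combinations, so $J_c=B_1B_2J'_c$. The scalar argument used at the end of the proof of \Cref{thm:Xi} (the ``$T_J=BT_{J'}A$, $T_{J'}=AT_JB$'' duality establishing a bijection between non-zero eigenvalues) applies verbatim to the pair $(J_c,J'_c)$, showing that $T_{J_c}$ and $T_{J'_c}$ share the same non-zero spectrum. Hence both scalar V-statistics converge to the same generalized-chi-square law, giving $\langle c,\bUpsilon\rangle\overset{d}{=}\langle c,\bUpsilon^{\ast}\rangle$. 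Cram\'er--Wold yields $\bUpsilon\overset{d}{=}\bUpsilon^{\ast}$, and hence $\widetilde\bUpsilon_n\to\bUpsilon$ conditionally on the data, which is exactly the second claim.

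\textbf{Main obstacle.} The principal difficulty is Step 3: joint weak convergence of vector-valued degenerate V-statistics is not pinned down by the marginal limits of the $T_{J^{(i)}}$, so the two candidate joint limits cannot be matched by simply invoking \Cref{thm:wildAymp} coordinate-by-coordinate. The resolution is to lift the scalar eigenvalue-matching of \Cref{thm:Xi} through Cram\'er--Wold, which is made straightforward by the linearity of the advanced and backward operators acting on the kernels.
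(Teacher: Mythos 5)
Your proof is correct and follows the same overall route as the paper: approximate each coordinate by a degenerate V-statistic (with kernel $J^{(i)}$ on observed data for $\bUpsilon_n$, and $J'^{(i)}$ on the latent triples for $\widetilde\bUpsilon_n$), invoke \Cref{thm:joinconvergenceDegeVStats}, and then identify the two resulting limits. The one place where you do more than the paper — and where the paper is genuinely thin — is your Step~3. The paper dismisses the identification of $\bUpsilon$ with $\bUpsilon^{\ast}$ in a single parenthetical, ``(recall that $J'$ and $J$ have the same eigenvalues as shown in the proof of \Cref{thm:Xi})'', which on its face only yields equality of the $b$ marginal laws, not of the joint law on $\R^b$. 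You correctly diagnose this as the main obstacle and close it via Cram\'er--Wold: you observe that the operator identity $J^{(i)}=B_1B_2J'^{(i)}$ is preserved under linear combinations because $B_1,B_2$ are linear, so $J_c=B_1B_2J'_c$ for every $c\in\R^b$, and then the eigenvalue-bijection argument from the proof of \Cref{thm:Xi} applies verbatim to the scalar pair $(T_{J_c},T_{J'_c})$, giving $\langle c,\bUpsilon\rangle\overset{d}{=}\langle c,\bUpsilon^{\ast}\rangle$ and hence $\bUpsilon\overset{d}{=}\bUpsilon^{\ast}$. This buys genuine rigor over the paper's one-liner at essentially no extra cost, since the needed linearity and the scalar spectral matching are already available. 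Your auxiliary observation about transferring conditioning from $(U_i)$ to $(D_i)$ (because $\widetilde\bUpsilon_n$ is $\sigma(D_1,\ldots,D_n,W_1,\ldots,W_n)$-measurable and the $W_i$ are independent of the data) is also correct and worth making explicit, since \Cref{thm:joinconvergenceDegeVStats} naturally conditions on the latent sequence rather than the observed one.
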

This section features three random vectors: $\bUpsilon_n$, $\bUpsilon$ and $\widetilde \bUpsilon_n$. From these random vectors we define 4 functions and 4 quantile functions that feature in our analysis.
\begin{enumerate}
    \item $P_n:\R^b \to \R^b$ will denote the vector of marginal cumulative distribution functions of the random vector $\bUpsilon_n = (\Upsilon_n(\CMa_1),\ldots, \Upsilon_n(\CMa_b))$, that is, for any $\boldsymbol{x}=(x_1,\ldots,x_b)\in \R^b$ we have 
    \begin{align*}
      P_n(\boldsymbol{x}) = \left( \Prob(\Upsilon_n(\CMa_1)\leq x_1),\ldots,\Prob( \Upsilon_n(\CMa_b)\leq x_b)\right)\in \R^d.
    \end{align*}
    We denote by $q_{n,i}(\alpha)$ the $\alpha$-quantile of $\Upsilon_n(\CMa_i)$, defined as 
    \begin{align*}
        q_{n,i}(\alpha) =\inf\left\{x\geq 0: \Prob\left(\Upsilon_n(\CMa_i)\leq x\right)\geq \alpha \right\}
    \end{align*}
    \item $P:\R^b \to \R^b$ will denote the vector with the cumulative distribution functions of the marginal distributions of $\bUpsilon$. We denote by $q_{i}$ the quantile function of $\Upsilon(\CMa_i)$ for $i\in[b]$.
    \item $\widetilde P_n:\R^b \to \R^b$ represents the vector containing the cumulative distribution functions of the marginal distributions associated with $\widetilde \bUpsilon_n$, which recall is the Wild Bootstrap version of the vector $\bUpsilon_n$. Note that each coordinate of $\widetilde P_n$ is a random distribution that depends on the $n$ data points $D_1,\ldots, D_n$. We denote by $\widetilde q_{n,i}$ the quantile function of $\widetilde \Upsilon(\CMa_i)$.
    \item $\widetilde P_n^M:\R^b \to \R^b$ represents the vector of cumulative distribution functions associated to the empirical measure given by $M$ independent random samples from $\widetilde \bUpsilon_n$, say, $\widetilde \bUpsilon_n^{(1)},\ldots, \widetilde \bUpsilon_n^{(M)}$. For any $i\in[b]$, we denote by $\widetilde q_{n,i}^M$ the empirical quantile associated with the component $i$.
\end{enumerate}
For all the functions described above, we use the subindex $i$ to denote its $i$-th coordinate, e.g. $P_{n,i}(t) = \Prob(\Upsilon_n(\CMa_i)\leq t)$. 

\begin{lemma}\label{lemma:PUtoPU}
 Assume \Cref{Cond:bounded}. Then $\widetilde P_{n}(\widetilde \bUpsilon_n) \overset{\mathcal D}{\to} P(\bUpsilon)$ when the number of of data points $n$ tends to infinity.  Moreover, conditioned on almost any sequence of data points, $(D_i)_{i=1}^{\infty}$, we have that $ \widetilde P_{n}( \bUpsilon_n) \overset{\mathcal D}{\to} P(\bUpsilon)$ 
\end{lemma}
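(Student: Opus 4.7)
The plan is to reduce both convergence statements to two ingredients: (i) the joint distributional convergence of $\bUpsilon_n$ and $\widetilde\bUpsilon_n$ provided by \Cref{thm:multipleV-statsConvergence}, and (ii) a uniform approximation of the (random) CDF vector $\widetilde P_n$ by the deterministic limit $P$. Then a Slutsky-style combination gives the result, with an extra unconditioning step via dominated convergence for the first claim.

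First I would check that each limiting marginal $P_i$ is continuous on $[0,\infty)$. This is straightforward: by \Cref{thm:asymptoticUpsilonRefi}, $\Upsilon(\CMa_i)$ is a weighted sum of independent $\chi^2_1$ random variables, which has no atoms except possibly at $0$ in the trivial case where all $\lambda_j$ vanish (excluded under a visible alternative or by nondegeneracy of the operator $T_J$). Hence $P$ is a continuous mapping $\mathbb{R}^b\to\mathbb{R}^b$.

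Next I would establish uniform convergence of the conditional CDFs. By \Cref{thm:multipleV-statsConvergence}, for almost every realisation of the data $(D_i)_{i=1}^\infty$, the conditional distribution of $\widetilde\Upsilon_n(\CMa_i)$ given the data converges weakly to $\Upsilon(\CMa_i)$. Since $P_i$ is continuous, this implies pointwise convergence $\widetilde P_{n,i}(t)\to P_i(t)$ for every $t\geq 0$, and then Polya's theorem upgrades this to
\[
\sup_{t\geq 0}|\widetilde P_{n,i}(t)-P_i(t)|\longrightarrow 0\qquad\text{a.s.\ in the data,}
\]
for each $i\in[b]$.

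For the first claim, I would argue conditionally on the data: the uniform bound above gives $\|\widetilde P_n(\widetilde\bUpsilon_n)-P(\widetilde\bUpsilon_n)\|_\infty\to 0$ deterministically (with high probability in the bootstrap randomness), while $\widetilde\bUpsilon_n\xrightarrow{\mathcal D}\bUpsilon$ conditionally together with continuity of $P$ yields $P(\widetilde\bUpsilon_n)\xrightarrow{\mathcal D}P(\bUpsilon)$ by the continuous mapping theorem. Slutsky then gives $\widetilde P_n(\widetilde\bUpsilon_n)\xrightarrow{\mathcal D}P(\bUpsilon)$ conditionally on almost every data sequence. To pass to the unconditional statement, for any bounded continuous $f$ I would take expectations: the conditional expectation $\mathbb{E}(f(\widetilde P_n(\widetilde\bUpsilon_n))\mid D_1,\ldots,D_n)$ converges almost surely to $\mathbb{E}(f(P(\bUpsilon)))$ and is uniformly bounded by $\|f\|_\infty$, so dominated convergence delivers the unconditional weak limit. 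The second claim follows analogously: uniform convergence of $\widetilde P_n$ gives $\widetilde P_n(\bUpsilon_n)-P(\bUpsilon_n)\to 0$ in probability, the unconditional $\bUpsilon_n\xrightarrow{\mathcal D}\bUpsilon$ combined with continuous mapping yields $P(\bUpsilon_n)\xrightarrow{\mathcal D}P(\bUpsilon)$, and one more application of Slutsky closes the argument.

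I expect the main obstacle to be the subtlety of combining the random (data-dependent) function $\widetilde P_n$ with the random input $\widetilde\bUpsilon_n$, since both are simultaneously random and correlated through the data. The cleanest route is the two-step conditioning above: first handle everything conditionally on the data (where $\widetilde P_n$ becomes deterministic and the uniform-convergence bound is a deterministic small-o), then uncondition via dominated convergence. A minor technical point to keep track of is the continuity of $P$ at $0$ in potential degenerate sub-cases, which I would dispose of either by restricting to the generic non-degenerate regime or by invoking the portmanteau theorem at continuity points only.
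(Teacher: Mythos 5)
Your proposal is correct and follows essentially the same route as the paper: decompose $\widetilde P_n(\cdot)$ as $P(\cdot)$ plus the uniform approximation error $\widetilde P_n - P$, show the latter vanishes (the paper bounds $\|\widetilde P_n(\bUpsilon_n)-P(\bUpsilon_n)\|_1$ by $\sum_i \sup_t|\widetilde P_{n,i}(t)-P_i(t)|$ and invokes uniform convergence of the CDFs; you reach the same bound via Polya's theorem), and conclude by continuity of $P$, the continuous mapping theorem, and Slutsky. You are somewhat more explicit than the paper about the conditional/unconditional bookkeeping — in particular the dominated-convergence step to pass from conditional to unconditional weak convergence of $\widetilde\bUpsilon_n$ — which is a worthwhile clarification since the paper's write-up slides between $\bUpsilon_n$ and $\widetilde\bUpsilon_n$ without flagging this, but it is not a different method.
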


Finally, recall the definition of $\beta_{\alpha}$ and $\widehat \beta_{\alpha}$ from \cref{eqn:betaalphadefi} and \cref{eqn:conditionBeta}, respectively.

\begin{lemma}\label{lemma:hatbetaconvergesup}  Assume \Cref{Cond:bounded}, and let $\varepsilon>0$, then
\begin{align*}
    \limsup_{n\to\infty}\limsup_{M\to\infty}\Prob\left(\left.\widehat\beta_\alpha>\beta_\alpha+\varepsilon\right|D_1,\ldots,D_n\right)=0.
\end{align*}
for almost all sequence of data points $(D_i)_{i=1}^\infty$.
\end{lemma}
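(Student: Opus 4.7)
The plan is to reduce the claim to a convergence statement for the empirical probability defining $\widehat\beta_\alpha$, and then to chain two limits: a Glivenko--Cantelli / SLLN step as $M\to\infty$ conditional on the data, followed by an application of \Cref{lemma:PUtoPU} as $n\to\infty$. Fix $\beta^{\ast}=\beta_\alpha+\varepsilon$ and set
\begin{align*}
R_{n,M}(\beta)=\frac{1}{M}\sum_{\ell=1}^{M}\ind\{\widetilde{\Upsilon}_n^{\ell}(\CMa_i)>\qwei{\beta}\text{ for at least one }i\in[b]\}.
\end{align*}
Each empirical quantile $\qwei{\beta}$ is non-increasing in $\beta$, so $\beta\mapsto R_{n,M}(\beta)$ is non-decreasing, and the definition of $\widehat\beta_\alpha$ as a supremum yields the pointwise inclusion $\{\widehat\beta_\alpha>\beta^{\ast}\}\subseteq\{R_{n,M}(\beta^{\ast})\leq\alpha\}$. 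It therefore suffices to prove, for almost every data sequence $(D_i)_{i\geq 1}$,
\begin{align*}
\limsup_{n\to\infty}\limsup_{M\to\infty}\Prob(R_{n,M}(\beta^{\ast})\leq\alpha\mid D_1,\ldots,D_n)=0.
\end{align*}

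Next, I would fix the data and send $M\to\infty$. Conditional on $D_1,\ldots,D_n$, the Wild Bootstrap samples $\widetilde{\bUpsilon}_n^{1},\widetilde{\bUpsilon}_n^{2},\ldots$ are i.i.d., so Glivenko--Cantelli applied to each marginal CDF gives $\qwei{\beta^{\ast}}\to\qwi{\beta^{\ast}}$ almost surely, where $\qwi{\beta^{\ast}}$ denotes the true conditional $(1-\beta^{\ast})$-quantile of $\widetilde{\Upsilon}_n(\CMa_i)$. Combining this with the conditional SLLN for the indicator sum yields
\begin{align*}
R_{n,M}(\beta^{\ast})\xrightarrow[M\to\infty]{a.s.}\widetilde R_n(\beta^{\ast}):=\Prob\left(\max_{i\in[b]}\widetilde P_{n,i}(\widetilde{\Upsilon}_n(\CMa_i))>1-\beta^{\ast}\mid D_1,\ldots,D_n\right),
\end{align*}
where I have rewritten the event on the right using the probability integral transform.

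Finally I would send $n\to\infty$. By \Cref{lemma:PUtoPU}, for almost every data sequence the random vector $\widetilde P_n(\widetilde{\bUpsilon}_n)$ converges conditionally in distribution to $P(\bUpsilon)$. Since each $\Upsilon(\CMa_i)$ has a continuous law (the weighted $\chi^2_1$ series of \Cref{thm:asymptoticUpsilonRefi}), the variables $P_i(\Upsilon(\CMa_i))$ are uniform on $(0,1)$, and $\max_i P_i(\Upsilon(\CMa_i))$ has no atom at $1-\beta^{\ast}$; the Portmanteau theorem then applies to the indicator $\ind\{\max_i x_i>1-\beta^{\ast}\}$ and produces
\begin{align*}
\widetilde R_n(\beta^{\ast})\xrightarrow[n\to\infty]{}p(\beta^{\ast}):=\Prob\left(\Upsilon(\CMa_i)\geq\qilim{\beta^{\ast}}\text{ for some }i\in[b]\right).
\end{align*}
By the definition of $\beta_\alpha$ as a supremum and the continuity of $p$, $p(\beta^{\ast})=p(\beta_\alpha+\varepsilon)>\alpha$ strictly, so chaining the two limits gives $\Prob(R_{n,M}(\beta^{\ast})\leq\alpha\mid D_1,\ldots,D_n)\to 0$ in the iterated limit, which closes the argument.

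The main obstacle is justifying the two places where a discontinuous indicator meets a convergence in distribution: the passage from empirical to true quantiles as $M\to\infty$, and the Portmanteau application for $\ind\{\max_i x_i>1-\beta^{\ast}\}$ as $n\to\infty$. Both reduce to continuity of the relevant marginal laws and can be handled by a sandwich argument that replaces $\qwei{\beta^{\ast}}$ by $\qwi{\beta^{\ast}}\pm\delta$ (respectively $1-\beta^{\ast}\pm\delta$), uses monotonicity of $R_{n,M}$, and lets $\delta\downarrow 0$. The mildly delicate ingredient is the conditional law of $\widetilde{\Upsilon}_n(\CMa_i)$, which is atomic for finite $n$ because it is a quadratic form in $n$ Rademacher weights; continuity at the limiting quantile can either be extracted from the quadratic form representation in \cref{eqn:closeformwild1} via an anti-concentration argument, or entirely bypassed by the same $\delta$-sandwiching and the fact that the limiting law is atom-free.
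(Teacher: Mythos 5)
Your plan reproduces the paper's strategy: reduce $\{\widehat\beta_\alpha>\beta_\alpha+\varepsilon\}$ to the event $\{R_{n,M}(\beta_\alpha+\varepsilon)\leq\alpha\}$ via monotonicity, send $M\to\infty$ conditional on the data, invoke \Cref{lemma:PUtoPU} to identify the $n\to\infty$ limit, and conclude from the definition of $\beta_\alpha$ as a supremum. The last step you handle slightly more directly than the paper: $p(\beta_\alpha+\varepsilon)>\alpha$ follows immediately from the definition of the supremum (no continuity of $p$ needed), whereas the paper reproves this by exhibiting positive mass in the $\varepsilon$-slab $\mathcal S_{2\varepsilon}\setminus\mathcal S_0$ using uniformity of $P_i(\bUpsilon)$.

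Where you genuinely diverge from the paper is the $M\to\infty$ step, and this is the part where your sketch leaves work undone. You write $R_{n,M}(\beta^\ast)\to\widetilde R_n(\beta^\ast)$ by ``combining Glivenko--Cantelli with the conditional SLLN,'' but the threshold $\qwei{\beta^\ast}$ is itself a function of all $M$ Wild Bootstrap draws, so the summands in $R_{n,M}$ are not i.i.d.\ and the SLLN does not apply directly; and, as you note, the empirical quantile $\qwei{\beta^\ast}$ need not converge to $\qwi{\beta^\ast}$ because the conditional law of $\widetilde\Upsilon_n(\CMa_i)$ is purely atomic for finite $n$. Your remedy is a $\delta$-sandwich on the quantile, which can indeed be made to work, but it is sketched rather than executed. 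The paper sidesteps both issues at once by never passing through the quantile: it rewrites the event $\{\widetilde\Upsilon^\ell_n(\CMa_i)\geq\qwei{\beta^\ast}\}$ as $\{\widetilde P^M_{n,i}(\widetilde\Upsilon^\ell_n(\CMa_i))\geq 1-\beta^\ast\}$ via the exact rank/quantile duality of the empirical CDF, then replaces $\widetilde P^M_{n,i}$ by the fixed conditional CDF $\widetilde P_{n,i}$ at the cost of an extra $\varepsilon$ on the good event $A=\{\sup_i\|\widetilde P^M_{n,i}-\widetilde P_{n,i}\|_\infty\leq\varepsilon\}$, with $\Prob_D(A^c)\to 0$ by Glivenko--Cantelli and Markov. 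After this substitution the summands are genuinely i.i.d.\ conditional on the data, so the SLLN and a Portmanteau step apply without any continuity or anti-concentration hypothesis on the conditional law. If you want to complete your version, you would have to carry out the $\delta$-sandwich carefully on both sides (to control both the SLLN and the quantile-consistency failures), or simply adopt the paper's empirical-CDF substitution, which buys the same conclusion with fewer moving parts.
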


With the results above we are ready to proceed with the proof of \Cref{thm:multiMatrixAlgorithmWorks}.

\begin{proof}[\textbf{Proof of \Cref{thm:multiMatrixAlgorithmWorks}}]

We start by proving \cref{thm7:eq1}. Consider the left-hand side of \cref{thm7:eq1}
\begin{align*}
    \limsup_{n\to \infty}\limsup_{M\to \infty}\Prob\left(\bigcup_{i=1}^b \left\{\Upsilon_n(\CMa_i) \geq \qwei{\widehat{\beta}_{\alpha}}\right\}\right),
\end{align*}
and note that by the Reverse Fatou's Lemma, it holds
\begin{align*}
    &\limsup_{M\to \infty}\Prob\left(\bigcup_{i=1}^b \left\{\Upsilon_n(\CMa_i) \geq \qwei{\widehat{\beta}_{\alpha}}\right\}\right) \\
    &\qquad\leq \E\left(\limsup_{M\to \infty}\Prob\left(\bigcup_{i=1}^b \left\{\Upsilon_n(\CMa_i) \geq \qwei{\widehat{\beta}_{\alpha}}\right\}\given D_1,\ldots, D_n\right)\right).
\end{align*}
Henceforth, for the sake's of notation, we denote by $\Prob_D$ probability conditioned on the $n$ data points $D_1,\ldots, D_n$. Consider any $\varepsilon>0$ sufficiently small, then
\begin{align*}
    &\Prob_D\left(\bigcup_{i=1}^b \left\{\Upsilon_n(\CMa_i) \geq \qwei{\widehat{\beta}_{\alpha}}\right\}\right) \\&\qquad\leq \underbrace{\Prob_D\left(\bigcup_{i=1}^b \left\{\Upsilon_n(\CMa_i) \geq \qwei{\widehat{\beta}_{\alpha}}\right\},\widehat \beta_{\alpha}\leq \beta_{\alpha} + \varepsilon\right)}_{(1)}+\underbrace{\Prob_{D}\left(\widehat \beta_{\alpha}> \beta_{\alpha} + \varepsilon\right)}_{(2)}.
\end{align*}
By \Cref{lemma:hatbetaconvergesup}, the term $(2)$ defined above tends to 0 as the number of Wild Bootstrap samples $M$ and the sample size $n$ grow to infinity. For the term (1), notice that { $\widetilde q_{n,i}^M(\alpha)$ is a non-decreasing function of $\alpha$ and thus $\widehat\beta_{\alpha}\leq \beta_{\alpha}+\varepsilon$ implies $\qwei{\widehat\beta_{\alpha}}\geq\qwei{\beta_{\alpha}-\varepsilon}$ for any $\varepsilon\geq 0$}, and thus 
\begin{align}
    (1)\leq \Prob_D\left(\bigcup_{i=1}^b \left\{\Upsilon_n(\CMa_i) \geq \qwei{\beta_{\alpha}-\varepsilon}\right\}\right).\nonumber
\end{align}
We use the previous equation to obtain
\begin{align}
    \limsup_{M\to \infty} \Prob_D\left(\bigcup_{i=1}^b \left\{\Upsilon_n(\CMa_i) \geq \qwei{ \beta_{\alpha}-\varepsilon}\right\}\right)& = \limsup_{M\to \infty} \Prob_D\left(\bigcup_{i=1}^b \left\{\widetilde P_{ni}^M(\bUpsilon_n) \geq 1-({\beta}_{\alpha}+\varepsilon)\right\}\right)\nonumber \\
    &\leq \Prob_D\left(\bigcup_{i=1}^b \left\{\widetilde P_{ni}(\bUpsilon_n) \geq  1-({\beta}_{\alpha}+\varepsilon)\right\}\right),\label{eqn:randomport12151}
\end{align}
where the first equality in the previous equation holds by definition of the quantile function $\widetilde q_{n,i}^M$, and the inequality holds by the law of large numbers by noticing that 
\begin{align*}
\widetilde P_{n,i}^M(\bUpsilon_n) = \frac{1}{M}\sum_{{\ell}=1}^M\ind\{\widetilde\Upsilon_n^{\ell}(\CMa_i) \leq \Upsilon_n(\CMa_i)\},    
\end{align*}
where $(\widetilde\bUpsilon_n^{\ell})_{\ell=1}^M$ are $M$ independent Wild Bootstrap samples, and thus, given the data, completely independent of $\bUpsilon_n$. 


Compiling the above computations yields that for any $\varepsilon>0$ sufficiently small
\begin{align*}
    \limsup_{n\to \infty}\limsup_{M\to \infty}\Prob\left(\bigcup_{i=1}^b \left\{\Upsilon_n(\CMa_i) \geq \qwei{\widehat\beta_{\alpha}}\right\}\right)\leq   \limsup_{n\to \infty} \Prob\left(\bigcup_{i=1}^b \left\{\Upsilon_n(\CMa_i) \geq \qwi{\beta_{\alpha}-\varepsilon}\right\}\right),
\end{align*}
where notice that $\widetilde q_{n,i}$ is the quantile function associated with $\widetilde{\bUpsilon}_n$.

By definition of the quantile function we have 
\begin{align*}
    \left\{\Upsilon_n(\CMa_i) \geq \qwi{{\beta}_{\alpha}-\varepsilon}\right\} = \left\{\widetilde P_{n,i}(\bUpsilon_n) \geq 1-(\beta_{\alpha}+\varepsilon)\right\}
\end{align*}
Define the set
$$\mathcal S_{\varepsilon} = \{x\in \R^d: x_i \geq 1-(\beta_{\alpha}+\varepsilon), \text{for some $i \in [b]$}\}.$$ Then, by taking $\limsup_{n\to \infty}$ to the right-hand side of \cref{eqn:randomport12151}, we have
\begin{align*}
     \limsup_{n\to \infty}\Prob_D\left(\bigcup_{i=1}^b \left\{\widetilde P_{n,i}(\bUpsilon_n) \geq  1-({\beta}_{\alpha}+\varepsilon)\right\}\right)& =\limsup_{n\to \infty}\Prob\left(\widetilde P_n(\bUpsilon_n)\in \mathcal S_{\varepsilon}\right)\leq \Prob\left(P(\bUpsilon)\in \mathcal S_{\varepsilon}\right) 
\end{align*}
The last step follows from Portmanteau's theorem ($S_{\varepsilon}$ is a closed set) and the fact that $\widetilde P_n(\bUpsilon_n)\overset{\mathcal D}{\to} P(\bUpsilon)$ as $n$ grows to infinity due to \Cref{lemma:PUtoPU}. 

Observe that that
$\Prob\left(P(\bUpsilon)\in \mathcal S_{\varepsilon}\right) =  \Prob\left(P(\bUpsilon)\in \mathcal S_{0}\right)+g(\varepsilon)$. Moreover, $g(\varepsilon) = \Prob\left(P(\bUpsilon)\in \mathcal S_{\varepsilon}\right)- \Prob\left(P(\bUpsilon)\in \mathcal S_{0}\right)$ tends to $0$ when $\varepsilon\to 0$, indeed for small $\varepsilon$, by the union bound we have
\begin{align}
    g(\varepsilon) \leq \sum_{i=1}^b \Prob(1-(\beta_{\alpha}+\varepsilon)\leq P_i(\bUpsilon) \leq 1-\beta_{\alpha})  = b\varepsilon
\end{align}
because $P_i(\bUpsilon)$ has Uniform(0,1) distribution. We conclude that for all $\varepsilon>0$ small enough we have
\begin{align*}
    \limsup_{n\to \infty}\limsup_{M\to \infty}\Prob\left(\bigcup_{i=1}^b \left\{\Upsilon_n(\CMa_i) \geq \qwei{\widehat{\beta}_{\alpha}}\right\}\right)\leq \Prob\left(P(\bUpsilon)\in \mathcal S_{0}\right) + b\varepsilon
\end{align*}
and so,
\begin{align*}
    \limsup_{n\to \infty}\limsup_{M\to \infty}\Prob\left(\bigcup_{i=1}^b \left\{\Upsilon_n(\CMa_i) \geq \qwei{\widehat{\beta}_{\alpha}}\right\}\right)\leq \Prob\left(P(\bUpsilon)\in \mathcal S_{0}\right),
\end{align*}
since $\varepsilon>0$ can be chosen arbitrarily small. We finish the proof by recalling that $\Prob\left(P(\bUpsilon)\in \mathcal S_{0}\right) = \alpha$, from which \cref{thm7:eq1} is proven.

We continue by proving the first equality of \cref{thm7:eq2}. Suppose that the local hypothesis $H_{0i}$ is false. Then, by using the analysis for a single hypothesis (see \Cref{thm:alternative}) we can deduce $\Upsilon_n(\CMa_i)\to \infty$ but $ \widetilde{\Upsilon}_n^W(\CMa_i) = O_p(1)$, the latter can be verified by using \cref{eqn:closeformwild1} to write 
\begin{align}
    \widetilde{\Upsilon}_n^W(\CMa_i) = \frac{1}{n}\int \int (\Q(t)d\widetilde{\N}(t))^{\transpose}\K(t,s)(\Q(s)d\widetilde{\N}(s)),
\end{align}
(note that $\Q$ depends implicitly on the matrix $\CMa_i$) which is non-negative and its expectation is given by 
$$\E\left(\widetilde{\Upsilon}_n^W(\CMa_i)\right) = \frac{1}{n}\sum_{i=1}^n\int \int \Q(t)d\N^i(t)\K(t,s)\Q(s)d\N^i(s)< \infty$$ 
where the inequality holds since the kernel $K$ is bounded and the projection matrix $\Q$ is bounded component-wise as well, so by Markov inequality $\widetilde{\Upsilon}_n^W(\CMa_i) = O_p(1)$.

Hence for any fixed number $M$ of independent copies of $\Upsilon_n^W(\CMa_i)$, the empirical quantile $\qwei{\beta}$ is finite for any $\beta\in (0,1)$, and so 
\begin{align*}
    \limsup_{n\to \infty} \Prob(\Upsilon_n(\CMa_i)\leq  \qwei{\widehat \beta_{\alpha}}) = 0.
\end{align*}
The previous result means that, asymptotically, we reject such hypothesis with probability tending to $1$. Since we are testing a finite number of hypothesis, the first part of \cref{thm7:eq2} follows from the union bound.

For the second equality of \cref{thm7:eq2}, consider the event $B = \bigcup_{i=1}^{b'} \{\Upsilon_n(\CMa_i)> \qwei{\widehat \beta_{\alpha}}\}$ which is the event that at least one true hypothesis is rejected. We proceed to verify the following holds: $\limsup_{n\to \infty}\limsup_{M\to \infty}\Prob(B)\leq \alpha$.

Define $\widehat \beta_{\alpha}'$ as
\begin{align}
    \widehat \beta_{\alpha}' = \sup\left\{\beta\in [0,1]: \frac{1}{M}\sum_{\ell=1}^{M} \ind\{{\widetilde{\Upsilon}}^{\ell}_n(\CMa_i)>\qwei{\beta} \text{ for at least one $i \in \{1,\ldots,b'\}$}\}\leq \alpha\right\} \label{eqn:conditionBeta}
\end{align}
Note that in the definition of $\widehat \beta_{\alpha}'$, the indicator function only considers the first $b'$ hypothesis (instead of all of them as in $\beta_{\alpha}$). For this reason we have $\widehat \beta_{\alpha}'\geq \widehat \beta_{\alpha}$

Define $B' =  \bigcup_{i=1}^{b'} \{\bUpsilon_n(C_i)>\tilde q_{1-\hat \beta_{\alpha}'}^M(C_i)\}$. Since all hypothesis from $1$ to $b'$ are true, then by \cref{thm7:eq1} of this theorem, we have
\begin{align*}
    \limsup_{n\to \infty}\limsup_{M\to \infty}\Prob(B')\leq \alpha.
\end{align*}

Finally, note that $\qwei{\widehat \beta_{\alpha}'}\leq \qwei{\widehat \beta_{\alpha}}$, and so $B\subseteq B'$, and thus
\begin{align*}
    \limsup_{n\to \infty}\limsup_{M\to \infty}\Prob(B)\leq \alpha.
\end{align*}
\end{proof}

\subsubsection{Proof of \Cref{thm:joinconvergenceDegeVStats,thm:multipleV-statsConvergence,lemma:PUtoPU,lemma:hatbetaconvergesup}}

We proceed to prove all the lemmas that featured in the proof of \Cref{thm:multiMatrixAlgorithmWorks}.

\begin{proof}[\textbf{Proof of \Cref{thm:joinconvergenceDegeVStats}}]
Let $Z = \frac{1}{n}\sum_{i=1}^n \sum_{j=1}^n \bGamma(X_i,X_j)$. By the hypotheses on $\Gamma$ we have that $\E(|\bGamma_k(X_i,X_i)|)<\infty$ and $\E(\bGamma_{k}(X_i,X_j)^2)<\infty$ and so by the standard $V$-statistic convergence theorem we have that
\begin{align*}
Z_n^k= \frac{1}{n}\sum_{i=1}^n \sum_{j=1}^n \bGamma_k(X_i,X_j)
\end{align*}
converges in distribution. Therefore, $Z_n^k$ is tight for each $k \in \{1,\ldots, d\}$, and thus $Z_n$ is a thigh sequence of random variables in $\R^d$.

By Prokhorov's theorem it exists a subsequence ${n_j}$ and a random variable $Z$ (taking values in $\R^d$) such that $Z_{n_j} \overset{\mathcal D}{\to} Z$.

We shall prove that the whole sequence $Z_n$ converges in distribution to $Z$, for that we use the Cramer-Wold theorem that states that if for every $a \in \mathcal R^d$ we have that 
$a^{\transpose} Z_n \overset{\mathcal D}{\to} a^{\transpose} Z$
if and only if $Z_n \overset{\mathcal D}{\to} Z$.

Let $a \in \R^d$, then 
\begin{align*}
a^{\transpose} Z_n = \frac{1}{n}\sum_{i=1}^n \sum_{j=1}^n \left(\sum_{k=1}^d a_k \Gamma_k(X_i,X_j)\right)
\end{align*}
Note that this is another degenerate $V$-statistic so it converges in distribution. Such limit is the same as the limit of the subsequence $a^{\transpose} Z_{n_j}$ which we know is $a^{\transpose } Z$ (by the continuous mapping theorem), concluding the result.

For the second part we have that for any $a\in \R^d$, the random variable
$$a^{\transpose} Z_n^W = \frac{1}{n}\sum_{i=1}^n \sum_{j=1}^n W_iW_j\left(\sum_{k=1}^d a_k \Gamma_k(X_i,X_j)\right)$$
is such that, given almost every sequence $X_1,\ldots,$, it converges to $a^{\transpose}Z$ (\cite[Theorem 3.1]{Dehling1994RandomQF}). Thus, by the Cramer-Wald theorem, conditioned on the sequence  $X_1,\ldots,$ we have $Z_n^W\to Z$.
\end{proof}

\begin{proof}[\textbf{Proof of \Cref{thm:multipleV-statsConvergence}}]
Recall that $\bUpsilon_n = (\Upsilon_n(\CMa_1),\ldots, \Upsilon_n(\CMa_b))$. By applying \Cref{prop:approxTestStat} and \cref{eqn:UpsilonApproxVstats1} to each $\Upsilon_n(\CMa_{\ell})$

\begin{align*}
\Upsilon_n(\CMa_j)= \frac{1}{n}\sum_{i=1}^n \sum_{j=1}^nJ_k((Z_i,C_i,x_i),(Z_j,C_j,x_j))+o_p(1)    
\end{align*}
 for each ${\ell} \in \{1,\ldots, b\}$. 

It was proved in \Cref{thm:asymptoticUpsilonRefi} that the $V$-statistics kernel $J_k$ satisfy the the conditions of \Cref{thm:joinconvergenceDegeVStats} (just set $\Gamma_k = J_k$ in the theorem), so we deduce that $\bUpsilon_n$ converges in distribution to a random vector $\bUpsilon$ that takes values in $\R^d$, proving the first statement of the lemma.

We now have to show that $\widetilde{\bUpsilon}_n$ converges in distribution to $\bUpsilon$. This is also a consequence of \Cref{thm:joinconvergenceDegeVStats}.  Indeed, by \cref{eqn:Wild} (of \Cref{thm:Xi}) and \Cref{prop:WBapprox} we have that 
\begin{align*}
\widetilde{\Upsilon}_n(\CMa_{\ell}) = n \sum_{j=1}^n W_i W_jJ_k'((T_i,\Delta_i,x_i), (T_j,\Delta_j,x_j))+o_p(1),
\end{align*}
so, conditionally on almost every sequence $D_1,D_2,\ldots$, we have that $\widetilde \bUpsilon_n$ converges to the  limit distribution $\bUpsilon$ (recall that $J'$ and $J$ have the same eigenvalues as shown in the proof of \Cref{thm:Xi}).
\end{proof}

\begin{proof}[\textbf{Proof of \Cref{lemma:PUtoPU}}]
We just prove the first limit as the second one is done exactly the same way. By writing $\widetilde P_n(\bUpsilon_n) =  P( \bUpsilon_n) + \left( \widetilde P_n( \bUpsilon_n)-P( \bUpsilon_n)\right)$ we just need to prove that $P( \bUpsilon_n)\overset{\mathcal D}{\to} P(\bUpsilon)$ and 
that $\left\|\left( \widetilde P_n( \bUpsilon)-P( \bUpsilon_n)\right)\right\|_1 \to 0$ almost surely as $n$ grows to infinity, and then the result follows from Slutsky's theorem.

The first results follows from the fact that each coordinate of $\bUpsilon$, say $\Upsilon(\CMa_i)$, has continuous distribution, thus $x\to P(x)$ is a continuous transformation in $\R^d$. Then since $ \bUpsilon_n \overset{\mathcal D}{\to} \bUpsilon$, the continuous mapping theorem yields  $P( \bUpsilon_n)\overset{\mathcal D}{\to} P(\bUpsilon)$. 

To verify that $\left\| \widetilde P_n( \bUpsilon_n)-P( \bUpsilon_n)\right\|_1 \to 0$ observe that
\begin{align}
  \left\|\left( \widetilde P_n( \bUpsilon_n)-P( \bUpsilon_n)\right)\right\|_1 \leq \sum_{i=1}^b \sup_{t\in \R} |\widetilde P_{n,i}(t)-P_i(t)|,
\end{align}
and since $\widetilde \bUpsilon_n \overset{\mathcal D}{\to} \bUpsilon$ (and in particular each coordinate converges in distribution), we have that $P_{n,i}(t)\to P_i(t)$ uniformly on $t\in \R$.

\end{proof}
\begin{proof}[\textbf{Proof of \Cref{lemma:hatbetaconvergesup} }]
Denote by $\Prob_D(\cdot)=\Prob(\cdot|D_1,\ldots,D_n)$. Then by the Reverse Fatou's lemma we have that
\begin{align}
    \limsup_{n\to \infty}\limsup_{M\to \infty} \Prob\left(\widehat\beta_\alpha>\beta_\alpha+\varepsilon\right) \leq  \limsup_{n\to \infty}\E\left(\limsup_{M\to \infty} \Prob_D\left(\widehat\beta_\alpha>\beta_\alpha+\varepsilon \right) \right)
\end{align}

By the definition of $\widehat{\beta}_\alpha$ given in \cref{eqn:conditionBeta1}, it holds
\begin{align*}
\left\{\widehat\beta_\alpha>\beta_\alpha+\varepsilon\right\}=\left\{\frac{1}{M}\sum_{\ell=1}^M\ind\left\{\bigcup_{i=1}^b\left\{\widetilde{\Upsilon}^{\ell}_n(\CMa_i)\geq \qwei{{\beta}_{\alpha}-\varepsilon}\right\}\right\}\leq \alpha\right\}.
\end{align*}

Then
\begin{align*}
    \limsup_{M\to\infty}\Prob_D\left(\widehat\beta_\alpha>\beta_\alpha+\varepsilon\right)&=    \limsup_{M\to\infty}\Prob_D\left(\frac{1}{M}\sum_{\ell=1}^M\ind\left\{\bigcup_{i=1}^b\left\{\widetilde{\Upsilon}^{\ell}_n(\CMa_i)\geq \qwei{{\beta}_{\alpha}-\varepsilon}\right\}\right\}\leq \alpha\right)\\
    &\leq\limsup_{M\to\infty}\Prob_D\left(\frac{1}{M}\sum_{\ell=1}^M\ind\left\{\bigcup_{i=1}^b\left\{\widetilde P_{n,i}^M(\widetilde{\bUpsilon}^{\ell}_n)\geq 1-{\beta}_{\alpha}-\varepsilon\right\}\right\}\leq \alpha\right)
\end{align*}
By the definition of the quantile function, observe that
\begin{align*}
   \left\{\widetilde P_{n,i}^M(\widetilde{\bUpsilon}^{\ell}_n)\geq 1-{\beta}_{\alpha}-\varepsilon\right\} &=\left\{\widetilde P_{n,i}^M(\widetilde{\bUpsilon}^{\ell}_n)-\widetilde P_{n,i}(\widetilde{\bUpsilon}^{\ell}_n)+\widetilde P_{n,i}(\widetilde{\bUpsilon}^{\ell}_n)\geq 1-{\beta}_{\alpha}-\varepsilon\right\}\\
   &\subseteq \left\{\left|\widetilde P_{n,i}^M(\widetilde{\bUpsilon}^{\ell}_n)-\widetilde P_{n,i}(\widetilde{\bUpsilon}^{\ell}_n)\right|+\widetilde P_{n,i}(\widetilde{\bUpsilon}^{\ell}_n)\geq 1-{\beta}_{\alpha}-\varepsilon\right\}.
\end{align*}
Now, consider the event
\begin{align*}
    A=\bigcap_{i=1}^b\left\{\left|\widetilde P_{n,i}^M(\widetilde{\bUpsilon}^{\ell}_n)-\widetilde P_{n,i}(\widetilde{\bUpsilon}^{\ell}_n)\right|\leq  \varepsilon\right\},
\end{align*}
and observe that
\begin{align}
&\Prob_D\left(\frac{1}{M}\sum_{\ell=1}^M\ind\left\{\bigcup_{i=1}^b\left\{\widetilde P_{n,i}^M(\widetilde{\bUpsilon}^{\ell}_n)\geq 1-{\beta}_{\alpha}-\varepsilon\right\}\right\}\leq \alpha\right)\nonumber\\
&\leq\Prob_D\left(\frac{1}{M}\sum_{\ell=1}^M\ind\left\{\bigcup_{i=1}^b\left\{\widetilde P_{n,i}^M(\widetilde{\bUpsilon}^{\ell}_n)\geq 1-{\beta}_{\alpha}-\varepsilon\right\}\right\}\leq \alpha,A\right)+\Prob_D\left(A^c\right)\nonumber\\
&\leq\Prob_D\left(\frac{1}{M}\sum_{\ell=1}^M\ind\left\{\bigcup_{i=1}^b\left\{\left|\widetilde P_{n,i}^M(\widetilde{\bUpsilon}^{\ell}_n)-\widetilde P_{n,i}(\widetilde{\bUpsilon}^{\ell}_n)\right|+\widetilde P_{n,i}(\widetilde{\bUpsilon}^{\ell}_n)\geq 1-{\beta}_{\alpha}-\varepsilon\right\}\right\}\leq \alpha,A\right)+\Prob_D\left(A^c\right)\nonumber\\
&\leq\underbrace{\Prob_D\left(\frac{1}{M}\sum_{\ell=1}^M\ind\left\{\bigcup_{i=1}^b\left\{\widetilde P_{n,i}(\widetilde{\bUpsilon}^{\ell}_n)\geq 1-{\beta}_{\alpha}-2\varepsilon\right\}\right\}\leq \alpha\right)}_{(1)}+\underbrace{\Prob_D\left(A^c\right)}_{(2)}.\label{eqn:random9v8j421}
\end{align}

Note that for $(2)$ we have
\begin{align*}
    \Prob_D(A^c)&=\Prob_D\left(\bigcup_{i=1}^b\left\{\left|\widetilde P_{n,i}^M(\widetilde{\bUpsilon}^{\ell}_n)-\widetilde P_{n,i}(\widetilde{\bUpsilon}^{\ell}_n)\right|>  \varepsilon\right\}\right)\leq \sum_{i=1}^b\Prob_D\left(\left|\widetilde P_{n,i}^M(\widetilde{\bUpsilon}^{\ell}_n)-\widetilde P_{n,i}(\widetilde{\bUpsilon}^{\ell}_n)\right|>\varepsilon\right)\\
    \text{(by Markov's inequality)}&\leq \frac{1}{\varepsilon}\sum_{i=1}^b\E_D\left(\left|\widetilde P_{n,i}^M(\widetilde{\bUpsilon}^{\ell}_n)-\widetilde P_{n,i}(\widetilde{\bUpsilon}^{\ell}_n)\right|\right)\leq \frac{1}{\varepsilon}\sum_{i=1}^b\E_D\left(\sup_{x\in\R}\left|\widetilde P_{n,i}^M(x)-\widetilde P_{n,i}(x)\right|\right).
\end{align*}
Then, by the Reverse Fatou's lemma, and the Glivenko–Cantelli theorem, we have
\begin{align*}
\limsup_{M\to\infty}\Prob_D(A^c)&\leq \frac{1}{\varepsilon}\sum_{i=1}^b\limsup_{M\to\infty}\E_D\left(\sup_{x\in\R}\left|\widetilde P_{n,i}^M(x)-\widetilde P_{n,i}(x)\right|\right)\\
&\leq \frac{1}{\varepsilon}\sum_{i=1}^b\E_D\left(\limsup_{M\to\infty}\sup_{x\in\R}\left|\widetilde P_{n,i}^M(x)-\widetilde P_{n,i}(x)\right|\right)\to0,
\end{align*}

We continue with the term $(1)$. Start by defining the set
\begin{align*}
    \mathcal{S}_{\delta}=\{\mathbf{x}\in\R^b:x_i\geq 1-\beta_\alpha-\delta,\text{ for some }i\in[b]\}. 
\end{align*}
Then we have 
\begin{align}
\limsup_{M\to\infty} \ (1)&=\limsup_{M\to\infty}\Prob_D\left(\frac{1}{M}\sum_{\ell=1}^M\ind\left\{\bigcup_{i=1}^b\left\{\widetilde P_{n,i}(\widetilde{\bUpsilon}^{\ell}_n)\geq 1-{\beta}_{\alpha}-2\varepsilon\right\}\right\}\leq \alpha\right)\nonumber\\
&=\limsup_{M\to\infty}\Prob_D\left(\frac{1}{M}\sum_{\ell=1}^M \ind\left\{\widetilde P_n(\widetilde\bUpsilon_n^{\ell})\in \mathcal{S}_{2\varepsilon}\right\}\leq \alpha\right)\leq \Prob_D\left(\Prob_{D}\left(\widetilde P_n(\widetilde \bUpsilon_n) \in \mathcal S_{2\varepsilon}\right)\leq \alpha\right)\nonumber
\end{align}
The last step holds by Portmanteau theorem since the set $\{x\in \R: x\leq \alpha\}$ is closed (and clearly the empirical average converges by the law of large numbers). To finish our proof we just need to prove that
\begin{align}\label{eqn:randomgoal120}
    \limsup_{n\to \infty}  \Prob\left(\Prob_{D}\left(\widetilde P_n(\widetilde \bUpsilon_n) \in \mathcal S_{2\varepsilon}\right)\leq \alpha\right) = 0.
\end{align}
The previous equation holds immediately if we verify that with probability $1$ it exists $N$ (depending on $(D_i)_{i=1}^{\infty}$) such that for all $n\geq N$, we have
$\Prob_{D}\left(\widetilde P_n(\widetilde \bUpsilon_n) \in \mathcal S_{2\varepsilon}\right)>\alpha$. For such a task recall that \Cref{lemma:PUtoPU} states that for almost every sequence of data points $(D_i)_{i=1}^{\infty}$, $\widetilde P_n(\widetilde\bUpsilon_n)$ converges in distribution to $P(\bUpsilon)$,  that is,
\begin{align*}
    \limsup_{n\to\infty}\Prob_D\left(\widetilde P_n(\widetilde\bUpsilon_n)\in \mathcal{S}_{2\varepsilon}\right)&=\Prob\left(P(\bUpsilon)\in \mathcal{S}_{2\varepsilon}\right)= \Prob\left(P(\bUpsilon)\in \mathcal{S}_{0}\right)+ \Prob\left(P(\bUpsilon)\in \mathcal{S}_{2\varepsilon}\setminus \mathcal S_0\right)
\end{align*}
the last equality holds because $\mathcal S_0 \subseteq \mathcal S_{\delta}$ for any $\delta >0$ by the definition of $\mathcal S_{\delta}$.

Now, note that $\Prob\left(P(\bUpsilon)\in \mathcal{S}_{0}\right) = \alpha$ by definition of $\beta_{\alpha}$. Moreover, note that
\begin{align*}
  \mathcal{S}_{2\varepsilon}\setminus \mathcal S_0 = \bigcup_{i=1}^b \left\{ 1-\beta_{\alpha}-2\varepsilon \leq P_i(\bUpsilon) < 1-\beta_{\alpha} \right\}\supseteq \left\{ 1-\beta_{\alpha}-2\varepsilon \leq P_1(\bUpsilon) < 1-\beta_{\alpha} \right\}  
\end{align*}
so, taking probability we deduce that $
\Prob(\mathcal{S}_{2\varepsilon}\setminus \mathcal S_0)\geq 2\varepsilon$
since $P_1(\bUpsilon)$ has uniform$(0,1)$ distribution. 

Therefore, it exists $N$ (depending on all the data points $(D_i)_{i=1}^{\infty}$) such that for all $n\geq N$ it holds
\begin{align*}
    \Prob_D\left(\widetilde P(\widetilde\bUpsilon_n)\in \mathcal{S}_{2\varepsilon}\right) > \alpha + \Prob\left(P(\bUpsilon)\in \mathcal{S}_{2\varepsilon}\setminus \mathcal S_0\right)\geq \alpha+2\varepsilon>\alpha,
\end{align*}
concluding that \cref{eqn:randomgoal120} holds true since $\varepsilon>0$.

\end{proof}

\section{Extra Results for Experiments on  Simulated Data}\label{App:extraexperiments}

\subsection{Censoring}\label{App:censoringExplanation}
For data $A$ and $B$, we generate censoring using an exponential distribution with rate parameter $\gamma$, independent of the group/factors. We consider three types of censoring regimes: low, medium and high, which arise by varying the rate parameter $\gamma$.  \Cref{Table:cen_A}, and \Cref{Table:cen_B} record the censoring percentages, and parameter $\gamma$ chosen for data set A and B, respectively.

\begin{table}[H]
\centering
\begin{tabular}{|l|ccc|}
\hline
&\multicolumn{3}{c|}{Censoring}\\
&Low&Medium&High\\
Hazard&$\gamma=0.1$&$\gamma=0.5$&$\gamma=2$\\
\hline
$\lambda(x)=1$&9\%&33\% &67\%\\
$\lambda(x)=2$&5\%&20\% &50\%\\
\hline
\end{tabular}
\caption{Censoring percentages and rate parameters $\gamma$ for the Data $A$.}\label{Table:cen_A}
\end{table}

\begin{table}[H]
\centering
\begin{tabular}{|l|ccc|}
\hline
&\multicolumn{3}{c|}{Censoring }\\
&Low&Medium&High\\
Hazard&$\gamma=0.1$&$\gamma=0.3$&$\gamma=0.6$\\
\hline
$\lambda(x)=\cos(2x)^2$&16\%&37\% &49\%\\
$\lambda(x)=\sin(2x)^2$&17\%&38\% &61\%\\
$\lambda(x)=1$&9\%&23\% &38\%\\
\hline
\end{tabular}
\caption{Censoring percentages and rate parameters $\gamma$ for the Data $B$.}\label{Table:cen_B}
\end{table}
For data set $C$, the hazard functions that generate censoring for the group $(i,j)$ are given by
\begin{align}\label{eqn:dependentHazardC}
h^c_{1j}(t) = r_c, \quad h^c_{2j}(t) = \frac{1}{2}r_c^{1/2}t^{-1/2}, \quad h^c_{3j}(t) =\frac{3}{2}r_c^{3/2}t^{1/2}
\end{align}
which clearly depends on the factor $\mathcal I$ but not on $\mathcal J$. In the above description the three hazards correspond to: the hazard function of an exponential random variable of rate $r_c$, the hazard of a Weibull with shape $1/2$ and rate $r_c$, and the hazard of a Weibull with shape $3/2$ and rate $r_c$, respectively. The parameter $r_c$ is used to increase/decrease the amount of censoring (the larger $r_c$, the more censoring we get). We consider three parameters, leading to three regimes of censoring: $r_c = 0.1$ for low, $r_c=0.5$ for medium, and $r_c=1$ for high percentage of censored observations. \Cref{Table:cen_D} shows the censoring percentages which are obtained  under the null hypothesis, i.e. for $\theta=0$. 

\begin{table}[H]
\centering

\begin{tabular}{|ll|ccc|}
\hline
\multicolumn{5}{|c|}{Low Censoring $(r_c=0.1)$} \\
\hline
\multicolumn{2}{|c|}{}                                    & \multicolumn{3}{c|}{Factor $\mathcal I$} \\
\multicolumn{2}{|c|}{$\theta=0$}                                         & 1            & 2           & 3           \\ \hline
\multirow{3}{*}{\rotatebox{90}{\footnotesize Factor $\mathcal{J}$}} & 1 &   9\%           &      26\%      &            14\% \\
                                                                    & 2 &     6\%          &    21\% &    5\%       \\
                                                                    & 3 &      5\%                   &    18\%           &            3\% \\ \hline
\end{tabular}
\begin{tabular}{|ll|ccc|}
\hline
\multicolumn{5}{|c|}{Medium Censoring $(r_c=0.5)$} \\
\hline
\multicolumn{2}{|c|}{}                                    & \multicolumn{3}{c|}{Factor $\mathcal I$} \\
\multicolumn{2}{|c|}{$\theta=0$}                                         & 1            & 2           & 3           \\ \hline
\multirow{3}{*}{\rotatebox{90}{\footnotesize Factor $\mathcal{J}$}} & 1 &  $34\%$ & $47\%$ & $56\%$ \\ 
 & 2 &  $26\%$ & $39\%$ & $31\%$ \\
  & 3 &    $21\%$ & $35\%$ & $20\%$ \\ \hline
\end{tabular}
\begin{tabular}{|ll|ccc|}
\hline\multicolumn{5}{|c|}{High Censoring $(r_c=1)$} \\
\hline
\multicolumn{2}{|c|}{
}                                    & \multicolumn{3}{c|}{Factor $\mathcal I$} \\
\multicolumn{2}{|c|}{$\theta=0$}                                         & 1            & 2           & 3           \\ \hline
\multirow{3}{*}{\rotatebox{90}{\footnotesize Factor $\mathcal{J}$}} & 1 &  $51\%$ & $58\%$ & $73\%$ \\
                                                                    & 2 &     $42\%$ & $50\%$ & $51\%$ \\ 
                                                                    & 3 &     $35\%$ & $44\%$ & $38\%$ \\ \hline
\end{tabular}

\caption{Censoring percentages for the data set $C$ under the null hypothesis (i.e., when $\theta = 0$). 
Censoring percentages are the same under the alternative, except for group $(1,2)$ where $\theta>0$ decreases the amount of censoring and $\theta<0$ increases it.} \label{Table:cen_D}
\end{table}

\subsection{Experiments under the Null Hypothesis}\label{appendix:ExperimentsNull}

In this section we show results for experiments under the null hypothesis that were postponed in \Cref{sec:exp_results_H0}. Recall that in the context of data sets $A$ and $B$, we consider the following global null hypothesis $H_0 = \Lambda_1(t) = \Lambda_2(t)$ for all $t\geq 0$, that is, the global null hypothesis states there is no main effect on factor $\mathcal I$. Notice in this case the contrast matrix has only one row so the Multiple Contrast test reduces to a standard test. For data set $C$, the global null hypothesis is that there is no interaction between factors $\mathcal I$ and $\mathcal J$ which holds when the parameter $\theta$ is equal to 0. In this case the contrast matrix has $9$ rows and 9 columns, and so 9 equations are being tested. The Multiple Contrast test considers all 9 equations as local hypotheses, as explained in \Cref{sec:experimentalSetup}. \Cref{table:nullBalanced} and \Cref{table:nullUnbalanced} show the results of our experiments under the null hypothesis with groups of balanced and unbalanced sizes. In our experiments the Type-I error is fixed to value $\alpha = 0.05$.

\begin{landscape}
\begin{table}
\begin{tiny}
\begin{tabular}{|l|l|llllllll|llllllll|llllllll|}
\hline
& & \multicolumn{8}{|c|}{Low Censoring}& \multicolumn{8}{|c|}{Medium Censoring}& \multicolumn{8}{|c|}{Large Censoring}\\
\hline
&SS&K1&K2&K3&K4&K5&P2&P4&M&K1&K2&K3&K4&K5&P2&P4&M&K1&K2&K3&K4&K5&P2&P4&M\\
\hline 
\multirow{9}{*}{\rotatebox{90}{Data A}}&10&0.04&0.035&0.032&0.029&0.033&0.044&0.046&0.042&0.035&0.031&0.033&0.042&0.049&$\boldsymbol{0.055}$&$\boldsymbol{0.051}$&0.042&0.035&0.033&0.041&0.043&0.039&$\boldsymbol{0.051}$&$\boldsymbol{0.053}$&0.045\\
&15&0.039&0.037&0.044&0.045&0.044&0.045&$\boldsymbol{0.052}$&0.044&0.038&0.041&0.042&0.045&0.038&0.047&0.043&0.047&0.033&0.044&0.046&0.049&$\boldsymbol{0.051}$&0.043&$\boldsymbol{0.063}$&$\boldsymbol{0.055}$\\
&20&0.025&0.033&0.049&0.048&0.043&0.046&$\boldsymbol{0.051}$&0.047&0.039&0.035&0.034&0.036&0.04&$\boldsymbol{0.054}$&0.046&0.043&0.029&0.033&0.045&0.049&0.045&0.042&$\boldsymbol{0.051}$&0.042\\
&25&0.033&0.034&0.036&0.041&0.04&0.04&0.046&0.046&0.039&0.039&0.041&0.038&0.047&$\boldsymbol{0.051}$&0.039&0.048&0.044&0.042&0.048&0.047&0.046&0.048&$\boldsymbol{0.052}$&0.047\\
&30&0.035&0.042&0.046&0.042&0.04&0.044&0.047&$\boldsymbol{0.053}$&0.049&$\boldsymbol{0.054}$&0.043&0.044&0.044&$\boldsymbol{0.055}$&$\boldsymbol{0.054}$&$\boldsymbol{0.057}$&$\boldsymbol{0.051}$&0.047&0.047&$\boldsymbol{0.052}$&0.05&0.043&0.044&0.047\\
&35&0.036&0.033&0.048&$\boldsymbol{0.052}$&0.047&0.034&0.045&0.049&0.047&0.042&0.04&0.042&0.037&$\boldsymbol{0.058}$&0.046&0.046&0.043&0.044&0.046&0.048&$\boldsymbol{0.052}$&0.05&0.046&0.049\\
&40&0.039&0.042&0.047&0.046&0.05&0.044&0.05&$\boldsymbol{0.052}$&0.046&0.04&0.041&0.041&0.038&$\boldsymbol{0.057}$&0.045&0.048&0.037&0.036&0.036&0.034&0.032&0.04&0.033&0.041\\
&45&0.042&0.046&0.047&0.039&0.048&0.044&0.05&$\boldsymbol{0.054}$&0.046&0.044&$\boldsymbol{0.053}$&$\boldsymbol{0.057}$&$\boldsymbol{0.051}$&$\boldsymbol{0.052}$&$\boldsymbol{0.057}$&$\boldsymbol{0.052}$&0.039&0.034&0.034&0.036&0.04&0.043&0.043&0.036\\
&50&0.04&0.029&0.039&0.044&0.035&0.04&0.04&0.038&0.05&$\boldsymbol{0.051}$&0.044&0.043&0.04&0.049&0.037&$\boldsymbol{0.056}$&0.046&0.043&0.048&0.044&0.047&0.046&$\boldsymbol{0.052}$&0.048\\
\hline
&SS&K1&K2&K3&K4&K5&P2&P4&M&K1&K2&K3&K4&K5&P2&P4&M&K1&K2&K3&K4&K5&P2&P4&M\\
\hline
\multirow{9}{*}{\rotatebox{90}{Data B}}&10&0.044&0.043&0.04&0.04&0.042&$\boldsymbol{0.055}$&0.043&0.049&0.032&0.036&0.034&0.039&0.044&$\boldsymbol{0.057}$&$\boldsymbol{0.051}$&0.047&0.041&0.048&$\boldsymbol{0.051}$&$\boldsymbol{0.055}$&$\boldsymbol{0.053}$&$\boldsymbol{0.057}$&$\boldsymbol{0.052}$&0.047\\
&15&0.043&0.04&0.043&0.04&0.049&$\boldsymbol{0.053}$&0.048&0.041&0.033&0.031&0.04&0.037&0.043&0.038&0.045&0.039&0.044&0.039&0.043&0.04&0.038&0.043&0.042&0.041\\
&20&0.048&0.047&0.044&0.044&0.048&0.046&0.049&0.05&0.036&0.045&0.045&0.046&$\boldsymbol{0.051}$&0.042&$\boldsymbol{0.051}$&0.048&0.047&0.046&0.043&0.042&0.046&0.042&0.049&$\boldsymbol{0.051}$\\
&25&0.044&0.045&0.047&0.044&0.045&$\boldsymbol{0.065}$&0.05&$\boldsymbol{0.058}$&0.041&0.044&$\boldsymbol{0.054}$&$\boldsymbol{0.051}$&$\boldsymbol{0.052}$&$\boldsymbol{0.051}$&0.044&0.049&0.048&0.05&0.049&$\boldsymbol{0.058}$&$\boldsymbol{0.058}$&$\boldsymbol{0.056}$&0.049&$\boldsymbol{0.06}$\\
&30&0.034&0.038&0.038&0.033&0.036&0.042&0.031&0.038&0.045&$\boldsymbol{0.051}$&$\boldsymbol{0.056}$&0.047&0.046&$\boldsymbol{0.054}$&0.045&$\boldsymbol{0.051}$&0.037&0.03&0.04&0.044&0.042&0.036&0.043&0.033\\
&35&0.041&0.041&0.037&0.037&0.039&$\boldsymbol{0.055}$&0.044&0.039&0.048&0.042&0.045&0.042&0.042&0.042&$\boldsymbol{0.052}$&0.048&$\boldsymbol{0.053}$&$\boldsymbol{0.055}$&0.049&0.046&0.04&$\boldsymbol{0.053}$&0.046&$\boldsymbol{0.055}$\\
&40&0.048&0.041&0.046&0.05&0.049&$\boldsymbol{0.054}$&0.043&0.05&0.037&0.046&0.043&0.043&0.037&0.042&0.049&0.048&0.046&0.049&0.046&0.05&0.05&$\boldsymbol{0.053}$&0.046&0.048\\
&45&0.04&0.035&0.042&0.043&0.05&0.049&0.049&0.044&0.036&0.04&0.045&0.047&0.045&0.037&0.039&$\boldsymbol{0.051}$&0.036&0.034&0.032&0.036&0.035&0.038&0.035&0.036\\
&50&0.046&0.048&0.05&0.047&0.045&$\boldsymbol{0.06}$&0.044&$\boldsymbol{0.052}$&0.041&0.044&0.05&0.044&0.039&0.044&0.039&0.045&$\boldsymbol{0.056}$&$\boldsymbol{0.058}$&0.044&0.048&0.045&$\boldsymbol{0.057}$&0.043&0.05\\
\hline
 &SS&K1&K2&K3&K4&K5&P2&P4&M&K1&K2&K3&K4&K5&P2&P4&M&K1&K2&K3&K4&K5&P2&P4&M\\
\hline
\multirow{9}{*}{\rotatebox{90}{Data C}}&10&0.045&$\boldsymbol{0.054}$&$\boldsymbol{0.056}$&0.049&$\boldsymbol{0.051}$&$\boldsymbol{0.06}$&$\boldsymbol{0.055}$&0.046&0.05&0.046&0.048&0.041&0.036&$\boldsymbol{0.056}$&0.05&0.045&0.035&0.049&0.049&0.041&0.042&$\boldsymbol{0.061}$&0.047&0.038\\
&15&$\boldsymbol{0.051}$&0.049&0.045&0.042&0.046&$\boldsymbol{0.055}$&$\boldsymbol{0.06}$&0.043&0.039&0.045&0.038&0.038&0.033&0.05&$\boldsymbol{0.054}$&0.033&0.038&$\boldsymbol{0.051}$&0.05&$\boldsymbol{0.051}$&$\boldsymbol{0.051}$&$\boldsymbol{0.068}$&$\boldsymbol{0.068}$&0.043\\
&20&0.044&0.041&0.044&$\boldsymbol{0.057}$&$\boldsymbol{0.052}$&$\boldsymbol{0.06}$&$\boldsymbol{0.061}$&0.048&$\boldsymbol{0.059}$&$\boldsymbol{0.053}$&$\boldsymbol{0.051}$&$\boldsymbol{0.056}$&$\boldsymbol{0.056}$&$\boldsymbol{0.057}$&$\boldsymbol{0.055}$&0.04&0.04&0.041&0.042&0.04&0.04&0.039&$\boldsymbol{0.052}$&0.039\\
&25&$\boldsymbol{0.055}$&0.05&$\boldsymbol{0.053}$&0.05&$\boldsymbol{0.051}$&0.05&0.05&0.046&$\boldsymbol{0.051}$&$\boldsymbol{0.056}$&$\boldsymbol{0.055}$&0.05&0.045&$\boldsymbol{0.053}$&$\boldsymbol{0.063}$&0.043&0.046&0.044&0.05&0.041&0.038&$\boldsymbol{0.058}$&$\boldsymbol{0.057}$&0.037\\
&30&0.037&0.042&0.04&0.05&0.05&0.045&$\boldsymbol{0.056}$&0.043&0.041&0.043&0.046&0.045&0.046&0.043&0.049&0.036&0.042&0.046&0.045&0.038&0.034&0.045&0.047&0.048\\
&35&0.038&0.05&$\boldsymbol{0.052}$&$\boldsymbol{0.052}$&$\boldsymbol{0.056}$&$\boldsymbol{0.06}$&$\boldsymbol{0.057}$&$\boldsymbol{0.056}$&0.044&0.043&0.042&0.043&0.049&$\boldsymbol{0.055}$&$\boldsymbol{0.063}$&0.05&0.043&0.047&$\boldsymbol{0.054}$&$\boldsymbol{0.052}$&$\boldsymbol{0.051}$&$\boldsymbol{0.059}$&0.049&0.04\\
&40&$\boldsymbol{0.062}$&$\boldsymbol{0.061}$&$\boldsymbol{0.057}$&0.05&0.05&$\boldsymbol{0.065}$&$\boldsymbol{0.053}$&$\boldsymbol{0.053}$&0.045&$\boldsymbol{0.051}$&0.05&$\boldsymbol{0.052}$&$\boldsymbol{0.052}$&$\boldsymbol{0.051}$&$\boldsymbol{0.064}$&0.041&0.033&0.04&0.041&0.04&0.046&0.038&0.04&0.032\\
&45&$\boldsymbol{0.053}$&0.045&0.045&0.047&0.043&0.043&0.047&0.036&0.047&0.048&0.047&0.046&0.047&$\boldsymbol{0.063}$&$\boldsymbol{0.064}$&0.046&0.047&0.048&0.045&0.044&$\boldsymbol{0.052}$&0.041&0.033&0.043\\
&50&$\boldsymbol{0.059}$&0.047&0.045&0.049&$\boldsymbol{0.051}$&$\boldsymbol{0.06}$&$\boldsymbol{0.053}$&$\boldsymbol{0.054}$&$\boldsymbol{0.061}$&$\boldsymbol{0.06}$&$\boldsymbol{0.055}$&$\boldsymbol{0.059}$&$\boldsymbol{0.059}$&$\boldsymbol{0.06}$&$\boldsymbol{0.054}$&$\boldsymbol{0.056}$&0.041&0.044&0.042&0.05&0.048&$\boldsymbol{0.054}$&0.05&0.034\\
\hline
\end{tabular}
\end{tiny}
\caption{Results for data setting $A$,$B$ and $C$ under the null hypothesis with balanced groups size. The type-I error $\alpha$ is fixed at 0.05. Values in boldface indicate that the power is greater than $\alpha$.}
\label{table:nullBalanced}
\end{table}
\end{landscape}

\begin{landscape}
\begin{table}
\begin{tiny}
\begin{tabular}{|l|l|llllllll|llllllll|llllllll|}
\hline
& & \multicolumn{8}{|c|}{Low Censoring}& \multicolumn{8}{|c|}{Medium Censoring}& \multicolumn{8}{|c|}{Large Censoring}\\
\hline
&MF&K1&K2&K3&K4&K5&P2&P4&M&K1&K2&K3&K4&K5&P2&P4&M&K1&K2&K3&K4&K5&P2&P4&M\\ 
\hline
\multirow{21}{*}{\rotatebox{90}{Data A}}&1&0.033&0.03&0.029&0.033&0.037&0.05&$\boldsymbol{0.054}$&0.039&0.039&0.034&0.038&0.036&0.04&0.047&0.046&0.043&0.034&0.034&0.04&0.036&0.039&0.05&$\boldsymbol{0.063}$&0.041\\
&1.1&0.03&0.026&0.034&0.04&0.046&0.039&0.048&0.035&0.027&0.032&0.037&0.042&0.041&$\boldsymbol{0.052}$&$\boldsymbol{0.059}$&0.038&0.043&0.043&0.038&0.04&0.041&0.045&0.046&0.04\\
&1.2&0.037&0.03&0.02&0.028&0.035&0.042&0.042&0.039&0.032&0.026&0.036&0.034&0.037&0.033&0.039&0.033&0.036&0.028&0.031&0.034&0.031&0.047&0.047&0.039\\
&1.3&0.029&0.024&0.034&0.042&0.039&0.05&0.045&0.035&0.038&0.036&0.033&0.039&0.044&0.049&0.041&0.038&0.039&0.033&0.041&0.043&0.045&$\boldsymbol{0.052}$&$\boldsymbol{0.066}$&0.042\\
&1.4&0.031&0.032&0.035&0.042&0.045&0.05&0.044&0.042&0.035&0.035&0.038&0.037&0.037&$\boldsymbol{0.053}$&$\boldsymbol{0.059}$&0.046&0.031&0.03&0.037&0.041&0.045&0.039&0.05&0.036\\
&1.5&0.037&0.034&0.037&0.037&0.031&0.048&$\boldsymbol{0.066}$&0.039&0.033&0.034&0.034&0.035&0.039&0.041&$\boldsymbol{0.052}$&0.04&0.029&0.033&0.035&0.037&0.034&0.042&$\boldsymbol{0.054}$&0.04\\
&1.6&0.033&0.031&0.035&0.036&0.03&$\boldsymbol{0.052}$&0.04&0.036&0.041&0.038&0.041&0.04&0.043&0.039&0.048&0.042&0.042&0.039&0.036&0.037&0.047&0.04&0.049&0.045\\
&1.7&0.035&0.038&0.041&0.041&0.04&0.046&0.049&0.042&0.027&0.032&0.033&0.03&0.034&0.041&0.045&0.045&0.041&0.038&0.035&0.031&0.037&0.041&$\boldsymbol{0.059}$&0.047\\
&1.8&0.029&0.028&0.027&0.031&0.038&0.034&0.05&0.03&0.043&0.041&0.035&0.033&0.039&0.044&0.038&0.038&0.035&0.039&0.047&0.044&0.041&$\boldsymbol{0.051}$&$\boldsymbol{0.062}$&0.047\\
&1.9&0.041&0.047&0.043&0.045&0.039&$\boldsymbol{0.055}$&$\boldsymbol{0.057}$&$\boldsymbol{0.051}$&0.032&0.036&0.041&0.04&0.042&0.048&$\boldsymbol{0.056}$&0.044&0.031&0.032&0.036&0.042&0.045&0.044&0.046&0.042\\
&2&0.025&0.029&0.04&0.04&0.039&$\boldsymbol{0.051}$&0.042&0.04&0.046&0.036&0.046&0.047&0.05&0.04&0.042&0.046&0.045&0.036&0.036&0.031&0.034&0.04&0.046&0.046\\
&2.1&0.04&0.045&0.045&0.047&0.041&$\boldsymbol{0.051}$&$\boldsymbol{0.051}$&$\boldsymbol{0.055}$&0.038&0.029&0.037&0.042&0.045&0.036&0.04&0.035&0.035&0.036&0.04&0.036&0.038&0.044&0.042&0.041\\
&2.2&0.039&0.036&0.034&0.042&0.038&0.04&0.047&0.042&0.041&0.041&0.042&0.046&$\boldsymbol{0.052}$&0.047&0.039&0.043&0.042&0.039&0.039&0.038&0.036&0.048&0.049&0.042\\
&2.3&0.048&$\boldsymbol{0.054}$&0.037&0.038&0.032&$\boldsymbol{0.051}$&0.047&$\boldsymbol{0.056}$&0.036&0.031&0.041&0.044&0.04&0.041&0.049&0.038&0.04&0.037&0.049&0.05&0.05&0.05&0.047&0.043\\
&2.4&0.041&0.033&0.038&0.042&0.039&0.045&$\boldsymbol{0.056}$&0.046&0.04&0.044&0.05&$\boldsymbol{0.052}$&0.044&$\boldsymbol{0.051}$&$\boldsymbol{0.06}$&0.046&0.048&0.043&0.047&0.045&0.043&$\boldsymbol{0.057}$&$\boldsymbol{0.053}$&$\boldsymbol{0.051}$\\
&2.5&0.047&0.042&0.039&0.043&0.045&0.046&0.043&0.04&0.032&0.028&0.034&0.039&0.033&0.032&0.031&0.031&0.036&0.043&0.048&0.047&0.041&0.047&0.042&0.046\\
&2.6&$\boldsymbol{0.051}$&$\boldsymbol{0.06}$&$\boldsymbol{0.051}$&$\boldsymbol{0.053}$&$\boldsymbol{0.055}$&$\boldsymbol{0.062}$&$\boldsymbol{0.062}$&$\boldsymbol{0.07}$&0.04&0.038&0.034&0.031&0.035&0.034&0.04&0.04&0.04&0.045&0.045&0.046&0.049&0.045&0.049&0.042\\
&2.7&0.043&0.042&0.033&0.033&0.032&0.043&0.04&0.038&0.033&0.042&0.04&0.04&0.041&0.042&0.047&0.044&0.023&0.025&0.03&0.025&0.029&0.038&0.028&0.029\\
&2.8&0.041&0.035&0.04&0.04&0.041&$\boldsymbol{0.051}$&0.05&0.049&0.043&0.044&0.048&0.043&0.042&$\boldsymbol{0.056}$&$\boldsymbol{0.054}$&$\boldsymbol{0.055}$&0.036&0.033&0.033&0.031&0.031&0.038&0.043&0.038\\
&2.9&0.036&0.039&0.035&0.039&0.044&0.043&0.042&0.037&0.037&0.03&0.032&0.039&0.044&0.034&0.046&0.033&0.039&0.028&0.029&0.026&0.032&0.04&$\boldsymbol{0.055}$&0.036\\
&3&$\boldsymbol{0.056}$&0.049&0.048&0.044&0.044&0.047&0.039&$\boldsymbol{0.054}$&$\boldsymbol{0.051}$&0.04&0.041&0.042&0.042&0.038&0.035&0.041&0.029&0.029&0.037&0.035&0.035&0.042&$\boldsymbol{0.051}$&0.033\\
\hline
&SS&K1&K2&K3&K4&K5&P2&P4&M&K1&K2&K3&K4&K5&P2&P4&M&K1&K2&K3&K4&K5&P2&P4&M\\
\hline
\multirow{21}{*}{\rotatebox{90}{Data B}}&1&0.042&0.035&0.033&0.036&0.038&$\boldsymbol{0.051}$&0.048&0.047&0.033&0.028&0.03&0.034&0.037&0.04&0.04&0.031&0.042&0.032&0.036&0.038&0.044&$\boldsymbol{0.053}$&0.046&0.041\\
&1.1&0.032&0.033&0.038&0.04&0.041&0.04&0.036&0.034&0.026&0.021&0.029&0.032&0.033&0.045&0.033&0.029&0.031&0.039&0.044&0.049&0.043&$\boldsymbol{0.052}$&$\boldsymbol{0.059}$&0.048\\
&1.2&0.034&0.034&0.033&0.035&0.038&0.04&0.039&0.044&0.029&0.029&0.023&0.027&0.035&0.045&$\boldsymbol{0.053}$&0.03&0.038&0.033&0.034&0.036&0.045&0.042&0.045&0.039\\
&1.3&0.026&0.027&0.031&0.039&0.04&0.04&$\boldsymbol{0.055}$&0.04&0.042&0.035&0.035&0.036&0.034&$\boldsymbol{0.051}$&$\boldsymbol{0.054}$&0.049&0.046&0.039&0.038&0.033&0.039&0.044&$\boldsymbol{0.055}$&0.045\\
&1.4&0.035&0.037&0.043&0.037&0.042&$\boldsymbol{0.052}$&$\boldsymbol{0.051}$&$\boldsymbol{0.053}$&0.032&0.026&0.031&0.03&0.03&0.033&0.045&0.025&0.038&0.034&0.045&0.037&0.037&0.045&$\boldsymbol{0.052}$&0.041\\
&1.5&0.032&0.031&0.034&0.035&0.04&0.044&0.047&0.038&0.038&0.038&0.042&0.038&0.038&0.044&0.04&0.044&0.033&0.031&0.038&0.038&0.043&0.047&$\boldsymbol{0.055}$&0.035\\
&1.6&0.036&0.033&0.039&0.036&0.036&0.043&$\boldsymbol{0.051}$&0.039&0.035&0.034&0.034&0.034&0.043&$\boldsymbol{0.051}$&0.044&0.042&0.043&0.039&0.038&0.042&0.042&0.046&0.049&0.034\\
&1.7&0.029&0.033&0.04&0.042&0.04&0.049&0.05&0.048&0.035&0.03&0.036&0.032&0.039&0.04&0.041&0.04&0.035&0.031&0.037&0.036&0.034&0.044&0.047&0.04\\
&1.8&0.043&0.046&0.047&0.044&0.046&0.05&0.042&$\boldsymbol{0.051}$&0.045&0.033&0.04&0.036&0.043&$\boldsymbol{0.051}$&0.041&0.043&0.039&0.04&0.037&0.039&0.034&$\boldsymbol{0.061}$&$\boldsymbol{0.056}$&$\boldsymbol{0.051}$\\
&1.9&0.032&0.028&0.034&0.036&0.038&0.033&0.04&0.034&0.036&0.043&0.042&0.043&0.039&0.043&0.047&0.044&0.048&0.042&0.043&0.042&0.036&$\boldsymbol{0.055}$&0.047&$\boldsymbol{0.055}$\\
&2&$\boldsymbol{0.051}$&0.043&0.042&0.043&$\boldsymbol{0.053}$&$\boldsymbol{0.051}$&0.05&$\boldsymbol{0.052}$&0.042&0.049&$\boldsymbol{0.051}$&$\boldsymbol{0.054}$&$\boldsymbol{0.064}$&$\boldsymbol{0.057}$&$\boldsymbol{0.064}$&$\boldsymbol{0.057}$&0.038&0.042&0.043&0.044&0.043&$\boldsymbol{0.057}$&$\boldsymbol{0.057}$&0.042\\
&2.1&0.036&0.03&0.044&0.048&$\boldsymbol{0.052}$&$\boldsymbol{0.063}$&$\boldsymbol{0.06}$&0.05&0.031&0.037&0.038&0.04&0.047&$\boldsymbol{0.051}$&$\boldsymbol{0.057}$&0.047&0.036&0.04&0.044&0.041&0.038&0.042&$\boldsymbol{0.051}$&0.045\\
&2.2&0.032&0.03&0.033&0.038&0.039&0.05&0.05&0.04&0.04&0.04&0.043&0.037&0.04&0.049&$\boldsymbol{0.053}$&0.048&0.043&0.041&0.038&0.032&0.034&0.047&0.038&0.045\\
&2.3&0.043&0.037&0.039&0.038&0.04&0.042&0.045&0.049&0.043&0.038&0.04&0.039&0.041&0.05&$\boldsymbol{0.053}$&0.038&0.049&0.044&$\boldsymbol{0.051}$&0.048&0.048&0.05&$\boldsymbol{0.051}$&0.05\\
&2.4&0.04&0.037&0.036&0.039&0.038&0.037&0.047&0.036&0.035&0.033&0.036&0.041&0.043&0.041&0.049&0.036&0.045&0.043&0.047&0.048&0.048&0.046&$\boldsymbol{0.051}$&0.039\\
&2.5&0.037&0.041&0.036&0.033&0.032&0.036&0.047&0.04&0.042&0.037&0.034&0.034&0.031&0.045&0.044&0.041&0.024&0.026&0.038&0.042&0.042&0.037&0.045&0.04\\
&2.6&0.038&0.028&0.05&$\boldsymbol{0.052}$&0.047&0.042&0.035&$\boldsymbol{0.051}$&0.038&0.037&0.04&0.042&0.043&0.042&0.04&0.043&0.031&0.032&0.036&0.041&0.041&0.044&0.039&0.036\\
&2.7&0.049&$\boldsymbol{0.052}$&$\boldsymbol{0.052}$&$\boldsymbol{0.053}$&$\boldsymbol{0.059}$&$\boldsymbol{0.051}$&$\boldsymbol{0.054}$&$\boldsymbol{0.061}$&0.039&0.036&0.039&0.041&0.048&0.042&$\boldsymbol{0.053}$&0.041&0.044&0.047&$\boldsymbol{0.058}$&$\boldsymbol{0.058}$&$\boldsymbol{0.059}$&$\boldsymbol{0.052}$&$\boldsymbol{0.063}$&$\boldsymbol{0.057}$\\
&2.8&0.041&0.041&0.044&0.046&0.044&$\boldsymbol{0.052}$&$\boldsymbol{0.054}$&$\boldsymbol{0.056}$&0.038&0.042&0.041&0.041&0.044&$\boldsymbol{0.055}$&0.048&0.048&0.038&0.041&0.048&0.048&$\boldsymbol{0.051}$&$\boldsymbol{0.053}$&$\boldsymbol{0.051}$&$\boldsymbol{0.052}$\\
&2.9&0.042&0.046&0.048&0.041&0.048&0.048&0.05&$\boldsymbol{0.051}$&0.046&0.04&0.034&0.034&0.032&0.043&0.045&0.045&0.049&0.043&0.044&0.045&0.049&$\boldsymbol{0.059}$&0.046&0.044\\
&3&0.034&0.033&0.033&0.043&0.044&0.039&0.036&0.04&0.05&0.047&$\boldsymbol{0.055}$&0.05&0.044&0.049&$\boldsymbol{0.052}$&$\boldsymbol{0.055}$&0.036&0.036&0.037&0.038&0.038&$\boldsymbol{0.054}$&0.035&0.046\\
\hline
 &SS&K1&K2&K3&K4&K5&P2&P4&M&K1&K2&K3&K4&K5&P2&P4&M&K1&K2&K3&K4&K5&P2&P4&M\\
\hline
\multirow{21}{*}{\rotatebox{90}{Data C}}&1&0.042&0.035&0.033&0.036&0.038&$\boldsymbol{0.051}$&0.048&0.047&0.033&0.028&0.03&0.034&0.037&0.04&0.04&0.031&0.042&0.032&0.036&0.038&0.044&$\boldsymbol{0.053}$&0.046&0.041\\
&1.1&0.032&0.033&0.038&0.04&0.041&0.04&0.036&0.034&0.026&0.021&0.029&0.032&0.033&0.045&0.033&0.029&0.031&0.039&0.044&0.049&0.043&$\boldsymbol{0.052}$&$\boldsymbol{0.059}$&0.048\\
&1.2&0.034&0.034&0.033&0.035&0.038&0.04&0.039&0.044&0.029&0.029&0.023&0.027&0.035&0.045&$\boldsymbol{0.053}$&0.03&0.038&0.033&0.034&0.036&0.045&0.042&0.045&0.039\\
&1.3&0.026&0.027&0.031&0.039&0.04&0.04&$\boldsymbol{0.055}$&0.04&0.042&0.035&0.035&0.036&0.034&$\boldsymbol{0.051}$&$\boldsymbol{0.054}$&0.049&0.046&0.039&0.038&0.033&0.039&0.044&$\boldsymbol{0.055}$&0.045\\
&1.4&0.035&0.037&0.043&0.037&0.042&$\boldsymbol{0.052}$&$\boldsymbol{0.051}$&$\boldsymbol{0.053}$&0.032&0.026&0.031&0.03&0.03&0.033&0.045&0.025&0.038&0.034&0.045&0.037&0.037&0.045&$\boldsymbol{0.052}$&0.041\\
&1.5&0.032&0.031&0.034&0.035&0.04&0.044&0.047&0.038&0.038&0.038&0.042&0.038&0.038&0.044&0.04&0.044&0.033&0.031&0.038&0.038&0.043&0.047&$\boldsymbol{0.055}$&0.035\\
&1.6&0.036&0.033&0.039&0.036&0.036&0.043&$\boldsymbol{0.051}$&0.039&0.035&0.034&0.034&0.034&0.043&$\boldsymbol{0.051}$&0.044&0.042&0.043&0.039&0.038&0.042&0.042&0.046&0.049&0.034\\
&1.7&0.029&0.033&0.04&0.042&0.04&0.049&0.05&0.048&0.035&0.03&0.036&0.032&0.039&0.04&0.041&0.04&0.035&0.031&0.037&0.036&0.034&0.044&0.047&0.04\\
&1.8&0.043&0.046&0.047&0.044&0.046&0.05&0.042&$\boldsymbol{0.051}$&0.045&0.033&0.04&0.036&0.043&$\boldsymbol{0.051}$&0.041&0.043&0.039&0.04&0.037&0.039&0.034&$\boldsymbol{0.061}$&$\boldsymbol{0.056}$&$\boldsymbol{0.051}$\\
&1.9&0.032&0.028&0.034&0.036&0.038&0.033&0.04&0.034&0.036&0.043&0.042&0.043&0.039&0.043&0.047&0.044&0.048&0.042&0.043&0.042&0.036&$\boldsymbol{0.055}$&0.047&$\boldsymbol{0.055}$\\
&2&$\boldsymbol{0.051}$&0.043&0.042&0.043&$\boldsymbol{0.053}$&$\boldsymbol{0.051}$&0.05&$\boldsymbol{0.052}$&0.042&0.049&$\boldsymbol{0.051}$&$\boldsymbol{0.054}$&$\boldsymbol{0.064}$&$\boldsymbol{0.057}$&$\boldsymbol{0.064}$&$\boldsymbol{0.057}$&0.038&0.042&0.043&0.044&0.043&$\boldsymbol{0.057}$&$\boldsymbol{0.057}$&0.042\\
&2.1&0.036&0.03&0.044&0.048&$\boldsymbol{0.052}$&$\boldsymbol{0.063}$&$\boldsymbol{0.06}$&0.05&0.031&0.037&0.038&0.04&0.047&$\boldsymbol{0.051}$&$\boldsymbol{0.057}$&0.047&0.036&0.04&0.044&0.041&0.038&0.042&$\boldsymbol{0.051}$&0.045\\
&2.2&0.032&0.03&0.033&0.038&0.039&0.05&0.05&0.04&0.04&0.04&0.043&0.037&0.04&0.049&$\boldsymbol{0.053}$&0.048&0.043&0.041&0.038&0.032&0.034&0.047&0.038&0.045\\
&2.3&0.043&0.037&0.039&0.038&0.04&0.042&0.045&0.049&0.043&0.038&0.04&0.039&0.041&0.05&$\boldsymbol{0.053}$&0.038&0.049&0.044&$\boldsymbol{0.051}$&0.048&0.048&0.05&$\boldsymbol{0.051}$&0.05\\
&2.4&0.04&0.037&0.036&0.039&0.038&0.037&0.047&0.036&0.035&0.033&0.036&0.041&0.043&0.041&0.049&0.036&0.045&0.043&0.047&0.048&0.048&0.046&$\boldsymbol{0.051}$&0.039\\
&2.5&0.037&0.041&0.036&0.033&0.032&0.036&0.047&0.04&0.042&0.037&0.034&0.034&0.031&0.045&0.044&0.041&0.024&0.026&0.038&0.042&0.042&0.037&0.045&0.04\\
&2.6&0.038&0.028&0.05&$\boldsymbol{0.052}$&0.047&0.042&0.035&$\boldsymbol{0.051}$&0.038&0.037&0.04&0.042&0.043&0.042&0.04&0.043&0.031&0.032&0.036&0.041&0.041&0.044&0.039&0.036\\
&2.7&0.049&$\boldsymbol{0.052}$&$\boldsymbol{0.052}$&$\boldsymbol{0.053}$&$\boldsymbol{0.059}$&$\boldsymbol{0.051}$&$\boldsymbol{0.054}$&$\boldsymbol{0.061}$&0.039&0.036&0.039&0.041&0.048&0.042&$\boldsymbol{0.053}$&0.041&0.044&0.047&$\boldsymbol{0.058}$&$\boldsymbol{0.058}$&$\boldsymbol{0.059}$&$\boldsymbol{0.052}$&$\boldsymbol{0.063}$&$\boldsymbol{0.057}$\\
&2.8&0.041&0.041&0.044&0.046&0.044&$\boldsymbol{0.052}$&$\boldsymbol{0.054}$&$\boldsymbol{0.056}$&0.038&0.042&0.041&0.041&0.044&$\boldsymbol{0.055}$&0.048&0.048&0.038&0.041&0.048&0.048&$\boldsymbol{0.051}$&$\boldsymbol{0.053}$&$\boldsymbol{0.051}$&$\boldsymbol{0.052}$\\
&2.9&0.042&0.046&0.048&0.041&0.048&0.048&0.05&$\boldsymbol{0.051}$&0.046&0.04&0.034&0.034&0.032&0.043&0.045&0.045&0.049&0.043&0.044&0.045&0.049&$\boldsymbol{0.059}$&0.046&0.044\\
&3&0.034&0.033&0.033&0.043&0.044&0.039&0.036&0.04&0.05&0.047&$\boldsymbol{0.055}$&0.05&0.044&0.049&$\boldsymbol{0.052}$&$\boldsymbol{0.055}$&0.036&0.036&0.037&0.038&0.038&$\boldsymbol{0.054}$&0.035&0.046\\
\hline
\end{tabular}

\end{tiny}
\caption{Results for data setting $A$,$B$ and $C$ under the null hypothesis with unbalanced groups size. The type-I error $\alpha$ is fixed at 0.05. Values in boldface indicate that the power is greater than $\alpha$.}
\label{table:nullUnbalanced}
\end{table}
\end{landscape}

\subsection{Extra Experiments under the Alternative Hypothesis}\label{Appendix:ExperimentsAlternative}

We show the results of our experiments under the alternative hypothesis in the balanced setting that were deferred from \Cref{sec:experiments_alternative}.

\subsubsection{Data $A$ and $B$}\label{Appendix:ExperimentsAlternativeAB} \Cref{Fig:DataAE1-bal,Fig:DataBEI-bal} show the postponed results for the balanced setting.

\begin{figure}[H]
\centering
\includegraphics[scale=\size]{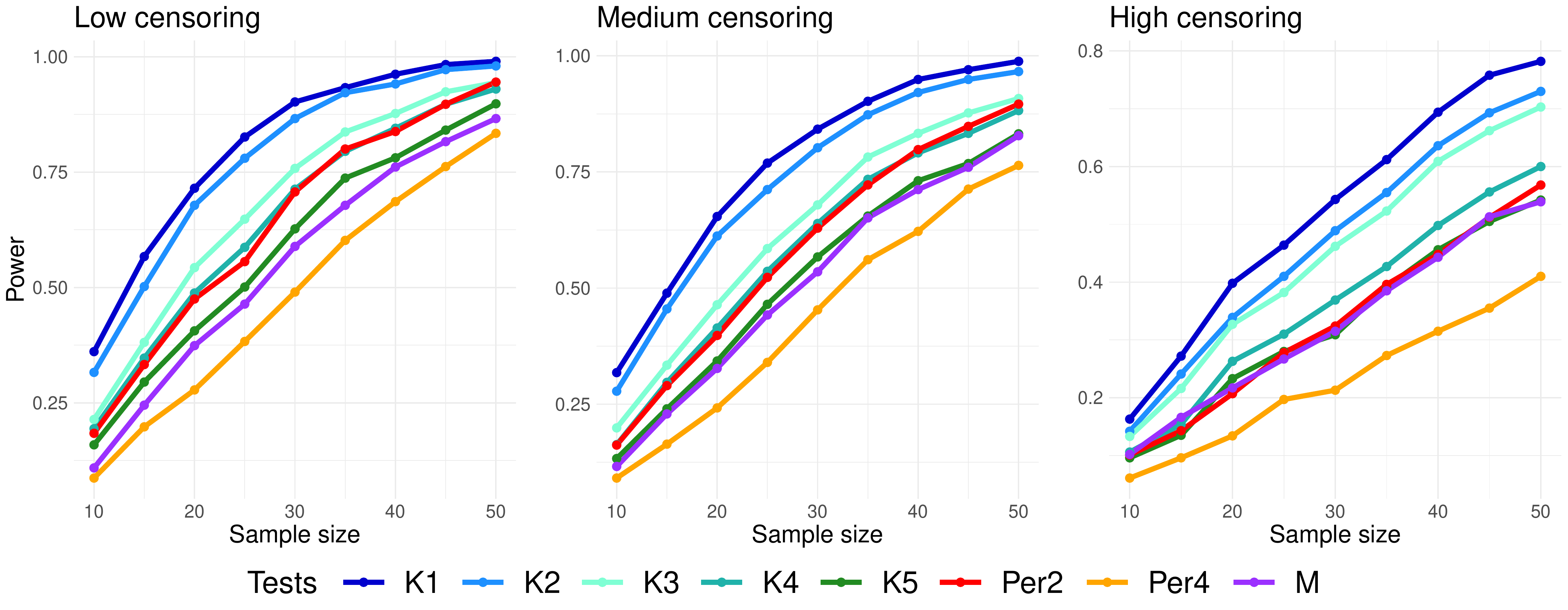}\caption{Test power versus sample size for the hypothesis there is no  effect of the factor $\mathcal{I}$ for Data $A$ in the balanced case. }\label{Fig:DataAE1-bal}
\end{figure}

\begin{figure}[H]
\centering
\includegraphics[scale=\size]{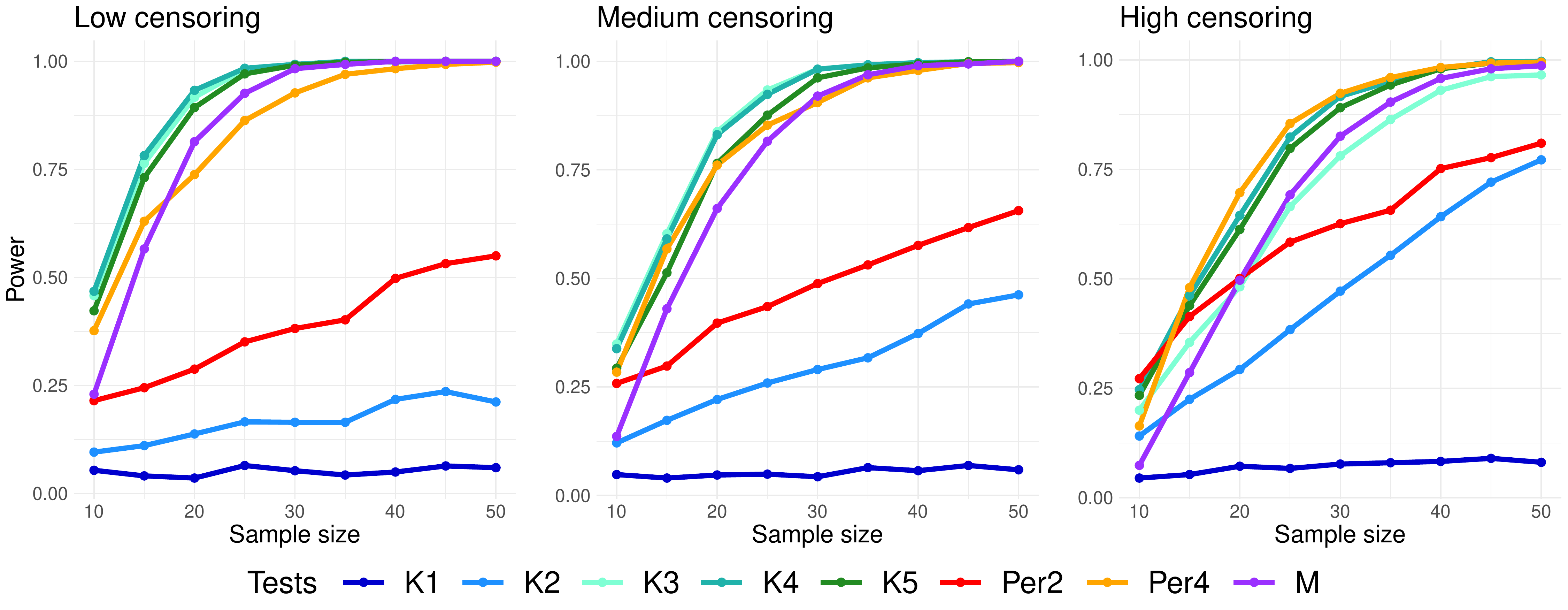}\caption{Test power versus sample size for the hypothesis there is no  effect of the factor $\mathcal{I}$ for Data $B$ in the balanced case.}\label{Fig:DataBEI-bal}
\end{figure}

In \Cref{Fig:DataAE1-bal,Fig:DataBEI-bal} we observe a similar behaviour than for the unbalanced setting. Indeed, note that for the data set $A$ the best results are obtained for kernels with a large length-scale parameter, whereas for the data set $B$, the best results are attained by kernels with a small length-scale parameter.

We continue by presenting the results for the Multiple Contrast testing procedure. \Cref{BarplotDataA_Bal,BarplotDataB_Bal} show the results in the balanced data setting that was postponed from \Cref{{sec:experiment_marginal}}. In these results, we observe that the Multiple Contrast test behaves similarly to what we observe in the unbalanced setting, and in particular, we can see how with more data points the test can recognise that both hypotheses $H_{01}$ and $H_{02}$ are false. For the data set $A$ the power of the test is explained  among the alternatives i)only $H_{01}$ is false, ii) only $H_{02}$ is false, and iii) only $H_{01}$ and $H_{02}$ are false at the same time. We notice that with a small sample size the first two options explain most of the test-power, but as the sample size increases the test starts realising that both hypotheses are false at the same time.  For the data set  $B$ we see that the test quickly realises that both hypothesis are false at the same time.

\begin{figure}[h]
\centering
\includegraphics[scale=\size]{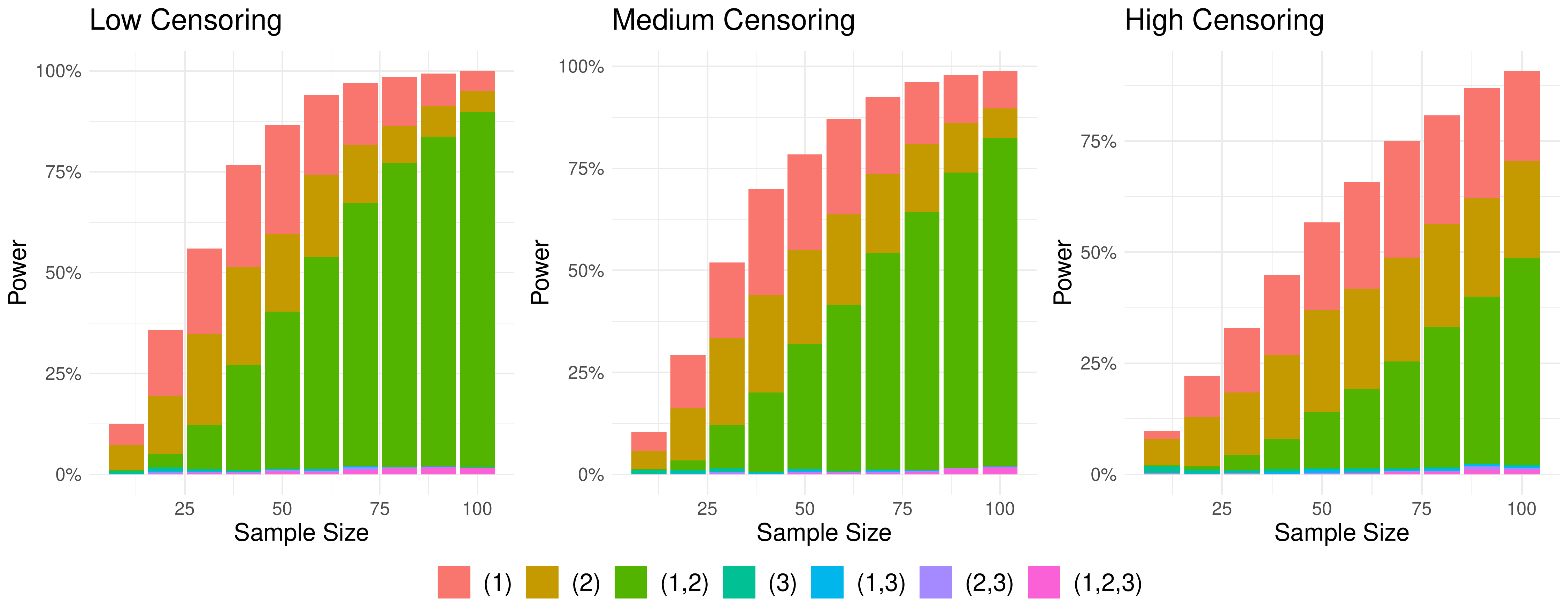}\caption{Distribution of the power attained by the Multiple Contrast test when testing the hypothesis that there is no  effect of the factor $\mathcal{I}$ for Data $A$ in the balanced case.}\label{BarplotDataA_Bal}
\end{figure}

\begin{figure}[h]
\centering
\includegraphics[scale=\size]{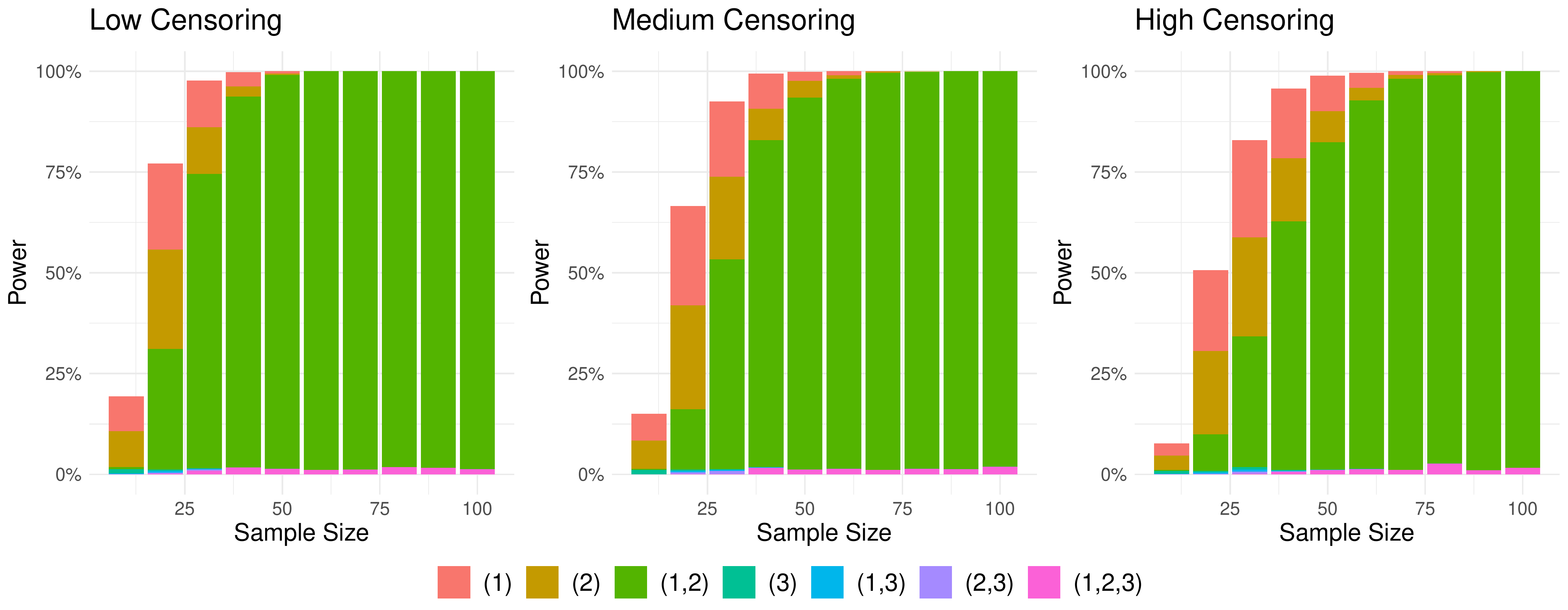}\caption{Distribution of the power attained by the Multiple Contrast test when testing the hypothesis that there is no  effect of the factor $\mathcal{I}$ for Data $B$ in the balanced case.}\label{BarplotDataB_Bal}
\end{figure}

\subsubsection{Data $C$}\label{Appendix:ExperimentsAlternativeC}

We show the results for data set $C$. In this case we consider two different alternatives, the first one for $\theta = 1$ and the second one for $\theta=2$. We begin  describing the results for $\theta=2$ in the balanced setting as the respective results for the unbalanced setting were presented in \Cref{sec:experiments_alternative}.

\begin{figure}[H]
\centering
\includegraphics[scale=\size]{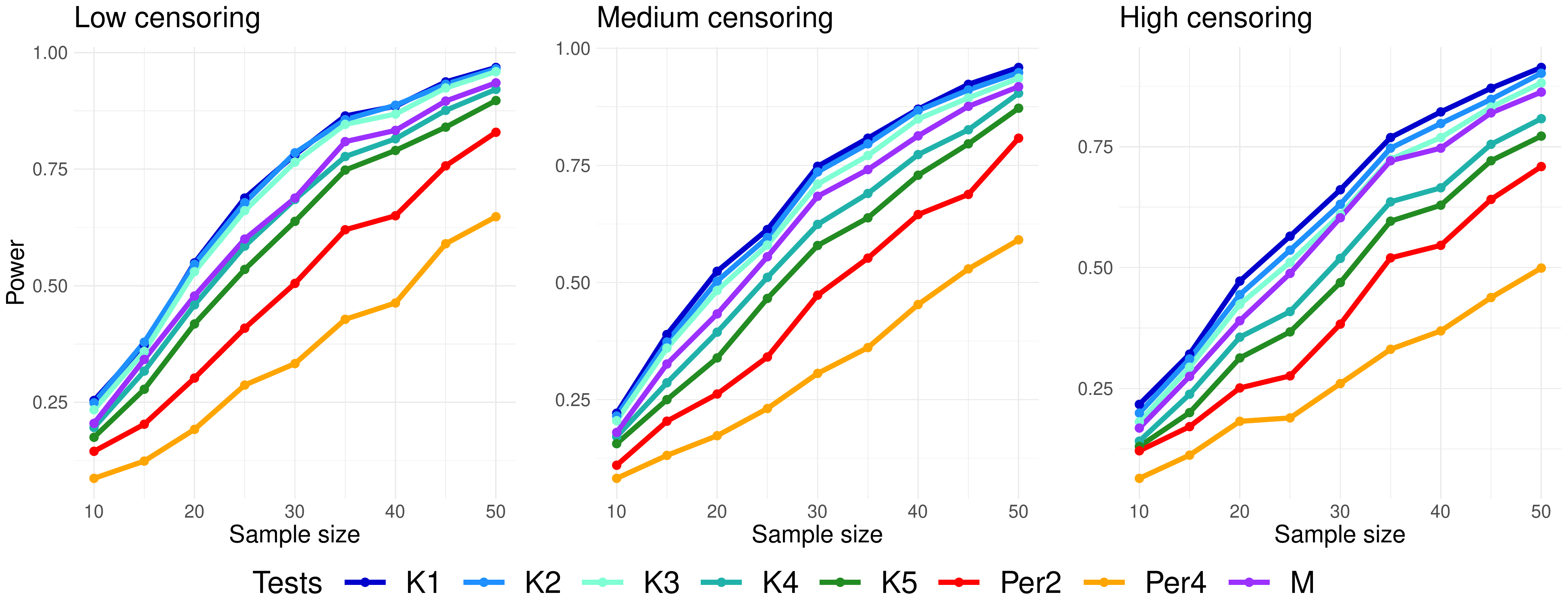}\caption{Test power versus sample size for the hypothesis there is no  interaction term in Data $C$ when $\theta = 2$ (equivalent to test $\theta = 0$) in the balanced case.}\label{Fig:DataCt2Bal}
\end{figure}

In \Cref{Fig:DataCt2Bal} we observe the rejection rate for each test as the sample size increases from 10 to 50 data points for each group. Notice the behaviour of all test in this scenario is very similar to the one we observed in the unbalanced data scenario -indeed, the tests keep the same order in terms of rejection rates- but overall, all tests increase their rejection rate for a fixed sample size. This is somehow expected as the balanced data scenario is much simpler than the unbalanced case. A detailed evaluation of the Multiple Contrast test is shown in \Cref{BarplotDataC_t2_Bal}. To understand this figure, recall that the contrast matrix in this setting is composed of 9 local hypotheses, which are all false under the alternative hypothesis, which is exactly our case as $\theta=2$.  Thus \Cref{BarplotDataC_t2_Bal} records the number of local hypotheses that are being rejected for each sample size. Similarly to what occurs in the unbalanced data scenario, we observe that the test starts rejecting more hypotheses as the sample size grows, note however that in this case the growth is faster as the problem is easier in the balanced data scenario. Lastly, observe that, overall, this seems to be an expensive data problem as in neither scenario (low, medium, or high censoring), the test is able to confidently deduce that all 9 hypotheses must be rejected. 

\begin{figure}[h]
\centering
\includegraphics[scale=\size]{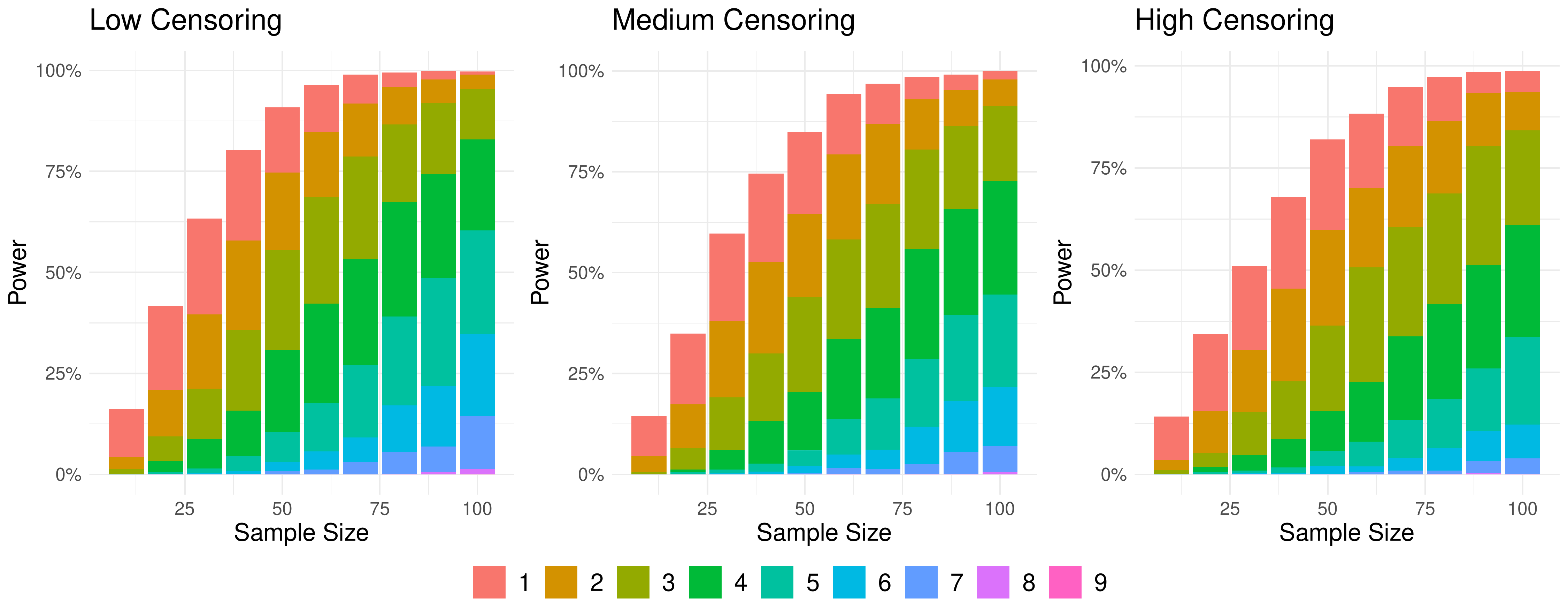}\caption{Distribution of the power attained by the Multiple Contrast test when testing the hypothesis that there is no  interaction term in Data  $C$ when $\theta = 2$, in the unbalanced case. }\label{BarplotDataC_t2_Bal}
\end{figure}

We proceed to show results for $\theta=1$  which were postponed in \Cref{sec:experiments_alternative}. Note that in this scenario, rejecting the null hypothesis should be harder than when compared to the case in which $\theta=2$, as $\theta=1$ is closer to the null hypothesis (recall the null hypothesis is recovered when $\theta=0$). This is indeed what we observe in our simulations -presented in \Cref{Fig:DataCt1Bal} and \Cref{Fig:DataCt1Unb} for the balanced and unbalanced data scenarios, respectively- where we observe a clear drop in the rejection rate (power of the test) in all censoring scenarios (low, medium or high). Notwithstanding this, we observe that the overall order of the tests in terms of their rejection rates remains the same, and in particular, the best performance is attained by the the kernel test with largest length-scale parameter. \Cref{BarplotDataC_t1} and \Cref{BarplotDataC_t1_Unb} show the behaviour of the Multiple Contrast test for the balanced and unbalance data-settings, respectively. Note that here we also can observe a drop in the power of the test as previously described, and in particular, we can observe that the tests overall are rejecting less hypotheses when compared to the case in which $\theta=2$

\begin{figure}[H]
\centering
\includegraphics[scale=\size]{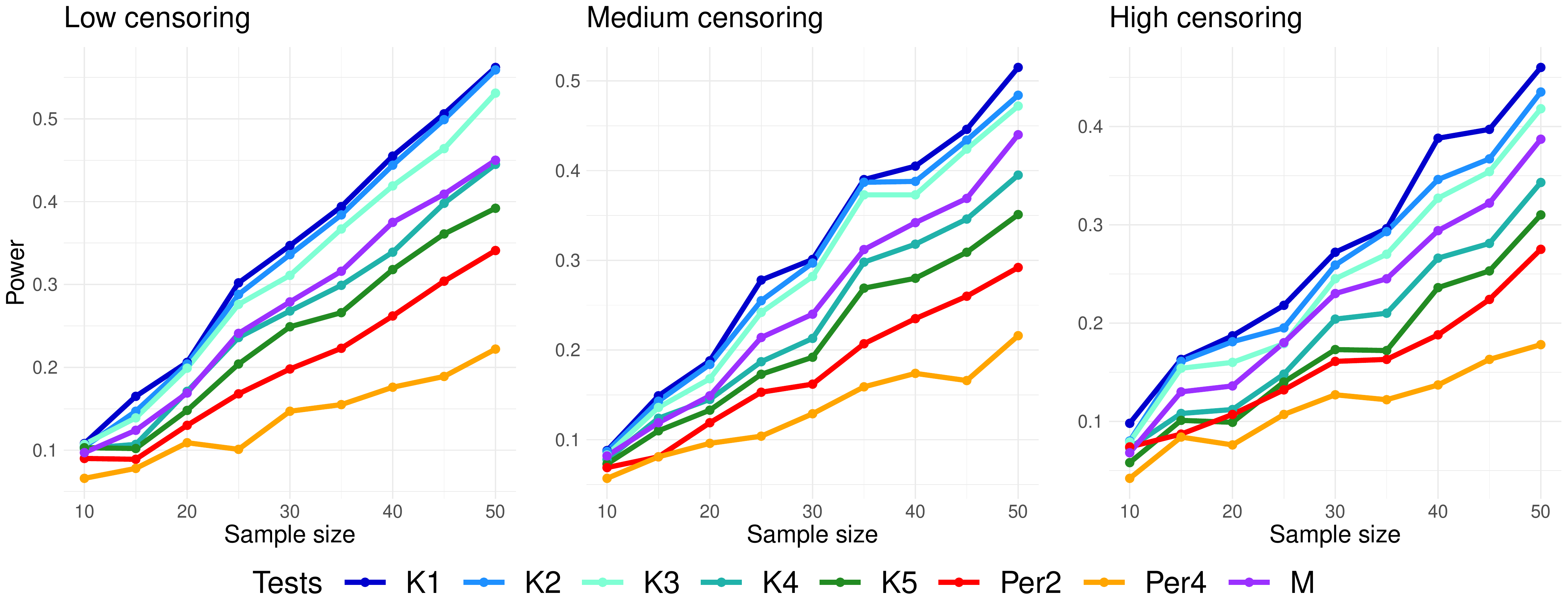}\caption{Test power versus sample size for the hypothesis there is no  interaction term in Data $C$ when $\theta = 1$ (equivalent to test $\theta = 0$) in the balanced case.}\label{Fig:DataCt1Bal}
\end{figure}

\begin{figure}[H]
\centering
\includegraphics[scale=\size]{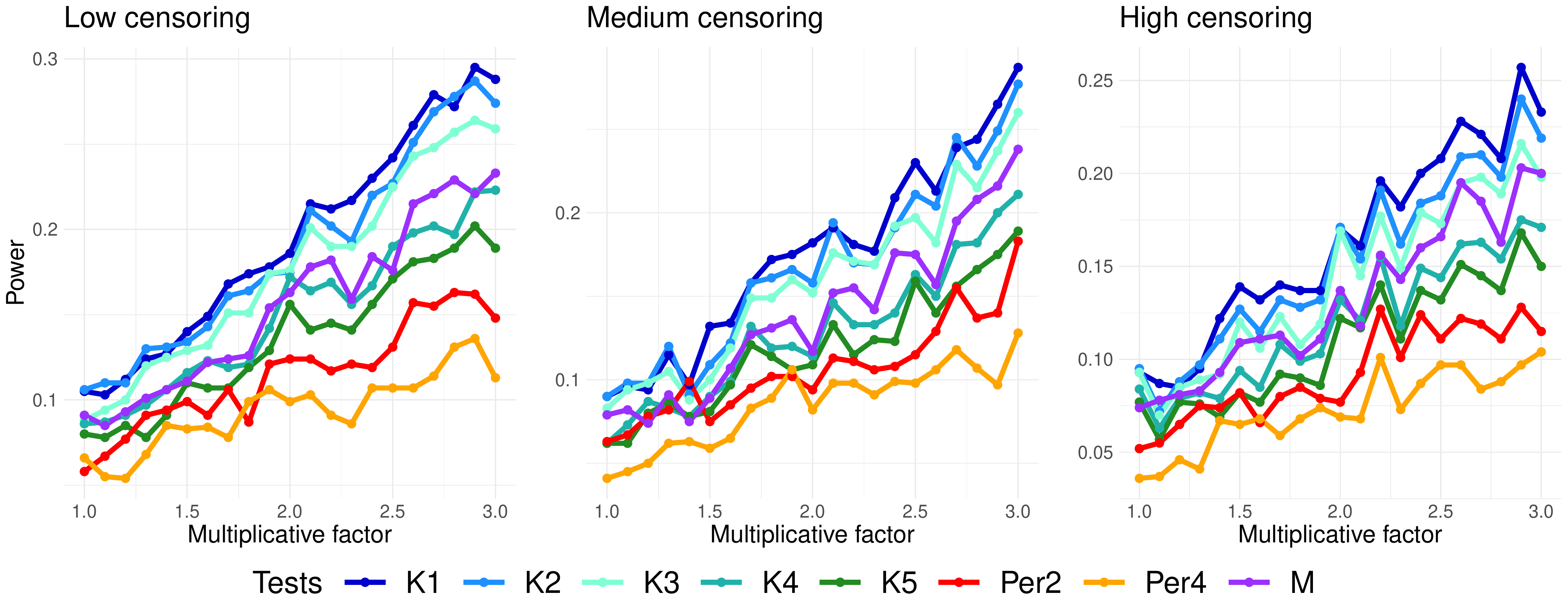}\caption{Test power versus sample size for the hypothesis there is no  interaction term in Data $C$ when $\theta = 1$ (equivalent to test $\theta = 0$) in the unbalanced case.}\label{Fig:DataCt1Unb}
\end{figure}

\begin{figure}[h]
\centering
\includegraphics[scale=\size]{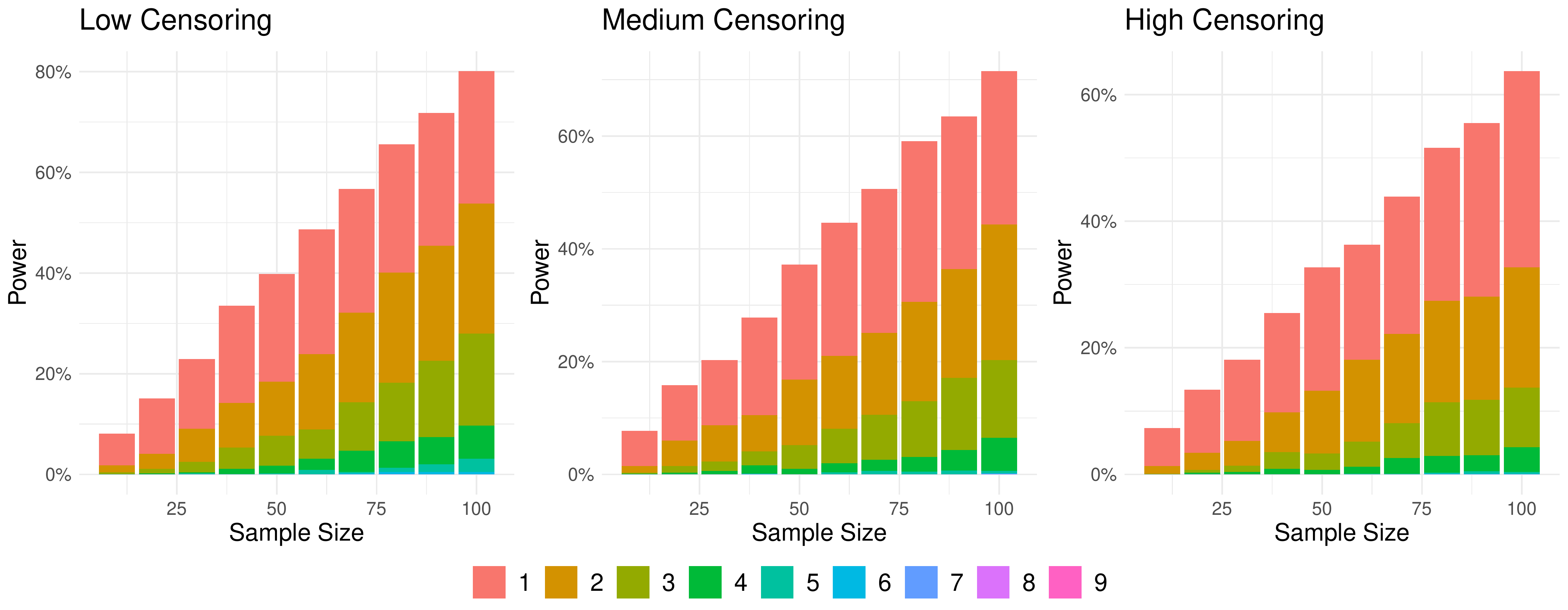}\caption{Distribution of the power attained by the Multiple Contrast test when testing the hypothesis that there is no  interaction term in Data  $C$ when $\theta = 1$, in the balanced case. }\label{BarplotDataC_t1}
\end{figure}

\begin{figure}[h]
\centering
\includegraphics[scale=\size]{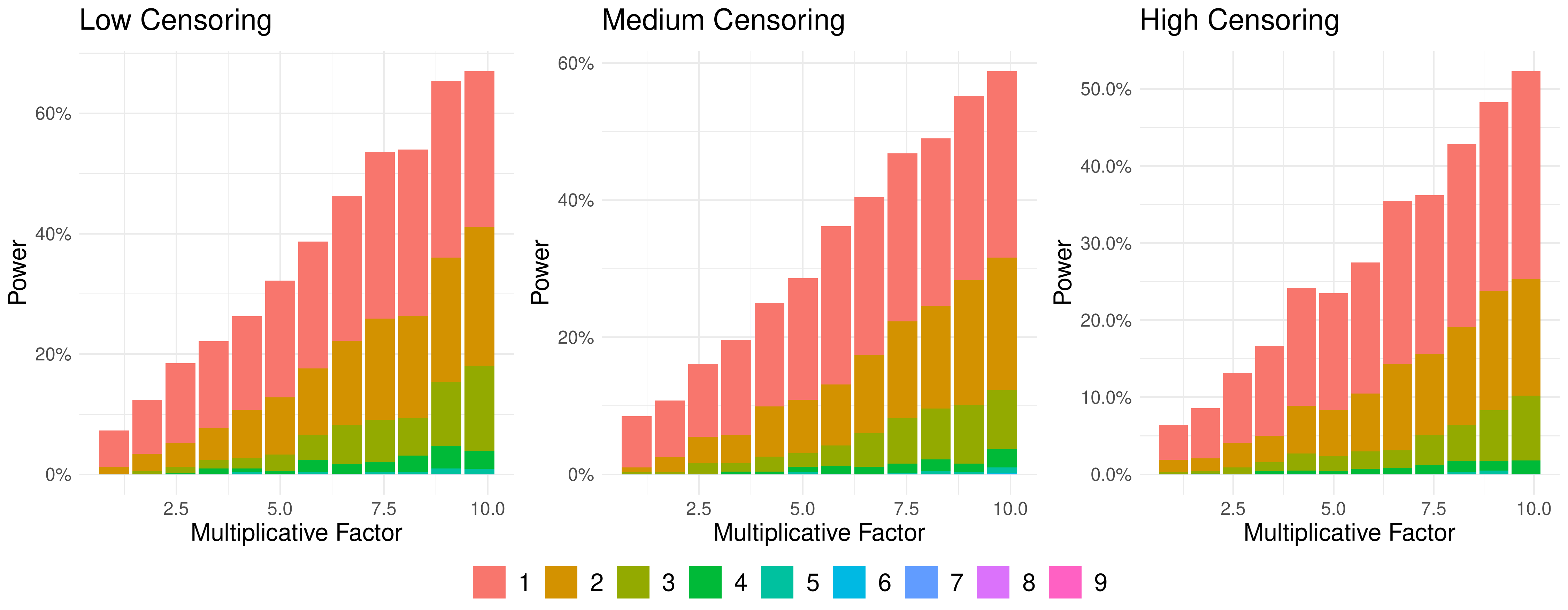}\caption{Distribution of the power attained by the Multiple Contrast test when testing the hypothesis that there is no  interaction term in Data  $C$ when $\theta = 1$, in the unbalanced case.}\label{BarplotDataC_t1_Unb}
\end{figure}

We finish this section by presenting \Cref{Fig:DataCtvarBal} which studies the behaviour of all the tests when small deviations from the null hypothesis occur, that is, then $\theta$ approaches 0. In particular, \Cref{Fig:DataCtvarBal} presents results for the balanced data setting which were postponed in \Cref{sec:experiments_alternative}.

\begin{figure}[H]
\centering
\includegraphics[scale=\size]{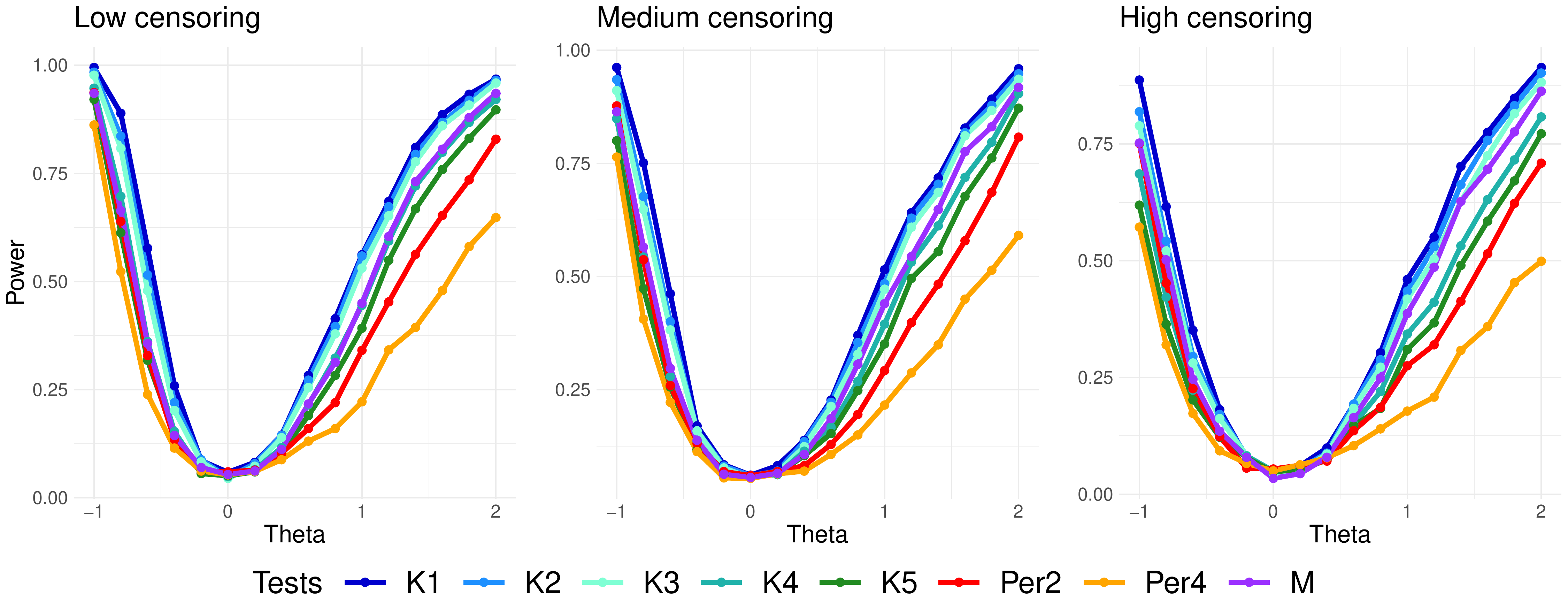}\caption{Tests power vs $\theta$ for the hypothesis that there is no interaction term in Data $C$ in the balanced case.}\label{Fig:DataCtvarBal}
\end{figure}

From the previous figures we observe that all tests achieve the correct Type-I error when $\theta=0$, and that all kernel tests perform better than the CASANOVA procedure, even the Multiple Contrast test which we know is more data expensive. When comparing kernel tests, we observe that the best results are achived by larger length-scale parameters (K1 to K3), which seems reasonable since the hazard functions that were used to generate the data are rather smooths and does not have huge fluctuations.

\newpage

\bibliographystyle{plainnat}
\bibliography{ref}

\end{document}